\documentclass[11pt]{article}

\usepackage{color}
\usepackage[margin=1in]{geometry}
\usepackage{amsmath,amssymb,amsthm,mathtools}
\usepackage{microtype}
\usepackage[round,authoryear]{natbib}
\usepackage[hidelinks]{hyperref}
\usepackage{tikz}
\usepackage{subcaption}
\usepackage{enumerate}
\usetikzlibrary{arrows.meta}
\makeatletter
\pgfmathdeclarefunction{Phinorm}{1}{%
	\begingroup
	\pgfmathsetmacro\ax{abs(#1)}%
	\pgfmathsetmacro\tq{1/(1+0.2316419*\ax)}%
	\pgfmathsetmacro\dens{0.3989422804014327*exp(-0.5*\ax*\ax)}%
	\pgfmathsetmacro\poly{\tq*(0.319381530+\tq*(-0.356563782+\tq*(1.781477937%
		+\tq*(-1.821255978+1.330274429*\tq))))}%
	\pgfmathsetmacro\res{ifthenelse(#1<0,\dens*\poly,1-\dens*\poly)}%
	\pgfmathsmuggle\res
	\endgroup
	
}
\makeatother

\usepackage{setspace}
\newtheorem{proposition}{Proposition}
\newtheorem{lemma}{Lemma}
\newtheorem{corollary}{Corollary}
\newtheorem{remark}{Remark}
\newtheorem{assumption}{Assumption}

\newcommand{\E}{\mathbb{E}}
\newcommand{\Prb}{\mathbb{P}}
\newcommand{\ind}{\mathbf{1}}
\newcommand{\NR}{\mathrm{NR}}

\newcommand{\Cov}{\mathrm{Cov}}
\newcommand{\pw}{\mathrm{pw}}

\title{	Midterm Review	}

\author{Doruk Cetemen\thanks{Department of Economics and Financial Markets, Luiss Guido Carli University and EIEF: dcetemen@luiss.it}
	\and Yonggyun Kim\thanks{Department of Economics, Florida State University: ykim22@fsu.edu}
	\and Fei Li\thanks{Department of Economics, University of North Carolina: lifei@email.unc.edu}
	\and Curtis R. Taylor\thanks{Department of Economics, Duke University: curtis.taylor@duke.edu}}

\date{\today}

\begin{document}
	\hypersetup{pageanchor=false}
	
	\maketitle
	\bigskip
	\begin{abstract}
		We study why organizations conduct interim performance reviews when monetary rewards are limited. An interim review creates incentive capacity by allowing future work and career opportunities to serve as rewards for past performance. Optimal review policies map a continuum of performance outcomes into a simple incentive ladder: termination, tough or easy continuation, and, for exceptional performance, an early maximal reward with no further work. Review can even sustain high effort when terminal compensation alone cannot. Its timing balances two forces: waiting improves the information revealed by performance, but leaves less future work available to motivate the agent.
	\end{abstract}

	\bigskip
	\noindent \textbf{JEL Classification:} D82, D86, D23, M12. \\
	\noindent \textbf{Keywords:} feedback, bounded reward, incentive ladder, incentive capital.
	\thispagestyle{empty}
	
	\clearpage
	\setcounter{page}{1}
	\hypersetup{pageanchor=true}
	\onehalfspacing
	\clearpage

	\section{Introduction}
	\label{sec:introduction}
	
	Performance reviews are ubiquitous in organizations and often have substantial
	consequences for an employee's career.  Evaluations affect bonuses and merit pay, but also
	promotion, retention, and dismissal \citep{cappelliconyon2018}.  This raises a basic question.
	If an organization can ultimately observe an informative measure of performance and reward
	the employee accordingly, why evaluate performance before the end of the current phase of employment?
	In a standard moral-hazard model with unrestricted monetary transfers, there need be no
	reason to do so.  Indeed, it is optimal to bundle incentives from different stages into a sufficiently powerful
 terminal compensation scheme.
	
	This paper shows that the answer changes fundamentally when monetary rewards are
	bounded.  Let $M$ denote the largest performance-contingent payment that the principal can
	make.  The cap may be literal, but more broadly it captures organizations in which large
	financial bonuses are difficult, costly, or institutionally constrained.  Such restrictions are
	particularly natural in public-sector employment, where pay is relatively compressed and
	promotion is an important source of incentives \citep{mas2017,deserranno2025}, and in
	nonprofit organizations, which make less use of output-contingent incentive pay than
	for-profit firms \citep{devaro2007}.  More generally, many careers provide substantial
	incentives through advancement rather than immediate cash compensation.  Tenure and
	promotion in academia and promotion to partnership in law, accounting, consulting, and
	other professional-service firms are prominent examples
	\citep{bakerjensenmurphy1988,barlevyneal2019,levintadelis2005}.  In these environments,
	the opportunity to continue on a favorable career path is itself a reward.
	
	This observation gives interim review a role that is absent when monetary incentives are
	unbounded.  A review allows the principal to allocate future opportunities according to past
	performance.  An agent who performs well can be offered a more attractive continuation
	contract, while one who performs poorly can face a demanding continuation standard or
	termination.  Rents that must be conceded to motivate future effort can therefore do double
	duty: besides inducing future effort, they can be used as rewards for past effort.  We refer
	to these future rents and continuation opportunities as \emph{incentive capital}.  When $M$
	is large, the principal can instead provide incentives directly with money and incentive
	capital has little value.  When $M$ is tight, continuation opportunities become valuable
	precisely because monetary rewards are scarce.
	
	We study this mechanism in a canonical dynamic moral-hazard environment.  A
	risk-neutral agent produces cumulative Gaussian output over a finite horizon and privately
	chooses costly effort.\footnote{For ease of exposition, we assume that effort in the baseline model is chosen only twice, before the performance review and after it.  We relax this assumption and allow the agent to choose effort continuously in Section~\ref{sec:pointwise-effort}.}
	A risk-neutral principal receives output net of compensation.
	Compensation satisfies both limited liability and is bounded above by $M$.  The principal may commit to one public interim review, choose when it occurs, and condition both compensation and
	the continuation opportunity on the review score.  Past performance contains information
	about past effort but does not change the technology, productivity, or difficulty of future
	work.  Thus review has no sorting, learning, or technological role in the baseline: any value
	it creates comes purely from incentives.
	
	Our first main result is that a continuum of possible review scores is optimally mapped
	into at most four discrete actions.  These actions form an \emph{incentive ladder}, ordered
	by the rent they deliver to the agent.  At the bottom is a \emph{bad stop}: sufficiently poor
	performance leads to termination with no payment.  The next rung is continuation under a
	difficult terminal standard, which induces future effort while leaving the agent relatively
	little rent.  Better review performance earns continuation under an easier terminal
	standard and hence a larger rent.  Finally, when first-phase incentives are sufficiently
	scarce, exceptional performance reaches a \emph{good stop}: the agent receives the maximal
	payment $M$ immediately and is released from the remaining workload.  Thus termination
	can optimally occur at both tails of the performance distribution.  The bad stop is the
	strongest punishment available; the good stop is the strongest reward available once money
	alone can no longer reward performance.  Depending on parameters, two, three, or all four
	rungs are used.
	
	The discreteness of this ladder is notable because both performance and the feasible set of
	continuation contracts are continuous.  Its structure comes from the geometry of bounded
	incentive provision.  Among contracts that induce future effort, the principal uses only the
	minimum- and maximum-rent continuation contracts.  Together with the two extreme
	no-effort actions---termination without pay and termination with the maximal payment---these
	are the only exposed actions in the principal's statewise assignment problem.  Monotone
	likelihood ratios then order the four actions by review performance.  In this sense, a
	continuous performance evaluation optimally produces a coarse rating system with only a
	small number of economically distinct consequences.
	
	Our second set of results establishes when review is valuable.  The payment bound $M$ is
	again the key comparative static.  When $M$ is sufficiently large, no genuine review can
	improve on the terminal contract: uniformly over all review dates, the principal strictly
	prefers to wait until the end.  As the cap tightens, however, the terminal contract exhausts
	its ability to provide incentives and continuation rents become increasingly valuable.
	Near the smallest cap at which full effort can be implemented without review, an interior
	review is always strictly profitable and all four rungs of the incentive ladder are used.
	
	Indeed, review can do more than reduce the cost of incentives.  It can make high effort
	implementable when terminal evaluation alone cannot.  At the no-review implementability
	boundary, splitting the project into two shorter incentive problems creates strict incentive
	slack in each phase.  For payment caps slightly below that boundary, full effort is impossible
	under any terminal-only contract but can be induced with an interim review.  Review therefore
	expands the organization's effective incentive capacity: future rents that would otherwise
	be an unavoidable cost of motivating later effort become an additional currency with which
	to motivate earlier effort.
	
	Our third main result concerns \emph{when} the review should occur.  Delay has an obvious
	informational advantage.  With more performance accumulated before the review, the
	distributions generated by working and shirking are easier to distinguish.  But waiting also
	uses up the very future opportunities that make review valuable.  A later review leaves less
	remaining work, a smaller range of continuation rents, and therefore less incentive capital
	with which to reward first-phase performance.  Optimal review timing balances better
	information against more time left to motivate.
	
	The endpoint results make this tradeoff particularly sharp.  An arbitrarily early review is
	strictly harmful: in the Brownian model the right derivative of review value at date zero is
	minus infinity.  A very short probationary period generates almost no useful information,
	while bounded rewards prevent the principal from cheaply magnifying that information into
	incentives.  Every strictly optimal review is therefore bounded away from the initial date.
	At the other endpoint, when $M$ is tight, moving the review slightly before the terminal date
	has first-order value because even a short continuation phase creates scarce incentive
	capital.  Thus the model predicts neither ``review as soon as possible'' nor ``wait until
	performance is most informative.''  The optimum is governed by the tension between
	learning about past effort and preserving future opportunities with which to reward it.
	
	The optimal mechanism that ultimately awards a \textit{prize} of either $0$ or $M$ also suggests a life-cycle interpretation.  Interim evaluation should be
	especially valuable when an agent's incentives depend heavily on future advancement---for
	example, for junior academics approaching tenure or young professionals competing for
	partnership or promotion.  As careers mature and direct monetary compensation becomes a
	larger part of the feasible reward package, the incentive value of allocating future career
	opportunities should diminish.  More generally, the model predicts that reviews should be
	most consequential precisely where organizations have limited scope to use large
	performance-contingent monetary payments.
	
	Finally, we examine three departures from the baseline.  To express the key mechanism in the
	simplest setup, the baseline lets the agent choose effort only once in each phase.  Our first
	extension allows him to choose effort at every instant; the four-rung ladder continues to apply.
	Next, replacing Gaussian output with a general statistical experiment shows that the
	ladder does not depend on normality.  Under suitable likelihood-ratio conditions, the same
	four exposed actions emerge, while the timing results depend on how rapidly useful evidence
	accumulates over short horizons.  Finally, we show that random review timing can be valuable because it relaxes the sharp tradeoff between collecting more pre-review information and preserving post-review incentive capital.
Together these extensions indicate that the four-rung incentive
 ladder is more robust than the baseline analysis alone suggests.
	
	\paragraph{Related Literature}
	
	This paper contributes to the literature on interim performance evaluation and dynamic incentive provision by asking three questions: Why review? How should review information be used? And when to review?
	
	First, we identify a new rationale for interim review that arises from limited
	incentive provision. As in \citet{lizzeri2002}, we consider a setting in which
	past performance does not change the primitive continuation agency problem. With
	unrestricted monetary rewards, the principal can bundle incentives across the
	two phases into a single terminal contract, making review unnecessary. However, when
	payments are bounded, as in the static setting of \citet{jewitt2008}, review allows
	the principal to condition otherwise unavoidable continuation rents and the
	remaining workload on earlier performance, thereby creating history dependence
	and expanding incentive capacity. In contrast, much of the literature considers environments in which earlier actions or outcomes affect the difficulty or return of subsequent work. In such settings, an important role of interim review is to reveal the severity of the continuation agency problem and allow the continuation contract to respond accordingly \citep[see, e.g.,][]{lukas2010,ChenChiu2013,elyszydlowski2020}.
	A separate rationale is efficient sorting:
	\citet{Ray2007} uses interim evaluation to identify and terminate projects with
	poor continuation prospects. Relatedly, \citet{liu2026} studies resource allocation
	when the agent privately knows a project's fixed quality and always prefers a larger
	allocation. A public midterm signal correlated with project quality allows the principal
	to discipline the agent's initial report, effectively ``levering'' information from the
	review into the first-period allocation. Our setting has neither private information about
	project quality to elicit nor heterogeneous continuation prospects to uncover. Instead,
	review provides information about the agent's hidden productive effort, and its value comes from
	using continuation decisions to provide incentives when monetary rewards are limited.

	Second, we show that interim performance can be used through a simple,
	cutoff-based contract. Depending on interim performance, the agent is
	terminated or retained under a
	continuation contract that pays only following sufficiently good final
	performance. Termination can therefore occur at both tails: a bad stop punishes
	poor performance, while a good stop rewards strong performance by combining
	the maximal monetary payment with relief from future effort. Whereas
	\citet{ely2025} jointly design dynamic incentives and the disclosure of
	performance feedback, our cutoff structure concerns how revealed interim
	performance is translated into termination and continuation incentives.
	
	Good-performance termination also arises in other dynamic moral hazard settings
	(see, e.g., \citet{spearwang2005} and \citet{sannikov2008}), but for a different
	reason. In those models, favorable performance raises the promised value of a
	risk-averse agent until maintaining incentives becomes too costly, so termination
	occurs when continuation value reaches a sufficiently high level. Our good stop is
	instead a direct response to the payment cap. Once the maximal monetary prize has
	been reached, reducing the agent's remaining workload provides the additional
	reward that money cannot.
	
	Third, we identify a new tradeoff governing the timing of review. Delaying review
	has a familiar benefit: more accumulated performance makes the review signal more
	informative. But delay has a novel cost in our setting: it shortens the remaining
	production horizon and thereby reduces the future output that can be used to create
	continuation rents. Review timing therefore determines not only the quality of
	information available to the principal, but also the amount of incentive capacity
	remaining after the review.
	
	This question connects our paper to a recent literature on the design and timing
	of dynamic monitoring. \citet{varas2020} show that optimal monitoring can take the
	form of deterministic periodic reviews when information acquisition is important,
	but random inspections when monitoring primarily serves to provide incentives.
	Similarly, \citet{ballknoepfle2026} study whether inspections should be predictable
	or random in a long-term moral-hazard relationship, with the answer depending on
	whether effort primarily promotes breakthroughs or prevents breakdowns.
	\citet{wong2026} goes further by allowing the principal to design the monitoring
	technology itself, obtaining a nonstationary sequence of pass/fail tests whose
	informativeness and consequences evolve over the relationship. Relatedly,
	\citet{DaiWangYang2026} jointly design monitoring and compensation, with the
	principal dynamically reallocating limited monitoring capacity between evidence
	confirming effort and evidence revealing shirking. These papers endogenize the
	pattern, composition, or technology of monitoring, whereas we take a single
	interim review as given and ask where it should be placed within a finite
	production horizon.

	Closest to our timing question, \citet{rodivilov2022} studies when to monitor an
	agent engaged in experimentation and shows that monitoring has both a static
	effect on contemporaneous moral hazard and a dynamic effect on the rents that
	must be provided in earlier periods. Our mechanism is different. In our model,
	the benefit of delay comes from the increasing informativeness of accumulated
	performance, while its cost comes from exhausting the very continuation
	opportunities through which review provides incentives.
	
	This distinction also separates our mechanism from \citet{hoffmann2021} and
	\citet{georgiadis2020}, who study the timing of performance measurement following
	a one-time hidden action rather than an interim review during ongoing production.
	In \citet{hoffmann2021}, delaying payout improves the information on which
	compensation is based but entails a direct cost of deferral; in
	\citet{georgiadis2020}, continued monitoring improves precision but incurs a
	direct monitoring cost. In our model, by contrast, the cost of delay arises from
	the contracting technology itself: waiting for a more informative review leaves
	less future work with which to reward favorable performance and provide
	incentives.

	\paragraph{Organization}
	The analysis proceeds as follows.  Section~\ref{sec:single-phase} derives the
	no-review benchmark and constructs the post-review continuation frontier.
	Section~\ref{sec:exogenous-review} holds the review date fixed, formulates pre-review incentives, and characterizes the ordered menu.
	Section~\ref{sec:endogenous-review} endogenizes the date, deriving the review value, its behavior near the boundary dates, and the tight-cap result.  Section~\ref{sec:extensions} studies pointwise effort, non-Gaussian monitoring, and random review.  Section~\ref{sec:conclusion} concludes with two directions requiring a richer institutional setting.
	
	\section{Environment} \label{sec:environment}
	
	A risk-neutral principal (she) contracts with a risk-neutral agent (he) over a finite horizon $[0,T]$. Output $X_t$ evolves as
	\begin{equation*}
		dX_t=e_t\,dt+dW_t, \qquad X_0=0,
	\end{equation*}
	where $W$ is a standard Brownian motion and $e_t\in\{0,1\}$ is agent effort.  We write $\Phi(\cdot)$ for the standard normal distribution and $\phi(\cdot)$ for its density.
	
	The contract fixes at date $0$ a single review date $\tau\in(0,T]$.\footnote{The limiting case $\tau=0$ is used only as a boundary limit in Section~\ref{sec:endogenous-review}.}
	The review date is ultimately a choice variable, and we solve for it in two steps: Section~\ref{sec:exogenous-review} takes $\tau$ as \emph{exogenous}, and Section~\ref{sec:endogenous-review} then \emph{endogenizes} it.
	
	\begin{assumption}[Observability of the output]
		\label{as:observability}
		Cumulative output is publicly observed at $\tau$ and at $T$. No one, including the agent, observes $X_t$ at any other time.
	\end{assumption}
	
	The review splits the horizon into two phases, $P_1=(0,\tau]$ and $P_2=(\tau,T]$, of lengths $\tau$ and $T-\tau$; we call $X_\tau$ the \emph{review score}.  The case $\tau=T$	is the \emph{no-review benchmark}: the second phase is empty and $X_T$ is the only signal.
	
	\begin{assumption}[Phase-level private effort]
		\label{as:phase-level}
		Effort is constant within each phase.  The agent privately chooses $a_1\in\{0,1\}$ at date $0$ and, having observed $X_\tau$, privately chooses $a_2\in\{0,1\}$ at date $\tau$; effort is $e_t=a_1$ on $P_1$ and $e_t=a_2$ on $P_2$.  A strategy is a pair $(a_1,{S})$ with ${S}:\mathbb{R}\to\{0,1\}$ giving $a_2={S}(X_\tau)$.\footnotemark
	\end{assumption}
	\footnotetext{Assumption~\ref{as:phase-level} is relaxed in Section~\ref{sec:pointwise-effort}, where the agent chooses $e_t\in\{0,1\}$ at	each instant, so that a phase of length $h$ delivers any total effort $\eta=\int e_t\,dt\in[0,h]$ rather than only $\{0,h\}$.}

	The agent commits at the start of a phase to work throughout it or shirk throughout it.  Consequently the two phase increments are independent given $(a_1,a_2)$, with
	\begin{equation*}
		X_\tau\sim N(a_1\tau,\tau), \qquad X_T-X_\tau\sim N\bigl(a_2(T-\tau),\,T-\tau\bigr).
	\end{equation*}
	
	Working through a phase of length $h$ costs the agent $ch$, so total effort cost is $c\bigl[a_1\tau+a_2 (T-\tau) \bigr]$ with $c \in (0,1)$.
	The same flow cost applies for the entire horizon, and hence in both phases.  After observing $X_\tau$ and $X_T$ the principal makes a measurable payment $w(X_\tau,X_T)$ to the agent.
	
	\begin{assumption}[Limited liability and finite compensation]
		The payment function satisfies $w:\mathbb{R}^2\to[0,M]$ with $M\in(0,\infty)$.
	\end{assumption}
	The lower bound of $0$ is the agent's limited liability, and the upper bound $M$ is a finite cap: the maximum payment the principal can make.\footnote{All results generalize to a setting with finite lower and upper bounds on payments $-\infty < \underline{w} < \overline{w} < +\infty$, where $M \equiv \overline{w}-\underline{w}$.  We set $\underline{w}=0$ as is standard in settings of limited liability.}
	
	The principal receives ex post payoff $X_T-w$, and the agent receives $w-c\bigl[a_1\tau+a_2(T-\tau)\bigr]$. At $t=0$ the principal chooses a contract that consists of a review date $\tau$ and a payment function $w$ to maximize her expected payoff.  Under the no-review benchmark $\tau=T$ a contract is simply $w(X_T) \in [0,M]$.  The principal has all the bargaining power and full commitment, and the agent's outside option is zero.

	\section{Contracting over a single phase}
	\label{sec:single-phase}
	
	We begin the analysis with the single-phase contract.  Fix a phase of length $h>0$ at the end of which output is observed and payment is made, with no further review.  Two cases of the model have this form, $h=T$ (no-review) and $h=T-\tau$ (post-review); in the latter, translation invariance of the technology lets a continuation contract be written as a function of the second phase output $Y=X_T-X_\tau$.  Solving the phase problem once therefore delivers both cases.
	
	Under work $Y\sim N(h,h)$ and under shirking $Y\sim N(0,h)$, with densities $f_j^h(y)=h^{-1/2}\phi\bigl((y-jh)/\sqrt h\bigr)$ for $j\in\{0,1\}$; working costs the agent $ch$.  A contract is a measurable $w:\mathbb R\to[0,M]$; it implements work if
	\begin{equation*}
		\E_1[w]-\E_0[w]\ \geq\ ch,	\tag{IC$_h$}	\label{eq:IC-h}
	\end{equation*}
	where $\E_j[w]=\int wf_j^h$.  Let $\mathcal F_h$ denote the set of contracts satisfying	\eqref{eq:IC-h}.  Throughout the section $c$ and $h$ are fixed, and we suppress the dependence of derived objects on $(h,c,M)$ except where an argument varies.

	\subsection{Threshold contracts}
	\label{sec:threshold-contracts}
	
	First, we show that it is without loss to focus on \emph{threshold contracts}: pay the cap $M$ when output clears a bar $b$ and nothing below it.  The reason is that the likelihood ratio $\ell^h(y)=f_0^h(y)/f_1^h(y)=\exp(h/2-y)$ is strictly decreasing, so a contract that shifts payment toward high output --- rewarding evidence of work and withholding payment after evidence of shirking --- provides incentives at the lowest expected cost.
	
	\begin{lemma}
		\label{lem:threshold-reduction}
		Let $w:\mathbb R\to[0,M]$ be any contract and let $\tilde w=\ind_{\{y\geq b\}} M$ be the threshold contract with the same expected cost under work, $\E_1[\tilde w]=\E_1[w]$. Then $\E_0[\tilde w]\leq\E_0[w]$, with equality only if $w=\tilde w$ almost everywhere.	Hence $w\in\mathcal F_h$ implies $\tilde w\in\mathcal F_h$.
	\end{lemma}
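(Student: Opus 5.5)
The plan is a bathtub/rearrangement argument carried out under the work measure $f_1^h$. First I construct the threshold. The map $b\mapsto M\,\Prb_1(Y\geq b)$ is continuous and strictly decreasing from $M$ to $0$, and $\E_1[w]\in[0,M]$. So there is a $b\in[-\infty,+\infty]$ with $\E_1[\tilde w]=\E_1[w]$. The extreme values correspond to $w\equiv M$ and $w\equiv 0$ a.e., where the claim is trivial. I therefore take $b$ finite.

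The key step is to write
\begin{equation*}
\E_0[w]-\E_0[\tilde w]=\int (w-\tilde w)\,\ell^h\, f_1^h\,dy
\end{equation*}
and compare it with $\ell^h(b)\int (w-\tilde w) f_1^h\,dy$, which equals $0$ by the choice of $b$. Subtracting gives
\begin{equation*}
\E_0[w]-\E_0[\tilde w]=\int (w-\tilde w)\bigl(\ell^h(y)-\ell^h(b)\bigr) f_1^h(y)\,dy .
\end{equation*}
I then check the sign of the integrand pointwise.
\begin{itemize}
\item For $y\geq b$ we have $\tilde w=M\geq w$, so $w-\tilde w\le 0$. Since $\ell^h$ is decreasing, $\ell^h(y)-\ell^h(b)\le 0$, and the product is nonnegative.
\item For $y<b$ we have $\tilde w=0\le w$, so $w-\tilde w\ge 0$. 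Here $\ell^h(y)-\ell^h(b)>0$, and again the product is nonnegative.
\end{itemize}
Hence $\E_0[\tilde w]\le \E_0[w]$.

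For the equality case, note that $\ell^h(y)=\exp(h/2-y)$ is \emph{strictly} decreasing. So $\ell^h(y)-\ell^h(b)\neq 0$ for every $y\neq b$. The density $f_1^h$ is strictly positive everywhere. Equality therefore forces $(w-\tilde w)=0$ for Lebesgue-a.e.\ $y$, that is, $w=\tilde w$ a.e. Finally, $\E_1[\tilde w]-\E_0[\tilde w]\ge \E_1[w]-\E_0[w]\ge ch$, so $\tilde w\in\mathcal F_h$ whenever $w\in\mathcal F_h$.

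No step is a genuine obstacle. The only care needed is integrability, which holds because $w$ is bounded and $\ell^h f_1^h=f_0^h$ is integrable, together with the boundary bookkeeping for $b=\pm\infty$ noted above.
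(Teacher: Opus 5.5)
Your proof is correct and follows the paper's argument essentially line for line. You choose the bar so that $\E_1[\tilde w]=\E_1[w]$, subtract $\ell^h(b)$ times the vanishing integral $\int (w-\tilde w)f_1^h$, and sign the integrand on each side of $b$ by monotonicity of $\ell^h$, with strict monotonicity and $f_1^h>0$ giving the equality case.
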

	
	For any contract that implements work, there is a threshold contract that also implements work and leaves the principal the same expected payoff. Thus, nothing is lost by restricting attention to threshold contracts.  Moreover, because every threshold contract pays either $0$ or $M$, it can equivalently be interpreted as awarding a fixed prize $M$ (e.g., a promotion) whenever performance exceeds a sufficiently high threshold.

	Now, index the bar $b$ by the \emph{normalized} value $z=(h-b)/\sqrt h$, so that a larger $z$ is an easier bar, cleared with probability $\Phi(z)$ under work.  For $w=\ind_{\{y\geq b\}} M$,
	\begin{equation}
		\E_1[w]=M\Phi(z),	\qquad	\E_0[w]=M\Phi\bigl(z-\sqrt h\bigr),	\qquad	\E_1[w]-\E_0[w]=M\,\Delta(h,z),
		\label{eq:threshold-moments}
	\end{equation}
	where $\Delta(h,z)=\Phi(z)-\Phi(z-\sqrt h)$ is the \emph{incentive index}: the increase in the probability of clearing the bar when the agent works rather than shirks.  By	\eqref{eq:threshold-moments} a threshold contract implements work exactly when
	\begin{equation}
		\Delta(h,z)\ \geq\ \frac{ch}{M}.
		\label{eq:bar-condition}
	\end{equation}
	
	Figure~\ref{fig:incentive-index} illustrates the admissible bars.  The index is single-peaked	at $z=\sqrt h/2$, where it attains its maximum $\overline\Delta(h)=2\Phi\bigl(\sqrt h/2\bigr)-1$, symmetric about that peak, and vanishes as $z\to\pm\infty$.  Work is therefore implementable if and only if $M\ \geq\ M_I(h,c)\equiv\frac{ch}{\overline\Delta(h)}$, since below that cap \eqref{eq:bar-condition} fails at every $z$.  When it holds, the admissible bars correspond to the closed interval $[z_D,z_E]$ with
	\begin{equation}
		z_D=\min\left\{z:\Delta(h,z)\geq\frac{ch}{M}\right\},
		\qquad
		z_E=\max\left\{z:\Delta(h,z)\geq\frac{ch}{M}\right\},
		\label{eq:zD-zE}
	\end{equation}
	the two crossings of the horizontal line at height $ch/M$, which merge at $\sqrt h/2$ when $M=M_I(h,c)$.  The constraint binds at $z_D$ and $z_E$ and is slack strictly between them.  The lower crossing $z_D$ corresponds to a high-output bar and sets a	\emph{difficult} standard $b_D$; the upper crossing $z_E$ corresponds to a low bar and sets an \emph{easy} one $b_E$.
	
	\begin{figure}
		\centering
		\begin{tikzpicture}[x=1.16cm,y=7.2cm,>=Latex]
			\def\hh{1}          
			\def\lev{0.25}      
			\def\zD{-0.4628}    
			\def\zE{1.4628}     
			\def\pk{0.5}        
			\def\Dbar{0.38292}  
			\draw[->,gray!70] (-2.35,0) -- (3.35,0) node[below right,black] {$z$};
			\draw[->,gray!70] (-2.2,-0.012) -- (-2.2,0.470) node[above,black]
			{\small$\Delta(h,z)$};
			\draw[dotted,gray!85] (-2.2,\lev) -- (3.2,\lev);
			\draw[dotted,gray!85] (\pk,0) -- (\pk,\Dbar);
			\draw[dotted,gray!85] (\zD,0) -- (\zD,\lev);
			\draw[dotted,gray!85] (\zE,0) -- (\zE,\lev);
			\draw[very thick,domain=-2.3:3.3,samples=180,smooth]
			plot (\x,{Phinorm(\x)-Phinorm(\x-sqrt(\hh))});
			\fill (\zD,\lev) circle (1.4pt);
			\fill (\zE,\lev) circle (1.4pt);
			\fill (\pk,\Dbar) circle (1.4pt);
			\node[font=\footnotesize] at (\zD,-0.004) [below] {$z_D$};
			\node[font=\footnotesize] at (\zE,-0.004) [below] {$z_E$};
			\node[font=\footnotesize] at (\pk,-0.004) [below] {${\sqrt h}/{2}$};
			\node[font=\footnotesize,gray!85] at (3.3,\lev) [above left] {$ch/M$};
		\end{tikzpicture}
		\caption{The incentive index $\Delta(h,\cdot)$}
		\label{fig:incentive-index}
	\end{figure}
	
	Implementability does not guarantee that the principal wants the agent to work: the rent she must concede to do so may exceed what the phase produces.  A second threshold therefore governs profitability.
	
	\begin{lemma}
		\label{lem:Mbar}
		There is a threshold $\underline M(h,c)\geq M_I(h,c)$ such that work is implementable and the principal's payoff from implementing it is nonnegative if and only if $M\geq\underline M(h,c)$.  Additionally, there exists $\bar c(h)\in(0,1)$ such that $\underline M(h,c)=M_I(h,c)$ for all $c\leq\bar c(h)$; otherwise $\underline M(h,c)>M_I(h,c)$.
	\end{lemma}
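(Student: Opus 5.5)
The plan is to write the principal's best payoff from implementing work as an explicit function of $M$ and show it is nondecreasing, so that the set of caps at which it is nonnegative is an up-ray. By Lemma~\ref{lem:threshold-reduction}, attention can be restricted to threshold contracts. Among admissible bars $z\in[z_D,z_E]$, the principal's payoff is $h - M\Phi(z)$. This is decreasing in $z$, so the optimum is the difficult bar $z_D$. Hence
\begin{equation*}
\Pi(M)\equiv h - M\Phi\bigl(z_D(M)\bigr), \qquad M\geq M_I(h,c).
\end{equation*}
The first step is to show $M\Phi(z_D(M))$ is nonincreasing in $M$. Directly: if $M'>M$, the contract paying $M$ above the bar $z_D(M)$ remains feasible under cap $M'$. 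The optimal expected payment under $M'$ is the minimal expected cost over $\mathcal F_h$ with range $[0,M']$, and that feasible set contains the one with range $[0,M]$. So the minimal cost weakly falls and $\Pi$ is nondecreasing and continuous. Continuity follows because $z_D(M)$ is the lower root of $\Delta(h,z)=ch/M$ on the increasing branch $z<\sqrt h/2$, where $\partial_z\Delta>0$; for $M>M_I$ the implicit function theorem applies, and at $M_I$ the root is continuous from the right. Define $\underline M(h,c)=\inf\{M\geq M_I:\Pi(M)\geq 0\}$. Existence (finiteness) follows because as $M\to\infty$, $z_D\to-\infty$ with $M\Phi(z_D)\to$ the cost of implementing work, which tends to $0$. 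Indeed, $\Delta\approx\Phi(z_D)$ for very negative $z_D$, so $M\Phi(z_D)\approx M\Delta=ch<h$ since $c<1$; more carefully, $\Phi(z_D)/\Delta(h,z_D)\to 1$ as $z_D\to-\infty$ is false for Gaussian tails, since the ratio is $1-\Phi(z-\sqrt h)/\Phi(z)\to1$. Then $M\Phi(z_D)\to ch$, giving $\Pi\to h(1-c)>0$. Monotonicity and continuity then give the ``if and only if.''

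For the second claim, $\underline M=M_I$ exactly when $\Pi(M_I)\geq0$. At $M_I$, $z_D=\sqrt h/2$ and $M_I=ch/\overline\Delta(h)$, so
\begin{equation*}
\Pi(M_I)=h\Bigl(1-c\,\frac{\Phi(\sqrt h/2)}{2\Phi(\sqrt h/2)-1}\Bigr),
\end{equation*}
which is nonnegative iff $c\leq \bar c(h)\equiv \bigl(2\Phi(\sqrt h/2)-1\bigr)/\Phi(\sqrt h/2)$. This ratio lies in $(0,1)$ since $\Phi(\sqrt h/2)\in(1/2,1)$. For $c>\bar c(h)$, $\Pi(M_I)<0$, and continuity gives $\underline M>M_I$ strictly.

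The main obstacle is the limit step establishing that some finite $M$ yields nonnegative payoff, i.e., verifying $M\Phi(z_D(M))\to ch$. I would handle it via the tail ratio $\Phi(z-\sqrt h)/\Phi(z)\le e^{-\sqrt h\,|z|}\to0$ from Mills-ratio bounds. Monotonicity is best argued by the feasible-set inclusion rather than by differentiating $z_D$.
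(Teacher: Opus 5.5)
Your proof is correct and follows the paper's route: reduce to the difficult bar, show $h-M\Phi(z_D)$ is monotone and continuous in $M$, and read the threshold off its limits, $h\bigl[1-c/\bar c(h)\bigr]$ at $M_I$ and $(1-c)h$ as $M\to\infty$. The one difference is how you get monotonicity: you use inclusion of the feasible sets, which gives weak monotonicity and is enough here, while the paper writes $M\Phi(z_D)=ch/\bigl(1-q(z_D)\bigr)$ with $q(z)=\Phi(z-\sqrt h)/\Phi(z)$ strictly increasing, which gives strict monotonicity and the limit at infinity together.

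Two slips in the limit step need fixing. First, $M\Phi(z_D)$ tends to $ch$, not to $0$; it is the agent's rent that vanishes. Second, the statement $\Phi(z_D)/\Delta(h,z_D)\to1$ that you call ``false'' is in fact true, and it is exactly the step you need. Your bound $q(z)\le e^{z\sqrt h-h/2}$ comes from integrating $\phi(t-\sqrt h)=\phi(t)e^{t\sqrt h-h/2}$ over $t\le z$ (the monotone likelihood ratio), not from a Mills-ratio bound.
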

	
	Appendix~\ref{app:single-phase} gives $\bar c(h)$ and $\underline M(h,c)$ in closed form.	
	We impose the following assumption to ensure that contracting without a review is strictly more profitable than shutting down.
	
	\begin{assumption}[Implementable and profitable full effort]
		\label{as:maintained}
		The payment cap satisfies $M>\underline M(T,c)$.
	\end{assumption}
	
	\subsection{The no-review benchmark}
	\label{sec:no-review}
	
	Setting $h=T$ gives the benchmark against which a review must be evaluated: the principal writes a single contract on terminal output and never intervenes.  By Lemma~\ref{lem:threshold-reduction} she may take it to be a threshold contract, and among the admissible bars she chooses the cheapest, which is the difficult one.
	
	\begin{proposition}
		\label{prop:no-review}
		Under Assumption~\ref{as:maintained}, the principal's optimal no-review payoff is
		\begin{equation*}
			V^{\NR}(T;c,M)=T-M\Phi\bigl(z_D(T,c,M)\bigr)\ >\ 0,
		\end{equation*}
		attained by the difficult bar.  Moreover $z_D(T,c,M)\to-\infty$ as $M\to\infty$, and
		\begin{equation*}
			\lim_{M\to\infty}V^{\NR}(T;c,M)=(1-c)T,
		\end{equation*}
		so the first-best surplus accrues to the principal as the payment cap is relaxed.
	\end{proposition}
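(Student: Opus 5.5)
The plan has four steps, taken in order: reduce to threshold contracts and pick the cheapest admissible bar, confirm that implementing work beats shirking, prove the limit $z_D\to-\infty$, and then read off the limit of the payoff.

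\emph{Step 1: reduction and choice of bar.} Under Assumption~\ref{as:maintained}, $M>\underline M(T,c)\geq M_I(T,c)$, so by \eqref{eq:bar-condition} the admissible set $[z_D,z_E]$ is nonempty. Suppose the principal implements work with some contract $w\in\mathcal F_T$. Her payoff is $T-\E_1[w]$. By Lemma~\ref{lem:threshold-reduction}, replacing $w$ with the threshold contract of equal cost under work keeps $\E_1$ unchanged and preserves (IC$_T$). So we may restrict attention to bars $z\in[z_D,z_E]$, whose cost is $M\Phi(z)$ by \eqref{eq:threshold-moments}. Since $\Phi$ is strictly increasing, the cost is uniquely minimized at $z_D$, which gives the payoff $T-M\Phi(z_D)$.

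\emph{Step 2: work beats shirking.} The alternative to implementing work is inducing shirking. That option yields at most $0-\E_0[w]\le 0$, attained by paying nothing. Lemma~\ref{lem:Mbar} states that the work payoff is nonnegative iff $M\ge\underline M(T,c)$. Strict positivity for $M>\underline M(T,c)$ comes from monotonicity in $M$: raising $M$ enlarges the admissible set, since the line $cT/M$ falls. More directly, the optimal cost $M\Phi(z_D(M))$ is strictly decreasing in $M$ once the IC binds, as I argue next.

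\emph{Monotonicity of the cost.} The IC binds at $z_D$, so
\[
M\Phi(z_D)\;=\;cT\,\frac{\Phi(z_D)}{\Phi(z_D)-\Phi(z_D-\sqrt T)}.
\]
The ratio $\Phi(z)/[\Phi(z)-\Phi(z-\sqrt T)]=1/(1-\Phi(z-\sqrt T)/\Phi(z))$ is increasing in $z$, because $\Phi(z-\sqrt T)/\Phi(z)$ is increasing by log-concavity of $\Phi$. Also, $z_D$ is strictly decreasing in $M$ on the increasing branch of $\Delta$. Hence the cost strictly decreases in $M$. Combined with the zero payoff at $\underline M$ (boundary case of Lemma~\ref{lem:Mbar}), this gives $V^{\NR}>0$ for $M>\underline M(T,c)$. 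If $\underline M>M_I$, the zero at $\underline M$ follows by continuity; if $\underline M=M_I$, the payoff at $M_I$ is already nonnegative, and strict decrease of the cost gives strict positivity.

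\emph{Step 3: $z_D\to-\infty$.} On $(-\infty,\sqrt T/2]$, $\Delta(T,\cdot)$ is increasing from $0$, and $z_D$ solves $\Delta(T,z_D)=cT/M\to 0$. Hence $z_D\to-\infty$.

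\emph{Step 4: the limit of the payoff (main obstacle).} The remaining claim is $M\Phi(z_D)\to cT$. Using the binding IC,
\[
M\Phi(z_D)=\frac{cT}{1-\Phi(z_D-\sqrt T)/\Phi(z_D)},
\]
so it suffices to show that $\Phi(z-\sqrt T)/\Phi(z)\to0$ as $z\to-\infty$. This is the Mills-ratio step. Using $\Phi(z)\sim\phi(z)/|z|$ for $z\to-\infty$, the ratio behaves like $\phi(z-\sqrt T)/\phi(z)=\exp(z\sqrt T-T/2)\to0$. Therefore $V^{\NR}\to T-cT=(1-c)T$, which is the first-best surplus, since the agent's cost $cT$ is exactly compensated and he earns no rent in the limit.
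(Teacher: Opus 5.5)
Your proof is correct and follows essentially the same route as the paper: Lemma~\ref{lem:threshold-reduction} reduces the problem to bars and selects the difficult one; strict monotonicity of $M\Phi(z_D)=cT/\bigl(1-q(z_D)\bigr)$, with $q(z)=\Phi(z-\sqrt T)/\Phi(z)$ as in the proof of Lemma~\ref{lem:Mbar}, gives positivity; and $q(z)\to0$ as $z\to-\infty$ gives the limit. Your Mills-ratio argument for $q\to0$ and your explicit comparison with the shirking option spell out steps that the paper leaves implicit in the proof of Lemma~\ref{lem:Mbar}.
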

	
	The limit is what makes a review worth studying.  Since $z_D=(T-b_D)/\sqrt T$, $z_D$ falling to $-\infty$ corresponds to a bar $b_D$ rising to $+\infty$ in output: as $M$ grows the principal promises an ever larger prize on an ever less likely outcome.  Because the Gaussian likelihood ratio is unbounded, that trade is favorable for the principal.  The agent's rent is what he could secure by shirking, $\E_0[w]=M\Phi\bigl(z_D-\sqrt T\bigr)$ by \eqref{eq:threshold-moments}, and as the bar rises this vanishes even though $M$ diverges: the reward event becomes overwhelming evidence of work.  A principal who can promise arbitrarily large payments therefore captures the whole surplus without ever consulting interim performance, and has nothing to gain from a review.
	
	When $M$ is finite the agent retains a strictly positive rent, and the question is whether the principal can do better by making the continuation depend on an interim signal.  The benchmark is generous to her in one further respect: under Assumptions~\ref{as:observability} and~\ref{as:phase-level} the agent commits at date $0$ to work through the entire horizon and never observes his own progress, so he cannot condition effort on how the project is going.
	
	\subsection{The review action set}
	\label{sec:extreme-points}
	
	Now take $h=T-\tau$ to be a continuation phase.  After a review score $x$ the principal's choice is a continuation contract $w(x,\cdot):\mathbb R\to[0,M]$ on the second-phase output $Y$, and the agent's best response to it determines whether second-phase work occurs.  For everything that follows, such a contract is summarized by two numbers: the expected continuation utility $r$ it leaves the agent net of effort cost, and the \emph{total} expected continuation surplus $s$ it generates---the principal's own expected continuation payoff is then $u=s-r$.  We call the pair $(r,s)$ an \emph{action} and describe the review by the set of actions available at it.  (We drop the qualifier \textit{expected} from payoffs and surplus below whenever doing so creates no confusion.)
	
	When effort is implemented, total surplus is $(1-c)h$, so the only remaining question is which rents the agent can be promised.
	
	\begin{lemma}
		\label{lem:attainable-rents}
		The rents attainable while implementing work are exactly $[r_D,r_E]$, where
		\begin{equation*}
			r_j=M\Phi(z_j)-ch=M\Phi\bigl(z_j-\sqrt h\bigr),
			\quad j\in\{D,E\},
			\qquad
			0<r_D<r_E<M-ch,
		\end{equation*}
		and $r_D+r_E=M-ch$.
	\end{lemma}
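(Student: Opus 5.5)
Within this lemma we take $M> M_I(h,c)$ strictly, which is what makes $z_D<z_E$; otherwise the interval degenerates. The rent of a work-implementing contract $w\in\mathcal F_h$ is $r=\E_1[w]-ch$, so the plan is to find the set of values $\E_1[w]$ over $\mathcal F_h$ and subtract $ch$.

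\textbf{Reduction to thresholds.} By Lemma~\ref{lem:threshold-reduction}, any $w\in\mathcal F_h$ can be replaced by a threshold contract $\tilde w=\ind_{\{y\geq b\}}M$ with the same $\E_1$ that still lies in $\mathcal F_h$. Conversely, threshold contracts are themselves contracts. Hence the set of attainable $\E_1[w]$ equals $\{M\Phi(z): \Delta(h,z)\geq ch/M\}$. By the single-peakedness of $\Delta(h,\cdot)$ and \eqref{eq:zD-zE}, this constraint set is exactly $[z_D,z_E]$. Since $\Phi$ is continuous and increasing, the attainable $\E_1$ values are $[M\Phi(z_D),M\Phi(z_E)]$, and the attainable rents are $[r_D,r_E]$ with $r_j=M\Phi(z_j)-ch$.

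\textbf{The two expressions for $r_j$.} The incentive constraint binds at both crossings: $M\Delta(h,z_j)=ch$. Substituting gives $M\Phi(z_j)-ch=M\Phi(z_j-\sqrt h)$, which is the second expression in the statement.

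\textbf{The identity $r_D+r_E=M-ch$.} Because $\Delta(h,\cdot)$ is symmetric about $\sqrt h/2$, the crossings satisfy $z_D+z_E=\sqrt h$, so $z_E-\sqrt h=-z_D$. Then
\[
r_D+r_E=\bigl[M\Phi(z_D)-ch\bigr]+M\Phi(-z_D)=M-ch.
\]
The symmetry itself follows directly from $\Phi(\sqrt h-z)-\Phi(-z)=\Phi(z)-\Phi(z-\sqrt h)$, using $\Phi(-u)=1-\Phi(u)$.

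\textbf{Strict inequalities.} First, $r_D=M\Phi(z_D-\sqrt h)>0$ because $\Phi$ is positive everywhere. Second, $r_D<r_E$ because $z_D<z_E$ and $\Phi$ is strictly increasing. Finally, $r_E=M-ch-r_D<M-ch$ because $r_D>0$.

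The only delicate step is confirming that the constraint set is the whole interval $[z_D,z_E]$, so that every intermediate rent is attained. This rests on the single-peakedness of $\Delta(h,\cdot)$. To verify it, note that $\partial_z\Delta=\phi(z)-\phi(z-\sqrt h)$, which is positive exactly when $|z|<|z-\sqrt h|$, that is, when $z<\sqrt h/2$. So $\Delta$ increases up to $\sqrt h/2$ and decreases after it, and its upper level set at height $ch/M$ is an interval.
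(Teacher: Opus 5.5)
Your proof is correct and follows the paper's own argument: you reduce to threshold contracts via Lemma~\ref{lem:threshold-reduction}, get the two rent expressions from the binding constraint at $z_D$ and $z_E$, and get $r_D+r_E=M-ch$ from the symmetry $z_E=\sqrt h-z_D$. The differences are cosmetic. You fill the interior of $[r_D,r_E]$ with pure threshold contracts, using monotonicity and continuity of $\Phi$, where the paper uses mixtures of the two extreme contracts on the convex set $\mathcal F_h$; and you state explicitly the strict cap $M>M_I(h,c)$ and the single-peakedness check that the paper leaves implicit.
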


	Here $r_D$ and $r_E$ are the rents left by the difficult and the easy standard respectively, and the lemma says that every attainable rent lies between them.  The principal's payoff from the difficult standard, $V^{\NR}=(1-c)h-r_D$, is the no-review benchmark value when $h=T$.
	The set of rent--surplus pairs available under work is therefore $\mathcal A_1=[r_D,r_E]\times\{(1-c)h\}$.
	
	If work is not implemented, surplus is zero, and since feasible payments range over
	$[0,M]$ the corresponding set is $\mathcal A_0=[0,M]\times\{0\}$.  The feasible action
	set is therefore $\mathcal A=\mathcal A_0\cup\mathcal A_1$.
	A review menu assigns one such action to every review score.  Thus, after each score, the
	principal chooses both a continuation profile---no work from $\mathcal A_0$ or work from
	$\mathcal A_1$---and the rent delivered within that profile.  The next section determines
	how these choices should vary with the review score in order to implement first-phase effort.

	\section{The exogenous review date}
	\label{sec:exogenous-review}
	
	Having established what can be offered after the review, we now move backward to the	incentives that precede it.  Throughout this section the review date $\tau\in(0,T)$ is fixed exogenously, and we continue to write $h=T-\tau$.
	
	\subsection{The optimal incentive ladder}
	\label{sec:fixed-review-optimal-menu}
	
	The review performance is $X_\tau\sim N(a\tau,\tau)$ if the agent chose action $e_t=a\in\{0,1\}$ throughout the first phase. Let $g_1^\tau$ and $g_0^\tau$ be the corresponding densities.  Their likelihood ratio	${g_1^\tau(x)}/{g_0^\tau(x)}=\exp(x-\tau/2)$ is strictly increasing.
	
	Let $a(x)=\bigl(r(x),s(x)\bigr)\in\mathcal A$ be the action assigned after review score $x$, delivering the agent rent $r(x)$ and the principal payoff $s(x)-r(x)$.  Working through the first phase yields the agent $\int r\,g_1^\tau-c\tau$ and shirking yields $\int r\,g_0^\tau$, so pre-review obedience requires
	\begin{equation*}
		\int_{\mathbb R} r(x)\left[g_1^\tau(x)-g_0^\tau(x)\right]dx \geq c\tau ,
		\tag{OB$_\tau$}
		\label{eq:pre-review-obedience}
	\end{equation*}
	which also guarantees participation, since $r\geq0$ makes the utility from shirking nonnegative. Conditional on inducing first-phase work, expected output accumulated by the exogenous review is $\tau$, so the principal's problem at $t=0$ is
	\begin{equation*}
		\begin{aligned}
			V^R(\tau)
			=
			\sup_{a(\cdot)}\quad
			&
			\tau + \int_{\mathbb R}\bigl[s(x)-r(x)\bigr]g_1^\tau(x)\,dx
			\\[3pt]
			\text{subject to}\quad
			&
			\eqref{eq:pre-review-obedience},\quad a(\cdot)\ \text{measurable},
			\quad a(x)\in\mathcal A\quad\text{for a.e. }x .
		\end{aligned}
		\tag{P$_\tau$}
		\label{eq:fixed-review-primal}
	\end{equation*}

	Now we characterize how the continuation profile varies with the midterm review performance.
	
	\begin{proposition}
		\label{prop:four-region-review-menu}
		Fix $\tau\in(0,T)$ and suppose first-phase work is implemented.  An optimal menu takes
		the ordered form
		\begin{equation*}
			a^*(x)
			=
			\begin{cases}
				A_B, & x<k_0^*,\\
				A_D, & k_0^*\leq x<k_1^*,\\
				A_E, & k_1^*\leq x<k_2^*,\\
				A_G, & x\geq k_2^*,
			\end{cases}
		\end{equation*}
		where $A_B=(0,0)$, $A_D=\bigl(r_D,(1-c)h\bigr)$,
		$A_E=\bigl(r_E,(1-c)h\bigr), A_G=(M,0)$ are four endpoints of $\mathcal A$, and
		$
		k_0^*\leq k_1^*\leq k_2^*.
		$
		Moreover, whenever both $k_j^*$ and $k_{j+1}^*$ are finite, the corresponding
		inequality is strict: $k_j^*<k_{j+1}^*$ for $j=0,1$.
	\end{proposition}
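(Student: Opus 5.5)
Since \eqref{eq:fixed-review-primal} is linear in the action assignment $a(\cdot)$ with a single integral constraint, I would use Lagrangian duality. Let $\lambda\ge 0$ be the multiplier on \eqref{eq:pre-review-obedience}, and write $L(x)=g_1^\tau(x)/g_0^\tau(x)=\exp(x-\tau/2)$. Dividing the Lagrangian integrand by $g_1^\tau(x)>0$, pointwise maximization amounts to choosing $a(x)=(r,s)\in\mathcal A$ to maximize
\begin{equation*}
	\Psi_x(r,s)=s-r+\lambda r\bigl(1-1/L(x)\bigr)=s-\beta(x)\,r,
	\qquad \beta(x)=1-\lambda+\lambda e^{\tau/2-x}.
\end{equation*}
The weight $\beta(x)$ is continuous and strictly decreasing in $x$ when $\lambda>0$.

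\textbf{Existence and strong duality.} I would first establish these by convexifying. Replace $\mathcal A$ pointwise by its convex hull; the primal value is unchanged because the set of assignments is nonatomic and Lyapunov's theorem applies to the vector of integrals (objective, constraint). Then apply a standard Slater argument. An interior point exists because first-phase work is implementable whenever the optimal menu is of interest; if $\lambda=0$, $A_D$ is optimal everywhere and the claim is trivial. Hence there is a multiplier $\lambda^*$ such that any pointwise maximizer satisfying complementary slackness is optimal. Measurability of the selection follows because the maximizer is a monotone step function.

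\textbf{Pointwise maximization.} For a linear function $s-\beta r$ over $\mathcal A=\mathcal A_0\cup\mathcal A_1$, the maximum is attained at an endpoint of one of the two segments. On $\mathcal A_0$ the value is $\max\{0,-\beta M\}$, attained at $A_B$ if $\beta>0$ and at $A_G$ if $\beta<0$. On $\mathcal A_1$ the value is $(1-c)h-\beta r_D$ if $\beta>0$ (at $A_D$) and $(1-c)h-\beta r_E$ if $\beta<0$ (at $A_E$). So only the four endpoints $A_B,A_D,A_E,A_G$ matter, which have rents $0<r_D<r_E<M$ by Lemma~\ref{lem:attainable-rents}.

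\textbf{Ordering.} The upper envelope $\max_j\{s_j-\beta r_j\}$ over these four points is convex and piecewise linear in $\beta$, with slopes $-r_j$. As $\beta$ decreases, the selected point therefore moves to weakly larger rent. Because $\beta(x)$ decreases in $x$, the maximizer has rent nondecreasing in $x$, which gives the order $A_B\to A_D\to A_E\to A_G$ with cutoffs $k_0^*\le k_1^*\le k_2^*$. Some rungs may be skipped; I would encode a skipped rung with coinciding or infinite cutoffs, so that, for example, if $A_D$ is never optimal then $k_0^*=k_1^*$. The cutoffs are defined as the $x$ where $\beta(x)$ equals the breakpoints of the envelope. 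Ties occur only on a measure-zero set of $x$ and are resolved by the half-open intervals.

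\textbf{Strictness of the cutoffs.} The main obstacle is showing $k_j^*<k_{j+1}^*$ when both are finite, i.e.\ that the middle rungs $A_D$ and $A_E$ are each used on a nondegenerate interval. Equivalently, the breakpoints of the envelope in $\beta$ are distinct. I would check that $A_D$ is the unique maximizer on an open $\beta$-interval around small positive $\beta$: at $\beta=0$ it beats $A_B$ by $(1-c)h>0$ and ties $A_E$, and for $\beta>0$ it beats $A_E$ strictly. Likewise $A_E$ is the unique maximizer for small negative $\beta$, beating $A_G$ by $(1-c)h-|\beta|(M-r_E)>0$. So breakpoints at $\beta=\beta_{BD}>0$, $0$, and $\beta_{EG}<0$ are distinct whenever they exist. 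The one subtlety is when $A_B$ dominates $A_D$ throughout, which would require $\beta_{BD}\le 0$; this is impossible since $(1-c)h>0$. Because $\beta(\cdot)$ is strictly decreasing and continuous, distinct breakpoints map to distinct finite cutoffs, giving $k_0^*<k_1^*$ and $k_1^*<k_2^*$ when finite. The case $\lambda^*=0$ yields $k_0^*=-\infty$ and $k_1^*=+\infty$ and is handled separately.
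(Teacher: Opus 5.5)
Your argument for the case $\lambda^*>0$ is the paper's. Your weight $\beta(x)$ is exactly $-Q_{\lambda}(x)$. Your envelope breakpoints $(1-c)h/r_D$, $0$ and $-(1-c)h/(M-r_E)$ are $-q_0,-q_1,-q_2$. Strict monotonicity of $\beta$ then turns them into strictly ordered cutoffs.

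\textbf{The gap is your treatment of $\lambda^*=0$.} You say that then $A_D$ is optimal at every score and the claim is trivial, with $k_0^*=-\infty$ and $k_1^*=+\infty$. But a constant assignment delivers constant rent, so $\int r\,(g_1^\tau-g_0^\tau)=0<c\tau$. The all-$A_D$ menu therefore violates \eqref{eq:pre-review-obedience} and cannot be optimal in a problem where first-phase work is implemented.

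The logic has to run the other way. At $\lambda=0$ the statewise objective is $s-r$. It is uniquely maximized by $A_D$ if $(1-c)h>r_D$ and by $A_B$ if $(1-c)h<r_D$. In either case the Lagrangian maximizer is constant and hence infeasible, so $\lambda^*=0$ cannot support an optimum. This is what gives $\lambda^*>0$, and you need that to have a strictly decreasing $\beta$ at all.

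Only at the knife edge $(1-c)h=r_D$ can a zero multiplier arise. There $A_B$ and $A_D$ tie at every score, while $A_E$ and $A_G$ are strictly dominated. The Lagrangian then does not pin down the menu, and you must choose, among selections of tied actions, one that satisfies obedience. Every such selection gives the principal the same payoff. So take the one that maximizes incentive supply: $A_B$ below $\tau/2$ and $A_D$ above, which yields $r_D\overline\Delta(\tau)$, the maximum over all $\{0,r_D\}$-valued selections. It is feasible whenever any selection is. This gives the ordered form with finite $k_0^*$ and $k_1^*=k_2^*=+\infty$, as in the paper's proof.

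\textbf{Two smaller points.}
\begin{enumerate}
\item Slater's condition needs a strictly feasible menu, and ``first-phase work is implementable'' does not supply one. You need $M\overline\Delta(\tau)>c\tau$, which the paper derives from Assumption~\ref{as:maintained} and the monotonicity of $M_I(\cdot,c)$. This matters: if $M\overline\Delta(\tau)=c\tau$, the only feasible menu is the $A_B$/$A_G$ split at $\tau/2$. That menu has $k_0^*=k_1^*=k_2^*$ all finite and equal, which contradicts the strictness claim.
\item The statement that ``any pointwise maximizer satisfying complementary slackness is optimal'' only gives sufficiency. You still must show that the ordered maximizer at $\lambda^*$ actually binds obedience. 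One route is primal attainment (compactness of the achievable set) together with a.e. uniqueness of the maximizer when $\lambda^*>0$. The paper's route is continuity and monotonicity of the incentive supply $I(\lambda;\tau)$, which yields $I(\lambda^*;\tau)=c\tau$.
\end{enumerate}
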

	
	Proposition~\ref{prop:four-region-review-menu} is our first main result. It shows that the optimal contract partitions the midterm-performance space into at most four ordered regions, assigning continuation profiles with progressively higher rents as performance improves.  After the lowest performance realizations, $A_B$ supplies the
	strongest punishment: the relationship ends with no payment and no second-phase production.
	At intermediate performance realizations the principal preserves the continuation surplus $(1-c)h$.  She first
	uses the difficult standard $A_D$, which delivers the smallest rent consistent with
	continuation effort, and then the easy standard $A_E$, which raises the agent's rent from
	$r_D$ to $r_E$ without sacrificing production.  Only after the highest performance realizations can the
	principal move to $A_G$, paying the maximal rent $M$ but giving up the continuation surplus
	altogether.  This progression linking better review outcomes with higher discrete rent levels is what we refer to as the \textit{incentive ladder.}
	
	The optimal ladder therefore uses continuation contracts as intermediate incentive
	instruments.  Before resorting to the maximal monetary reward, it increases the agent's
	continuation rent while keeping second-phase effort in place.  The weak ordering of the
	cutoffs allows some of the upper profiles (rungs of the ladder) to be absent.
	
	\subsection{Deriving the optimal ladder}
	\label{sec:deriving-optimal-menu}
	
	To understand why the optimal menu has this form, return to
	Problem~\eqref{eq:fixed-review-primal}.  This is an optimization over the entire menu $a(\cdot)$,
	and its only constraint linking choices across review scores is pre-review obedience. Despite its infinite-dimensional appearance, the problem admits an elementary geometric solution. The reduction also reveals the economic forces governing the choice among stopping and continuation arrangements.
	
	Let $\lambda^*\geq0$ be a
	Lagrange multiplier on \eqref{eq:pre-review-obedience} at the optimum and define the
	\emph{net shadow value of continuation rent}
	\begin{equation}
		Q_\lambda(x) \equiv	\lambda\left(1-\frac{g_0^\tau(x)}{g_1^\tau(x)}\right)-1
		=	\lambda\left[1-\exp\left(\frac{\tau}{2}-x\right)\right]-1 .
		\label{eq:Q-lambda-definition}
	\end{equation}
	Adjoining \eqref{eq:pre-review-obedience} and factoring out $g_1^\tau(x)$ gives the	Lagrangian
	\begin{equation}
		\mathcal L(a;\lambda,\tau)
		=
		\tau-\lambda c\tau
		+\int_{\mathbb R}
		g_1^\tau(x)\bigl\{s(x)+Q_\lambda(x)\,r(x)\bigr\}dx .
		\label{eq:frontier-substituted-lagrangian}
	\end{equation}
	For each $\lambda$, the Lagrangian is maximized over measurable menus satisfying
	$a(x)\in\mathcal A$ for almost every $x$; only pre-review obedience has been dualized.
	The optimal menu maximizes \eqref{eq:frontier-substituted-lagrangian} at
	$\lambda=\lambda^*$, together with primal feasibility and complementary
	slackness. This reformulation separates the two agency problems.  Post-review incentive compatibility
	and payment feasibility are encoded in the continuation-action set $\mathcal A$, while the
	multiplier incorporates pre-review obedience into the objective.  Because the work-to-shirking
	likelihood ratio, and hence $Q_\lambda(x)$, increases with observed performance, stronger
	performance places greater shadow value on the agent's continuation rent and therefore tilts
	the principal's choice toward higher-rent continuation profiles, subject to the
	surplus tradeoffs encoded by $\mathcal A$.\footnote{This is the familiar likelihood-ratio
		formulation of a static moral-hazard problem \citep{Holmstrom1979}, except that the reward is a
		continuation profile rather than a wage.}
	
	For a fixed $\lambda>0$, the Lagrangian objective is additively separable	across review-score realizations. Choosing a measurable menu therefore reduces to solving, for almost every review score $x$, the linear assignment problem $s+Q_\lambda(x)r$.
	The coefficient $Q_\lambda(x)$ captures the weight placed on the agent's	post-review continuation rent following score $x$. By	Lemma~\ref{lem:supporting-multiplier}\textup{(iii)} in	Appendix~\ref{app:review-menu}, the optimal supporting multiplier is positive outside a knife-edge case, so this characterization applies at the optimum.
	
	Although the original action choice at a given review score is from $\mathcal A=\mathcal A_0\cup\mathcal A_1$, it is convenient to solve the linear program
	\begin{equation*}
		a_\lambda(x)\in
		\mathop{\operatorname{arg\max}}\limits_{(r,s)\in\operatorname{co}(\mathcal A)}
		\bigl\{s+Q_\lambda(x)r\bigr\}.
	\end{equation*}
	Because the objective is linear in $(r,s)$, convexification does not change the maximal value, and an optimizer can always be selected from the original set $\mathcal A$.
	
	A linear objective over the compact polygon $\operatorname{co}(\mathcal A)$ can always be maximized at an extreme point.  Its four extreme points are $A_B$, $A_D$, $A_E$ and $A_G$,	all of which are defined in Proposition \ref{prop:four-region-review-menu} and belong to $\mathcal A$.  Hence, for each $x$, a solution to the above maximization problem can always be chosen from the original feasible set; public randomization cannot improve the principal's payoff.  At a breakpoint, an entire edge of the convex hull may also be optimal, but one of its feasible endpoints attains the same value.

	Figure~\ref{fig:continuation-trapezoid} depicts this review-score assignment problem.  The two bold segments form the original feasible set $\mathcal A$, while the shaded trapezoid is	$\operatorname{co}(\mathcal A)$.  The dashed lines are level sets of $s+qr$.  Maximizing
	the linear objective amounts to moving such a line in the direction $(q,1)$ until it supports
	the convex hull.

	\begin{figure}
		\centering
		\begin{tikzpicture}[x=4.4cm,y=4.0cm,>=Latex]
			\draw[->,gray!70] (-0.13,0) -- (1.66,0) node[right,black] {$r$};
			\draw[->,gray!70] (0,-0.17) -- (0,1.17) node[above,black] {$s$};
			\fill[gray!10] (0,0) -- (1.2,0) -- (0.814,0.7) -- (0.086,0.7) -- cycle;
			\draw[gray!45] (0,0) -- (0.086,0.7);
			\draw[gray!45] (1.2,0) -- (0.814,0.7);
			\draw[dashed,gray!65] (0,0) -- (0.0742,0.89);
			\node[gray!80,font=\footnotesize] at (0.1,0.89) [above] {$q<q_0$};
			\draw[dashed,gray!65] (0,0.614) -- (0.40,1.014);
			\node[gray!80,font=\footnotesize] at (0.17,1.08) [right] {$q_0<q<0$};
			\draw[dashed,gray!65] (0.52,0.934) -- (1.27,0.334);
			\node[gray!80,font=\footnotesize] at (1.27,0.334) [right] {$0<q<q_2$};
			\draw[dashed,gray!65] (0.92,0.84) -- (1.30,-0.30);
			\node[gray!80,font=\footnotesize] at (0.92,0.84) [above] {$q>q_2$};
			\draw[->,gray!85,semithick] (0,0) -- (-0.115,0.010);
			\draw[->,gray!85,semithick] (0.086,0.7) -- (0.004,0.782);
			\draw[->,gray!85,semithick] (0.814,0.7) -- (0.887,0.791);
			\draw[->,gray!85,semithick] (1.2,0) -- (1.310,0.037);
			\draw[very thick,gray!60] (0,0) -- (1.2,0);
			\node[gray!60,font=\small] at (0.60,-0.05) [below] {$\mathcal A_0$};
			\draw[very thick] (0.086,0.7) -- (0.814,0.7);
			\node[font=\small] at (0.45,0.64) [below] {$\mathcal A_1$};
			\draw[dotted,gray!85] (0,0.7) -- (0.086,0.7);
			\node[gray!80,font=\footnotesize] at (0,0.7) [left] {$(1-c)h$};
			\fill (0,0) circle (1.5pt);        \node[font=\small] at (0.004,-0.012) [below right] {$A_B$};
			\fill (1.2,0) circle (1.5pt);      \node[font=\small] at (1.2,-0.012) [below] {$A_G$};
			\fill (0.086,0.7) circle (1.5pt);  \node[font=\small] at (0.098,0.706) [above right] {$A_D$};
			\fill (0.814,0.7) circle (1.5pt);  \node[font=\small] at (0.802,0.706) [above left] {$A_E$};
			\node[gray!80,font=\footnotesize] at (0.094,0.694) [below right] {$r_D$};
			\node[gray!80,font=\footnotesize] at (0.806,0.694) [below left] {$r_E$};
		\end{tikzpicture}
		\caption{The review action set and the supporting lines of the assignment objective }
		\label{fig:continuation-trapezoid}
	\end{figure}

	To solve the linear program, fix the scalar $q=Q_\lambda(x)$.  The four vertices deliver
	\begin{equation*}
		v_B(q)=0,		\qquad	v_D(q)=(1-c)h+q\,r_D,	\qquad	v_E(q)=(1-c)h+q\,r_E,
		\qquad	v_G(q)=q\,M .
	\end{equation*}
	Their slopes are ordered by $0<r_D<r_E<M$.  The upper envelope therefore moves toward
	higher-rent actions as $q$ rises.  The adjacent crossings are
	\begin{equation}
		q_0=-\frac{(1-c)h}{r_D}
		\;<\;
		q_1=0
		\;<\;
		q_2=\frac{(1-c)h}{M-r_E}.
		\label{eq:three-breakpoints}
	\end{equation}
	Each breakpoint is an edge of Figure~\ref{fig:continuation-trapezoid}.  A level set of
	$s+qr$ has slope $-q$ in $(r,s)$ space, so at $q=q_j$ the supporting line is parallel to
	the edge joining the two actions that tie there: $-q_0=(1-c)h/r_D$ is the slope of
	$A_BA_D$, $-q_1=0$ that of the flat top edge $A_DA_E$, along which rent rises at no cost
	in surplus, and $-q_2=-(1-c)h/(M-r_E)$ that of $A_EA_G$.  Since the upper boundary of
	$\operatorname{co}(\mathcal A)$ is concave, these three slopes fall from left to right,
	which is the ordering in \eqref{eq:three-breakpoints}.
	It follows that the maximizing action is $A_B$ for $q<q_0$, $A_D$ for
	$q_0<q<q_1$, $A_E$ for $q_1<q<q_2$, and $A_G$ for $q>q_2$, with adjacent
	actions tied at each breakpoint. Since $\lambda^*>0$,
	$Q_{\lambda^*}(x)$ is strictly increasing in $x$. As performance rises,
	$q=Q_{\lambda^*}(x)$ therefore passes through these intervals in the same
	order. This converts the review-score ranking into the ordered performance
	regions stated in Proposition~\ref{prop:four-region-review-menu}.

	\subsection{On the form of the optimal ladder}
	\label{sec:form-optimal-menu}

	This subsection 	examines the economic forces that shape the specific form of the optimal menu---when termination is
	used following particularly poor or particularly strong performance, and when continuation is
	differentiated through more or less demanding standards.

	\begin{corollary}
		\label{cor:menu-form}
		Suppose $(1-c)h\neq r_D$.\footnote{The knife-edge case			$(1-c)h=r_D$, in which a zero supporting multiplier may arise, is discussed in Appendix~\ref{app:review-menu}.}
		Whenever first-phase work is implemented, $k_0^*$ is finite and the bad-stop and	difficult-standard regions are nonempty.  Moreover:
		\begin{enumerate}[(i)]
			\item If $1+q_0<\lambda^*\leq1$, then	$k_1^*=k_2^*=+\infty$, and only the bad-stop and difficult-standard regions are	nonempty.
			\item If $1<\lambda^*\leq1+q_2$, then
			$k_0^*<k_1^*<+\infty$ and $k_2^*=+\infty$, so the bad-stop,
			difficult-standard, and easy-standard regions are nonempty.
			\item If $\lambda^*>1+q_2$, then
			$k_0^*<k_1^*<k_2^*<+\infty$, and all four regions are nonempty.
		\end{enumerate}
	\end{corollary}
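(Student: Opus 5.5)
The plan is to read everything off the pointwise assignment problem of Section~\ref{sec:deriving-optimal-menu}, at the optimal multiplier $\lambda^*$. By Lemma~\ref{lem:supporting-multiplier}\textup{(iii)}, the hypothesis $(1-c)h\neq r_D$ gives $\lambda^*>0$. The optimal menu then maximizes $s+Q_{\lambda^*}(x)r$ over $\mathcal A$ for almost every $x$. Off the three breakpoints $q_0<q_1=0<q_2$ of \eqref{eq:three-breakpoints}, this maximizer is the unique vertex identified there. Hence, up to a null set, the region in which a given action is used is determined by the preimage under $Q_{\lambda^*}$ of the corresponding open interval of $q$.

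\textbf{Step 1: range of $Q_\lambda$ and location of the cutoffs.} By \eqref{eq:Q-lambda-definition}, $Q_\lambda(x)=\lambda[1-e^{\tau/2-x}]-1$. For $\lambda>0$ this is continuous and strictly increasing, with $Q_\lambda(x)\to-\infty$ as $x\to-\infty$ and $Q_\lambda(x)\uparrow\lambda-1$ as $x\to+\infty$. The supremum $\lambda-1$ is not attained, so the range is exactly $(-\infty,\lambda-1)$. I would therefore set $k_j^*$ to be the unique solution of $Q_{\lambda^*}(k_j^*)=q_j$ when $q_j<\lambda^*-1$, and $k_j^*=+\infty$ otherwise. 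Equivalently, $k_j^*<\infty$ if and only if $\lambda^*>1+q_j$. Strict monotonicity of $Q_{\lambda^*}$ together with $q_0<q_1<q_2$ yields $k_j^*<k_{j+1}^*$ whenever both are finite. This also explains why the three cases use weak inequalities on the right: at $\lambda^*=1$ or $\lambda^*=1+q_2$ the relevant breakpoint equals the unattained supremum, so the corresponding cutoff is $+\infty$.

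\textbf{Step 2: the bad-stop and difficult-standard regions are always nonempty.} Since $q_0>-\infty$, $k_0^*$ is finite, so $A_B$ is used on the nondegenerate set $(-\infty,k_0^*)$. The step I expect to be the main obstacle is showing that $A_D$ is used, that is, $\lambda^*>1+q_0$; this is the lower bound in case (i). I would argue by contradiction. If $\lambda^*\le 1+q_0$, then $Q_{\lambda^*}(x)<\lambda^*-1\le q_0$ for every $x$. The unique pointwise maximizer is then $A_B$ almost everywhere, so $r\equiv0$ a.e. But then the left side of \eqref{eq:pre-review-obedience} is zero, which is less than $c\tau$. This contradicts the premise that first-phase work is implemented by the optimal menu, which must be primal feasible. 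Hence $\lambda^*>1+q_0$, so $q_0<\lambda^*-1$ and there is a nondegenerate interval of $x$ with $Q_{\lambda^*}(x)\in(q_0,\min\{0,\lambda^*-1\})$, on which $A_D$ is used.

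\textbf{Step 3: the case split.} I would then sort by where $\lambda^*-1$ falls relative to $q_1=0$ and $q_2$.
\begin{enumerate}[(i)]
\item If $\lambda^*\le1$, then $Q_{\lambda^*}<0=q_1$ everywhere. So $k_1^*=k_2^*=+\infty$, and only $A_B$ and $A_D$ appear.
\item If $1<\lambda^*\le1+q_2$, then $0$ lies in the range of $Q_{\lambda^*}$ but $q_2$ does not. So $k_1^*$ is finite, $k_2^*=+\infty$, and $A_E$ is used on the nondegenerate set $[k_1^*,\infty)$.
\item If $\lambda^*>1+q_2$, all three breakpoints lie in the range, so all cutoffs are finite and strictly ordered, and each of the four intervals has positive length.
\end{enumerate}
The breakpoint ties occur only at the finitely many points $k_j^*$, which form a null set and do not affect any of these conclusions.
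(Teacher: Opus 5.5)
Your proposal is correct and follows the paper's own argument. You rule out $k_0^*=+\infty$ (equivalently $\lambda^*\le 1+q_0$) because an identically zero rent cannot satisfy pre-review obedience, and you then read the three cases off the range $(-\infty,\lambda^*-1)$ of the strictly increasing index $Q_{\lambda^*}$. One small ordering slip: $q_0>-\infty$ gives only that the bad-stop region is nonempty, not that $k_0^*$ is finite, since by your own Step~1 finiteness requires $\lambda^*>1+q_0$; your contradiction argument immediately afterward supplies exactly this.
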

	
	The multiplier $\lambda^*$ is the shadow value of the first-phase obedience constraint.  It
	measures the principal's marginal gain from relaxing the incentive requirement for
	first-phase work.  A large multiplier therefore means that this requirement is particularly
	costly to satisfy: first-phase incentives are scarce, and the principal is willing to use
	stronger rewards following favorable performance.
	
	This interpretation is reflected directly in the menu.  The index $Q_{\lambda^*}(x)$ sweeps
	the interval $(-\infty,\lambda^*-1)$ as performance $x$ ranges over $\mathbb R$, and it
	crosses breakpoint $q_j$ if and only if $\lambda^*>1+q_j$.  Implementing first-phase work
	requires $\lambda^*>1+q_0$, so the bad stop and the difficult standard are always used.  Once
	$\lambda^*>1$, the incentive value of additional rent is high enough to justify the easy
	standard, which rewards stronger performance while preserving second-phase production.  Once
	$\lambda^*>1+q_2$, even the easy standard provides too little rent at the top: the principal
	then uses the good stop, paying the maximal reward despite sacrificing the continuation
	surplus.
	
	The range of $Q_{\lambda^*}$ also explains the asymmetry between the lower and upper regions.
	It is unbounded below, so sufficiently poor performance always receives the strongest
	punishment.  Its upper bound is $\lambda^*-1$, however, so the stronger rewards become active
	only when the shadow value of first-phase incentives is sufficiently large.  Whenever a
	breakpoint is reached, the corresponding cutoff is the unique solution to
	$Q_{\lambda^*}(k_j^*)=q_j$.
	
	An especially sharp implication emerges when the review occurs late. As $\tau$ approaches $T$, the remaining continuation horizon becomes short. The continuation rungs do not disappear, however.  Instead, the threshold for activating the highest reward converges to its lower bound, while the	shadow value of first-phase incentives remains strictly above that bound. Consequently, every rung of the incentive ladder is used sufficiently close to the terminal date.
	
	\begin{proposition}
		\label{prop:terminal-full-ladder}
		Suppose Assumption~\ref{as:maintained} holds. Then
		$k_2^*(\tau)<+\infty$ for every review date $\tau$ sufficiently
		close to $T$ from below.
	\end{proposition}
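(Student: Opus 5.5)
By Corollary~\ref{cor:menu-form}(iii), $k_2^*(\tau)<+\infty$ holds exactly when $\lambda^*(\tau)>1+q_2(\tau)$, where $q_2(\tau)=(1-c)h/(M-r_E(h))$ and $h=T-\tau$. The plan is to show two things. First, $q_2(\tau)\to0$ as $\tau\uparrow T$. Second, $\lambda^*(\tau)$ stays bounded away from $1$ from above, i.e.\ $\liminf_{\tau\uparrow T}\lambda^*(\tau)>1$.

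\textbf{Step 1: the threshold $q_2(\tau)$ vanishes.} The numerator $(1-c)h$ tends to zero. For the denominator, Lemma~\ref{lem:attainable-rents} gives $M-r_E=r_D+ch$, so it suffices that $r_D$ stays bounded away from zero, or at least that $r_D+ch\gg h$. Since $\overline\Delta(h)\sim\sqrt{h/(2\pi)}$, implementability as $h\to0$ requires only $ch/M\le\overline\Delta(h)$, which holds for small $h$. At the lower crossing, $\Delta(h,z_D)=ch/M\approx\sqrt h\,\phi(z_D)$, so $\phi(z_D)\approx c\sqrt h/M\to0$ and hence $z_D\to-\infty$. The relevant quantity $r_D=M\Phi(z_D-\sqrt h)$ then behaves like $M\phi(z_D)/|z_D|$. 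With $|z_D|\sim\sqrt{\log(1/h)}$, this gives $r_D\sim c\sqrt h/\sqrt{\log(1/h)}$, which dominates $h$. Therefore $q_2\lesssim(1-c)h/r_D\to0$. Note also that $r_D+r_E=M-ch$ forces $r_E\to M$, which is consistent with this.

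\textbf{Step 2: the multiplier stays above $1$.} This is the main obstacle. I would argue by weak duality. The Lagrangian value at $\lambda^*$ bounds $V^R(\tau)$ from above, and $V^R(\tau)$ is at least the payoff of some feasible menu. As $\tau\to T$, the review score becomes nearly as informative as $X_T$, while continuation actions have rents in $[0,M]$ and surplus at most $(1-c)h\to0$.

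Suppose instead that $\lambda^*(\tau_n)\le1+\varepsilon$ along a sequence $\tau_n\uparrow T$. I would show that obedience then cannot be met. At $\lambda\le1$, cases (i) and (ii) of Corollary~\ref{cor:menu-form} mean the menu only uses $A_B$ and $A_D$, so rents are at most $r_D\to0$. Then $\int r(g_1-g_0)\le r_D\to0<c\tau\approx cT$, which violates \eqref{eq:pre-review-obedience}. For $\lambda\in(1,1+\varepsilon]$, rents lie in $\{0,r_D,r_E\}$ with $r_E<M$. The maximal incentive delivered is then at most $r_E\,\overline\Delta(\tau)$, whose limit is $M\,\overline\Delta(T)$.

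This limit does not immediately give a contradiction, since Assumption~\ref{as:maintained} gives $M\overline\Delta(T)>cT$ rather than the reverse. So I would instead use the direct cost comparison. Without the good stop, the optimal menu's payoff differs from the best menu that also uses $A_G$ by a first-order amount. In the limit $\tau\to T$, $A_E\to A_G$ and $A_D\to A_B$. Hence the limiting problem at $\tau=T$ is the no-review problem \eqref{eq:IC-h} with $h=T$. Its multiplier is the static multiplier $\mu$ for the threshold contract $z_D(T)$.

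\textbf{Step 3: the limiting multiplier exceeds $1$.} In the static problem, the Kuhn--Tucker condition for a bang-bang contract is that $\mu(1-\ell^T(y))-1$ crosses zero at the bar $b_D$. Here $\ell^T=f_0^T/f_1^T$, and $b_D$ is finite because $M<\infty$ and Assumption~\ref{as:maintained} holds. This requires $\mu(1-\ell^T(b_D))=1$, and since $\ell^T(b_D)\in(0,1)$ is strictly positive, $\mu>1$ strictly.

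I would then establish continuity of $\lambda^*(\tau)$ at $\tau=T$ by an upper-hemicontinuity argument. The dual function $\lambda\mapsto\sup_a\mathcal L(a;\lambda,\tau)$ converges locally uniformly to the static dual. This holds because $g_j^\tau\to$ the law of $X_T$ in total variation after accounting for the translation, and because $A_E\to(M,0)$ and $A_D\to(0,0)$. The static dual is strictly convex near $\mu$, with a unique minimizer. Therefore $\lambda^*(\tau)\to\mu>1$.

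Combining this with $q_2(\tau)\to0$ from Step 1 gives $\lambda^*(\tau)>1+q_2(\tau)$ for $\tau$ near $T$. Corollary~\ref{cor:menu-form}(iii) then yields $k_2^*(\tau)<\infty$. The hardest part is this convergence of the dual, in particular uniqueness of the static multiplier and uniform control of the $O(h)$ surplus terms.
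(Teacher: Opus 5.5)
Your proposal is correct, and its skeleton matches the paper's proof via Lemma~\ref{lem:terminal-ladder-limits}. First, $r_D(h)\gg h$, so $q_2(h)=(1-c)h/(ch+r_D(h))\to0$. Second, $\lambda^*(T-h)\to\lambda_T^{\NR}=1/(1-\ell_T)>1$, which is exactly your static multiplier $\mu$.

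The difference is in how you obtain the multiplier convergence.

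\textbf{The paper's route.} The paper works with the incentive-supply function $I(\lambda;\tau)$ of \eqref{eq:incentive-supply-function}. Every supporting multiplier solves $I(\lambda;\tau)=c\tau$, and $I$ is monotone in $\lambda$. So it suffices to check two fixed values $\lambda_-<\lambda_T^{\NR}<\lambda_+$ with $\lambda_->1$. There the outer terms of $I$ vanish, because $r_D\to0$ and $M-r_E=ch+r_D\to0$. The middle term tends to $M\Delta(T,z_1(\lambda,T))$, which crosses $cT$ strictly at $\lambda_T^{\NR}$.

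\textbf{Your route.} You use that a supporting multiplier minimizes the convex dual $D_\tau(\lambda)=\tau-\lambda c\tau+\int g_1^\tau\mathcal V(Q_\lambda,h)$; this needs only weak duality. You then use that $D_{T-h}\to D_T$ and that $D_T$ has a unique minimizer. Convexity then traps every minimizer of $D_{T-h}$ near $\mu$.

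\textbf{How the two relate.} Since $I(\lambda;\tau)-c\tau$ is the derivative of $D_\tau$, the two arguments are the derivative and level versions of one idea. They rest on the same fact: on $\lambda>1$, the derivative of $D_T$ is $M\Delta(T,z_1(\lambda,T))-cT$, which is strictly increasing. That is what delivers the uniqueness you assert but do not prove. Your version avoids the finite-$h$ characterization $I(\lambda^*;\tau)=c\tau$ of Lemma~\ref{lem:supporting-multiplier}. The paper's version avoids any functional-convergence argument.

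The step you flag as hardest is actually easy. For every $q$,
\[
0\le\mathcal V(q,h)-M\max\{0,q\}\le(1-c)h .
\]
So the dual converges uniformly in $\lambda$ on compacts, even though $Q_\lambda$ is unbounded below.

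\textbf{Small points.}
\begin{itemize}
\item The first half of your Step~2 is a dead end and can be deleted. This covers the bound by $r_E\overline\Delta(\tau)$ and the unsupported claim that dropping the good stop costs a first-order amount.
\item For $\lambda\le1$ only case~(i) of Corollary~\ref{cor:menu-form} is relevant, not case~(ii).
\item To invoke Corollary~\ref{cor:menu-form}(iii) you need $(1-c)h\neq r_D(h)$. Your Step~1 ($r_D\gg h$) supplies this for small $h$, and the paper notes it explicitly.
\end{itemize}
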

	
	For any finite admissible cap, a sufficiently late review uses all four continuation profiles.
	To see the force behind this result, recall that the good stop becomes active when $\lambda^*>1+q_2$.  As the continuation horizon $h=T-\tau$ vanishes, the surplus sacrificed by replacing the easy standard with the good stop shrinks faster than the additional rent thereby created.  Hence $q_2(h)\to0$.  At the same time, the fixed-review shadow value converges to the no-review shadow value, which is strictly greater than one under	Assumption~\ref{as:maintained}.  Thus the good-stop threshold is eventually	crossed, and all four regions are active.
	
	\section{The timing and value of review}
	\label{sec:endogenous-review}
	
	In this section, the principal chooses the review date endogenously from
	$\tau\in(0,T]$.  A date $\tau\in(0,T)$ represents a genuine interim review, whereas	$\tau=T$ represents no interim review.  A review at $\tau=0$ is equivalent to one at $\tau=T$, so we use $\tau=T$ to represent no review and do not consider $\tau=0$
	separately.  We first study how the value of a genuine review varies with its date and then
	determine when such a review dominates no review.
	
	\subsection{The optimal genuine review date}

	For the remainder of the timing analysis, fix a genuine review date
	$\tau\in(0,T)$ and let $h=T-\tau>0$.  Using the fixed-date characterization from
	Section~\ref{sec:exogenous-review}, define the
	continuation support function for a remaining horizon $h$ and shadow value $q$ by
	\begin{equation*}
		\mathcal V(q,h)
		\equiv
		\max\left\{0,\ (1-c)h+q\,r_D,\ (1-c)h+q\,r_E,\ qM\right\}.
	\end{equation*}
	Complementary slackness implies that the optimized value equals the Lagrangian at the
	constrained optimum.  Hence
	\begin{equation*}
		V^R(\tau)
		=
		\tau-\lambda^*c\tau
		+
		\int_{\mathbb R}g_1^\tau(x)
		\mathcal V\bigl(Q_{\lambda^*}(x),T-\tau\bigr)\,dx,
	\end{equation*}
	where $Q_{\lambda^*}(x)$ is the net shadow value of continuation rent defined in
	Equation~\eqref{eq:Q-lambda-definition}.
	
	\begin{lemma}
		\label{lem:review-boundary-values}
		Suppose Assumption~\ref{as:maintained} holds.  Then
		\begin{equation*}
			\lim_{\tau\downarrow0}V^R(\tau)		=	\lim_{\tau\uparrow T}V^R(\tau)	=	V^{\NR}(T;c,M).
		\end{equation*}
	\end{lemma}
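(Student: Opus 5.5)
The plan is to squeeze $V^R(\tau)$ between a feasible lower bound and a weak-duality upper bound at each endpoint. Both bounds will be built only from objects already available: the single-phase solution of Section~\ref{sec:single-phase}, Lemma~\ref{lem:threshold-reduction}, and Lemma~\ref{lem:attainable-rents}. Throughout I use continuity in $h$ of $z_D(h,c,M)$, $z_E(h,c,M)$, $r_D(h)$ and $r_E(h)$ wherever $M>M_I(h,c)$. Under Assumption~\ref{as:maintained} this holds for $h$ near $T$, and the continuity follows from the implicit function theorem applied to $\Delta(h,z)=ch/M$ at a transversal crossing. I also use $V^{\NR}(T;c,M)=(1-c)T-r_D(T)>0$.

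\emph{The limit $\tau\uparrow T$.} For the lower bound, take the menu that uses only $A_B$ and $A_G$: pay $M$ at once if $X_\tau$ clears the difficult bar for a phase of length $\tau$, and stop otherwise. This menu is feasible for \eqref{eq:fixed-review-primal} as soon as $M>M_I(\tau,c)$, which holds near $T$. It yields
\begin{equation*}
V^R(\tau)\ \geq\ \tau-M\Phi\bigl(z_D(\tau,c,M)\bigr)\ \longrightarrow\ T-M\Phi\bigl(z_D(T,c,M)\bigr)=V^{\NR}.
\end{equation*}
For the upper bound, note that $s(x)-r(x)\leq(1-c)h-r(x)$ for every $(r,s)\in\mathcal A$. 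Hence any feasible menu gives a payoff of at most $\tau+(1-c)h-\int r\,g_1^\tau$. Since $r$ takes values in $[0,M]$ and satisfies \eqref{eq:pre-review-obedience}, Lemma~\ref{lem:threshold-reduction} applied to a phase of length $\tau$ gives $\int r\,g_1^\tau\geq M\Phi(z_D(\tau,c,M))$. Therefore $V^R(\tau)\leq (1-c)(T-\tau)+\tau-M\Phi(z_D(\tau,c,M))$, and this bound also converges to $V^{\NR}$.

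\emph{The limit $\tau\downarrow0$, upper bound.} This is the delicate direction, because the crude bound above only gives $(1-c)T$. Instead I will use weak duality. For any fixed $\lambda\geq0$, say $\lambda=1$,
\begin{equation*}
V^R(\tau)\leq \tau-\lambda c\tau+\int g_1^\tau\,\mathcal V\bigl(Q_\lambda(x),h\bigr)\,dx .
\end{equation*}
I then bound $\mathcal V$ by the $A_D$ line plus a Lipschitz error around $q=-1$:
\begin{equation*}
\mathcal V(q,h)\leq (1-c)h+q\,r_D(h)+2M|q+1| .
\end{equation*}
I expect to check this piece by piece. The $A_E$ and $A_G$ pieces exceed the $A_D$ line by $q(r_E-r_D)$ and $q(M-r_D)-(1-c)h$ respectively, and each of these is at most $M|q+1|$. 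The zero piece exceeds the line by $-(1-c)h-qr_D\leq r_D|q+1|-\bigl((1-c)h-r_D\bigr)$, which is at most $r_D|q+1|$ because $(1-c)h-r_D(h)>0$ for $h$ near $T$. Now integrate, using $Q_\lambda+1=\lambda(1-g_0^\tau/g_1^\tau)$. The linear term gives $(1-c)h-r_D+\lambda r_D\int(g_1^\tau-g_0^\tau)=(1-c)h-r_D$. The error term gives $2M\lambda\int|g_1^\tau-g_0^\tau|=4M\lambda\bigl(2\Phi(\sqrt\tau/2)-1\bigr)\to0$. Hence $\limsup_{\tau\downarrow0}V^R(\tau)\leq(1-c)T-r_D(T)=V^{\NR}$.

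\emph{The limit $\tau\downarrow0$, lower bound.} I will use a two-rung menu: $A_E$ if $x\geq k\sqrt\tau$ and $A_D$ otherwise. Its obedience slack is $(r_E-r_D)\bigl[\Phi(\sqrt\tau-k)-\Phi(-k)\bigr]\approx(r_E-r_D)\sqrt\tau\,\phi(k)$, and its payoff is $\tau+(1-c)h-r_D-(r_E-r_D)\bigl(1-\Phi(k)\bigr)$. Choosing $k=k(\tau)\to\infty$ with $\phi(k)\approx C\sqrt\tau$, that is $k\sim\sqrt{\log(1/\tau)}$, the slack is still at least $c\tau$ while $1-\Phi(k)\to0$. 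The payoff therefore tends to $(1-c)T-r_D(T)=V^{\NR}$. The main obstacle is the $\tau\downarrow0$ upper bound, specifically obtaining the uniform $|q+1|$-Lipschitz domination of $\mathcal V$. Everything else reduces to continuity of the single-phase objects in $h$.
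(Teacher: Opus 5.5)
Your proof is correct. Like the paper, you squeeze $V^R$ at each endpoint between a feasible menu and an upper bound; the differences lie in how the individual bounds are obtained.

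For $\tau\uparrow T$, your lower bound is the paper's: the $A_B$/$A_G$ menu that replicates the no-review contract of length $\tau$. Your upper bound is more rigorous than the paper's. The paper argues only that the action set converges to $\{(r,0):r\in[0,M]\}$ and that the fixed-review problem ``converges'' to the no-review problem. You instead bound post-review surplus by $(1-c)h$ and observe that the rent schedule $r(X_\tau)$ is itself a $[0,M]$-valued contract satisfying \eqref{eq:pre-review-obedience}, i.e.\ a bounded incentive contract for a phase of length $\tau$. Lemma~\ref{lem:threshold-reduction} then forces $\E_1[r]\geq M\Phi(z_D(\tau))$, which gives an explicit bound converging to $V^{\NR}(T;c,M)$. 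This is the same device the paper later uses in the proof of Proposition~\ref{prop:loose-cap-no-review}.

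For the $\tau\downarrow0$ lower bound, the two routes differ only cosmetically. The paper tops up $A_D$ with $A_G$ on an upper tail, an increment of $M-r_D$; you use $A_E$, an increment of $r_E-r_D$. Either works, since only the vanishing work-probability of the reward region matters, and your version makes the rate $k\sim\sqrt{\log(1/\tau)}$ explicit. One small slip: that work-probability is $\Phi(\sqrt\tau-k)$, not $1-\Phi(k)$, which is the shirking probability. Both vanish, so nothing breaks.

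Finally, the step you flag as the main obstacle is actually the easiest. On $\mathcal A_1$ one has $r\geq r_D(h)$, so pointwise $s-r\leq\max\{0,(1-c)h-r_D(h)\}$. This gives $V^R(\tau)\leq\tau+(1-c)(T-\tau)-r_D(T-\tau)\to V^{\NR}(T;c,M)$ immediately, which is the paper's argument. It is also exactly your own weak-duality display at $\lambda=0$, where $Q_0\equiv-1$ and $\mathcal V(-1,h)=\max\{0,(1-c)h-r_D\}$. Your $\lambda=1$ version, with the $|q+1|$-domination and the total-variation error term, is valid but unnecessary. Your ``crude'' bound lost the limit only because it discarded $r\geq r_D$ on the work segment.
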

	
	Both boundaries degenerate, for opposite reasons.  As $\tau\downarrow0$ the review score carries almost no information about first-phase effort, so there is nothing to condition on; as $\tau\uparrow T$ the continuation phase vanishes, so there is no continuation opportunity left to allocate.  Either way the principal is left with the terminal contract.  The no-review benchmark is therefore the right normalization across the whole interval, and any gain from review must be an interior phenomenon.
	
	\paragraph{The tradeoff among genuine review dates.} Away from knife-edge parameters, the constrained
	envelope theorem applies while holding the multiplier and the selected action fixed.  The
	derivative decomposes as
	\begin{align}
		\frac{dV^R(\tau)}{d\tau}
		={}&
		\underbrace{1-\lambda^*c}_{\text{first-phase production}}+\underbrace{\int_{\mathbb R}
			\left[
			\frac{\partial g_1^\tau(x)}{\partial\tau}
			\mathcal V\bigl(Q_{\lambda^*}(x),h\bigr)
			+g_1^\tau(x)
			\frac{\partial\mathcal V}{\partial q}
			\bigl(Q_{\lambda^*}(x),h\bigr)
			\frac{\partial Q_{\lambda^*}(x)}{\partial\tau}
			\right]dx}_{\text{information}}\notag \\
		&-\underbrace{\int_{\mathbb R}g_1^\tau(x)
			\frac{\partial\mathcal V}{\partial h}
			\bigl(Q_{\lambda^*}(x),h\bigr)dx}_{\text{incentive capital}}.
		\label{eq:review-value-envelope}
	\end{align}
	The first term captures first-phase production.  Delaying the review lengthens the first phase, adding a unit of production while tightening first-phase obedience, whose shadow value is charged
	at rate $c$.  The second term is the information effect.  It values a sharper performance measure: the review-score
	means under working and shirking differ by $\tau$ against a standard deviation $\sqrt\tau$,
	so their standardized separation grows with the review date; since the marginal value of
	the index is exactly the rent the menu assigns, precision is worth most at scores that
	already carry a large rent.  The third term is the opposing loss of incentive capital.  A longer continuation phase is \emph{incentive capital}: a family of work-inducing contracts that produce
	second-phase output while delivering different rents, so that rents conceded for
	second-phase incentives can be recycled into rewards for first-phase effort.  As $h$ falls
	this family shrinks, and strong first-phase rewards increasingly require sacrificing
	second-phase production, culminating in $A_G$.
	
	An interior optimum balances these three terms, setting
	\eqref{eq:review-value-envelope} equal to zero.  The optimal review date therefore does not merely trade an early noisy signal against a
	late precise one; it balances the precision of the review against the productive
	contracting capacity that survives it.
	
	\paragraph{The suboptimality of extreme review dates.}
	\label{subsec:endpoint-derivatives}
	The next result distinguishes the implications of the review-date tradeoff near the two boundaries.
	
	\begin{proposition}
		\label{prop:extreme-review-dates}
		Review dates sufficiently close to the initial date are strictly suboptimal.  Review dates
		sufficiently close to the terminal date are also strictly suboptimal when the payment cap is
		sufficiently tight.  Formally,
		\begin{enumerate}[(i)]
			\item for every $\tau$ sufficiently close to the initial date, there exists another genuine review date
			$\tau'$ such that $V^R(\tau')>V^R(\tau)$;
			\item suppose $c\leq\bar c(T)$, so that $\underline M(T,c)=M_I(T,c)$.  If $M$ is sufficiently close to $\underline M(T,c)$ from above, then every review date $\tau$ sufficiently close to the terminal date is strictly dominated by an earlier genuine review date $\tau'<\tau$.
		\end{enumerate}
	\end{proposition}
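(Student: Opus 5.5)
Both parts reduce to comparisons with the boundary value $V^{\NR}(T;c,M)$ from Lemma~\ref{lem:review-boundary-values}, together with a one-sided derivative estimate at the relevant endpoint.

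\emph{Part (i).} I would show that the right derivative of $V^R$ at $0$ is $-\infty$, in the sense that $\bigl(V^R(\tau)-V^{\NR}(T)\bigr)/\tau\to-\infty$ as $\tau\downarrow0$.

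First, I would bound $V^R(\tau)$ from above by weak duality. For any $\lambda\geq0$, $V^R(\tau)\leq\tau-\lambda c\tau+\int g_1^\tau\,\mathcal V(Q_\lambda(x),T-\tau)\,dx$.

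Second, I would expand this bound for small $\tau$. As $\tau\to0$ the likelihood ratio $g_0^\tau/g_1^\tau=\exp(\tau/2-x)$ concentrates near one in a $\sqrt\tau$-window. Obedience \eqref{eq:pre-review-obedience} therefore requires $\int r\,(g_1-g_0)\geq c\tau$. Since $r\in[0,M]$, the left side is at most $M\,\mathrm{TV}(g_1^\tau,g_0^\tau)\approx M\sqrt\tau\,\phi(0)\cdot$const. So obedience is feasible, but it needs rent spread that costs order $\sqrt\tau$ relative to the no-review structure, while the benefit is only order $\tau$.

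More concretely, I would compare with the no-review value continued on horizon $T-\tau$. Because $V^{\NR}$ is smooth in the horizon, the relevant comparison is with $V^{\NR}(T)$ up to $O(\tau)$. The cost comes from the fact that a menu built from the four actions must separate rents across $x$-scores that are almost uninformative. To deliver the required incentive $c\tau$ with bounded rents on a signal of informativeness $\sqrt\tau$, the spread in assigned rent must be order $\sqrt\tau$. That spread forces probability of order $\sqrt\tau$ onto $A_B$ or $A_G$, or onto mixing $A_D$ with $A_E$.

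I expect the main obstacle to be showing rigorously that this costs at least order $\sqrt\tau\cdot$const with const $>0$. The point is that $(1-c)h>r_D$ under Assumption~\ref{as:maintained} makes $A_B$ costly, $A_G$ loses $(1-c)h-(M-r_E)$, and $A_E$ relative to $A_D$ costs $r_E-r_D$. I would instead attack this through the dual: I would show that $\lambda^*(\tau)\to\infty$ like $\tau^{-1/2}$ and then evaluate the Lagrangian. Given $V^R(\tau)\leq V^{\NR}(T)-\kappa\sqrt\tau$, continuity of $V^R$ on $(0,T)$ and the fact that $V^R$ is not identically below its left limit supply a $\tau'$ with $V^R(\tau')>V^R(\tau)$. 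Alternatively, choosing $\tau'$ so that $V^R(\tau')>V^{\NR}(T)-\kappa\sqrt{\tau}$ works, for instance any $\tau'$ near $T$ by Lemma~\ref{lem:review-boundary-values}.

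\emph{Part (ii).} Here I would show that the left derivative of $V^R$ at $T$ is $+$-signed negatively. That is, $V^R(T-h)-V^{\NR}(T)\geq\kappa h$ for small $h$ when $M$ is near $M_I(T,c)$. Then, since $V^R(T-h)$ is bounded below by $V^{\NR}+\kappa h$ and tends to $V^{\NR}$, for $\tau$ close to $T$ some earlier $\tau'$ does strictly better. To get the bound I would use the envelope decomposition \eqref{eq:review-value-envelope} at $h\to0$. By Proposition~\ref{prop:terminal-full-ladder} all four rungs are active. The incentive-capital term per unit $h$ is of order $\lambda^*\cdot\partial(r_E-r_D)/\partial h$, and it grows without bound as $M\downarrow M_I$, because $\lambda^*\to\infty$ at the implementability boundary (the no-review multiplier diverges as $z_D,z_E$ merge). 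This term dominates the bounded production and information terms. The delicate step is the uniformity of this limit, and I would handle it by an explicit primal construction. Starting from the no-review threshold contract, I would replace the top region by $A_G$/$A_E$ and compute the gain to first order in $h$.
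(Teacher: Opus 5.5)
Both parts have a genuine gap. Your outer reductions are sound: compare with $V^{\NR}(T;c,M)$, then use the boundary limits of Lemma~\ref{lem:review-boundary-values} to produce a better date.

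\emph{Part (i).} The bound you aim for, $V^R(\tau)\le V^{\NR}(T;c,M)-\kappa\sqrt\tau$, is false. Consider the menu in the proof of Lemma~\ref{lem:review-boundary-values}: $A_D$ everywhere except an upper tail carrying $A_G$, with $(M-r_D)\Delta(\tau,z)=c\tau$. It has $\phi(z)\approx c\sqrt\tau/(M-r_D)$, hence $|z|\approx\sqrt{\log(1/\tau)}$, and its loss is of order $\Phi(z)\approx\sqrt{\tau/\log(1/\tau)}=o(\sqrt\tau)$. For the same reason the supporting multiplier grows like $1/\sqrt{\tau\log(1/\tau)}$, not $\tau^{-1/2}$. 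At $\lambda\asymp\tau^{-1/2}$ the weak-duality bound carries an excess term of order one and proves nothing.

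More to the point, the step you call the main obstacle is the whole proof, and the heuristic ``informativeness $\sqrt\tau$ forces rent spread $\sqrt\tau$'' does not deliver it. Write $L_\tau=1-g_0^\tau/g_1^\tau$, so $\E_1[L_\tau^2]=e^\tau-1$. Cauchy--Schwarz on obedience then yields only $\E_1|r-r_D|\gtrsim c^2\tau/M$, which is not $\gg\tau$. The paper combines three ingredients:
\begin{enumerate}[(a)]
\item the linear loss bound $u_D-(s-r)\ge\kappa|r-r_D|$ on all of $\mathcal A$ (e.g.\ $\kappa=\min\{1,u_D/r_D\}$, using $u_D>0$);
\item centered obedience, $\E_1[(r-r_D)L_\tau]\ge c\tau$, valid because $\E_1[L_\tau]=0$;
\item a truncation at $|Z|\asymp\sqrt{\log(1/\tau)}$. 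Inside it $|L_\tau|\lesssim\sqrt{\tau\log(1/\tau)}$, while the tails contribute $o(\tau)$ because rents are bounded.
\end{enumerate}
This $L^1$--$L^\infty$ split gives $\E_1|r-r_D|\gtrsim\sqrt{\tau/\log(1/\tau)}\gg\tau$, which is exactly what is needed.

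\emph{Part (ii).} Your reduction works. A single $\tau_0<T$ with $V^R(\tau_0)>V^{\NR}(T;c,M)$, together with $V^R(\tau)\to V^{\NR}(T;c,M)$ as $\tau\uparrow T$, makes $\tau_0$ dominate every $\tau$ close enough to $T$. But the mechanism you propose for the lower bound is backwards.

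The incentive-capital contribution vanishes as $h\downarrow0$. Because $r_D(h)/h\to\infty$ (Lemma~\ref{lem:terminal-ladder-limits}), $q_0(h),q_2(h)\to0$, so the continuation rungs occupy index intervals of vanishing width. Hence $V^R(T-h)=V^{\NR}(T-h;c,M)+o(h)$ (Lemma~\ref{lem:terminal-derivative}); in particular, shifting mass from $A_G$ to $A_E$ earns only $o(h)$.

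Conversely, the production and information terms are not bounded as $M\downarrow M_I(T,c)$. In \eqref{eq:left-derivative-T-explicit} they contribute $1-\lambda_T^{\NR}c$ and $\lambda_T^{\NR}M\phi(z_D(T)-\sqrt T)/(2\sqrt T)$, both proportional to the diverging multiplier. So divergence of the multiplier settles nothing by itself. Under general noise the same divergence yields $+\infty$ when the detection elasticity exceeds one (Corollary~\ref{oa:cor:general-tight-cap}).

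The missing idea is the sign of the bracket. It tends to $c\bigl[1-a\phi(a)/(2\Phi(a)-1)\bigr]$ with $a=\sqrt T/2$, which is positive by \eqref{eq:normal-cap-inequality}. Hence $\partial_TV^{\NR}(T;c,M)\to-\infty$, and it is negative for $M$ close to $M_I(T,c)$.

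The repair is then short. The two-stop menu ($A_B$ below, $A_G$ above the difficult bar for horizon $T-h$) gives $V^R(T-h)\ge V^{\NR}(T-h;c,M)>V^{\NR}(T;c,M)$ for small $h$, and your reduction finishes. This needs only a lower bound, whereas the paper uses the two-sided first-order equivalence and compares $T-h$ with $T-2h$.
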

	
	The first conclusion reflects the information term in \eqref{eq:review-value-envelope}.  A short first phase
	adds only a small amount of production and generates an extremely weak signal of effort.  To
	motivate first-phase effort, the principal must nevertheless create enough dispersion in
	continuation rents across review outcomes.  Because those rents are bounded, extracting the
	required incentives from an almost uninformative signal is disproportionately costly.  The
	first-phase production gain therefore cannot compensate for the incentive loss generated by a
	review sufficiently close to the initial date.
	
	The second conclusion reflects the opposing incentive-capital term.  Near the terminal date, delaying the
	review yields a more informative score but leaves almost no continuation phase over which rents
	can be varied.  A tight payment cap makes this loss especially costly: the terminal contract is
	already close to exhausting its incentive capacity, so continuation rents have a high shadow
	value.  The marginal loss of incentive capital then dominates the production and information
	benefits of further delay, and an earlier review raises the principal's payoff.
	
	As a result, under the tight-cap conditions of Proposition~\ref{prop:extreme-review-dates}, if an
	optimal genuine review exists, it cannot lie near either the beginning or the end of the project.
	It is genuinely a \emph{midterm} review.  An implication is that short
	probationary periods observed in practice must therefore reflect motives absent from the
	model, such as screening an unknown type, learning match quality, or preserving an option to
	abandon a bad project.

	\subsection{When is a genuine review valuable?}
	\label{sec:review-vs-no-review}
	We now compare genuine review with the outside option of no review.  Define the review gain
	for $\tau\in(0,T)$ by
	\begin{equation*}
		G(\tau)=V^R(\tau)-V^{\NR}(T;c,M).
	\end{equation*}
	
	\paragraph{Loose payment caps.}
	When the payment cap is loose, the no-review contract can concentrate compensation on an
	increasingly rare terminal outcome and implement effort with vanishing rent.  A review cannot
	replicate this economy without either splitting the long experiment or relying on an almost
	uninformative early signal.
	
	\begin{proposition}
		\label{prop:loose-cap-no-review}
		Fix $T>0$ and $c\in(0,1)$.  There exists a finite $\overline M(T,c)$ such that
		\begin{equation*}
			G(\tau)<0
			\qquad\text{for every }\tau\in(0,T)
		\end{equation*}
		whenever $M>\overline M(T,c)$.
	\end{proposition}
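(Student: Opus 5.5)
The plan is to bound $V^R(\tau)$ from above by total surplus minus a lower bound on the agent's rent. I will then show that, for $M$ large, this rent bound exceeds the no-review rent $r_D(T)$ uniformly in $\tau\in(0,T)$.

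\textbf{Step 1 (accounting).} For any feasible menu implementing first-phase work, write $U=\int r\,g_1^\tau-c\tau$ for the agent's ex ante rent and $p=\int\ind\{s(x)>0\}g_1^\tau$ for the probability of continuation work. Then
\begin{equation*}
V^R(\tau)\le (1-c)T-(1-p)(1-c)h-U .
\end{equation*}
By Proposition~\ref{prop:no-review} and Lemma~\ref{lem:attainable-rents}, $V^{\NR}=(1-c)T-r_D(T)$. So it suffices to show $(1-p)(1-c)h+U>r_D(T)$ for all $\tau$ once $M>\overline M(T,c)$.

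\textbf{Step 2 (rent lower bounds).} I will combine two deviations available to the agent.

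First, consider shirking in phase 1 and then best-responding. Since $r(x)\ge r_D(h)$ whenever $s(x)>0$ (Lemma~\ref{lem:attainable-rents}), \eqref{eq:pre-review-obedience} gives
\begin{equation*}
U\ge\int r\,g_0^\tau\ge r_D(h)\,P_0(\text{continuation}).
\end{equation*}

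Second, first-phase obedience with $r\le M$ forces $M\int_{\{g_1>g_0\}}(g_1^\tau-g_0^\tau)\ge c\tau$. This is the static bounded-reward problem on the phase-1 experiment. By Lemma~\ref{lem:threshold-reduction} applied to $\tau$ in place of $h$, the cheapest way to deliver incentive $c\tau$ with payments in $[0,M]$ leaves shirking rent at least $r_D(\tau)$, where $r_D(\cdot)$ is the single-phase difficult-standard rent at cap $M$. Hence $U\ge r_D(\tau)$ whenever $r_D(\tau)$ is defined. Combining the two deviations, I aim for
\begin{equation*}
(1-p)(1-c)h+U\ \ge\ \min\bigl\{(1-c)h\,\epsilon,\ \kappa\,r_D(h)\bigr\}\ \vee\ r_D(\tau)
\end{equation*}
for constants $\epsilon,\kappa>0$. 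These constants come from the fact that $P_0$ and $P_1$ of any event are related through the likelihood ratio $\exp(x-\tau/2)$: either the principal terminates on a set of nonnegligible $P_1$-mass, or continuation has nonnegligible $P_0$-mass.

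\textbf{Step 3 (asymptotics in $M$).} From $\Delta(h,z_D)=ch/M$ with $z_D\to-\infty$, Gaussian tail asymptotics give
\begin{equation*}
\log r_D(h)= -\sqrt{2h\log M}\,(1+o(1)).
\end{equation*}
Because $\max\{\tau,h\}<T$, the same asymptotics give $r_D(\tau)\vee r_D(h)\gg r_D(T)$ as $M\to\infty$ for each fixed interior $\tau$. The term $(1-c)h\epsilon$ is of order one, so it also dominates $r_D(T)$. This settles strict negativity of $G$ pointwise on compact subsets of $(0,T)$, and by continuity of these bounds in $\tau$ it holds uniformly there for $M$ large.

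\textbf{Step 4 (the endpoints; main obstacle).} Uniformity as $\tau\downarrow0$ and $\tau\uparrow T$ is the hard part, since by Lemma~\ref{lem:review-boundary-values} $G(\tau)\to0$ at both ends, so the crude bounds degenerate.

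Near $\tau=T$ (small $h$), I will use the sharper accounting $U\ge\int r\,g_0^\tau$ with $r\in[0,M]$ and $\int r(g_1-g_0)\ge c\tau$. Here the menu is essentially a terminal contract on $X_\tau$ plus a second-phase contract. I will show that the rent is at least the rent $r_D(\tau)$ of the best phase-$\tau$ contract plus a term of order $h\,r_D(h)$ or $(1-c)h$. This should exceed $r_D(T)-r_D(\tau)=O(h\,|\partial_h r_D|)$, and $\partial_h\log r_D(h)\sim-\sqrt{\log M/(2h)}$ is small relative to the added rent when $M$ is large.

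Near $\tau=0$, the first-phase incentive demands $\int r(g_1-g_0)\ge c\tau$ from a signal of separation $\sqrt\tau$. By a Taylor bound, $\int r(g_1-g_0)\le M\sqrt\tau\,C$ forces the continuation rents to vary on a set of $P_1$-mass at least of order $c\sqrt\tau/M$, and termination or easy-standard rents on that set cost at least order $\sqrt\tau\,\min\{(1-c)T,r_E(h)-r_D(h)\}/M$. Meanwhile the phase-2 contract's own rent is at least $r_D(T-\tau)\ge r_D(T)$. The difficulty is that the excess cost must be of larger order than the production gain $\tau$ minus the rent saving $r_D(T)-r_D(T-\tau)$. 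I expect to handle this by quantifying the $-\infty$ right derivative behind Proposition~\ref{prop:extreme-review-dates}(i) uniformly in large $M$.
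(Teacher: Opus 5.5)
\textbf{Verdict: there is a genuine gap at $\tau\downarrow0$; the difficulty you flag at $\tau\uparrow T$ is not real.}

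Your Steps 1--2 already contain the paper's first inequality. The review-score rent schedule is itself a $[0,M]$-valued contract inducing work over a phase of length $\tau$, so $U\ge r_D(\tau)$ and $V^R(\tau)\le(1-c)T-r_D(\tau)$. The combined bound with $\epsilon,\kappa$ is unjustified and not needed.

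\textbf{The gap at $\tau\downarrow0$.} This is where uniformity in $M$ is decided, and your plan there does not work. Your bulk estimate $\int_A|g_1^\tau-g_0^\tau|\lesssim\sqrt\tau\,P_1(A)$ fails on the likelihood-ratio tails, and a large cap exploits exactly those tails.
\begin{itemize}
\item On the far upper tail of $X_\tau$, $L_\tau=1-g_0^\tau/g_1^\tau\approx1$. A good stop on a set of $P_1$-mass about $c\tau/M$ then buys the required incentive at dispersion cost about $c\tau$. That equals the production gain, so dispersion alone cannot sign $G$.
\item Quantitatively, splitting obedience at $|L_\tau|=1/2$ leaves a tail term of order $M\exp(-a^2/(8\tau))$ with $a=\log(3/2)$. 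This is $o(\tau)$ only if $\tau\log M\to0$.
\end{itemize}
So the $-\infty$ right derivative of Proposition~\ref{prop:extreme-review-dates}(i) cannot be made uniform in large $M$ on any fixed interval $(0,\varepsilon)$.

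The missing idea is a localization. In the regime where $\tau\to0$ but $\tau\log M$ stays bounded below, rule the review out with your own bound $U\ge r_D(\tau)$. This needs Gaussian estimates uniform as $\tau\to0$; your Step~3 asymptotic is only for fixed $h$. For instance, show
\[
r_D(\tau)\ge\frac{c\tau}{2(1+\sqrt T)}\exp\Bigl[-\sqrt{2\tau\log(M/(c\tau))}-\tau/2\Bigr],
\]
whose logarithm is $-o(\sqrt{\log M})$ in that regime, and compare it with $r_D(T)\le 2cT\exp(-\sqrt{T\log M}-T/2)$. This forces any counterexample sequence to satisfy $\tau_n\log M_n\to0$.

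\textbf{The regime $\tau\log M\to0$.} Your cost estimate is also too weak here. You charge a fixed per-unit cost $\min\{(1-c)T,r_E-r_D\}$ on a set whose mass is bounded using the maximal swing $M$. That gives roughly $(1-c)T\sqrt\tau/M$, which beats $c\tau$ only for $\tau\lesssim M^{-2}$. Instead, make cost and incentive proportional to the same quantity.
\begin{itemize}
\item For large $M$ and $h\ge T/2$, $2r_D(h)<(1-c)h$. Then every $(r,s)\in\mathcal A(h)$ satisfies $s-r\le(1-c)h-r_D(h)-|r-r_D(h)|$.
\item With $d=r-r_D(h)$, this gives $V^R(\tau)\le\tau+(1-c)h-r_D(h)-\E_1|d|$.
\item Centered obedience $\E_1[dL_\tau]\ge c\tau$, split at $|L_\tau|=1/2$, gives $\E_1|d|\ge2c\tau-o(\tau)$.
\end{itemize}
Hence $G(\tau)\le c\tau+r_D(T)-r_D(T-\tau)-\E_1|d|<0$.

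\textbf{Near $\tau=T$.} Your Step~4 treats this as an obstacle, but it is not one. What you need there is that $r_D(\cdot)$ is strictly decreasing on $[\varepsilon,T]$ for all sufficiently large $M$. The same fact gives uniformity on compacts in your Step~3; continuity alone does not. It holds because $r_D'(t)$ has the sign of $c-M\phi(z_D(t))/(2\sqrt t)$. The Mills bound makes this negative once $z_D\to-\infty$ uniformly on $[\varepsilon,T]$.

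Then $U\ge r_D(\tau)>r_D(T)$ gives $G<0$ on all of $[\varepsilon,T)$ at once. The quantity $r_D(T)-r_D(\tau)$ that you set out to dominate is negative. That $G(\tau)\to0$ at $T$ is irrelevant: only the sign is required, not a uniform margin.
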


	The result is uniform over review dates.  A pointwise comparison would still allow the principal
	to move the review date as $M$ rises and stay ahead of the benchmark.  Proposition~\ref{prop:loose-cap-no-review}
	rules this out.  The rent compression established in Proposition~\ref{prop:no-review} eventually
	dominates uniformly: no allocation of production between a review phase and a continuation phase
	can preserve both the precision and the production horizon of the terminal contract.
	
	\paragraph{Tight payment caps.}
	The comparison can reverse when the payment cap is sufficiently close to the implementability
	boundary.  The next result records the resulting local value improvement.
	
	\begin{proposition}
		\label{prop:tight-cap-review-gain}
		Suppose $c\leq\bar c(T)$, so that $\underline M(T,c)=M_I(T,c)$.  If $M$ is sufficiently
		close to $\underline M(T,c)$ from above, there exists $\delta>0$ such that
		\[
		V^R(\tau)>V^{\NR}(T;c,M)
		\qquad\text{for every }\tau\in(T-\delta,T).
		\]
	\end{proposition}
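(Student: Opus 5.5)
The plan is to combine the boundary-value result with a one-sided derivative at the terminal date. By Lemma~\ref{lem:review-boundary-values}, $G(\tau)\to0$ as $\tau\uparrow T$. It therefore suffices to show that, for $M$ close enough to $\underline M(T,c)=M_I(T,c)$, there are $\kappa>0$ and $\delta>0$ with $dV^R/d\tau\le-\kappa$ almost everywhere on $(T-\delta,T)$. Then
\[
G(\tau)=-\int_\tau^T \frac{dV^R}{d\sigma}\,d\sigma\ \ge\ \kappa\,(T-\tau)>0 .
\]
Where the envelope formula \eqref{eq:review-value-envelope} fails to be differentiable (knife-edges), I would work with Dini derivatives. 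Alternatively I would replace the derivative by a direct feasible-menu comparison between $\tau$ and $T$, which suffices because $V^R(\tau)$ is a supremum. The argument is the same either way.

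\textbf{Step 1: the multiplier is large near the cap.} As $\tau\uparrow T$, the fixed-review multiplier $\lambda^*(\tau)$ converges to the no-review multiplier $\lambda^{\NR}$; this is the same fact used for Proposition~\ref{prop:terminal-full-ladder}. I would then show that $\lambda^{\NR}\to\infty$ as $M\downarrow M_I(T,c)$. The reason is that at $M=M_I$ the difficult bar sits at the peak $z=\sqrt T/2$, where $\partial_z\Delta(T,z)=0$. Relaxing \eqref{eq:IC-h} by $\varepsilon$ therefore moves $z_D$ by order $\sqrt\varepsilon$, so the cost $M\Phi(z_D)$ has an unbounded marginal value. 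Consequently every term in \eqref{eq:review-value-envelope} that does not carry a factor $\lambda^*$ is $O(1)$ and negligible. The sign of $dV^R/d\tau$ is governed by the $\lambda^*$-weighted parts of the production, information and incentive-capital terms.

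\textbf{Step 2: the key comparison, which I expect to be the main obstacle.} After dividing by $\lambda^*$, I must show that near $\tau=T$ the coefficient of $\lambda^*$ is strictly negative when $M\approx M_I$. The coefficient has three pieces: the obedience charge $-c$, the information gain, and the loss of incentive capital. Collapsing the three envelope terms to this coefficient, and controlling the $r_D'(h)$ blow-up as $h\to0$, is where I expect most of the work.

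At the limit I expect the coefficient to reduce to the marginal comparison of the no-review problem at the peak, namely
\[
-c+M_I\,\frac{d}{dT}\overline\Delta(T)\qquad\text{versus}\qquad -c+\frac{\overline\Delta(T)}{T}\,M_I=0 .
\]
Since $\overline\Delta(h)=2\Phi(\sqrt h/2)-1$ is strictly concave in $h$ with $\overline\Delta(0)=0$, we have $\overline\Delta'(T)<\overline\Delta(T)/T$. Hence extending the evaluated horizon adds less incentive per unit of length than it demands. This is the "strict slack from splitting" effect: shorter experiments generate more incentive per unit of required effort.

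The delicate part is that the incentive-capital term involves $r_D(h)$ and $r_E(h)$ as $h\to0$. Here $r_D(h)\approx c\sqrt h/|z_D(h)|$, so $\partial_h r_D$ is unbounded. I would handle this by noting that $r_D'$ enters with weight $Q_{\lambda^*}$ in the $D$ and $E$ regions. I would then use that $r_D+r_E=M-ch$ from Lemma~\ref{lem:attainable-rents}, which gives $\partial_h(r_D+r_E)=-c$. I would also use that, near $T$, all four rungs are active by Proposition~\ref{prop:terminal-full-ladder}. I would try to show that the singular pieces cancel or enter with a favorable sign, leaving a bounded limit whose $\lambda^*$-coefficient is the strictly negative quantity above.

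\textbf{Step 3: uniformity.} Finally I would check that the convergence in Steps 1--2 is uniform for $\tau$ in a left neighborhood of $T$. This follows from continuity of $z_D,z_E,q_2$ in $h$, from the cutoffs $k_j^*$ being defined by $Q_{\lambda^*}(k_j^*)=q_j$, and from Gaussian tail bounds. Together these yield the constant $\kappa$ and the neighborhood $\delta$ required in the first paragraph.
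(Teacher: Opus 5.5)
Your endgame is exactly the paper's. At $M=M_I(T,c)$ the difficult bar sits at $a=\sqrt T/2$, so $\lambda_T^{\NR}\to\infty$. The bracket multiplying $\lambda_T^{\NR}$ in \eqref{eq:left-derivative-T-explicit} then tends to $c-M_I(T,c)\,\overline\Delta'(T)=c\bigl[1-a\phi(a)/(2\Phi(a)-1)\bigr]>0$. Your concavity argument $\overline\Delta'(T)<\overline\Delta(T)/T$ is an equivalent proof of \eqref{eq:normal-cap-inequality}.

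The problem is the route you build to reach it. A uniform bound $dV^R/d\tau\le-\kappa$ near $T$ requires showing that the full envelope expression \eqref{eq:review-value-envelope}, evaluated at $\lambda^*(T-h)$, converges to $\partial_TV^{\NR}(T;c,M)$ as $h\downarrow0$. That is precisely the Step~2 you leave as an expectation, so the heart of your argument is not carried out.

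The mechanism you anticipate for the $r_D'(h)$ blow-up is also not the operative one; nothing cancels through $r_D+r_E=M-ch$. Instead, on the $D$- and $E$-regions $|Q_{\lambda^*}|\le\max\{|q_0|,q_2\}\le(1-c)h/r_D$, and Gaussian asymptotics give $hr_D'(h)/r_D(h)\to1/2$. Hence $|Q_{\lambda^*}r_j'|$ stays bounded on regions whose probability vanishes, because $q_0,q_2\to0$ while $\lambda^*\to\lambda_T^{\NR}>1$. The incentive-capital term therefore simply tends to zero rather than contributing to the limiting coefficient.

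The paper never looks at the interior derivative. Since $G(\tau)\to0$ as $\tau\uparrow T$ (Lemma~\ref{lem:review-boundary-values}), it suffices to sign the endpoint difference quotient $G(\tau)/(\tau-T)$. Lemma~\ref{lem:terminal-derivative} sandwiches $V^{\NR}(T-h;c,M)\le V^R(T-h)\le V^{\NR}(T-h;c,M)+o(h)$:
\begin{itemize}
\item[(a)] the lower bound comes from the two-stop menu $\{A_B,A_G\}$;
\item[(b)] the upper bound holds because $\mathcal V(q,h)$ exceeds $\max\{0,qM\}$ by at most $(1-c)h$ on index intervals of width $O(\sqrt{h\log(1/h)})$.
\end{itemize}
Hence $V_-^{R\prime}(T)=\partial_TV^{\NR}(T;c,M)$, given explicitly by \eqref{eq:left-derivative-T-explicit}, and Lemma~\ref{lem:terminal-derivative-limits} sends it to $-\infty$ as $M\downarrow M_I(T,c)$.

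For this proposition even the upper half of the sandwich is unnecessary. The lower half gives $G(\tau)\ge V^{\NR}(\tau;c,M)-V^{\NR}(T;c,M)$, which is positive for $\tau$ near $T$ as soon as $\partial_TV^{\NR}(T;c,M)<0$. That sign is a one-line implicit-function computation on the no-review problem, where your concavity inequality enters. This is the ``direct feasible-menu comparison'' you mention in passing, and it makes your Steps~2 and~3 unnecessary. It also shows that the continuation rungs contribute only $o(T-\tau)$ here: the first-order gain comes from evaluating a shorter horizon. What your route would buy, if completed, is the stronger conclusion that $V^R$ is strictly decreasing on a left neighborhood of $T$.
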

	
	The logic combines the boundary value with the local comparison in
	Proposition~\ref{prop:extreme-review-dates}.  By Lemma~\ref{lem:review-boundary-values},
	$G(\tau)$ converges to zero as $\tau$ approaches $T$ from below.  Under the stated tight-cap conditions,
	Proposition~\ref{prop:extreme-review-dates} shows that moving a terminal-adjacent review earlier
	strictly raises $G(\tau)$.  Together, these observations imply that $G(\tau)>0$ for review dates
	sufficiently close to $T$, as stated in Proposition~\ref{prop:tight-cap-review-gain}.
	
	Economically, moving the review slightly earlier creates a short continuation phase. When the
	cap is tight, the no-review contract is close to exhausting its incentive capacity, so the rents
	needed to induce continuation effort have a high shadow value and can also reward first-phase
	performance. This is what makes a sufficiently late review valuable. Separately,
	Proposition~\ref{prop:terminal-full-ladder} shows that, under the maintained assumptions, the
	optimal fixed-date contract uses the full incentive ladder whenever the review is sufficiently
	close to $T$, regardless of how tight the payment cap is. Hence, in the tight-cap region covered
	by Proposition~\ref{prop:tight-cap-review-gain}, the value improvement is implemented through
	all four incentive margins characterized in Section~\ref{sec:form-optimal-menu}. The resulting
	gain in incentive capacity outweighs the small losses in information and production.
	
	\paragraph{Payment caps and the value of review.}
	The preceding results show how the payment cap governs the value of review.  A review expands
	incentive capacity, but relies on a shorter performance record and may forgo continuation output
	after a stop.  A tight cap raises the value of the additional incentive margins; a loose cap lets
	the terminal contract provide incentives at negligible rent, leaving the information and
	production costs of review to dominate.
	
	Figure~\ref{fig:review-gain} illustrates the two regimes.  Both panels hold $T=1.5$ and	$c=0.6$ fixed and vary only the payment cap, so the contrast is a comparative static in $M$ alone.  Under the tight cap $M=2$, just above the implementability floor $M_I(T,c)=1.96$, the gain is negative at the earliest review dates, turns positive, and reaches a strictly positive interior maximum at $\tau^*=1.09$, where $V^R=0.239$ against	$V^{\NR}=0.195$.  The supporting multiplier there is $\lambda^*=2.19>1+q_2$, so by Corollary~\ref{cor:menu-form} the optimal menu uses all four rungs of the ladder.  Under the loose cap $M=5$ the gain is negative at every review date, and the principal optimally waits until $T$.  In both panels $G$ vanishes at the two boundary dates, as Lemma~\ref{lem:review-boundary-values} requires.  The short initial	interval on which $G<0$ in the left panel is the finite-$\tau$ counterpart of the infinitely steep right derivative at zero in	Proposition~\ref{prop:extreme-review-dates}(i): an almost uninformative probation cannot pay for the rent dispersion it requires.
	
	\begin{figure}[t]
		\centering
		\begin{subfigure}[b]{0.48\textwidth}
			\centering
			\begin{tikzpicture}[x=3.3cm,y=68cm,>=Latex]
				\draw[->,gray!70] (-0.06,0) -- (1.66,0) node[below right,black] {$\tau$};
				\draw[->,gray!70] (-0.06,-0.0042) -- (-0.06,0.0500) node[above,black]
				{\small$G(\tau)$};
				\draw[dotted,gray!85] (1.08931,0) -- (1.08931,0.044045);
				\draw[very thick] plot coordinates {
					(0.00020,-0.001141) (0.00172,-0.002298) (0.00325,-0.002244) (0.00477,-0.001868) (0.00629,-0.001339)
					(0.00782,-0.000725) (0.00934,-0.000063) (0.01086,0.000624) (0.01238,0.001324) (0.01391,0.002027)
					(0.01543,0.002725) (0.01695,0.003414) (0.01848,0.004090) (0.02000,0.004751) (0.02889,0.008266)
					(0.03778,0.011211) (0.04667,0.013700) (0.05556,0.015843) (0.06444,0.017720) (0.07333,0.019387)
					(0.08222,0.020882) (0.09111,0.022235) (0.10000,0.023468) (0.12889,0.026820) (0.15778,0.029433)
					(0.18667,0.031519) (0.21556,0.033211) (0.24444,0.034600) (0.27333,0.035750) (0.30222,0.036709)
					(0.33111,0.037512) (0.36000,0.038189) (0.38889,0.038763) (0.41778,0.039253) (0.44667,0.039673)
					(0.47556,0.040038) (0.50444,0.040357) (0.53333,0.040640) (0.56222,0.040895) (0.59111,0.041127)
					(0.62000,0.041344) (0.64889,0.041549) (0.67778,0.041746) (0.70667,0.041939) (0.73556,0.042129)
					(0.76444,0.042319) (0.79333,0.042509) (0.82222,0.042700) (0.85111,0.042892) (0.88000,0.043083)
					(0.90889,0.043271) (0.93778,0.043452) (0.96667,0.043623) (0.99556,0.043776) (1.02444,0.043905)
					(1.05333,0.043998) (1.08222,0.044043) (1.11111,0.044024) (1.14000,0.043920) (1.16889,0.043704)
					(1.19778,0.043342) (1.22667,0.042791) (1.25556,0.041994) (1.28444,0.040875) (1.31333,0.039335)
					(1.34222,0.037237) (1.37111,0.034388) (1.40000,0.030510) (1.40768,0.029256) (1.41537,0.027890)
					(1.42305,0.026401) (1.43074,0.024773) (1.43842,0.022993) (1.44611,0.021042) (1.45379,0.018897)
					(1.46148,0.016532) (1.46916,0.013918) (1.47685,0.011016) (1.48453,0.007780) (1.49222,0.004150)
					(1.49990,0.000056)};
				\fill (1.08931,0.044045) circle (0.9pt);
				\node[font=\footnotesize] at (1.08931,-0.0012) [below] {$\tau^{*}$};
				\node[font=\footnotesize] at (1.5,-0.0012) [below] {$T$};
			\end{tikzpicture}
			\caption{Tight cap: $M=2$.}
			\label{fig:review-gain-tight}
		\end{subfigure}
		\hfill
		\begin{subfigure}[b]{0.48\textwidth}
			\centering
			\begin{tikzpicture}[x=3.3cm,y=39cm,>=Latex]
				\draw[->,gray!70] (-0.06,0) -- (1.66,0) node[above right,black] {$\tau$};
				\draw[->,gray!70] (-0.06,-0.0870) -- (-0.06,0.0060) node[above,black]
				{\small$G(\tau)$};
				\draw[very thick] plot coordinates {
					(0.00020,-0.002366) (0.00172,-0.007344) (0.00325,-0.010188) (0.00477,-0.012399) (0.00629,-0.014265)
					(0.00782,-0.015903) (0.00934,-0.017376) (0.01086,-0.018723) (0.01238,-0.019968) (0.01391,-0.021130)
					(0.01543,-0.022221) (0.01695,-0.023253) (0.01848,-0.024232) (0.02000,-0.025164) (0.02889,-0.029904)
					(0.03778,-0.033806) (0.04667,-0.037153) (0.05556,-0.040096) (0.06444,-0.042730) (0.07333,-0.045116)
					(0.08222,-0.047299) (0.09111,-0.049310) (0.10000,-0.051175) (0.12889,-0.056430) (0.15778,-0.060750)
					(0.18667,-0.064379) (0.21556,-0.067464) (0.24444,-0.070103) (0.27333,-0.072363) (0.30222,-0.074296)
					(0.33111,-0.075938) (0.36000,-0.077320) (0.38889,-0.078463) (0.41778,-0.079389) (0.44667,-0.080112)
					(0.47556,-0.080647) (0.50444,-0.081004) (0.53333,-0.081194) (0.56222,-0.081224) (0.59111,-0.081103)
					(0.62000,-0.080836) (0.64889,-0.080429) (0.67778,-0.079886) (0.70667,-0.079211) (0.73556,-0.078408)
					(0.76444,-0.077479) (0.79333,-0.076426) (0.82222,-0.075251) (0.85111,-0.073955) (0.88000,-0.072538)
					(0.90889,-0.071000) (0.93778,-0.069340) (0.96667,-0.067557) (0.99556,-0.065649) (1.02444,-0.063613)
					(1.05333,-0.061446) (1.08222,-0.059143) (1.11111,-0.056698) (1.14000,-0.054105) (1.16889,-0.051356)
					(1.19778,-0.048440) (1.22667,-0.045345) (1.25556,-0.042055) (1.28444,-0.038551) (1.31333,-0.034809)
					(1.34222,-0.030796) (1.37111,-0.026470) (1.40000,-0.021768) (1.40768,-0.020444) (1.41537,-0.019084)
					(1.42305,-0.017686) (1.43074,-0.016248) (1.43842,-0.014764) (1.44611,-0.013232) (1.45379,-0.011644)
					(1.46148,-0.009994) (1.46916,-0.008272) (1.47685,-0.006463) (1.48453,-0.004545) (1.49222,-0.002468)
					(1.49990,-0.000038)};
				\fill (0.56222,-0.081224) circle (0.9pt);
				\node[font=\footnotesize] at (1.5,0.0012) [above] {$T$};
			\end{tikzpicture}
			\caption{Loose cap: $M=5$.}
			\label{fig:review-gain-loose}
		\end{subfigure}
		\caption{The review gain $G(\tau)$ under a tight and a loose payment cap}
		\label{fig:review-gain}
	\end{figure}

	\paragraph{Review restores implementability under tight caps.}
	
	Under tight caps, the value of a midterm review also lies in restoring implementability.  At the no-review implementability
	boundary, splitting the horizon creates strict incentive slack in both phases, and a
	review permits full effort for caps slightly below the level required without one.
	
	\begin{proposition}
		\label{prop:review-restores-feasibility-boundary}
		Let $M_0\equiv M_I(T,c)$, at which full effort without review is just implementable.
		For every split $\tau\in(0,T)$, with $h=T-\tau$,
		\begin{equation*}
			M_0\overline\Delta(\tau)>c\tau,
			\qquad
			M_0\overline\Delta(h)>ch.
		\end{equation*}
		Moreover there exist $\tau_0\in(0,T)$ and $\varepsilon>0$ such that for every
		$M\in(M_0-\varepsilon,M_0)$ full effort is infeasible without a review, but a review at
		$\tau_0$ admits a two-region menu inducing first-phase work and, whenever the project
		is continued, full continuation effort.  Continuation occurs with positive probability.
	\end{proposition}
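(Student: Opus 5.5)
The plan is to settle the two strict inequalities by a monotonicity argument for $\overline\Delta(h)/h$. The feasibility claim then follows from an explicit two-region menu with an \emph{early} review date $\tau_0$. At such a date the first phase is cheap to motivate and the continuation phase is long enough to carry a nondegenerate rent.

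\textbf{Step 1: strict slack.} Since $M_0=cT/\overline\Delta(T)$, both inequalities amount to $\overline\Delta(t)/t>\overline\Delta(T)/T$ for every $t\in(0,T)$, applied once to $t=\tau$ and once to $t=h$. It therefore suffices that $t\mapsto \overline\Delta(t)/t$ is strictly decreasing on $(0,\infty)$. Substitute $u=\sqrt t/2$, so that $\overline\Delta(t)/t=\frac{1}{4u}\cdot\frac{2\Phi(u)-1}{u}$. On $u>0$ the map $u\mapsto 2\Phi(u)-1$ is strictly concave and vanishes at $0$. Hence its chord slope $(2\Phi(u)-1)/u$ is strictly decreasing. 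Multiplying by the positive, strictly decreasing factor $1/(4u)$ preserves strict decrease, which proves the first display.

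\textbf{Step 2: choice of $\tau_0$ and of the continuation rung.} Let $\tau$ be small and $h=T-\tau$. By Step 1, $M_0\overline\Delta(h)>ch$, so continuation work is strictly implementable at $M=M_0$. By continuity of $\overline\Delta$ and of the crossing $z_D$ in \eqref{eq:zD-zE} (the crossing is transversal because the slack is strict), it remains implementable for $M$ in a neighborhood of $M_0$ and $\tau$ near $0$.

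Lemma~\ref{lem:attainable-rents} gives the difficult-standard rent $r_D(h,M)=M\Phi(z_D-\sqrt h)>0$. This rent is continuous in $(h,M)$ there, so it is bounded below by some $\underline r>0$ uniformly for $\tau\in(0,\bar\tau]$ and $M\in[M_0-\varepsilon_1,M_0]$.

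For the first phase I use the menu $A_B$ for $x<k$ and $A_D$ for $x\ge k$. The cutoff is placed at $k=\tau/2$, which maximizes $\Prb_1(X_\tau\ge k)-\Prb_0(X_\tau\ge k)$ and so makes that difference equal to $\overline\Delta(\tau)$. Obedience \eqref{eq:pre-review-obedience} then reads $r_D\,\overline\Delta(\tau)\ge c\tau$. Since $\overline\Delta(\tau)/\tau\sim\phi(0)/\sqrt\tau\to\infty$ as $\tau\downarrow0$, this holds with room to spare for some fixed $\tau_0\in(0,\bar\tau]$ and all $M\in[M_0-\varepsilon_1,M_0]$. Take $\varepsilon=\varepsilon_1$.

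\textbf{Step 3: conclusion.} For $M<M_0=M_I(T,c)$, the characterization preceding Lemma~\ref{lem:Mbar} shows that \eqref{eq:bar-condition} fails at every $z$ when $h=T$. Combined with Lemma~\ref{lem:threshold-reduction}, no terminal-only contract implements full effort.

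The constructed menu is feasible: payments lie in $\{0,M\}$, and after any $x\ge k$ the agent faces a threshold contract satisfying (IC$_{h}$), so he works. First-phase work is optimal by Step 2. Continuation has probability $\Prb_1(X_{\tau_0}\ge\tau_0/2)=\Phi(\sqrt{\tau_0}/2)>0$.

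\textbf{Main obstacle.} The delicate point is the uniformity in Step 2: the lower bound on $r_D$ must hold jointly in $M$ slightly below $M_0$ and in small $\tau$. This requires the transversal crossing at $z_D$, so that $z_D$ stays finite and continuous.

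One might be tempted to take $\tau_0$ near $T$ instead. That fails, because $r_D$ there decays only like $\sqrt h/|z_D|$ while first-phase slack vanishes linearly. This is why the construction uses an early review date.
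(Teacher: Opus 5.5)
Your Step 1 is correct. The chord-slope argument for the monotonicity of $\overline\Delta(t)/t$ is a clean alternative to the derivative computation in the proof of Lemma~\ref{lem:Mbar}. Your $A_B$/$A_D$ menu with cutoff $\tau_0/2$ is also a legitimate two-region menu.

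The problem is the uniformity claim in Step 2, which is false as stated, and it sits exactly at the point you flag as delicate. There is no rectangle $(0,\bar\tau]\times[M_0-\varepsilon_1,M_0]$ on which continuation work is implementable, let alone one on which $r_D$ is bounded below. The threshold $M_I(\cdot,c)$ is continuous and strictly increasing with $M_I(T,c)=M_0$. So for every fixed $M<M_0$ we have $M_I(T-\tau,c)>M$ for all sufficiently small $\tau$, and $r_D(T-\tau,M)$ is undefined there. Your justification, that ``the crossing is transversal because the slack is strict,'' fails at precisely the corner $(h,M)\to(T,M_0)$ that the rectangle approaches. Step 1 gives strict slack only for $h<T$, and at $(T,M_0)$ the two crossings $z_D$ and $z_E$ merge at $\sqrt T/2$.

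The repair is to reverse the order of quantifiers, as the paper does.
\begin{itemize}
\item First fix $\tau_0$ using $M=M_0$ alone. Because the roots merge, $z_D(T-\tau,M_0)\to\sqrt T/2$ and hence $r_D(T-\tau,M_0)\to M_0\Phi(-\sqrt T/2)>0$ as $\tau\downarrow0$. Together with $\overline\Delta(\tau)/\tau\to\infty$, this lets you choose $\tau_0$ with $r_D(h_0,M_0)\overline\Delta(\tau_0)>c\tau_0$ strictly, where $h_0=T-\tau_0$.
\item Only then perturb $M$. At the fixed $h_0<T$ the crossing is transversal, so $r_D(h_0,\cdot)$ is continuous at $M_0$. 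Both strict inequalities (continuation feasibility at $h_0$ and first-phase obedience) then survive on $(M_0-\varepsilon,M_0)$, with $\varepsilon$ depending on $\tau_0$.
\end{itemize}

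The paper also uses a different two-region menu: $A_D$ below $\tau_0/2$ and the good stop $A_G$ above. Its incentive spread is $M-r_D(h_0)$, which exceeds $(M+ch_0)/2$ by Lemma~\ref{lem:attainable-rents} (since $r_D<r_E$ and $r_D+r_E=M-ch$). That menu therefore needs no lower bound on $r_D$ near $h=T$ at all. Your $A_B$/$A_D$ menu has spread $r_D$ itself, so it genuinely needs the root-merging limit above.
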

	
	A review can therefore create incentives unavailable when performance is evaluated only at
	the terminal date.  Splitting the project gives two shorter incentive problems, each
	strictly easier to implement from $ M_0\overline\Delta(\tau)>c\tau $ and $M_0\overline\Delta(h)>ch$.  More
	importantly, it creates a new source of motivation: the continuation phase has a
	nondegenerate range of work-inducing rents $[r_D,r_E]$, and the principal can condition
	those rents on the review score.  Rents that must in any event be conceded to motivate
	future effort thus double as rewards for past effort, supplemented when necessary by the
	two stopping actions.
		
	\section{Extensions}
	\label{sec:extensions}
	
	The baseline model isolates the role of an endogenous interim review under phase-level effort,
	Gaussian progress, and deterministic review.  This section summarizes three extensions that relax those assumptions in turn.  The main text records the economically
	important objects and the structural conclusions; derivations, complete statements, proofs, and
	numerical illustrations are collected in the Online Appendix.  A useful theme across the
	extensions is that the four-rung ladder is more robust than the exact Gaussian formulas: what
	changes first is typically the location or availability of a rung, while the ordering logic survives
	whenever the review score continues to rank continuation rents monotonically.
	
	\subsection{Pointwise effort choice}
	\label{sec:pointwise-effort}
	
	The baseline model restricts the agent to a phase-by-phase effort choice: in a phase of
	length $h$, he either works throughout or does not work at all.  This section allows him to
	choose effort pointwise.  Because no new information arrives within a phase, only total effort
	matters.  Writing $e_t\in\{0,1\}$ for instantaneous effort, let
	\begin{equation*}
		\eta=\int_0^h e_t\,dt\in[0,h].
	\end{equation*}
	Total effort $\eta$ produces $Y\sim N(\eta,h)$ and costs $c\eta$.  Thus the binary choice
	$\{0,h\}$ in the baseline becomes the interval $[0,h]$: the agent can now ``coast'' by
	working for only part of the phase.
	
	This extension has two sharp implications.  First, the four-rung structure of the optimal
	review menu survives: no interior effort level becomes an exposed continuation action.  Second,
	and more importantly, the easy continuation standard must become harder.  Pointwise effort
	leaves the difficult continuation unchanged, but reduces the largest rent that the principal
	can deliver through the easy continuation.  The extension therefore preserves the form of the
	incentive ladder while weakening one of its rungs.
	
	\paragraph{From a global to a marginal incentive test.}
	For a terminal wage $w$, let $\pi(\eta)=\E_\eta[w(Y)]$ denote expected compensation when total
	effort is $\eta$.  A contract implements the target $\eta$ exactly when no other effort level
	is more attractive:
	\begin{equation*}
		\pi(\eta)-c\eta\;\geq\;\pi(\eta')-c\eta'
		\qquad\text{for every }\eta'\in[0,h].
		\tag{IC$_h^{\mathrm{flex}}$}
		\label{eq:pointwise-full-IC}
	\end{equation*}
	The phase-by-phase model checks only the extreme deviation $\eta'=0$.  Pointwise choice also
	introduces nearby ones, including the temptation to shave the last increment of effort.  Two
	consequences of \eqref{eq:pointwise-full-IC} carry the analysis: the comparison with $\eta'=0$,
	and local optimality,
	\begin{align*}
		\pi(\eta)-c\eta&\;\geq\;\pi(0),
		\tag{G$_\eta$}\label{eq:pointwise-global}\\
		\pi'(\eta)&\;=\;c\ \ \text{if }\eta\in(0,h),
		\qquad
		\pi'(h)\;\geq\;c\ \ \text{if }\eta=h.
		\tag{L$_\eta$}\label{eq:pointwise-local}
	\end{align*}
	The first is the baseline's \emph{global} work-versus-shirk condition; the second is the new
	\emph{marginal} one, an equality at an interior target and an inequality at full effort, where
	only downward deviations are available.  For the Gaussian upper-tail bonus
	$w(y)=M\ind_{\{y\geq h-z\sqrt h\}}$ these read $\Delta(h,z)\geq ch/M$ and
	$\sqrt h\,\phi(z)\geq ch/M$, and Online Appendix~\ref{oa:flexible-effort} shows that the two
	are jointly sufficient for this threshold contract to implement full effort.
	
	The comparison with phase-by-phase effort is now immediate.  At the difficult bar $z_D(h)$,
	the global condition binds and the marginal condition is slack.  The difficult continuation is
	therefore unchanged.  At the easy bar $z_E(h)$, by contrast, the baseline global condition
	binds but the marginal return to the last increment of effort is too small.  The reason is
	simple: an easy bar is cleared with high probability even if the agent works slightly less, so
	coasting barely changes the probability of payment while saving effort cost.  A contract can
	therefore induce ``work rather than shirk'' and still fail to induce ``work all the way.''
	
	The principal removes this temptation by raising the easy standard.  In the $z$ notation a
	larger $z$ means an easier output bar, so the new bar is the positive root $\zeta(h)$ of
	\begin{equation}
		\sqrt h\,\phi\bigl(\zeta(h)\bigr)=\frac{ch}{M},
		\qquad\text{equivalently}\qquad
		\phi\bigl(\zeta(h)\bigr)=\frac{c\sqrt h}{M},
		\label{eq:harder-easy-bar}
	\end{equation}
	which satisfies $\zeta(h)<z_E(h)$; the associated easy action is
	\begin{equation}
		\widetilde A_E=\bigl(\widetilde r_E(h),(1-c)h\bigr),
		\qquad
		\widetilde r_E(h)=M\Phi\bigl(\zeta(h)\bigr)-ch\;<\;r_E(h).
		\label{eq:pointwise-easy-action}
	\end{equation}
	Thus pointwise effort does not merely
	tighten full-effort feasibility; it specifically lowers the maximum continuation rent available
	at the easy end.
	
	\begin{figure}[t]
		\centering
		\begin{subfigure}[t]{0.47\textwidth}
			\centering
			\begin{tikzpicture}[x=4.0cm,y=3.25cm,>=Latex]
				\clip (-0.18,-0.16) rectangle (1.42,0.93);
				\draw[->,gray!70] (-0.10,0)--(1.34,0) node[below right,black,font=\small] {$r$};
				\draw[->,gray!70] (0,-0.08)--(0,0.86) node[left,black,font=\small] {$s$};
				\draw[very thick,gray!60] (0,0)--(1.20,0);
				\draw[very thick] (0.086,0.700)--(0.814,0.700);
				\fill (0,0) circle (1.5pt) node[above left,font=\small] {$A_B$};
				\fill (0.086,0.700) circle (1.5pt) node[above right,font=\small] {$A_D$};
				\fill (0.814,0.700) circle (1.5pt) node[above left,font=\small] {$A_E$};
				\fill (1.20,0) circle (1.5pt) node[above right,font=\small] {$A_G$};
			\end{tikzpicture}
			\caption{Phase-by-phase effort, $\eta\in\{0,h\}$}
			\label{fig:pointwise-binary}
		\end{subfigure}
		\hfill
		\begin{subfigure}[t]{0.47\textwidth}
			\centering
			\begin{tikzpicture}[x=4.0cm,y=3.25cm,>=Latex]
				\clip (-0.18,-0.16) rectangle (1.42,0.93);
				\draw[->,gray!70] (-0.10,0)--(1.34,0) node[below right,black,font=\small] {$r$};
				\draw[->,gray!70] (0,-0.08)--(0,0.86) node[left,black,font=\small] {$s$};
				\fill[gray!18] (0,0)--(0.086,0.700)--(0.700,0.700)--(1.20,0)--cycle;
				\draw[very thick,gray!60] (0,0)--(1.20,0);
				\draw[very thick] (0.086,0.700)--(0.700,0.700);
				\draw[gray!65,densely dashed] (0.700,0.700)--(0.814,0.700);
				\draw[gray!65,line width=.7pt] (0.814,0.700) circle (1.8pt);
				\fill (0,0) circle (1.5pt) node[above left,font=\small] {$A_B$};
				\fill (0.086,0.700) circle (1.5pt) node[above right,font=\small] {$A_D$};
				\fill (0.700,0.700) circle (1.5pt) node[above left,font=\small] {$\widetilde A_E$};
				\node[below right,font=\small,gray!65] at (0.814,0.700) {$A_E$};
				\fill (1.20,0) circle (1.5pt) node[above right,font=\small] {$A_G$};
			\end{tikzpicture}
			\caption{Pointwise effort, $\eta\in[0,h]$}
			\label{fig:pointwise-flexible}
		\end{subfigure}
		\caption{Rent--surplus sets under phase-by-phase and pointwise effort.}
		\label{fig:pointwise-feasible-set}
	\end{figure}
	
	Figure~\ref{fig:pointwise-feasible-set} displays the central comparison.  Moving from
	phase-by-phase to pointwise effort fills out the feasible rent--surplus set, but its convex hull
	still has four relevant vertices.  Three are unchanged: bad stop $A_B$, difficult continuation
	$A_D$, and good stop $A_G$.  Only the maximum-rent full-effort endpoint moves, from $A_E$ to
	$\widetilde A_E$.  Geometrically, the full-effort face is shortened; economically, the easy
	continuation loses incentive capital.
	
	\paragraph{The four-rung menu survives.}
	Online Appendix~\ref{oa:flexible-effort} proves that the convex hull of
	$\mathcal A^{\mathrm{flex}}(h)$ is $\operatorname{co}\{A_B,A_D,\widetilde A_E,A_G\}$:
	although pointwise effort permits a continuum of targets, no interior $\eta\in(0,h)$ maximizes
	the principal's statewise linear objective.  The optimal menu can therefore still be chosen
	from four actions, ordered by review performance exactly as in the baseline---bad stop,
	difficult continuation, easy continuation, good stop---and all baseline formulas carry over
	after the economically meaningful substitution $r_E(h)\mapsto\widetilde r_E(h)$.  In particular
	the upper breakpoint becomes
	$\widetilde q_2(h)=(1-c)h/\bigl(M-\widetilde r_E(h)\bigr)$, while the lower breakpoint and the
	ordering of the four regions are unchanged.  Since $\widetilde r_E(h)<r_E(h)$, the easy
	continuation is a weaker reward: the principal uses it less and relies relatively more on the
	indivisible good-stop prize.  Pointwise effort therefore changes the strength, but not the
	architecture, of the incentive ladder.  This substitution away from continuation rents toward
	good stopping is costly: it reduces productive continuation and lowers the principal's payoff.
	
	\subsection{Beyond Gaussian noise}
	\label{sec:general-noise}
	
	The normal distribution delivers closed-form thresholds, but it is not what creates the ordered
	incentive ladder.  To separate the distribution-free argument from the Gaussian formulas, replace
	the Brownian increment over a phase of length $h$ by a public signal with law $P_1^h$ under work
	and law $P_0^h$ under shirking.  We retain stationary independent increments, so the continuation
	experiment depends on the review date only through the remaining horizon.  Let $m(h)$ be the
	incremental expected output from work and define continuation surplus net of effort cost by
	$\Sigma(h)=m(h)-ch$.
	
	\paragraph{From threshold contracts to statistical tests.}
	In the Gaussian baseline, the monotone likelihood ratio makes the optimal wage a cutoff in
	output.  With a general signal there need not be a natural output cutoff.  Instead, normalize the
	wage as $\varphi=w/M\in[0,1]$.  A fractional wage $M\varphi(y)$ has the same expected payoffs as
	paying the cap with probability $\varphi(y)$ after signal $y$.  The contract can therefore be
	represented as a possibly randomized statistical test: its
	\emph{size} $\alpha=\E_0[\varphi]$ is the probability of payment under shirking, and its
	\emph{power} $\beta=\E_1[\varphi]$ is the probability of payment under work.
	
	Let $\mathcal R_h$ be the set of all feasible size--power pairs $(\alpha,\beta)$, and let
	$\beta_h(\alpha)$ be the greatest power attainable at size $\alpha$.  The incentive spread from a
	pair is $M(\beta-\alpha)$.  Its largest possible value is
	$M\operatorname{TV}_h$, where
	$\operatorname{TV}_h=\max_{(\alpha,\beta)\in\mathcal R_h}(\beta-\alpha)$,
	is the total-variation distance between the work and shirk signal
	laws, so the one-phase feasibility condition is exactly $ch\leq M\operatorname{TV}_h$.
	Under strict feasibility, the work-inducing tests have sizes in the interval whose endpoints are
	$\alpha_L(h)=\min\{\alpha:\beta_h(\alpha)-\alpha\geq ch/M\}$ and
	$\alpha_H(h)=\max\{\alpha:\beta_h(\alpha)-\alpha\geq ch/M\}$.
	The corresponding work-inducing rents are exactly the interval
	$[r_L(h),r_H(h)]=M[\alpha_L(h),\alpha_H(h)]$, and the full-effort frontier is the horizontal segment $s=\Sigma(h)$ over that interval.  As in
	the baseline, the principal's payoff $u=\Sigma(h)-r$ falls one for one with the rent conceded.
	
	Two-sided evidence in both phases, slack detection, positive continuation surplus, monotone
	review evidence, and public randomization at score atoms recover the baseline ladder.  Under these
	conditions $0<r_L<r_H<M$, and the support function for the statewise objective $s+qr$ is
	\begin{equation*}
		\mathcal V(q,h)
		=\max\{0,\ \Sigma(h)+qr_L(h),\ \Sigma(h)+qr_H(h),\ qM\}.
	\end{equation*}
	The four exposed actions are bad stop $A_B=(0,0)$, low-rent continuation
	$A_L=(r_L,\Sigma)$, high-rent continuation $A_H=(r_H,\Sigma)$, and good stop
	$A_G=(M,0)$.  A monotone likelihood ratio for the review score assigns them in that order as
	performance improves.  Thus Gaussianity is useful because it converts likelihood-ratio tests into
	signal cutoffs and supplies closed-form formulas; it is not needed for the four-action geometry.
	
	\paragraph{Rent is cost divided by evidence.}
	The general experiment also makes the source of agency rent transparent.  If
	$\overline\Lambda_h$ is the average work-to-shirk likelihood ratio on the reward event of the
	cheapest work-inducing test, then $r_L(h)=ch/(\overline\Lambda_h-1)$: rent is effort cost
	divided by the excess evidentiary content of the reward event.  Consequently,
	a loose cap eliminates rent only if rare reward events can carry arbitrarily strong evidence of
	work.  Under the regularity conditions in the Online Appendix,
	$\lim_{M\to\infty}r_L(h)=ch/(\Lambda_h^{\max}-1)$, where $\Lambda_h^{\max}$ is the essential supremum of the work-to-shirk likelihood ratio, with the
	right-hand side interpreted as zero when $\Lambda_h^{\max}=\infty$.  Bounded evidence can therefore
	leave a positive rent floor even as the payment cap becomes arbitrarily loose.
	
	\paragraph{What can fail.}
	The five conditions isolate distinct departures from the baseline.  One-sided evidence can push
	an extreme rent to $0$ or $M$ and remove a rung.  Nonpositive continuation surplus can make
	stopping technologically attractive rather than purely incentive driven.  Failure of monotone
	likelihood ratios can destroy the ordering in the observed score, while atoms can require public
	mixing at a review threshold.  The two likelihood-ratio tails also govern whether the stopping
	rungs are reached: bounded evidence of work makes the good stop harder to use, and bounded evidence
	of shirking similarly limits the bad stop.
	
	\paragraph{Timing depends on short-horizon evidence.}
	The review-date results require more than the feasible size--power set.  
	Brownian evidence becomes locally weak as the phase shrinks, which produces the infinitely negative right derivative at $\tau=0$.\footnote{This distinction is related to \citet{Li2017}, who shows that review contracts can attain near-efficiency under Poisson and Gamma monitoring but not under Brownian monitoring. Under Brownian monitoring, short performance records remain too noisy to support sufficiently demanding performance standards.}  A rare	event whose likelihood ratio remains bounded away from one over a short interval can instead provide incentives at first order.  Such hard evidence can remove the Brownian infinite-slope conclusion, although it does not by itself make an early review optimal.  It can also give a vanishing continuation phase first-order value near $T$, breaking the Brownian equivalence between the review derivative and the no-review derivative.  The distribution-free conclusion is thus a separation: the feasible size--power set delivers the ladder, whereas likelihood-ratio tails and short-horizon asymptotics determine which rungs are reached and how review timing behaves.  Online Appendix~\ref{oa:general-noise} states the exact assumptions and derives the generalized endpoint formulas.
	
	\subsection{Random review dates}
	\label{sec:random-review}
	
	The review date has so far been announced at date~$0$.  This extension lets the principal commit instead to a distribution $\Gamma$ over review dates on $[0,T]$, draw the date independently of the Brownian motion, and withhold its realization until the review occurs. She commits publicly to $\Gamma$ and to a date-contingent menu
	$a_\tau(\cdot)$ for every realization $\tau$; only the draw is hidden.
	
	Timing is valuable for two conflicting reasons.  A later review aggregates more performance
	and so yields a sharper score; an earlier one leaves a longer continuation phase and so more
	incentive capital.  A deterministic date must settle on one mixture of the two.  A lottery
	need not, because the agent chooses pre-review effort without knowing which date will
	arrive.
	
	\paragraph{Phase-level effort.}
	Under Assumption~\ref{as:phase-level} the agent chooses $a_1$ once, at date~$0$, and cannot
	revise it before the review.  Non-arrival therefore changes his beliefs but gives him no new
	action, and obedience is a single scalar constraint.  Writing
	\begin{equation}
		P(\tau,a_\tau)=\tau+\int_{\mathbb R}\bigl[s_\tau(x)-r_\tau(x)\bigr]g_1^\tau(x)\,dx,
		\qquad
		J(\tau,a_\tau)=\int_{\mathbb R}r_\tau(x)\bigl[g_1^\tau(x)-g_0^\tau(x)\bigr]dx-c\tau
		\label{eq:random-P-J}
	\end{equation}
	for the principal's payoff and the incentive surplus of a date--menu pair, the problem is
	\begin{equation*}
		V^{S}=\sup_{\Gamma,\{a_\tau\}}\int P(\tau,a_\tau)\,\Gamma(d\tau)
		\qquad\text{subject to}\qquad
		\int J(\tau,a_\tau)\,\Gamma(d\tau)\geq0 .
		\tag{SR}
		\label{eq:random-review-primal}
	\end{equation*}
	Thus $J\geq0$ is precisely the obedience constraint of the deterministic problem, and
	$V^{D}\equiv\sup_{\tau}V^R(\tau)$ is the value when $\Gamma$ is degenerate.  At $\tau=T$
	there is no continuation phase and the date-$T$ problem is the no-review problem of
	Section~\ref{sec:no-review}.
	
	Two attainable date--menu pairs can be blended into a feasible random contract whenever one
	of them runs an incentive surplus and the other an incentive deficit.  Write $b=(J,P)$ for
	the pair \eqref{eq:random-P-J} generated by a date and its menu, and let $b_+$ and $b_-$ be
	attainable pairs with $J_+>0>J_-$.  Mixing them with weight $p$ on $b_+$ makes aggregate
	obedience bind at exactly one weight,
	\begin{equation}
		p(b_+,b_-)=\frac{-J_-}{J_+-J_-}\in(0,1),
		\label{eq:two-date-mixing-probability}
	\end{equation}
	at which the blend delivers the principal
	\begin{equation}
		\Pi(b_+,b_-)
		=p(b_+,b_-)\,P_++\bigl[1-p(b_+,b_-)\bigr]P_-
		=\frac{-J_-P_++J_+P_-}{J_+-J_-}.
		\label{eq:binding-mixture-payoff}
	\end{equation}
	Note that $p$ weights the \emph{surplus} branch, whichever of the two dates that happens to
	be.  The deterministic benchmark $V^D$ is the yardstick for both statements below.
	
	\begin{proposition}
		\label{prop:random-two-dates}
		Suppose Assumptions~\ref{as:phase-level} and~\ref{as:maintained} hold.  Then
		$V^{D}\leq V^{S}$, and some optimal random contract uses at most two review dates.
		Moreover:
		\begin{enumerate}[(i)]
			\item randomization is strictly valuable, $V^{S}>V^{D}$, if and only if
			\begin{equation}
				\Pi(b_+,b_-)>V^{D}
				\label{eq:strict-gain-primal-test}
			\end{equation}
			for some attainable pairs $b_+,b_-$ with $J_+>0>J_-$;
			\item whenever $V^{S}>V^{D}$, every minimal-support optimum is such a blend: it is
			supported on two distinct dates carrying attainable pairs $b_+$ and $b_-$ with
			$J_+>0>J_-$, mixed with the unique probability $p(b_+,b_-)$ of
			\eqref{eq:two-date-mixing-probability}, and it attains $\Pi(b_+,b_-)=V^{S}$.
		\end{enumerate}
	\end{proposition}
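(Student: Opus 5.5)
The plan is to treat (SR) as a linear program over probability measures on the attainable set $\mathcal B=\{b(\tau,a_\tau)=(J,P)\}$ of date--menu pairs. Here $\tau$ ranges over $(0,T]$ and $a_\tau$ over measurable menus with values in the date-$\tau$ action set. The objective and the single constraint are both linear in $\Gamma$. Moreover, for each fixed $\tau$ the menu may be chosen freely, so the problem is equivalent to choosing a distribution over points of $\mathcal B$.

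\emph{Step 1: $V^D\le V^S$.} For each $\tau$, a feasible deterministic contract with $J\ge0$ induces a Dirac $\Gamma$ that is feasible for (SR). Taking suprema gives the inequality.

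\emph{Step 2: reduction to two dates.} A mixture $\Gamma$ delivers the point $\int b\,\Gamma(db)\in\operatorname{co}(\mathcal B)$, and conversely every finite convex combination is implementable. So
\[
V^S=\sup\{P:(J,P)\in\operatorname{co}(\mathcal B),\ J\ge0\}.
\]
Because the constraint is a single half-plane in $\mathbb R^2$, the supremum is attained at a point of the upper boundary of $\operatorname{co}(\mathcal B)$ with $J\ge0$. Carathéodory in dimension two would give three points, but a point maximizing $P$ on a half-plane section lies on an edge of the hull, which is a combination of two points of $\mathcal B$.

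Attainment needs compactness of the closure of $\mathcal B$. This holds because $J$ and $P$ are bounded: rents lie in $[0,M]$ and surpluses in $[0,T]$. I expect the real work to be showing that the supremum is achieved rather than only approached. For this I would use continuity in $\tau$ of the fixed-date values, $\tau\mapsto \sup\{P:J\ge j\}$, together with Lemma~\ref{lem:review-boundary-values} to control the endpoints $\tau\downarrow0$ and $\tau=T$. The date $\tau=0$ is identified with $T$. This boundary/compactness step is the main obstacle.

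\emph{Step 3: the if-and-only-if.} For (i), sufficiency is immediate: the blend of $b_+,b_-$ with weight $p(b_+,b_-)$ is feasible, since its aggregate $J$ equals zero, and it yields $\Pi(b_+,b_-)>V^D$. For necessity, suppose $V^S>V^D$ and take a two-point optimum from Step 2 with support $\{b_1,b_2\}$.

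If both points had $J\ge0$, each would be a feasible deterministic contract with $P\le V^D$, so the mixture would have value at most $V^D$, a contradiction. Hence one point has $J<0$. Feasibility of the mixture then forces the other to have $J>0$.

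If the aggregate constraint were slack, I would shift weight toward the deficit point $b_-$. Its $P_-$ must exceed the blend value; otherwise the mixture's value is at most $P_+$, which is at most $V^D$ because $b_+$ is feasible alone. So shifting weight toward $b_-$ strictly improves the value while preserving feasibility, contradicting optimality. The constraint therefore binds, the weight is $p(b_+,b_-)$ from \eqref{eq:two-date-mixing-probability}, and the value is $\Pi(b_+,b_-)$.

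Finally, the two points lie at distinct dates. If they shared a date $\tau$, I would combine the menus pointwise with weights $p$ and $1-p$, randomizing actions at each score. By the linearity discussion in Section~\ref{sec:deriving-optimal-menu}, randomizing within $\operatorname{co}(\mathcal A)$ can be replaced by an extreme-point selection without loss of value. This yields a single date-$\tau$ menu with the same $(J,P)$, so a deterministic contract attains $V^S$, contradicting $V^S>V^D$. A minimal-support optimum is by definition not reducible to one point, so this argument gives (ii).
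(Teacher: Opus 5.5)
Your architecture is the paper's: recast (SR) as maximizing $P$ over $\operatorname{co}(\mathcal B)\cap\{J\ge0\}$, use the one-dimensional supporting face to get two points, show that obedience binds, and rule out a shared date. Your weight-shifting argument for binding obedience is a correct duality-free substitute for the paper's $\lambda^*>0$ argument.

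\textbf{The purification step.} This is where the proposal has a gap. You claim that two menus $a^1,a^2$ at a common date can be merged into one date-$\tau$ menu with the same $(J,P)$, citing the linearity discussion of Section~\ref{sec:deriving-optimal-menu}. That discussion only shows that a \emph{single} linear functional $s+qr$ has the same maximum over $\operatorname{co}(\mathcal A)$ as over $\mathcal A$.

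The pointwise mixture $p\,a^1(x)+(1-p)\,a^2(x)$ generally lies off $\mathcal A$; mixing $A_B$ with $A_D$, for instance, gives a surplus strictly between $0$ and $(1-c)h$. Replacing it by a vertex that maximizes some $s+qr$ changes each score's contribution to both $J$ and $P$. Nothing then guarantees aggregate obedience or the blend's value.

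What you need is exact replication of a two-dimensional integral, which is Lyapunov's theorem. Since $(g_0^\tau+g_1^\tau)\,dx$ is atomless, consider the $\mathbb R^2$-valued measure $B\mapsto\int_B\bigl[v_\tau(x,a^1(x))-v_\tau(x,a^2(x))\bigr]dx$, where $v_\tau(x,(r,s))=\bigl(r[g_1^\tau(x)-g_0^\tau(x)],\,(s-r)g_1^\tau(x)\bigr)$. Its range is convex. Hence using $a^1$ on a suitable set $B_p$ and $a^2$ off it reproduces $p\,b^1+(1-p)\,b^2$ exactly; this is the convexity of $\mathcal B_\tau$ in Lemma~\ref{oa:lem:random-attainable}.

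You need this earlier than you invoke it. A random contract attaches one menu to each date, so two steps already rest on it: your Step~2 claim that every finite convex combination of attainable pairs is implementable, and the sufficiency half of (i) when $b_+$ and $b_-$ happen to share a date.

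\textbf{The attainment step.} You flag this yourself, and your sketch does not close it. Continuity of $\tau\mapsto\sup\{P:J\ge j\}$ presupposes that the supremum is attained at each fixed date, i.e.\ that $\mathcal B_\tau$ is closed; the paper gets this from compactness of an Aumann integral. You also need upper semicontinuity jointly in $(\tau,j)$, not just continuity in $\tau$.

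At the left endpoint, Lemma~\ref{lem:review-boundary-values} only controls $V^R(\tau)$, that is, pairs with $J\ge0$. A lottery may place deficit pairs at ever earlier dates; the numerical example puts $b_-$ at $\tau_1\approx0.095$. What is needed is that every limit of pairs in $\mathcal B_{\tau_n}$ with $\tau_n\downarrow0$ has the form $(0,p)$ with $p\le V^{\NR}(T;c,M)$. Two facts give this:
\begin{itemize}
\item $J\to0$, because rents lie in $[0,M]$ and the total variation $2\Phi(\sqrt{\tau_n}/2)-1$ between the review-score laws vanishes.
\item $P\le\tau_n+(1-c)h_n-r_D(h_n)$ with $h_n=T-\tau_n$.
\end{itemize}
Such limits are then dominated by the no-review pair attained at $\tau=T$. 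Together with the closed graph of $h\mapsto\mathcal A(h)$ (Lemma~\ref{lem:continuation-rent-limits}) and $L^1$-continuity of $g_j^\tau$, this is what makes the attainable set compact in the paper.
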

	
	Two dates suffice because there is one aggregate constraint.  One date deliberately supplies
	more incentives than obedience requires and subsidizes another whose menu would not induce
	first-phase work on its own; only the average must be nonnegative.  Parts~(i) and~(ii) apply
	the same blend to different objects: (i) ranges over arbitrary attainable pairs and merely
	certifies that randomization pays, whereas (ii) says the optimum is itself one such blend and
	collects the gain.  In particular an optimal pair is always admissible in the test of~(i),
	but a pair that passes the test need not be optimal.
	
	\paragraph{Pointwise effort.}
	Under the pointwise effort of Section~\ref{sec:pointwise-effort} the scalar-constraint
	argument fails.  If the earliest possible date passes without a review, the agent learns the
	remaining date and can revise his effort from then on, so a deviation must be evaluated
	allowing him to reoptimize.  With two dates there is exactly one such event, and the problem
	stays tractable.
	
	Fix $0<\tau_1<\tau_2<T$ with probability $p$ on $\tau_1$, put $d=\tau_2-\tau_1$, and let
	$r_i$ be the rent schedule used if review occurs at $\tau_i$.  Throughout this extension the
	two branches are indexed by \emph{date}, not by the sign of the incentive surplus.  If
	cumulative pre-review effort is $e$, write
	\begin{equation}
		U_i(e)=\E\bigl[r_i\bigl(e+\sqrt{\tau_i}Z\bigr)\bigr]-ce,
		\qquad
		\widehat U_2(e)=\max_{m\in[e,e+d]}U_2(m),
		\qquad Z\sim N(0,1),
		\label{eq:random-U}
	\end{equation}
	so that $\widehat U_2$ is the late-branch value \emph{after} the agent has learned the date
	and reoptimized; the constraint $m\leq e+d$ is the flow bound on effort.  Let
	\begin{equation*}
		D_1(e)=U_1(\tau_1)-U_1(e),
		\qquad
		D_2(e)=U_2(\tau_2)-\widehat U_2(e)
	\end{equation*}
	be the two branch losses from deviating to $e$ before the first date.
	
	\begin{proposition}
		\label{prop:random-pointwise}
		Suppose Assumption~\ref{as:maintained} holds and effort is chosen pointwise, as in
		Section~\ref{sec:pointwise-effort}.  Fix $0<\tau_1<\tau_2<T$ and a weight $p$ on
		$\tau_1$.  Then:
		\begin{enumerate}[(i)]
			\item \emph{Exact obedience.}  Full pre-review effort is sequentially incentive
			compatible if and only if $U_2(\tau_2)=\widehat U_2(\tau_1)$ and
			\begin{equation}
				pD_1(e)+(1-p)D_2(e)\geq0
				\qquad\text{for every }e\in[0,\tau_1].
				\label{eq:random-exact-IC}
			\end{equation}
			
			\item \emph{A ceiling on the early weight.}  If in addition the late branch would
			induce full effort on its own, $U_2(\tau_2)\geq U_2(m)$ for every $m\in[0,\tau_2]$,
			then \eqref{eq:random-exact-IC} holds if and only if $p\leq\overline p$, where
			\begin{equation}
				\overline p
				=\inf_{\{e\,:\,D_1(e)<0\}}\frac{D_2(e)}{D_2(e)-D_1(e)}
				\label{eq:random-pbar}
			\end{equation}
			and $\overline p=1$ when the set is empty.
			
			\item \emph{Dominance.}  Let $P_i$ be the branch payoffs \eqref{eq:random-P-J}
			under full effort and let $V^{D,\pw}$ be the best deterministic value among contracts
			that induce full pointwise effort.  If $P_1>V^{D,\pw}>P_2$, a two-date lottery
			strictly dominates every such deterministic contract if and only if
			\begin{equation}
				\underline p\equiv\frac{V^{D,\pw}-P_2}{P_1-P_2}\;<\;\overline p ,
				\label{eq:random-dominance}
			\end{equation}
			and any $p\in(\underline p,\overline p]$ achieves it.
		\end{enumerate}
	\end{proposition}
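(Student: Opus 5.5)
The proof treats the lottery as a two-stage decision problem for the agent, with one information event: the non-arrival of the review at $\tau_1$. I would prove (i) by backward induction on that event, (ii) by rearranging the linear inequality in $p$, and (iii) by combining the feasible interval of $p$ with the linear payoff.

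\emph{Part (i).} Because no output is observed before the review, a pointwise effort plan before $\tau_1$ matters only through its cumulative effort $e\in[0,\tau_1]$, as in Section~\ref{sec:pointwise-effort}. If the review arrives at $\tau_1$, the agent gets $U_1(e)$, up to the common cost term already incurred. If it does not, he learns that the date is $\tau_2$ and chooses additional effort on $(\tau_1,\tau_2]$, reaching cumulative effort $m\in[e,e+d]$. His continuation value is then $\widehat U_2(e)$, after subtracting the sunk cost $ce$ consistently in both branches. Since $U_i$ already nets out $ce$, the ex ante value of deviating to $e$ is $pU_1(e)+(1-p)\widehat U_2(e)$.

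Sequential incentive compatibility of full effort requires two things. First, on path after non-arrival, continuing to work must be optimal, so $U_2(\tau_2)=\max_{m\in[\tau_1,\tau_2]}U_2(m)=\widehat U_2(\tau_1)$. Second, at date $0$, choosing $e=\tau_1$ must beat every $e$. Given the first condition, the on-path value is $pU_1(\tau_1)+(1-p)U_2(\tau_2)$, and subtracting the deviation value gives exactly \eqref{eq:random-exact-IC}. For the converse, I would note that these two conditions are precisely the one-shot deviation conditions at the only two decision nodes. The first covers deviations after non-arrival; the second covers all pre-$\tau_1$ plans combined with optimal reoptimization.

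\emph{Part (ii).} The added hypothesis gives $U_2(\tau_2)\ge U_2(m)$ for all $m$. This implies the first condition of (i), and it also gives $D_2(e)\ge0$ for every $e$. The inequality $pD_1+(1-p)D_2\ge0$ is linear in $p$. Where $D_1(e)\ge0$ it holds for every $p$. Where $D_1(e)<0$ it is equivalent to $p\le D_2(e)/(D_2(e)-D_1(e))$, and the denominator is positive there. Taking the infimum over such $e$ yields $\overline p$, with $\overline p=1$ when the set is empty. The infimum is attained in the sense needed, because the inequality is weak and the bound is an infimum of upper bounds.

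\emph{Part (iii).} The lottery's value is $pP_1+(1-p)P_2$. With the menus held fixed this is strictly increasing in $p$, since $P_1>P_2$. It exceeds $V^{D,\pw}$ if and only if $p>\underline p$, and $\underline p\in(0,1)$ because $P_1>V^{D,\pw}>P_2$. By (ii), feasibility is exactly $p\le\overline p$, so a dominating weight exists if and only if $(\underline p,\overline p]\neq\emptyset$, that is, if and only if $\underline p<\overline p$. Any $p$ in that interval works.

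\emph{Main obstacle.} The main obstacle is the rigorous reduction in (i). One must show that only cumulative effort matters before $\tau_1$, that the agent's post-non-arrival problem depends only on $e$, with the flow bound giving $m\le e+d$, and that no richer deviation escapes the two-node check. I would first argue via the sufficiency of $Y\sim N(\eta,h)$ and the absence of information within a phase. I would then verify the accounting of the sunk cost $ce$, which is where sign errors are most likely.
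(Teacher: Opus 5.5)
Your proposal is correct and follows the paper's proof essentially step for step. It uses the same reduction to cumulative pre-review effort $e$, with reoptimized value $\widehat U_2(e)$ after non-arrival, the same rearrangement of the linear inequality in $p$ using $D_2\geq0$, and the same comparison of the affine lottery payoff with the feasible interval $[0,\overline p]$. Like the paper, your part (iii) tacitly carries over the self-enforcing late-branch hypothesis of (ii), which is what makes feasibility exactly $p\le\overline p$.
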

	
	The two bounds separate the two requirements.  Weight $\underline p$ on the high-payoff
	branch is what beating one date costs; weight $\overline p$ is the most that branch can
	carry before some amount of early coasting becomes profitable.  Randomization is valuable
	exactly when the payoff requirement is the weaker of the two.  The phase-level problem
	checks fewer deviations, so $V^{D,\pw}\leq V^{D}$ and $V^D$ may be substituted for
	$V^{D,\pw}$ in \eqref{eq:random-dominance} to obtain a sufficient condition.
	
	\paragraph{A numerical illustration.}
	Both propositions can be illustrated with the same primitives, $T=1.5$, $c=0.72$ and
	$M=2.60$, and with the same pair of review dates.
	
	Under phase-level effort the best deterministic review occurs at $\tau^D\simeq0.862$ and
	yields $V^D\simeq0.07850$.  The optimal lottery reviews at $\tau_1\simeq0.095$ and
	$\tau_2\simeq0.817$.  The early date runs the incentive deficit and the later, more
	informative one supplies the offsetting surplus, so $b_-$ sits at $\tau_1$ and $b_+$ at
	$\tau_2$.  By \eqref{eq:two-date-mixing-probability} the later date carries probability
	$p(b_+,b_-)\simeq0.280$ and the early one the remaining $0.720$, raising the payoff to
	$V^S\simeq0.08279$ --- about $5.5\%$ above the deterministic benchmark.
	
	Under pointwise effort, keep those two dates and let each menu assign $A_B$, $A_D$,
	$\widetilde A_E$, $A_G$ in order, with $\widetilde A_E$ the pointwise easy action
	\eqref{eq:pointwise-easy-action}.  Because the menus now use $\widetilde A_E$ in place of
	$A_E$, the branch payoffs differ from those just reported: $P_1=0.13301$ at the early date
	and $P_2=0.03199$ at the late one.  The late branch is itself pointwise incentive
	compatible, and $\overline p=0.47293$.  Since $V^{D,\pw}\leq V^D=0.07850$, we get
	$\underline p\leq0.46047<\overline p$, so \eqref{eq:random-dominance} holds; at $p=0.465$
	the lottery is worth $0.07896$.  A feasible two-date pointwise contract therefore beats
	every deterministic full-effort pointwise review, without having to solve the deterministic
	pointwise problem itself.
	
	The improvement is small, and the second example is a witness rather than an optimized
	contract.  Its point is qualitative: the value of random timing is not an artifact of
	phase-level effort.  Even when the agent can coast before the first possible date, infers
	the late date from non-arrival, and then optimally revises his remaining effort, randomizing
	over two dates can strictly beat every deterministic review.  Online
	Appendix~\ref{oa:random-review} gives the proofs and the numerical construction.
	
	\section{Conclusion}\label{sec:conclusion}
	
	This paper studies how a principal uses a public midterm review when effort is sequential and compensation is bounded. The review does more than evaluate past performance: it allows the principal to use future production, continuation rents, and termination as incentive instruments. The optimal menu assigns these continuation opportunities through an ordered performance ladder, while the review date balances the informativeness of accumulated performance against the incentive value of the remaining continuation phase.
	
	Although three extensions demonstrate the robustness of our main findings,
two further natural departures would enlarge the institution itself. First, in many organizations, the timing of the final evaluation is flexible: it may be brought forward before the deadline or postponed beyond the nominal deadline. Allowing the principal to choose a deadline at the midterm review would make the length of the continuation phase itself an incentive instrument. Performance realizations near the boundaries of the good- or bad-stop regions could then be assigned a short probationary period rather than being terminated immediately or retained until the deadline under a final performance standard. Second, our benchmark implicitly treats midterm reviews as sufficiently costly that at most one is conducted. Allowing multiple reviews would replace the four-rung menu with a backward recursion over convex continuation sets. Some faces of these sets would be inherited from the next review, while other regions would be strictly convexified, potentially yielding a rating rule that combines jumps, plateaus, and continuously graded bands. Both extensions enlarge the principal's set of institutional instruments, and we leave their analysis for future work.

	\bibliographystyle{chicago}
	
	\bibliography{Midterm_Review_references}

@article{ChenChiu2013,
  author  = {Chen, Bin R. and Chiu, Y. Stephen},
  title   = {Interim Performance Evaluation in Contract Design},
  journal = {The Economic Journal},
  year    = {2013},
  volume  = {123},
  number  = {569},
  pages   = {665--698},
  doi     = {10.1111/ecoj.12018}
}

@article{Ray2007,
  author  = {Ray, Korok},
  title   = {Performance Evaluations and Efficient Sorting},
  journal = {Journal of Accounting Research},
  year    = {2007},
  volume  = {45},
  number  = {4},
  pages   = {839--882},
  doi     = {10.1111/j.1475-679X.2007.00253.x}
}

@article{Holmstrom1979,
  author  = {Holmstr{\"o}m, Bengt},
  title   = {Moral Hazard and Observability},
  journal = {The Bell Journal of Economics},
  year    = {1979},
  volume  = {10},
  number  = {1},
  pages   = {74--91},
  doi     = {10.2307/3003320}
}

@unpublished{Li2017,
  author = {Li, Anqi},
  title  = {Efficiency in Dynamic Agency Models},
  note   = {Working paper, Washington University in St.\ Louis},
  year   = {2017},
}

@article{ely2025,
  author  = {Ely, Jeffrey C. and Georgiadis, George and Rayo, Luis},
  title   = {Feedback Design in Dynamic Moral Hazard},
  journal = {Econometrica},
  year    = {2025},
  volume  = {93},
  number  = {2},
  pages   = {597--621},
  doi     = {10.3982/ECTA21871}
}

@article{elyszydlowski2020,
  author  = {Ely, Jeffrey C. and Szydlowski, Martin},
  title   = {Moving the Goalposts},
  journal = {Journal of Political Economy},
  year    = {2020},
  volume  = {128},
  number  = {2},
  pages   = {468--506},
  doi     = {10.1086/704387}
}

@article{georgiadis2020,
  author  = {Georgiadis, George and Szentes, Bal{\'a}zs},
  title   = {Optimal Monitoring Design},
  journal = {Econometrica},
  year    = {2020},
  volume  = {88},
  number  = {5},
  pages   = {2075--2107},
  doi     = {10.3982/ECTA16475}
}

@article{hoffmann2021,
  author  = {Hoffmann, Florian and Inderst, Roman and Opp, Marcus M.},
  title   = {Only Time Will Tell: A Theory of Deferred Compensation},
  journal = {The Review of Economic Studies},
  year    = {2021},
  volume  = {88},
  number  = {3},
  pages   = {1253--1278},
  doi     = {10.1093/restud/rdaa043}
}

@article{jewitt2008,
  author  = {Jewitt, Ian and Kadan, Ohad and Swinkels, Jeroen M.},
  title   = {Moral Hazard with Bounded Payments},
  journal = {Journal of Economic Theory},
  year    = {2008},
  volume  = {143},
  number  = {1},
  pages   = {59--82},
  doi     = {10.1016/j.jet.2007.12.004}
}

@unpublished{lizzeri2002,
  author = {Lizzeri, Alessandro and Meyer, Margaret A. and Persico, Nicola},
  title  = {The Incentive Effects of Interim Performance Evaluations},
  year   = {2002},
  note   = {CARESS Working Paper 02-09, University of Pennsylvania}
}

@article{lukas2010,
  author  = {Lukas, Christian},
  title   = {Optimality of Intertemporal Aggregation in Dynamic Agency},
  journal = {Journal of Management Accounting Research},
  year    = {2010},
  volume  = {22},
  number  = {1},
  pages   = {157--174},
  doi     = {10.2308/jmar.2010.22.1.157}
}

@article{sannikov2008,
  author  = {Sannikov, Yuliy},
  title   = {A Continuous-Time Version of the Principal--Agent Problem},
  journal = {The Review of Economic Studies},
  year    = {2008},
  volume  = {75},
  number  = {3},
  pages   = {957--984},
  doi     = {10.1111/j.1467-937X.2008.00486.x}
}

@article{spearwang2005,
  author  = {Spear, Stephen E. and Wang, Cheng},
  title   = {When to Fire a {CEO}: Optimal Termination in Dynamic Contracts},
  journal = {Journal of Economic Theory},
  year    = {2005},
  volume  = {120},
  number  = {2},
  pages   = {239--256},
  doi     = {10.1016/j.jet.2004.02.008}
}

@article{varas2020,
  author  = {Varas, Felipe and Marinovic, Iv{\'a}n and Skrzypacz, Andrzej},
  title   = {Random Inspections and Periodic Reviews: Optimal Dynamic Monitoring},
  journal = {The Review of Economic Studies},
  year    = {2020},
  volume  = {87},
  number  = {6},
  pages   = {2893--2937},
  doi     = {10.1093/restud/rdaa012}
}

@article{rodivilov2022,
  author  = {Rodivilov, Alexander},
  title   = {Monitoring Innovation},
  journal = {Games and Economic Behavior},
  year    = {2022},
  volume  = {135},
  pages   = {297--326},
  doi     = {10.1016/j.geb.2022.06.011}
}

@unpublished{ballknoepfle2026,
  author = {Ball, Ian and Knoepfle, Jan},
  title  = {Should the Timing of Inspections be Predictable?},
  year   = {2026},
  note   = {Revise and resubmit, Review of Economic Studies},
  url    = {https://arxiv.org/abs/2304.01385}
}

@unpublished{wong2026,
  author = {Wong, Yu Fu},
  title  = {Dynamic Monitoring Design},
  year   = {2026},
  note   = {Revise and resubmit, American Economic Review},
  url    = {https://sites.google.com/view/yufuwong}
}

@article{cappelliconyon2018,
  author  = {Cappelli, Peter and Conyon, Martin J.},
  title   = {What Do Performance Appraisals Do?},
  journal = {ILR Review},
  year    = {2018},
  volume  = {71},
  number  = {1},
  pages   = {88--116},
  doi     = {10.1177/0019793917698649}
}

@article{deserranno2025,
  author  = {Deserranno, Erika and Kastrau, Philipp and Le{\'o}n-Ciliotta, Gianmarco},
  title   = {Promotions and Productivity: The Role of Meritocracy and Pay Progression in the Public
Sector},
  journal = {American Economic Review: Insights},
  year    = {2025},
  volume  = {7},
  number  = {1},
  pages   = {71--89},
  doi     = {10.1257/aeri.20230594}
}

@article{devaro2007,
  author  = {DeVaro, Jed and Brookshire, Dana},
  title   = {Promotions and Incentives in Nonprofit and For-Profit Organizations},
  journal = {ILR Review},
  year    = {2007},
  volume  = {60},
  number  = {3},
  pages   = {311--339},
  doi     = {10.1177/001979390706000301}
}

@article{barlevyneal2019,
  author  = {Barlevy, Gadi and Neal, Derek},
  title   = {Allocating Effort and Talent in Professional Labor Markets},
  journal = {Journal of Labor Economics},
  year    = {2019},
  volume  = {37},
  number  = {1},
  pages   = {187--246},
  doi     = {10.1086/698899}
}

@article{mas2017,
  author  = {Mas, Alexandre},
  title   = {Does Transparency Lead to Pay Compression?},
  journal = {Journal of Political Economy},
  year    = {2017},
  volume  = {125},
  number  = {5},
  pages   = {1683--1721},
  doi     = {10.1086/693137}
}

@article{bakerjensenmurphy1988,
  author  = {Baker, George P. and Jensen, Michael C. and Murphy, Kevin J.},
  title   = {Compensation and Incentives: Practice vs. Theory},
  journal = {The Journal of Finance},
  year    = {1988},
  volume  = {43},
  number  = {3},
  pages   = {593--616}
}

@article{levintadelis2005,
  author  = {Levin, Jonathan and Tadelis, Steven},
  title   = {Profit Sharing and the Role of Professional Partnerships},
  journal = {The Quarterly Journal of Economics},
  year    = {2005},
  volume  = {120},
  number  = {1},
  pages   = {131--171}
}

@unpublished{DaiWangYang2026,
  author = {Dai, Liang and Wang, Yenan and Yang, Ming},
  title = {Dynamic Contracting with Flexible Monitoring},
  year = {2026},
  note = {Available at SSRN 3496785},
  doi = {10.2139/ssrn.3496785},
  url = {https://ssrn.com/abstract=3496785}
}

@unpublished{liu2026,
  author = {Liu, Yijun},
  title  = {Resource Allocation with Midterm Review},
  year   = {2026},
  note   = {Working paper}
}
	
	\pagebreak

	\appendix
	
	\setcounter{lemma}{0}
	\setcounter{proposition}{0}
	\setcounter{equation}{0}
	
	\counterwithin{equation}{section}
	\counterwithin{lemma}{section}
	\counterwithin{proposition}{section}
	\counterwithin{corollary}{section}
	\counterwithin{remark}{section}

	\part*{Appendix}

	\section{Proofs for Section~\ref{sec:single-phase}}
	\label{app:single-phase}
	
	Recall $\overline\Delta(h)=\max_z\Delta(h,z)=2\Phi(\sqrt h/2)-1$ and
	$M_I(h,c)=ch/\overline\Delta(h)$ from Section~\ref{sec:threshold-contracts}, and put
	$\bar c(h)=\bigl[2\Phi(\sqrt h/2)-1\bigr]/\Phi(\sqrt h/2)$.
	Write
	\begin{equation}
		\ell_T
		=
		\frac{\phi\bigl(z_D(T)-\sqrt T\bigr)}{\phi\bigl(z_D(T)\bigr)}
		=
		\exp\left(z_D(T)\sqrt T-\frac T2\right),
		\qquad
		\lambda_T^{\NR}=\frac{1}{1-\ell_T},
		\label{eq:terminal-no-review-multiplier}
	\end{equation}
	for the likelihood ratio at the no-review difficult bar and the shadow price of the
	no-review incentive constraint.  Assumption~\ref{as:maintained} gives $z_D(T)<\sqrt T/2$,
	hence $\ell_T\in(0,1)$ and $\lambda_T^{\NR}>1$.
	
	\begin{proof}[Proof of Lemma~\ref{lem:threshold-reduction}]
		Let $k\in[-\infty,+\infty]$ solve $\Prb\{Y\geq k\mid e=1\}=\E_1[w]/M$, which exists and is unique because that
		probability falls continuously and strictly from one to zero (with the endpoints understood as limits), and write
		$\varphi=w/M$, $\varphi^{*}=\ind_{\{y\geq k\}} $.  Since $\int(\varphi-\varphi^{*})f_1^h=0$,
		subtracting $\ell^h(k)$ times this vanishing integral gives
		\begin{equation*}
			\E_0[w]-\E_0[\tilde w]
			=M\int(\varphi-\varphi^{*})\bigl[\ell^h-\ell^h(k)\bigr]f_1^h .
		\end{equation*}
		For $y>k$, $\varphi^{*}=1\geq\varphi$ and $\ell^h(y)<\ell^h(k)$; for $y<k$ both inequalities
		reverse.  The integrand is therefore nonnegative almost everywhere, so
		$\E_0[\tilde w]\leq\E_0[w]$, with equality only if $\varphi=\varphi^{*}$ a.e.  Subtracting
		from $\E_1[\tilde w]=\E_1[w]$ gives the incentive comparison.
	\end{proof}
	
	\medskip
	
	\begin{proof}[Proof of Lemma~\ref{lem:Mbar}]
		
		\emph{Implementability.}  By Lemma~\ref{lem:threshold-reduction} the largest attainable
		incentive spread is $M\overline\Delta(h)$, so $\mathcal F_h\neq\emptyset$ exactly when
		$M\geq M_I(h,c)$, and $[z_D,z_E]$ is nondegenerate exactly when the inequality is strict.
		With $a=\sqrt h/2$ we have $\overline\Delta(h)/h=\mathcal G(a)\equiv[2\Phi(a)-1]/(4a^{2})$ and $\mathcal G'(a)=-F(a)/(2a^{3})$ with $F(a)=2\Phi(a)-1-a\phi(a)$; since $F(0)=0$ and $F'(a)=(1+a^{2})\phi(a)>0$, $\mathcal G$ is strictly decreasing, so $M_I=c/\mathcal G$ is strictly increasing in $h$.
		
		\noindent
		\emph{Profitability.}  By Lemma~\ref{lem:threshold-reduction} and
		\eqref{eq:threshold-moments} the cheapest work-inducing contract is the difficult bar, so the
		principal's payoff is $V(M)=h-M\Phi(z_D)$.  Dividing the binding condition
		$M[\Phi(z_D)-\Phi(z_D-\sqrt h)]=ch$ by $\Phi(z_D)$,
		\begin{equation}
			M\Phi(z_D)=\frac{ch}{1-q(z_D)},
			\qquad
			q(z)=\frac{\Phi(z-\sqrt h)}{\Phi(z)} .
			\label{eq:app-cost}
		\end{equation}
		Since $\phi/\Phi$ is strictly decreasing, $\frac{d}{dz}\log q>0$, so $q$ is a strictly
		increasing bijection onto $(0,1)$.  As $M$ rises, $ch/M$ falls and $z_D$ lies on the
		increasing branch of $\Delta(h,\cdot)$, so $z_D$ falls, hence by \eqref{eq:app-cost} so does
		$M\Phi(z_D)$: the payoff $V$ is strictly increasing in $M$, with $V\uparrow(1-c)h>0$ as
		$M\to\infty$ and, as $M\downarrow M_I(h,c)$ where the two bars merge at $\sqrt h/2$,
		\begin{equation}
			V\ \longrightarrow\ h\left[1-\frac{c}{\bar c(h)}\right].
			\label{eq:app-limit}
		\end{equation}
		
		\noindent
		\emph{The threshold.}  If $c\leq\bar c(h)$ then \eqref{eq:app-limit} is nonnegative and
		$V>0$ at every $M>M_I(h,c)$; set $\underline M(h,c)=M_I(h,c)$.  If $c>\bar c(h)$ it is
		negative, so by strict monotonicity there is a unique $M_P(h,c)>M_I(h,c)$ with $V=0$, and
		$\underline M(h,c)=M_P(h,c)$.  At that cap $M\Phi(z_D)=h$ while the binding constraint gives
		$M\Phi(z_D-\sqrt h)=(1-c)h$; dividing, $z_D$ solves $q(z)=1-c$, and $M_P$ is $h$ over
		$\Phi$ of that root.  In both cases $\underline M\geq M_I$, with equality exactly when $c\leq\bar c(h)$.
	\end{proof}

	\begin{proof}[Proof of Proposition~\ref{prop:no-review}]
		By Lemma~\ref{lem:threshold-reduction} and \eqref{eq:threshold-moments} the cheapest work-inducing contract is the difficult bar, whose expected cost under work is $M\Phi(z_D(T))$; thus, $V^{\NR}(T;c,M)=T-M\Phi\bigl(z_D(T,c,M)\bigr)$. Positivity follows from	Lemma~\ref{lem:Mbar} under Assumption~\ref{as:maintained}.  The limits of $V^{\NR}$ and $z_D(T)\to-\infty$ are established in the proof of	Lemma~\ref{lem:Mbar}.
	\end{proof}
	
	\begin{proof}[Proof of Lemma~\ref{lem:attainable-rents}]
		Take $w\in\mathcal F_h$ and let $\tilde w$ be the cost-matched threshold contract of	Lemma~\ref{lem:threshold-reduction}, which also lies in $\mathcal F_h$; its bar therefore	satisfies \eqref{eq:bar-condition}, so $z\in[z_D,z_E]$ and	$M\Phi(z_D)\leq\E_1[w]\leq M\Phi(z_E)$.  Both bounds are attained, and $r=\E_1[w]-ch$ is affine in $w$ on the convex set $\mathcal F_h$, so the attainable rents are exactly $[r_D,r_E]$, swept by mixtures of the two extreme contracts.  
		The second equality is \eqref{eq:threshold-moments} at a binding constraint, and	$r_E<M-ch$ because $\Phi(z_E)<1$.  
		Finally $\Delta(h,z)=\Phi(z)+\Phi(\sqrt h-z)-1$ is	symmetric about $\sqrt h/2$, so $z_E=\sqrt h-z_D$ and $r_E=M\Phi(-z_D)=M-M\Phi(z_D)=M-ch-r_D$.
	\end{proof}
	
	\medskip
	
	The next lemma records two properties of the extreme continuation rents that are used
	repeatedly once the horizon is allowed to vary: they move continuously with the length of the
	continuation phase, and they spread to the full interval $[0,M]$ as that phase vanishes.
	
	\begin{lemma}
		\label{lem:continuation-rent-limits}
		Suppose Assumption~\ref{as:maintained} holds.  Then $M\overline\Delta(h)>ch$ for every
		$h\in(0,T]$; the bars $z_D(h)$ and $z_E(h)$ of \eqref{eq:zD-zE} are continuously
		differentiable in $h$ there; and $r_D$ and $r_E$ are continuous on $(0,T]$ with
		\begin{equation*}
			\lim_{h\downarrow0}r_D(h)=0,
			\qquad
			\lim_{h\downarrow0}r_E(h)=M .
		\end{equation*}
		Consequently $h\mapsto\mathcal A(h)$ is continuous on $(0,T]$ in the Hausdorff metric, and
		$\mathcal A(h)\to\mathcal A(0)=\{(r,0):r\in[0,M]\}$ as $h\downarrow0$.
	\end{lemma}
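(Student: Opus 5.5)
The argument has four steps: implementability at every horizon from monotonicity of $M_I$, smoothness of the bars by the implicit function theorem, the short-horizon limits from the asymptotics of $\Delta(h,\cdot)$, and Hausdorff continuity from continuity of the vertices.

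\emph{Step 1: implementability on $(0,T]$.} The proof of Lemma~\ref{lem:Mbar} shows that $M_I(h,c)=c/\mathcal G(\sqrt h/2)$ is strictly increasing in $h$. Hence for $h\in(0,T]$ we have $M_I(h,c)\le M_I(T,c)\le\underline M(T,c)<M$, where the last inequality is Assumption~\ref{as:maintained}. This gives $M\overline\Delta(h)>ch$, so \eqref{eq:bar-condition} holds strictly at the peak $z=\sqrt h/2$. As a result, $z_D(h)<\sqrt h/2<z_E(h)$ are the two transversal crossings of the level $ch/M$.

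\emph{Step 2: differentiability of the bars.} Apply the implicit function theorem to
\[
F(h,z)=\Phi(z)-\Phi(z-\sqrt h)-ch/M=0 .
\]
The function $F$ is $C^1$ on $(0,\infty)\times\mathbb R$. Its $z$-derivative is $\partial_z F=\phi(z)-\phi(z-\sqrt h)$, which vanishes only at $z=\sqrt h/2$ and is therefore nonzero at $z_D$ and $z_E$. So each crossing is locally a $C^1$ function of $h$. Uniqueness of the crossing on each side of the peak, which follows from the single-peakedness of $\Delta$, lets these local branches glue into global $C^1$ functions on $(0,T]$. Continuity of $r_j=M\Phi(z_j-\sqrt h)$ then follows directly.

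\emph{Step 3: the limits as $h\downarrow0$.} This is the main step. The level is $ch/M\to0$, while $\Delta(h,z)\le\sqrt h\,\phi(0)$ on the whole line by the mean value theorem. Equivalently,
\[
\Delta(h,z)/\sqrt h\approx\phi(z) \quad\text{uniformly},
\]
while the level divided by $\sqrt h$ is $c\sqrt h/M\to0$. So the crossings must go to the tails. More precisely, $z_D(h)\to-\infty$ and $z_E(h)\to+\infty$: for any fixed $K$, $\min_{|z|\le K}\Delta(h,z)/\sqrt h$ stays bounded below by roughly $\phi(K+1)$, which eventually exceeds $c\sqrt h/M$, so both crossings leave $[-K,K]$. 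Then $r_D=M\Phi(z_D-\sqrt h)\to0$. Using the symmetry $z_E=\sqrt h-z_D$ from the proof of Lemma~\ref{lem:attainable-rents}, we also get $r_E=M-ch-r_D\to M$.

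\emph{Step 4: Hausdorff continuity.} The set $\mathcal A(h)$ is the union of the segment $[0,M]\times\{0\}$, which does not depend on $h$, and the segment $[r_D(h),r_E(h)]\times\{(1-c)h\}$. The Hausdorff distance between two such segments is bounded by the maximum distance between corresponding endpoints. By Steps 2 and 3 these endpoints vary continuously on $(0,T]$. As $h\downarrow0$, the upper segment converges to $[0,M]\times\{0\}$, so $\mathcal A(h)\to\mathcal A(0)$.
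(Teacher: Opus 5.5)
Your proof is correct and follows the paper's argument step for step. Strict feasibility comes from monotonicity of $M_I(\cdot,c)$, equivalently of $\overline\Delta(h)/h$; smoothness of the bars comes from the implicit function theorem at the transversal crossings; the short-horizon limits come from the mean value theorem; and Hausdorff continuity comes from continuity of the segment endpoints. The only difference is cosmetic, in Step 3: the paper applies the mean value theorem at the difficult root to get $\phi(\xi_h)=c\sqrt h/M\to0$ and then uses $\xi_h<\sqrt h/2$ to fix the sign, whereas you use it as a uniform lower bound $\Delta(h,z)\geq\sqrt h\,\phi(K+1)$ on $[-K,K]$, which pushes both roots out of every compact set at once.
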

	
	\begin{proof}
		\emph{Strict feasibility.}  Assumption~\ref{as:maintained} and Lemma~\ref{lem:Mbar} give
		$M>\underline M(T,c)\geq M_I(T,c)$, hence $M\overline\Delta(T)>cT$.  Since
		$\overline\Delta(h)/h$ is strictly decreasing, as shown in the proof of
		Lemma~\ref{lem:Mbar}, every $h\in(0,T)$ satisfies
		$\overline\Delta(h)/h>\overline\Delta(T)/T>c/M$.  In particular the two roots are
		separated: $z_D(h)<\sqrt h/2<z_E(h)$.
		
		\emph{Continuity.}  Fix $h\in(0,T]$.  Because $\Delta(h,\cdot)$ is strictly increasing to
		the left of $\sqrt h/2$ and strictly decreasing to its right, the partial derivative
		$\partial_z\Delta(h,z)=\phi(z)-\phi(z-\sqrt h)$ is strictly positive at $z_D(h)$ and
		strictly negative at $z_E(h)$.  Since $\Delta$ is $C^1$ in $(h,z)$, the implicit function
		theorem applied to $\Delta(h,z)=ch/M$ makes $z_D$ and $z_E$ continuously differentiable at
		$h$.  By Lemma~\ref{lem:attainable-rents}, $r_j(h)=M\Phi\bigl(z_j(h)-\sqrt h\bigr)$ is then
		continuous.
		
		\emph{Short-horizon limits.}  Let $h\downarrow0$.  Applying the mean value theorem to
		$\Delta(h,z)=\Phi(z)-\Phi(z-\sqrt h)$ at the binding difficult bar yields
		$\xi_h\in\bigl(z_D(h)-\sqrt h,\,z_D(h)\bigr)$ with
		$M\sqrt h\,\phi(\xi_h)=M\Delta\bigl(h,z_D(h)\bigr)=ch$, that is,
		$\phi(\xi_h)=c\sqrt h/M\to0$, so $\lvert\xi_h\rvert\to\infty$.  As
		$\xi_h<z_D(h)<\sqrt h/2\to0$, the family is eventually bounded above, so
		$\xi_h\to-\infty$; and $z_D(h)<\xi_h+\sqrt h$ then gives $z_D(h)\to-\infty$.  Hence
		$r_D(h)=M\Phi\bigl(z_D(h)-\sqrt h\bigr)\leq M\Phi\bigl(z_D(h)\bigr)\to0$, and by
		Lemma~\ref{lem:attainable-rents}, $r_E(h)=M-ch-r_D(h)\to M$.
		
		\emph{The action set.}  $\mathcal A_0=[0,M]\times\{0\}$ does not depend on $h$, while
		$\mathcal A_1(h)=[r_D(h),r_E(h)]\times\{(1-c)h\}$ is a segment whose endpoints and height
		all vary continuously in $h$; Hausdorff continuity of the union follows.  As $h\downarrow0$
		that segment converges to $[0,M]\times\{0\}$, which is $\mathcal A_0$.
	\end{proof}
	
	
	\section{Proofs for Section~\ref{sec:exogenous-review}}
	\label{app:review-menu}
	
	\subsection{Review-score cutoffs and incentive supply}
	\label{appB:cutoffs}
	
	Section~\ref{sec:deriving-optimal-menu} establishes that, for a given
	$q=Q_\lambda(x)$, the maximizing action moves through
	$A_B,A_D,A_E,A_G$ at the breakpoints $q_0<q_1=0<q_2$ in
	\eqref{eq:three-breakpoints}.  For every $\lambda>0$, $Q_\lambda$ is strictly
	increasing in $x$ and maps $\mathbb R$ onto $(-\infty,\lambda-1)$.  Hence the
	cutoff associated with $q_j$ is finite exactly when $\lambda>1+q_j$, and
	inverting $Q_\lambda(k)=q_j$ gives
	\begin{equation}
		k_j(\lambda)
		=
		\frac{\tau}{2}-\log\left(1-\frac{1+q_j}{\lambda}\right)
		\quad\text{if }\lambda>1+q_j,
		\qquad
		k_j(\lambda)=+\infty
		\quad\text{otherwise.}
		\label{eq:k-formula}
	\end{equation}
	
	For later use, define
	$\underline\lambda(h)\equiv\max\{0,1+q_0\}=\max\bigl\{0,-V^{\NR}(h)/r_D\bigr\}$,
	where $V^{\NR}(h)=(1-c)h-r_D$. 
	Moreover, $1+q_2= (h+r_D)/(ch+r_D) <1/c$, so any supporting multiplier $\lambda^*\geq1/c$ makes all four regions active.
	
	Define the standardized cutoff under first-phase work by
	$z_j(\lambda,\tau)\equiv\bigl(\tau-k_j(\lambda)\bigr)/\sqrt\tau$.
	By \eqref{eq:k-formula},
	\begin{equation}
		z_j(\lambda,\tau)
		=
		\frac{\sqrt\tau}{2}
		+
		\frac{1}{\sqrt\tau}
		\log\left(1-\frac{1+q_j}{\lambda}\right),
		\qquad j\in\{0,1,2\},
		\label{eq:standardized-review-cutoffs}
	\end{equation}
	with $z_j=-\infty$ when $k_j=+\infty$.  The induced rent schedule has jumps
	$r_D$, $r_E-r_D$, and $M-r_E$ at $k_0,k_1,k_2$, so its first-phase incentive
	supply is
	\begin{equation}
		I(\lambda;\tau)
		=
		r_D\,\Delta\bigl(\tau,z_0\bigr)
		+\bigl[r_E-r_D\bigr]\,\Delta\bigl(\tau,z_1\bigr)
		+\bigl[M-r_E\bigr]\,\Delta\bigl(\tau,z_2\bigr),
		\label{eq:incentive-supply-function}
	\end{equation}
	where $\Delta(\tau,-\infty)=0$.  Since each tie set
	$\{x:Q_\lambda(x)=q_j\}$ is null, \eqref{eq:incentive-supply-function} is
	unaffected by how ties are resolved.
	
	\subsection{The supporting multiplier}
	
	\begin{lemma}
		\label{lem:supporting-multiplier}
		Fix $\tau\in(0,T)$ and suppose Assumption~\ref{as:maintained} holds.
		\begin{enumerate}[(i)]
			\item \emph{Monotonicity and range.}
			$I(\lambda;\tau)$ is continuous and weakly increasing on $\lambda>0$,
			and strictly increasing whenever the selected rent changes on a set of
			positive probability.  Moreover, $\lim_{\lambda\to\infty}I(\lambda;\tau)
			=M\overline\Delta(\tau)$. 
			If $(1-c)h\neq r_D$, then also $\lim_{\lambda\downarrow\underline\lambda(h)}I(\lambda;\tau)=0$.
			
			\item \emph{Existence.}
			$M\overline\Delta(\tau)>c\tau$, and
			Problem~\eqref{eq:fixed-review-primal} admits a supporting multiplier
			$\lambda^*\geq0$ satisfying complementary slackness.
			
			\item \emph{Positivity and binding obedience.}
			If $(1-c)h\neq r_D$, every supporting multiplier satisfies
			$\lambda^*>0$, and pre-review obedience binds:
			\begin{equation}
				\int_{\mathbb R}
				r^*(x)\left[g_1^\tau(x)-g_0^\tau(x)\right]dx
				=
				c\tau,
				\label{eq:binding-pre-review-obedience-lambda}
			\end{equation}
			where $r^*$ is the rent delivered by the optimal menu.
			At the knife edge $(1-c)h=r_D$, a zero multiplier may also support
			an optimum.
			
			\item \emph{Characterization.}
			If $(1-c)h\neq r_D$, every supporting multiplier solves $I(\lambda;\tau)=c\tau$.
			The solution set is a nonempty compact interval contained in
			$(\underline\lambda(h),\infty)$, and is a singleton whenever
			$I(\cdot;\tau)$ is strictly increasing at that level.
		\end{enumerate}
	\end{lemma}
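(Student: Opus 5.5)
I will treat the four parts in order, working directly with the explicit supply function \eqref{eq:incentive-supply-function} and the cutoff formula \eqref{eq:standardized-review-cutoffs}.

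\emph{Part (i).} For fixed $\tau$, each $z_j(\lambda,\tau)$ is continuous and nondecreasing in $\lambda$. It equals $-\infty$ up to $\lambda=1+q_j$ and is finite and strictly increasing beyond that point. Continuity of each term $\Delta(\tau,z_j)$ across the activation point follows because $z_j\to-\infty$ as $\lambda\downarrow1+q_j$, and $\Delta(\tau,-\infty)=0$.

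Monotonicity is less direct, because $\Delta(\tau,\cdot)$ is single-peaked rather than monotone. I will therefore not differentiate term by term. Instead, I rewrite $I(\lambda;\tau)=\int r_\lambda(x)[g_1^\tau-g_0^\tau](x)\,dx$, where $r_\lambda$ is the statewise maximizer's rent. A standard revealed-preference argument then gives monotonicity. Take $\lambda'>\lambda$ and add the two pointwise optimality inequalities $s_\lambda+Q_\lambda r_\lambda\ge s_{\lambda'}+Q_\lambda r_{\lambda'}$ and its counterpart at $\lambda'$. This yields $(\lambda'-\lambda)(1-g_0/g_1)(r_{\lambda'}-r_\lambda)\ge0$ pointwise. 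Integrating against $g_1^\tau$ gives $I(\lambda')\ge I(\lambda)$. The inequality is strict whenever the selected rent changes on a positive-probability set, since $1-g_0/g_1\ne0$ a.e.

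For the limit $\lambda\to\infty$, each $z_j\to\sqrt\tau/2$, so $I\to(r_D+r_E-r_D+M-r_E)\overline\Delta(\tau)=M\overline\Delta(\tau)$. For the limit $\lambda\downarrow\underline\lambda(h)$, I split into two cases:
\begin{itemize}
\item If $1+q_0>0$, then $\underline\lambda=1+q_0$ and every $z_j\to-\infty$, so $I\to0$.
\item If $1+q_0<0$, i.e.\ $(1-c)h>r_D$, then $\underline\lambda=0$ and $z_0=\sqrt\tau/2+\tau^{-1/2}\log(1-(1+q_0)/\lambda)\to+\infty$ as $\lambda\downarrow0$. The other cutoffs are inactive for small $\lambda$, and $\Delta(\tau,+\infty)=0$.
\end{itemize}
The hypothesis $(1-c)h\ne r_D$ excludes exactly the case $1+q_0=0$, where $z_0$ would converge to a finite limit.

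\emph{Part (ii).} Strict first-phase feasibility, $M\overline\Delta(\tau)>c\tau$, follows from the first step of Lemma~\ref{lem:continuation-rent-limits}: $\overline\Delta(h)/h$ is decreasing and $\tau<T$. The assigned rent schedule $M\ind_{\{x\ge\tau/2\}}$ (pure $A_B/A_G$) supplies $M\overline\Delta(\tau)$, so Slater's condition holds.

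Problem \eqref{eq:fixed-review-primal} is linear in the menu once actions are convexified to $\operatorname{co}(\mathcal A)$, which has the same value by the extreme-point argument. It has a single linear inequality constraint and a bounded value. Standard Lagrangian duality for convex programs with one constraint and a Slater point therefore delivers $\lambda^*\ge0$ with zero duality gap and complementary slackness. To apply this cleanly, I work in the image space of (objective, constraint) pairs, which is convex.

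\emph{Parts (iii) and (iv).} Suppose $\lambda^*=0$. Then the statewise problem maximizes $s-r$, which selects $A_D$ a.e. when $(1-c)h>r_D$ and $A_B$ a.e. when $(1-c)h<r_D$. In either case the rent is constant, so incentive supply is $0<c\tau$. Complementary slackness together with primal feasibility then requires the Lagrangian maximizer to be feasible, which is a contradiction; hence $\lambda^*>0$. With $\lambda^*>0$, complementary slackness gives binding obedience \eqref{eq:binding-pre-review-obedience-lambda}.

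A subtlety remains: the optimal menu must be a Lagrangian maximizer at $\lambda^*$, but it need not coincide with the particular tie-breaking used in $I$. Since tie sets are null by \eqref{eq:k-formula}, the maximizer is a.e.\ unique and its supply equals $I(\lambda^*;\tau)$. This gives $I(\lambda^*;\tau)=c\tau$.

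By part (i), continuity and monotonicity of $I$, with range running from $0$ to $M\overline\Delta(\tau)>c\tau$, make the solution set a nonempty closed interval. It is bounded above because $I\to M\overline\Delta(\tau)>c\tau$, and it is bounded away from $\underline\lambda$ because $I\to0$ there. The set is a singleton under strict monotonicity.

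The main obstacle I expect is the monotonicity in part (i), given the nonmonotone $\Delta$. The revealed-preference integration handles it. I would also double-check that the knife-edge exclusion is used exactly at the $\lambda\downarrow\underline\lambda$ limit and in ruling out $\lambda^*=0$.
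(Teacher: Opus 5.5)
Your proposal is correct and follows essentially the paper's route: revealed preference for monotonicity of $I$, continuity and both limits read off the standardized cutoffs, Slater plus Kuhn--Tucker for existence, the constant-rent contradiction at $\lambda^*=0$, and monotonicity with the intermediate value theorem for the level set; your explicit Slater menu, the two-case treatment of $\lambda\downarrow\underline\lambda(h)$, and the remark that tie sets are null are useful extra detail. One slip does no harm: $z_0(\cdot,\tau)$ is not nondecreasing when $1+q_0<0$, since (as your own computation shows) it then falls from $+\infty$ to $\sqrt\tau/2$, but you only use the claim for continuity, which still holds.
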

	
	\begin{proof}
		(i) Let $\lambda''>\lambda'$ and let $(r',s')$ and $(r'',s'')$ be
		Lagrangian maximizers at a given review score.  Write
		$d_\tau(x)=1-g_0^\tau(x)/g_1^\tau(x)$.  Adding the two optimality
		inequalities for
		$s-r+\lambda d_\tau r$ gives
		$(\lambda''-\lambda')d_\tau(x)\bigl[r''(x)-r'(x)\bigr]\geq0$.
		Multiplying by $g_1^\tau$ and integrating yields monotonicity of $I$,
		with strictness under the stated condition.  Continuity follows from
		\eqref{eq:standardized-review-cutoffs}: when a cutoff first becomes
		finite, its standardized value tends to $-\infty$ and its contribution
		to \eqref{eq:incentive-supply-function} vanishes.
		
		If $(1-c)h\neq r_D$, the limiting assignment becomes constant across review scores. When $1+q_0>0$, it converges to $A_B$; when
		$1+q_0<0$, it converges to $A_D$.  In either case $I\to0$.  Finally,
		as $\lambda\to\infty$, every $k_j\to\tau/2$, so every
		$z_j\to\sqrt\tau/2$; since the three rent jumps sum to $M$,
		$I\to M\overline\Delta(\tau)$.
		
		(ii) By Lemma~\ref{lem:Mbar} and Assumption~\ref{as:maintained},
		$M>\underline M(T,c)\geq M_I(T,c)>M_I(\tau,c)=c\tau/\overline\Delta(\tau)$,
		where the strict inequality uses $\tau<T$ and monotonicity of
		$M_I(\cdot,c)$.  Thus pre-review obedience is strictly feasible.
		Slater's condition and Kuhn--Tucker yield a supporting
		$\lambda^*\geq0$ with complementary slackness.
		
		(iii) Suppose $\lambda^*=0$. Then $Q_0\equiv-1$, so the review-score
		objective is $s-r$ and does not depend on $x$. If $(1-c)h\neq r_D$,
		its maximizer is unique: $A_B$ when $(1-c)h<r_D$ and $A_D$ when
		$(1-c)h>r_D$. The assigned rent is therefore constant across scores and
		supplies no first-phase incentives, contradicting $c\tau>0$. Hence
		$\lambda^*>0$, and complementary slackness gives
		\eqref{eq:binding-pre-review-obedience-lambda}.
		
		At $(1-c)h=r_D$, instead, $q_0=-1$, so under $\lambda=0$ the bad stop
		$A_B$ and difficult standard $A_D$ are tied at every review score, while
		$A_E$ and $A_G$ are strictly dominated. A score-dependent selection
		between the tied actions can therefore provide the required first-phase
		incentives, so a zero supporting multiplier cannot be ruled out.
		
		(iv) By~(iii), $\lambda^*>0$, so binding obedience is exactly
		$I(\lambda;\tau)=c\tau$.  Parts~(i)--(ii) imply that its solution
		set is nonempty, compact, and bounded away from
		$\underline\lambda(h)$.  Weak monotonicity makes the level set an
		interval, and strict monotonicity at the level makes it a singleton.
	\end{proof}
	
	\subsection{The full ladder near the terminal date}
	\label{appB:terminal-full-ladder}
	
	The terminal boundary also yields a useful characterization of the shape of
	the fixed-review menu.  The argument uses two limits.  First, the threshold
	for activating the good stop converges to one.  Second, the fixed-review
	supporting multiplier converges to the shadow price of the terminal
	no-review problem, which is strictly larger than one.
	
	\begin{lemma}
		\label{lem:terminal-ladder-limits}
		Suppose Assumption~\ref{as:maintained} holds and write $h=T-\tau$.  Then,
		as $h\downarrow0$,
		\begin{equation*}
			\frac{r_D(h)}{h}\longrightarrow+\infty,
			\qquad
			q_2(h)\longrightarrow0,
			\qquad
			\lambda^*(T-h)\longrightarrow\lambda_T^{\NR}>1,
		\end{equation*}
		with $\ell_T$ and $\lambda_T^{\NR}$ as in
		\eqref{eq:terminal-no-review-multiplier}.
	\end{lemma}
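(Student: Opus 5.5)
We prove the three limits in order, since each feeds the next.

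\emph{First limit.} We use the mean-value representation from the proof of Lemma~\ref{lem:continuation-rent-limits}. There $\phi(\xi_h)=c\sqrt h/M$ with $\xi_h\to-\infty$ and $\xi_h\in(z_D(h)-\sqrt h,z_D(h))$. Hence $\lvert\xi_h\rvert\sim\sqrt{\log(1/h)}$. By Mills' ratio, $\Phi(u)\approx\phi(u)/\lvert u\rvert$ as $u\to-\infty$. Since $z_D-\sqrt h$ and $\xi_h$ differ by at most $\sqrt h$, and $\lvert\xi_h\rvert\sqrt h\to0$, the densities at these points are comparable: $\phi(z_D-\sqrt h)/\phi(\xi_h)\to1$. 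This gives
\[
r_D(h)=M\Phi\bigl(z_D(h)-\sqrt h\bigr)\approx M\frac{\phi(\xi_h)}{\lvert\xi_h\rvert}=\frac{c\sqrt h}{\lvert\xi_h\rvert}.
\]
Therefore $r_D(h)/h\approx c/(\sqrt h\,\lvert\xi_h\rvert)\to\infty$, because $\sqrt h\sqrt{\log(1/h)}\to0$. To avoid asymptotic bookkeeping, a cleaner route is a lower bound: $r_D\ge M\Phi(\xi_h-\sqrt h)$, combined with the standard inequality $\Phi(u)\ge \phi(u)\lvert u\rvert/(1+u^2)$.

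\emph{Second limit.} By Lemma~\ref{lem:attainable-rents}, $M-r_E=ch+r_D$. Hence
\[
q_2(h)=\frac{(1-c)h}{ch+r_D}=\frac{1-c}{c+r_D/h}\to0
\]
by the first limit.

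\emph{Third limit.} This is the main obstacle. By Lemma~\ref{lem:supporting-multiplier}(iv), $\lambda^*(T-h)$ is characterized by $I(\lambda;T-h)=c(T-h)$, once we check that $(1-c)h\neq r_D$ for small $h$. The first limit guarantees this, and moreover gives $1+q_0<0$, so $\underline\lambda(h)=0$.

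The plan is to compute the pointwise limit of $I(\lambda;\tau)$ as $h\downarrow0$, for fixed $\lambda>1$. The jump sizes converge: $r_D\to0$, $r_E-r_D\to M$, and $M-r_E\to0$. The standardized cutoff $z_1(\lambda,\tau)$ in \eqref{eq:standardized-review-cutoffs} has $q_1=0$ and does not depend on $h$. Since $\Delta\le1$, the bounded terms multiplying the vanishing jumps drop out. Hence
\[
I(\lambda;T-h)\to M\Delta\bigl(T,z_1(\lambda,T)\bigr)\quad\text{for }\lambda>1,
\]
and $I\to0$ for $\lambda\le1$. This convergence is locally uniform in $\lambda$ away from $\lambda=1$.

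The limit equation is $M\Delta(T,z)=cT$ with $z=\sqrt T/2+T^{-1/2}\log(1-1/\lambda)<\sqrt T/2$. This is the difficult root, so $z=z_D(T)$. Solving,
\[
1-\frac1\lambda=\exp\bigl(z_D(T)\sqrt T-T/2\bigr)=\ell_T,
\]
which gives $\lambda=1/(1-\ell_T)=\lambda_T^{\NR}$. The limit function $\lambda\mapsto M\Delta(T,z_1(\lambda,T))$ is strictly increasing at this root. It crosses $cT$ transversally, since $\partial_z\Delta>0$ at $z_D$ under Assumption~\ref{as:maintained}.

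We then conclude in two steps. Fix $\varepsilon>0$. By uniform convergence on $[\lambda_T^{\NR}-\varepsilon,\lambda_T^{\NR}+\varepsilon]$, together with monotonicity of $I$ in $\lambda$ (Lemma~\ref{lem:supporting-multiplier}(i)), for small $h$ we get
\[
I(\lambda_T^{\NR}-\varepsilon;\tau)<c\tau<I(\lambda_T^{\NR}+\varepsilon;\tau).
\]
The second inequality also uses $c\tau\to cT$. Every solution $\lambda^*$ then lies in the $\varepsilon$-window. Finally, $\lambda_T^{\NR}>1$ was already noted after \eqref{eq:terminal-no-review-multiplier}.

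The delicate point is making the uniform convergence rigorous near the window. Pieces with $k_j=+\infty$ switch on as $1+q_2(h)$ drifts toward $1$. This is harmless because those pieces carry the vanishing weight $M-r_E$.
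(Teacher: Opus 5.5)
Your proposal is correct and follows essentially the paper's route: the mean-value point $\xi_h$ with the lower Mills bound gives $r_D(h)/h\to\infty$, and the identity $M-r_E=ch+r_D$ gives $q_2\to0$. For the multiplier, you trap $\lambda^*$ by evaluating the limiting supply $M\Delta\bigl(T,z_1(\lambda,T)\bigr)$ at points on either side of $\lambda_T^{\NR}$ and using monotonicity of $I(\cdot;\tau)$. Pointwise convergence at those two points is all that is needed, so your concern about uniform convergence is moot.

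There is one harmless slip. Since $r_D(h)/h\to\infty$, you get $q_0=-(1-c)h/r_D(h)\to0^-$, so $1+q_0>0$ and $\underline\lambda(h)=1+q_0\to1$ rather than $0$. Your bracketing argument never uses this claim, so the proof stands.
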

	
	\begin{proof}
		For the first limit, the difficult continuation bar satisfies
		$M\bigl[\Phi(z_D(h))-\Phi(z_D(h)-\sqrt h)\bigr]=ch$.
		By the mean-value theorem there is
		$\xi_h\in(z_D(h)-\sqrt h,z_D(h))$ such that
		$M\sqrt h\,\phi(\xi_h)=ch$, that is, $\phi(\xi_h)=c\sqrt h/M$.
		On the difficult branch $\xi_h\to-\infty$, so this implies
		$|\xi_h|=O(\sqrt{\log(1/h)})$.  Since
		$|z_D(h)-\sqrt h-\xi_h|\leq\sqrt h$, the corresponding normal densities
		are asymptotically equivalent.  The lower Mills bound therefore gives,
		for some constant $C>0$ and all sufficiently small $h$,
		$r_D(h)=M\Phi\bigl(z_D(h)-\sqrt h\bigr)\geq C\sqrt h/|\xi_h|$.
		Hence
		$r_D(h)/h\geq C/(\sqrt h\,|\xi_h|)\longrightarrow+\infty$.
		Using $M-r_E(h)=ch+r_D(h)$,
		$q_2(h)=(1-c)h/\bigl(ch+r_D(h)\bigr)\longrightarrow0$.
		
		It remains to establish convergence of the supporting multiplier.  Fix
		$1<\lambda_-<\lambda_T^{\NR}<\lambda_+$.  By
		Lemma~\ref{lem:continuation-rent-limits}, $r_D(h)\to0$ and $r_E(h)\to M$ as
		$h\downarrow0$.  In the incentive-supply function
		\eqref{eq:incentive-supply-function}, the first and third terms therefore
		vanish, while the middle term converges to
		$M\Delta\bigl(T,z_1(\lambda,T)\bigr)$.
		At $\lambda=\lambda_T^{\NR}$,
		\eqref{eq:standardized-review-cutoffs} and
		\eqref{eq:terminal-no-review-multiplier} give
		$z_1(\lambda_T^{\NR},T)=z_D(T)$.  The no-review incentive constraint
		therefore implies
		$M\Delta\bigl(T,z_1(\lambda_T^{\NR},T)\bigr)=cT$.
		By \eqref{eq:terminal-no-review-multiplier}, $\ell_T\in(0,1)$ and $\lambda_T^{\NR}>1$.  Since $z_1(\cdot,T)$ is strictly increasing and
		$\Delta(T,\cdot)$ is strictly increasing to the left of $\sqrt T/2$, the limiting
		incentive supply is strictly increasing in $\lambda$ at $\lambda_T^{\NR}$.  Hence, for
		all sufficiently small $h$,
		$I(\lambda_-;T-h)<c(T-h)<I(\lambda_+;T-h)$.
		Lemma~\ref{lem:supporting-multiplier} then puts every supporting
		multiplier between $\lambda_-$ and $\lambda_+$.  Since the bracket can be
		chosen arbitrarily tightly around $\lambda_T^{\NR}$,
		$\lambda^*(T-h)\to\lambda_T^{\NR}$.
	\end{proof}
	
	\subsection{Proofs}
	
	\begin{proof}[Proof of Proposition~\ref{prop:four-region-review-menu}]
		By Lemma~\ref{lem:supporting-multiplier}, a supporting multiplier exists.
		Outside the knife edge it is strictly positive, so the review-score ranking	derived in Section~\ref{sec:deriving-optimal-menu}, together with the strict	monotonicity of $Q_{\lambda^*}(x)$, gives the ladder structure of $a^*$: $A_B$ if $x < k_0^*$, $A_D$ if $k_0^* \le x < k_1^*$, $A_E$ if $k_1^* \le x < k_2^* $, and $A_G $ if $x \ge k_2^*$. The cutoff formula and the conditions
		for absent regions follow from \eqref{eq:k-formula}. Since
		$q_0<q_1<q_2$, all finite cutoffs are strictly ordered.
		
		At the knife edge $(1-c)h=r_D$, the preceding argument applies whenever the optimum is
		supported by a positive multiplier.  If instead $\lambda^*=0$ supports an optimum, $A_B$
		and $A_D$ are tied at every review score, while $A_E$ and $A_G$ are strictly dominated.
		Since first-phase work is implemented, an ordered threshold selection between $A_B$ and
		$A_D$ can be chosen to satisfy obedience.  The proposition therefore continues to hold,
		in this case with $k_1^*=k_2^*=+\infty$.
	\end{proof}
	
	\begin{proof}[Proof of Corollary~\ref{cor:menu-form}]
		If $k_0^*=+\infty$, the assigned rent is identically zero and cannot
		induce first-phase work.  Hence $k_0^*<+\infty$.  Since
		$Q_{\lambda^*}$ is continuous and strictly increasing, the
		difficult-standard region is also nonempty.  Finally,
		$Q_{\lambda^*}(\mathbb R)=(-\infty,\lambda^*-1)$ reaches $q_1=0$ iff
		$\lambda^*>1$ and reaches $q_2$ iff $\lambda^*>1+q_2$, which gives the
		three cases in the corollary.
	\end{proof}
	
	\begin{proof}[Proof of Proposition~\ref{prop:terminal-full-ladder}]
		Write $h=T-\tau$.  By Lemma~\ref{lem:terminal-ladder-limits},
		$\lambda^*(T-h)\to\lambda_T^{\NR}>1$ and $1+q_2(h)\to1$.
		Hence, for all sufficiently small $h>0$,
		$\lambda^*(T-h)>1+q_2(h)$.
		Moreover, $r_D(h)/h\to\infty$ implies
		$r_D(h)\neq(1-c)h$ for all sufficiently small $h$, so the generic
		cutoff characterization in Corollary~\ref{cor:menu-form} applies.
		Therefore
		$k_2^*(T-h)<+\infty$
		for every sufficiently small $h>0$, as claimed.
	\end{proof}
	
	
	\section{Proofs for Section~\ref{sec:endogenous-review}}
	\label{app:review-date}
	
	\subsection{The fixed-review value and the one-sided derivatives}
	
	Let $u_D(h)=(1-c)h-r_D(h)$ and $u_E(h)=(1-c)h-r_E(h)$ be the principal's payoffs
	under the two continuation standards, with the argument suppressed where the horizon is
	fixed.  Conditional on first-phase work, the probabilities of the four
	regions are determined by the standardized cutoffs, and the principal's value from the
	optimal fixed-$\tau$ menu is
	\begin{equation}
		\begin{aligned}
			V^R(\tau)
			={}&
			\tau
			+
			u_D
			\left[
			\Phi\bigl(z_0(\lambda,\tau)\bigr)
			-
			\Phi\bigl(z_1(\lambda,\tau)\bigr)
			\right]
			\\
			&+
			u_E
			\left[
			\Phi\bigl(z_1(\lambda,\tau)\bigr)
			-
			\Phi\bigl(z_2(\lambda,\tau)\bigr)
			\right]
			-
			M\Phi\bigl(z_2(\lambda,\tau)\bigr),
		\end{aligned}
		\label{eq:fixed-review-value-cutoffs}
	\end{equation}
	where $\lambda=\lambda^*(\tau)$ is a supporting multiplier and
	$\Phi(-\infty)=0$.  Equation~\eqref{eq:fixed-review-value-cutoffs} applies to the two-,
	three-, and four-region cases under the conventions of \eqref{eq:standardized-review-cutoffs}.
	
	For interior review dates, continuity follows from the implicit-function
	characterization of $z_D(h)$ and $z_E(h)$, continuity of the review-action
	set, and the maximum theorem applied to the fixed-review linear program.
	Changes in the number of review regions may create changes in the formula
	used to compute the value, but not jumps in the value itself.
	
	By Lemma~\ref{lem:review-boundary-values} the review gain $G(\tau)=V^R(\tau)-V^{\NR}(T;c,M)$ vanishes at both
	boundaries, and the natural objects to study there are the one-sided derivatives
	\begin{equation}
		V^{R\prime}(0_+)
		:=\lim_{\tau\downarrow0}
		\frac{V^R(\tau)-V^{\NR}(T;c,M)}{\tau},
		\qquad
		V_-^{R\prime}(T)
		:=\lim_{\tau\uparrow T}
		\frac{V^R(\tau)-V^{\NR}(T;c,M)}{\tau-T},
		\label{eq:one-sided-derivative-definitions}
	\end{equation}
	whenever the corresponding limits exist.  These definitions do not extend $V^R$ to
	either boundary.

	\subsection{The terminal derivative}
	
	Propositions~\ref{prop:extreme-review-dates}(ii)
	and~\ref{prop:tight-cap-review-gain} turn on the sign of
	$V_-^{R\prime}(T)$, computed here.  Throughout, $z_D(T)$ is the difficult
	bar of \eqref{eq:zD-zE} at horizon $T$, and $\ell_T$ and
	$\lambda_T^{\NR}$ are defined in
	\eqref{eq:terminal-no-review-multiplier}.
	
	\begin{lemma}
		\label{lem:terminal-derivative}
		Suppose Assumption~\ref{as:maintained} holds.  Then $V_-^{R\prime}(T)$ exists,
		the review value and the no-review value agree to first order at $T$,
		\begin{equation*}
			V^R(T-h)=V^{\NR}(T-h;c,M)+o(h),
			\qquad
			V_-^{R\prime}(T)=\frac{\partial V^{\NR}(T;c,M)}{\partial T},
		\end{equation*}
		and
		\begin{equation}
			V_-^{R\prime}(T)
			=
			1-\lambda_T^{\NR}\left[c-
			\frac{M\phi(z_D(T)-\sqrt T)}{2\sqrt T}\right].
			\label{eq:left-derivative-T-explicit}
		\end{equation}
		Consequently:
		\begin{enumerate}[(i)]
			\item $V_-^{R\prime}(T)<0$ if and only if
			\begin{equation}
				\ell_T-(1-c)
				\;>\;
				\frac{M\phi(z_D(T)-\sqrt T)}{2\sqrt T};
				\label{eq:terminal-sign-condition}
			\end{equation}
			\item $V_-^{R\prime}(T)<0$ implies $\lambda_T^{\NR}>1/c$, equivalently
			$\ell_T>1-c$.
		\end{enumerate}
	\end{lemma}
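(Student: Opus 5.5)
The plan has three steps: show that the review value and the no-review value at the shortened horizon $T-h$ coincide to first order in $h$, then differentiate the closed-form no-review value with respect to the horizon, then read off (i) and (ii) algebraically.

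\emph{Step 1: first-order equivalence via duality.} Fix $\tau=T-h$. By Lemma~\ref{lem:supporting-multiplier}, strong duality holds, so $V^R(\tau)=\min_{\lambda\geq0}\{\tau-\lambda c\tau+\int g_1^\tau\,\mathcal V(Q_\lambda(x),h)\,dx\}$. The no-review problem at horizon $\tau$ is the same program with the degenerate action set $\mathcal A(0)=[0,M]\times\{0\}$, so its support function is $\mathcal V(q,0)=\max\{0,qM\}$. Since $\mathcal A(0)\subset\operatorname{co}\mathcal A(h)$, we have $\mathcal V(q,h)\geq\mathcal V(q,0)$ pointwise. Taking minima gives $V^R(\tau)\geq V^{\NR}(\tau;c,M)$. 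Evaluating the review dual at the no-review multiplier $\lambda^{\NR}_\tau$, which is $1/(1-\ell_\tau)$ by the analogue of \eqref{eq:terminal-no-review-multiplier}, gives
\[
0\leq V^R(\tau)-V^{\NR}(\tau;c,M)\leq\int g_1^\tau(x)\bigl[\mathcal V(Q_{\lambda^{\NR}_\tau}(x),h)-\mathcal V(Q_{\lambda^{\NR}_\tau}(x),0)\bigr]dx .
\]

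\emph{Step 2: the gap is $o(h)$.} The pointwise gap has two properties. It is at most $(1-c)h$, because the extra vertices $A_D,A_E$ add surplus $(1-c)h$ at rents lying in $[0,M]$. It vanishes unless $q\in(q_0(h),q_2(h))$: for $q\leq q_0$ the vertex $A_B$ is optimal, and for $q\geq q_2$ the vertex $A_G$ is. By Lemma~\ref{lem:terminal-ladder-limits}, $r_D(h)/h\to\infty$, hence $q_0(h)=-(1-c)h/r_D\to0$ and $q_2(h)\to0$. Under $g_1^\tau$, the variable $Q_\lambda(x)=\lambda[1-e^{\tau/2-x}]-1$ has a density that is bounded near $q=0$. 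This bound is uniform for $\lambda$ in a compact subset of $(1,\infty)$ and $\tau$ near $T$, and $\lambda^{\NR}_\tau\to\lambda^{\NR}_T>1$ lies in such a set. Hence $\Prb_1\{Q\in(q_0,q_2)\}\to0$, and the bound from Step~1 is at most $(1-c)h\cdot o(1)=o(h)$. This proves $V^R(T-h)=V^{\NR}(T-h;c,M)+o(h)$. Because $V^{\NR}(T;c,M)=\lim_{\tau\uparrow T}V^R(\tau)$, it follows that $V_-^{R\prime}(T)$ exists and equals $\partial_T V^{\NR}(T;c,M)$, provided the latter derivative exists; Step~3 establishes that. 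I expect this step to be the main obstacle, specifically the uniform density bound together with continuity of $\lambda^{\NR}_\tau$ in $\tau$. The latter follows from the implicit function theorem for $z_D$, as in Lemma~\ref{lem:continuation-rent-limits}.

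\emph{Step 3: the explicit derivative and the sign conditions.} Write $V^{\NR}(T)=T-M\Phi(z_D)$, where $z_D$ solves $\Phi(z_D)-\Phi(z_D-\sqrt T)=cT/M$ and is $C^1$ in $T$ by Lemma~\ref{lem:continuation-rent-limits}. Implicit differentiation gives
\[
\bigl[\phi(z_D)-\phi(z_D-\sqrt T)\bigr]z_D'+\frac{\phi(z_D-\sqrt T)}{2\sqrt T}=\frac cM .
\]
Therefore
\[
M\phi(z_D)z_D'=\frac{1}{1-\ell_T}\left[c-\frac{M\phi(z_D-\sqrt T)}{2\sqrt T}\right],
\]
using $\phi(z_D-\sqrt T)/\phi(z_D)=\ell_T$. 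Substituting into $\partial_TV^{\NR}=1-M\phi(z_D)z_D'$ yields \eqref{eq:left-derivative-T-explicit}.

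For (i), set $K=M\phi(z_D(T)-\sqrt T)/(2\sqrt T)$. Then $V_-^{R\prime}(T)<0$ is equivalent to $c-K>1-\ell_T$, which rearranges to \eqref{eq:terminal-sign-condition}. For (ii), $K>0$, so \eqref{eq:terminal-sign-condition} forces $\ell_T>1-c$. This is equivalent to $\lambda^{\NR}_T=1/(1-\ell_T)>1/c$.
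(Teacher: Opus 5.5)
Your proposal is correct and follows the paper's proof. The paper argues that the continuation vertices $A_D,A_E$ can raise the no-effort support function $\max\{0,qM\}$ by at most $(1-c)h$, and only on an index interval $(q_0(h),q_2(h))$ shrinking to zero. The review gain over $V^{\NR}(T-h;c,M)$ is therefore $o(h)$, and implicit differentiation of the difficult-bar condition then gives the explicit derivative and (i)--(ii). That is your argument.

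Two small differences. Your duality sandwich, which evaluates the review dual at the no-review multiplier $\lambda^{\NR}_{T-h}=1/(1-\ell_{T-h})$, makes precise a step the paper only sketches. You also need only that the interval vanishes, whereas the paper states an $O(\sqrt{h\log(1/h)})$ rate for it.
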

	
	\begin{proof}
		As $h\downarrow0$, the work-inducing continuation frontier differs from the
		no-effort frontier by a vertical surplus of only $(1-c)h$.  Moreover,
		$r_D$ and $M-r_E$ are both $O\bigl(\sqrt h/\sqrt{\log(1/h)}\bigr)$.
		The continuation actions can therefore improve on the no-effort support
		function only on likelihood-index intervals whose widths are
		$O\bigl(\sqrt{h\log(1/h)}\bigr)$.  Since the density of the review signal
		is bounded, their total incremental value is $o(h)$, which gives
		$V^R(T-h)=V^{\NR}(T-h;c,M)+o(h)$.  With
		$\lim_{\tau\uparrow T}V^R(\tau)=V^{\NR}(T;c,M)$ from
		Lemma~\ref{lem:review-boundary-values} and
		definition~\eqref{eq:one-sided-derivative-definitions}, this yields
		$V_-^{R\prime}(T)=\partial V^{\NR}(T;c,M)/\partial T$.
		
		It remains to differentiate the no-review value.  Implicit differentiation of
		$M\bigl[\Phi(z_D(T))-\Phi(z_D(T)-\sqrt T)\bigr]=cT$
		gives
		\begin{equation*}
			\frac{d z_D(T)}{dT}
			=
			\frac{
				c/M-
				\phi(z_D(T)-\sqrt T)/(2\sqrt T)
			}{
				\phi(z_D(T))-\phi(z_D(T)-\sqrt T)
			}.
		\end{equation*}
		Differentiating $V^{\NR}(T;c,M)=T-M\Phi(z_D(T))$ and using
		\eqref{eq:terminal-no-review-multiplier} gives
		\eqref{eq:left-derivative-T-explicit}.
		
		For (i), substitute $\lambda_T^{\NR}=1/(1-\ell_T)$ into
		\eqref{eq:left-derivative-T-explicit} and multiply by $1-\ell_T>0$:
		$V_-^{R\prime}(T)<0$ is equivalent to
		$c-M\phi(z_D(T)-\sqrt T)/(2\sqrt T)>1-\ell_T$, which is
		\eqref{eq:terminal-sign-condition}.  For (ii), the term
		$\lambda_T^{\NR}M\phi(z_D(T)-\sqrt T)/(2\sqrt T)$ in
		\eqref{eq:left-derivative-T-explicit} is strictly positive, so
		$V_-^{R\prime}(T)<0$ forces $1-\lambda_T^{\NR}c<0$; and
		$\lambda_T^{\NR}>1/c$ is equivalent to $1-\ell_T<c$.
	\end{proof}
	
	The left side of \eqref{eq:terminal-sign-condition} is large when the reward event is
	weak evidence of work, that is when the shadow price of terminal incentives is high;
	the right side is the first-order informational gain from lengthening the horizon.
	Part~(ii) is \eqref{eq:terminal-sign-condition} with that right side replaced by zero,
	and hence weaker.  Recall from Lemma~\ref{lem:Mbar} that
	$\underline M(T,c)=M_I(T,c)$ when $c\leq\bar c(T)$.

	\begin{lemma}
		\label{lem:terminal-derivative-limits}
		Fix $T>0$ and $c\in(0,1)$.  Then
		\begin{equation}
			\lim_{M\downarrow M_I(T,c)}V_-^{R\prime}(T)=-\infty,
			\qquad
			\lim_{M\to\infty}V_-^{R\prime}(T)=1-c>0.
			\label{eq:tight-cap-derivative-limit}
		\end{equation}
	\end{lemma}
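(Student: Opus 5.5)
The plan is to work directly from the explicit formula \eqref{eq:left-derivative-T-explicit} of Lemma~\ref{lem:terminal-derivative},
\[
V_-^{R\prime}(T)=1-\lambda_T^{\NR}\Bigl[c-\tfrac{M\phi(z_D(T)-\sqrt T)}{2\sqrt T}\Bigr],\qquad \lambda_T^{\NR}=\frac{1}{1-\ell_T},\quad \ell_T=\exp\bigl(z_D(T)\sqrt T-T/2\bigr).
\]
For each limit, it suffices to track how $z_D(T)=z_D(T,c,M)$ behaves as $M$ moves, and then read off $\ell_T$, $\lambda_T^{\NR}$ and the informational term $M\phi(z_D-\sqrt T)/(2\sqrt T)$.

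\emph{Tight cap.} As $M\downarrow M_I(T,c)$, the two crossings merge at the peak of $\Delta(T,\cdot)$, so $z_D(T)\uparrow\sqrt T/2$. Then $\ell_T\to\exp(0)=1$, and hence $\lambda_T^{\NR}\to+\infty$. The bracket converges to
\[
c-\frac{M_I\phi(-\sqrt T/2)}{2\sqrt T}.
\]
I will show this is strictly positive. Substituting $M_I=cT/\overline\Delta(T)$ and writing $a=\sqrt T/2$, the bracket becomes
\[
c\Bigl[1-\frac{a\phi(a)}{2\Phi(a)-1}\Bigr].
\]
This is positive exactly when $F(a)=2\Phi(a)-1-a\phi(a)>0$, which was already established in the proof of Lemma~\ref{lem:Mbar} (since $F(0)=0$ and $F'>0$). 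A positive limit multiplied by $\lambda_T^{\NR}\to\infty$ then gives $V_-^{R\prime}(T)\to-\infty$. The step that needs care is confirming that the bracket stays bounded away from zero rather than vanishing along with $1-\ell_T$; the $F>0$ identity handles this, together with continuity of $z_D$ in $M$ on the increasing branch.

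\emph{Loose cap.} As $M\to\infty$, Proposition~\ref{prop:no-review} gives $z_D(T)\to-\infty$, so $\ell_T\to0$ and $\lambda_T^{\NR}\to1$. The remaining task is to show that $M\phi(z_D-\sqrt T)\to0$. From the binding constraint $M[\Phi(z_D)-\Phi(z_D-\sqrt T)]=cT$, Mills-ratio asymptotics give $M\Phi(z_D)\approx M\phi(z_D)/|z_D|$ up to a factor $(1-\ell_T)\to1$, so $M\phi(z_D)/|z_D|$ stays bounded, near $cT$. Then
\[
M\phi(z_D-\sqrt T)=M\phi(z_D)\,\ell_T,
\]
which is bounded by a multiple of $|z_D|\,e^{z_D\sqrt T}$, and this tends to $0$. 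Alternatively, one can note that $M\Phi(z_D-\sqrt T)$ is the rent and tends to $0$ by Proposition~\ref{prop:no-review}, and that the Mills ratio converts this into a statement about the density up to a factor $|z_D-\sqrt T|$, which is absorbed by the exponential decay. Hence $V_-^{R\prime}(T)\to1-c$.

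The main obstacle is making the Mills-ratio bound rigorous in the loose-cap limit, specifically showing that the rent vanishes faster than $1/|z_D|$. I would handle this through the explicit factor $\ell_T=e^{z_D\sqrt T-T/2}$, since exponential decay dominates any polynomial factor in $|z_D|$.
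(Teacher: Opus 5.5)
Your proposal is correct and follows the paper's proof essentially step for step. In the tight-cap limit you use $z_D(T)\uparrow\sqrt T/2$ to send $\lambda_T^{\NR}\to\infty$, while the bracket converges to $c\bigl[1-a\phi(a)/(2\Phi(a)-1)\bigr]>0$ by $2\Phi(a)-1-a\phi(a)>0$. In the loose-cap limit you use $z_D(T)\to-\infty$ to get $\lambda_T^{\NR}\to1$ and $M\phi(z_D(T)-\sqrt T)\to0$.

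The one difference is that you actually justify this last limit, which the paper only asserts. You write $M\phi(z_D-\sqrt T)=M\phi(z_D)\ell_T$ and use a Mills bound to show $M\phi(z_D)=O(\lvert z_D\rvert)$. A small correction there: the exact factor in the binding constraint is $1-\Phi(z_D-\sqrt T)/\Phi(z_D)$, not $1-\ell_T$. Both tend to one, so the argument is unaffected.
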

	
	\begin{proof}
		Let $a=\sqrt T/2$.  As $M\downarrow M_I(T,c)$ the two no-review roots
		merge, so $z_D(T)\uparrow a$, $\ell_T\uparrow1$, and
		$\lambda_T^{\NR}\to\infty$.  The bracket in
		\eqref{eq:left-derivative-T-explicit} converges to
		$c\bigl[1-a\phi(a)/(2\Phi(a)-1)\bigr]$,
		which is strictly positive because
		\begin{equation}
			2\Phi(a)-1-a\phi(a)>0
			\qquad\text{for every }a>0.
			\label{eq:normal-cap-inequality}
		\end{equation}
		Indeed, the expression on the left of
		\eqref{eq:normal-cap-inequality} is zero at $a=0$ and has derivative
		$(1+a^2)\phi(a)>0$.  The first limit in
		\eqref{eq:tight-cap-derivative-limit} follows.
		
		As $M\to\infty$, $z_D(T)\to-\infty$ by the proof of
		Lemma~\ref{lem:Mbar}; hence $\ell_T\to0$,
		$\lambda_T^{\NR}\to1$, and
		$M\phi(z_D(T)-\sqrt T)\to0$.  Equation
		\eqref{eq:left-derivative-T-explicit} then gives the second limit.
	\end{proof}
	
	\begin{corollary}
		\label{cor:tight-cap-neighborhood}
		Suppose $c\leq\bar c(T)$, so that
		$\underline M(T,c)=M_I(T,c)$.  Then there exists $\delta>0$ such that
		$V_-^{R\prime}(T)<0$
		for every
		$M\in\bigl(\underline M(T,c),\underline M(T,c)+\delta\bigr)$.
	\end{corollary}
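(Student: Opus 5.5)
The plan is to combine the first limit in Lemma~\ref{lem:terminal-derivative-limits} with continuity of $V_-^{R\prime}(T)$ in $M$ on the open interval $(M_I(T,c),\infty)$. When $c\leq\bar c(T)$, Lemma~\ref{lem:Mbar} gives $\underline M(T,c)=M_I(T,c)$, so Assumption~\ref{as:maintained} holds for every $M>M_I(T,c)$. Lemma~\ref{lem:terminal-derivative} then applies and supplies the explicit formula \eqref{eq:left-derivative-T-explicit} for each such $M$. Lemma~\ref{lem:terminal-derivative-limits} states that this expression tends to $-\infty$ as $M\downarrow M_I(T,c)$. By the definition of a limit, there is therefore a $\delta>0$ such that $V_-^{R\prime}(T)<-1<0$ for all $M\in(M_I(T,c),M_I(T,c)+\delta)$. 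This is exactly the corollary, and strictly speaking continuity in $M$ is not even needed: the limit statement alone delivers the neighborhood.

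The only care required is that the limit in Lemma~\ref{lem:terminal-derivative-limits} is taken along $M$ for which formula \eqref{eq:left-derivative-T-explicit} is valid. That requires $M>\underline M(T,c)$, which holds precisely because $c\leq\bar c(T)$ makes $\underline M=M_I$. If instead $c>\bar c(T)$, the relevant interval would start at $M_P>M_I$, and the divergence at $M_I$ would say nothing about it. This is why the hypothesis appears.

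I would also cross-check the computation underlying the limit. As $M\downarrow M_I$, the two bars merge, so $z_D(T)\uparrow\sqrt T/2$, $\ell_T\uparrow1$, and hence $\lambda_T^{\NR}\uparrow\infty$. Meanwhile the bracket $c-M\phi(z_D-\sqrt T)/(2\sqrt T)$ converges to $c\,[1-a\phi(a)/(2\Phi(a)-1)]>0$ with $a=\sqrt T/2$, using $M_I=cT/(2\Phi(a)-1)$ and $\phi(-a)=\phi(a)$. The product therefore diverges to $+\infty$, and $V_-^{R\prime}(T)\to-\infty$.

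There is no real obstacle; the step deserving attention is confirming the strict positivity of the limiting bracket, which is inequality \eqref{eq:normal-cap-inequality}, already established.
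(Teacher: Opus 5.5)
Your proposal is correct and follows the paper's own argument: the paper proves the corollary as an immediate consequence of the first limit in \eqref{eq:tight-cap-derivative-limit}. Your additional observation, that $c\leq\bar c(T)$ makes Assumption~\ref{as:maintained} hold for every $M>M_I(T,c)$ and so makes formula \eqref{eq:left-derivative-T-explicit} valid along the whole limiting path, is exactly the right justification for why the hypothesis is needed.
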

	
	\begin{proof}
		Immediate from the first limit in
		\eqref{eq:tight-cap-derivative-limit}.
	\end{proof}
	
	\subsection{Proofs}
	
	\begin{proof}[Proof of Lemma~\ref{lem:review-boundary-values}]
		Consider first $\tau\downarrow0$.  The largest principal payoff available
		from a second-phase action is $u_D=(1-c)h-r_D$ for all sufficiently small
		$\tau$, because profitability of the no-review contract implies
		$u_D(T)=V^{\NR}(T;c,M)>0$.  Hence
		$V^R(\tau)\leq\tau+u_D(T-\tau)$,
		whose limit is $u_D(T)=V^{\NR}(T;c,M)$.
		
		For a matching lower bound, assign the low-rent continuation action $A_D$
		at every review score except on a sufficiently high upper tail, on which the
		principal assigns the good stop $A_G$.  The utility increment on that tail is
		$M-r_D$.  Choose the upper-tail cutoff so that
		$\bigl[M-r_D\bigr]\Delta(\tau,z)=c\tau$.
		For sufficiently small $\tau$, such a cutoff exists, and the work probability
		of the reward region converges to zero.  The resulting payoff converges to
		$u_D(T)$.  Therefore
		$\lim_{\tau\downarrow0}V^R(\tau)=V^{\NR}(T;c,M)$.
		
		Now let $\tau\uparrow T$, so $h\downarrow0$.  A menu using only $A_B$ and
		$A_G$ is exactly a bounded upper-tail bonus based on $X_\tau$.  Choosing its
		threshold to satisfy
		$M\,\Delta(\tau,z)=c\tau$
		gives payoff $V^{\NR}(\tau;c,M)$, which converges to
		$V^{\NR}(T;c,M)$.
		
		For the reverse inequality, the full-effort continuation frontier converges
		to the no-effort frontier.  In particular $(1-c)h\to0$ and, by
		Lemma~\ref{lem:continuation-rent-limits}, $r_D\to0$ and $r_E\to M$.
		Thus the complete review-action set converges to
		$\left\{(r,0):r\in[0,M]\right\}$,
		so the associated principal payoff converges to $-r$.
		The review signal converges to the terminal signal $N(T,T)$, and the
		fixed-review problem converges to the no-review bounded-payment problem.
		This proves
		$\lim_{\tau\uparrow T}V^R(\tau)=V^{\NR}(T;c,M)$.
	\end{proof}

	\begin{proof}[Proof of Proposition~\ref{prop:extreme-review-dates}]
		The two claims concern the one-sided derivatives
		\eqref{eq:one-sided-derivative-definitions}.  Part~(i) establishes that
		$V^{R\prime}(0_+)=-\infty$ directly; part~(ii) reads the sign of $V_-^{R\prime}(T)$
		off Lemma~\ref{lem:terminal-derivative}.
		
		For the first claim, write $h=T-\tau$ and use the low-rent continuation action
		$A_D(h)$ as a baseline.  For all sufficiently small $\tau$, this action
		uniquely maximizes the principal's post-review payoff.  Compactness of the
		two continuation frontiers and strict profitability at $h=T$ imply that
		there is a constant $\kappa>0$ such that every feasible review action
		$(r,s)$ satisfies
		\begin{equation}
			u_D-(s-r)
			\geq
			\kappa\lvert r-r_D\rvert.
			\label{eq:review-loss-controls-utility-change}
		\end{equation}
		Let
		$L_\tau(x)=1-g_0^\tau(x)/g_1^\tau(x)$.
		Because a constant review utility provides no incentives, pre-review
		obedience can be written as
		\begin{equation}
			\E_1\left[
			\bigl(r(X_\tau)-r_D\bigr)L_\tau(X_\tau)
			\right]
			\geq
			c\tau.
			\label{eq:small-tau-centered-obedience}
		\end{equation}
		Under work, $X_\tau=\tau+\sqrt\tau Z$ and
		$L_\tau(X_\tau)=1-\exp\bigl(-\tau/2-\sqrt\tau Z\bigr)$.
		A standard Gaussian tail decomposition gives
		\begin{equation}
			\E_1\left[
			\lvert r(X_\tau)-r_D\rvert
			\right]
			\geq
			K
			\frac{\sqrt\tau}{\sqrt{\log(1/\tau)}}
			\label{eq:small-tau-utility-dispersion-lower-bound}
		\end{equation}
		for some $K>0$ and all sufficiently small $\tau$.  To see the order, split
		at
		$\lvert Z\rvert=\sqrt{8\log(1/\tau)}$.
		On the central event, $\lvert L_\tau\rvert$ is of order
		$\sqrt{\tau\log(1/\tau)}$; under both work and shirking, the complementary
		Gaussian tails contribute $o(\tau)$.  Since review utilities are bounded by
		$M$, \eqref{eq:small-tau-centered-obedience} then implies
		\eqref{eq:small-tau-utility-dispersion-lower-bound}.
		
		Combining \eqref{eq:review-loss-controls-utility-change} and
		\eqref{eq:small-tau-utility-dispersion-lower-bound} yields
		\begin{equation*}
			V^R(\tau)
			\leq
			\tau+u_D(T-\tau)
			-
			\kappa K
			\frac{\sqrt\tau}{\sqrt{\log(1/\tau)}}.
		\end{equation*}
		Since
		$\tau+u_D(T-\tau)=V^{\NR}(T;c,M)+O(\tau)$,
		it follows that
		\begin{equation}
			\lim_{\tau\downarrow0}
			\frac{V^R(\tau)-V^{\NR}(T;c,M)}{\tau}=-\infty.
			\label{eq:right-derivative-zero}
		\end{equation}
		Thus $V^R(\tau)<V^{\NR}(T;c,M)$ for all sufficiently small $\tau>0$.
		Since the boundary limit established above gives
		$V^R(s)\to V^{\NR}(T;c,M)$ as $s\downarrow0$, each such $\tau$ is strictly
		dominated by some genuine review date $s\in(0,\tau)$.  This proves part~(i).

		For the second claim, let $M$ lie in the right neighborhood supplied by
		Corollary~\ref{cor:tight-cap-neighborhood}, so that $V_-^{R\prime}(T)<0$.
		By Lemma~\ref{lem:terminal-derivative},
		$V^R(T-h)=V^{\NR}(T;c,M)-V_-^{R\prime}(T)h+o(h)$.  Comparing the dates $\tau=T-h$
		and $\tau'=T-2h$,
		$V^R(\tau')-V^R(\tau)=-V_-^{R\prime}(T)h+o(h)>0$
		for all sufficiently small $h>0$.  Hence every sufficiently late review is strictly
		dominated by an earlier genuine review, which proves the second claim.
	\end{proof}
	
	\begin{proof}[Proof of Proposition~\ref{prop:loose-cap-no-review}]
		Write $R_M(t)\equiv R^{\NR}(t;c,M)=r_D(t)$.  By
		Lemma~\ref{lem:Mbar}, $R_M(T)\to0$ as $M\to\infty$, so the
		no-review full-effort contract is feasible and profitable for all sufficiently
		large $M$.
		
		Suppose toward a contradiction that there are sequences $M_n\to\infty$ and
		$\tau_n\in(0,T)$ such that
		\begin{equation}
			V^R(\tau_n)\geq V^{\NR}(T;c,M_n).
			\label{eq:app-loose-cap-contradiction}
		\end{equation}
		Dependence on $n$ is suppressed below.  Any continuation-utility schedule
		$r(X_\tau)\in[0,M]$ that induces first-phase work is itself a bounded
		incentive contract for a phase of length $\tau$.  By the definition of
		$R_M(\tau)$, the agent's ex ante rent under the review contract is therefore
		at least $R_M(\tau)$.  Total surplus is at most $(1-c)T$, and hence
		$V^R(\tau)\leq(1-c)T-R_M(\tau)$.
		Comparison with \eqref{eq:app-loose-cap-contradiction} yields the necessary
		condition
		\begin{equation}
			R_M(\tau)\leq R_M(T).
			\label{eq:app-loose-cap-rent-necessary}
		\end{equation}
		
		We first show that any such sequence must satisfy
		\begin{equation}
			\tau_n\log M_n\longrightarrow0.
			\label{eq:app-loose-cap-localization}
		\end{equation}
		Implicit differentiation of the binding difficult-root condition gives
		\begin{equation}
			R_M'(t)
			=
			\frac{\phi(z_D(t)-\sqrt t)}
			{\phi(z_D(t))-\phi(z_D(t)-\sqrt t)}
			\left[c-\frac{M\phi(z_D(t))}{2\sqrt t}\right].
			\label{eq:app-loose-cap-rent-derivative}
		\end{equation}
		For every $\varepsilon>0$, $z_D(t)\to-\infty$ uniformly on
		$t\in[\varepsilon,T]$ as $M\to\infty$.  The Mills bound
		$\Phi(z)\leq\phi(z)/|z|$ for $z<0$, together with
		$M\Delta(t,z_D(t))=ct$, makes the bracket in
		\eqref{eq:app-loose-cap-rent-derivative} negative uniformly on that interval.
		The prefactor is positive on the difficult branch.  Hence
		$R_M'(t)<0$ on $[\varepsilon,T]$ for all sufficiently large $M$.
		Condition \eqref{eq:app-loose-cap-rent-necessary} then forces
		$\tau_n\to0$.
		
		We sharpen this localization.  Let $a_M<0$ solve
		$\Phi(a_M)=2cT/M$, and put
		$q_M=\Phi(a_M-\sqrt T)/\Phi(a_M)$.
		The threshold bonus $M/[2(1-q_M)]$ is feasible for large $M$, implements
		effort, and leaves rent $cTq_M/(1-q_M)$.  Gaussian tail bounds therefore give
		\begin{equation}
			R_M(T)
			\leq
			2cT\exp\left(-\sqrt{T\log M}-\frac{T}{2}\right)
			\label{eq:app-terminal-rent-upper-bound}
		\end{equation}
		for all sufficiently large $M$.
		
		For a short phase of length $\tau$, write
		$z_D(\tau)=-x$ and $u=\sqrt\tau$.  The binding condition implies
		$R_M(\tau)=c\tau q/(1-q)\geq c\tau q$, where $q=\Phi(-(x+u))/\Phi(-x)$.
		The two-sided Mills inequalities and
		$x\leq\sqrt{2\log(M/(c\tau))}$ give
		\begin{equation}
			R_M(\tau)
			\geq
			\frac{c\tau}{2(1+\sqrt T)}
			\exp\left[
			-\sqrt{2\tau\log\left(\frac{M}{c\tau}\right)}-\frac{\tau}{2}
			\right].
			\label{eq:app-short-rent-lower-bound}
		\end{equation}
		If $\tau_n\log M_n$ were bounded away from zero along a subsequence, then
		the logarithm of the right side of
		\eqref{eq:app-short-rent-lower-bound} would be $-o(\sqrt{\log M_n})$,
		whereas \eqref{eq:app-terminal-rent-upper-bound} is of order
		$-\sqrt{T\log M_n}$.  It would follow that
		$R_{M_n}(\tau_n)>R_{M_n}(T)$, contradicting
		\eqref{eq:app-loose-cap-rent-necessary}.  This proves
		\eqref{eq:app-loose-cap-localization}.
		
		It remains to show that a review in this vanishingly early region is also
		strictly dominated.  Put $h=T-\tau$, $r_0=R_M(h)$, and let the difficult
		continuation action be $(r_0,s_0)$, where $s_0=(1-c)h$ is total surplus and
		its principal payoff is $s_0-r_0$.  Uniformly for $h\in[T/2,T]$,
		$r_0\to0$; hence $2r_0<(1-c)h$ for all sufficiently large $M$.  Every
		continuation action $(r,s)\in\mathcal A(h)$ then satisfies
		\begin{equation}
			s-r\leq s_0-r_0-|r-r_0|.
			\label{eq:app-frontier-distance-total-surplus}
		\end{equation}
		For a work-inducing action $s=(1-c)h=s_0$ and $r\geq r_0$, so the inequality
		holds with equality.  For a no-effort action $s=0$; it follows immediately
		when $r\geq r_0$, and when $r<r_0$ it follows from
		$2r_0<(1-c)h$.
		
		Let $d(x)=r(x)-r_0$.  Integrating
		\eqref{eq:app-frontier-distance-total-surplus} under first-phase work gives
		\begin{equation}
			V^R(\tau)
			\leq \tau+s_0-r_0-\E_1|d(X_\tau)|.
			\label{eq:app-review-distance-bound}
		\end{equation}
		Since a constant rent contributes nothing to incentives, first-phase obedience is
		\begin{equation}
			\E_1[d(X_\tau)L_\tau(X_\tau)]\geq c\tau,
			\label{eq:app-centered-obedience}
		\end{equation}
		Under work, $X_\tau=\tau+\sqrt\tau Z$ and
		$L_\tau(X_\tau)=1-\exp(-\tau/2-\sqrt\tau Z)$.  On the event
		$|L_\tau|\leq1/2$, the left side of
		\eqref{eq:app-centered-obedience} is bounded by $\E_1|d|/2$.
		On its complement, $|d|\leq M$, and Gaussian tail bounds give, with
		$a=\log(3/2)$,
		$\E_1\bigl[|L_\tau|\ind_{\{|L_\tau|>1/2\}}\bigr]\leq4\exp\bigl(-a^2/(8\tau)\bigr)$.
		Because \eqref{eq:app-loose-cap-localization} is equivalent to
		$\log M=o(1/\tau)$, this tail contribution is $o(\tau)$.  Thus
		\begin{equation}
			\E_1|d(X_\tau)|\geq2c\tau-o(\tau)>c\tau
			\label{eq:app-rent-dispersion}
		\end{equation}
		for all sufficiently large $n$.
		
		Finally, subtracting
		$V^{\NR}(T;c,M)=(1-c)T-R_M(T)$ from
		\eqref{eq:app-review-distance-bound} yields
		\begin{equation*}
			V^R(\tau)-V^{\NR}(T;c,M)
			\leq c\tau+R_M(T)-R_M(T-\tau)-\E_1|d(X_\tau)|.
		\end{equation*}
		For large $n$, $T-\tau_n\in[T/2,T)$ and the monotonicity established after
		\eqref{eq:app-loose-cap-rent-derivative} gives
		$R_M(T-\tau)>R_M(T)$.  Combining this strict inequality with
		\eqref{eq:app-rent-dispersion} makes the displayed difference negative,
		contradicting \eqref{eq:app-loose-cap-contradiction}.  No such sequence
		exists, so a finite uniform $\overline M(T,c)$ exists.
	\end{proof}

	\begin{proof}[Proof of Proposition~\ref{prop:tight-cap-review-gain}]
		By Lemma~\ref{lem:review-boundary-values}, $G(\tau)\to0$ as $\tau\uparrow T$, and
		$V_-^{R\prime}(T)$ in \eqref{eq:one-sided-derivative-definitions} is exactly the left
		derivative of $G$ at $T$:
		\begin{equation*}
			G(\tau)=V_-^{R\prime}(T)\,(\tau-T)+o(T-\tau)
			\qquad\text{as }\tau\uparrow T .
		\end{equation*}
		Under the stated hypotheses, Corollary~\ref{cor:tight-cap-neighborhood} gives
		$V_-^{R\prime}(T)<0$.  Since $\tau-T<0$, the leading term is strictly positive, so
		there is $\delta>0$ with $G(\tau)>0$ for every $\tau\in(T-\delta,T)$, which is the
		claim.
	\end{proof}
	
	The same display gives the converse: if $V_-^{R\prime}(T)>0$ then $G(\tau)<0$ throughout a
	left neighborhood of $T$.
	
	\begin{proof}[Proof of Proposition~\ref{prop:review-restores-feasibility-boundary}]
		At the boundary,
		$\overline\Delta(T)/T=c/M_0$.
		Since $\overline\Delta(t)/t$ is strictly decreasing by
		Lemma~\ref{lem:Mbar}, every $t<T$ satisfies
		$\overline\Delta(t)/t>c/M_0$,
		which gives $M_0\overline\Delta(\tau)>c\tau$ and $M_0\overline\Delta(h)>ch$.
		
		To obtain the second claim, choose $\tau_0>0$ small and write
		$h_0=T-\tau_0$.  At $M=M_0$ the continuation problem at $h_0<T$ is
		strictly feasible, so its minimum work-inducing rent $r_D(h_0)$ is
		well defined and strictly below $M_0$.  At the boundary $h=T$, the two
		threshold roots coincide and the corresponding rent is
		$M_0\Phi(-\sqrt T/2)$; hence the available rent spread converges to
		$M_0\Phi(\sqrt T/2)>0$ as $h_0\uparrow T$.  Since
		$\overline\Delta(\tau)/\tau\to\infty$ as $\tau\downarrow0$, $\tau_0$ can
		therefore be chosen small enough that
		\begin{equation}
			\bigl[M_0-r_D(h_0)\bigr]\overline\Delta(\tau_0)>c\tau_0.
			\label{eq:review-restores-pre-review-slack}
		\end{equation}
		Both continuation feasibility at $h_0$ and
		\eqref{eq:review-restores-pre-review-slack} are strict, so by continuity
		they continue to hold for all caps in a sufficiently small interval
		$(M_0-\varepsilon,M_0]$.
		
		For such a cap, use the minimum-rent full-effort continuation contract
		after review scores $X_{\tau_0}<\tau_0/2$, and use the good-stop action
		with rent $M$ after $X_{\tau_0}\geq\tau_0/2$.  The spread in continuation
		utility between the two regions is $M-r_D(h_0)$, and the midpoint cutoff
		maximizes the difference between the work and shirk probabilities.  The
		date-$0$ incentive spread is therefore
		$\bigl[M-r_D(h_0)\bigr]\overline\Delta(\tau_0)>c\tau_0$,
		so first-phase work is induced.  In the lower region the continuation
		contract induces full effort, while in the upper region the project is
		stopped.  The lower region has strictly positive probability under work.
		Finally, $M<M_0=M_I(T,c)$ makes full effort infeasible without
		review by the implementability condition in Lemma~\ref{lem:Mbar}.
	\end{proof}

	\clearpage
	\setcounter{section}{0}
	\setcounter{subsection}{0}
	\setcounter{lemma}{0}
	\setcounter{proposition}{0}
	\setcounter{corollary}{0}
	\setcounter{remark}{0}
	\setcounter{equation}{0}
	\setcounter{page}{1}
	\renewcommand{\thepage}{OA-\arabic{page}}
	
	\counterwithout{equation}{section}
	\counterwithout{lemma}{section}
	\counterwithout{proposition}{section}
	\counterwithout{corollary}{section}
	\counterwithout{remark}{section}
	
	\renewcommand{\thesection}{OA.\arabic{section}}
	
	\renewcommand{\thesubsection}{OA.\arabic{section}.\arabic{subsection}}
	\renewcommand{\thelemma}{OA.\arabic{lemma}}
	\renewcommand{\theproposition}{OA.\arabic{proposition}}
	\renewcommand{\thecorollary}{OA.\arabic{corollary}}
	\renewcommand{\theremark}{OA.\arabic{remark}}
	\renewcommand{\theequation}{OA.\arabic{equation}}
	\providecommand{\theHsection}{}
	\providecommand{\theHsubsection}{}
	\providecommand{\theHlemma}{}
	\providecommand{\theHproposition}{}
	\providecommand{\theHcorollary}{}
	\providecommand{\theHremark}{}
	\providecommand{\theHequation}{}
	\renewcommand{\theHsection}{OA.\arabic{section}}
	\renewcommand{\theHsubsection}{OA.\arabic{section}.\arabic{subsection}}
	\renewcommand{\theHlemma}{OA.\arabic{lemma}}
	\renewcommand{\theHproposition}{OA.\arabic{proposition}}
	\renewcommand{\theHcorollary}{OA.\arabic{corollary}}
	\renewcommand{\theHremark}{OA.\arabic{remark}}
	\renewcommand{\theHequation}{OA.\arabic{equation}}
	
	\begin{center}
		{ \huge \textbf{Online Appendix for ``Midterm Review''}}
		
		\vspace*{20pt}
		{\large Doruk Cetemen \qquad
			Yonggyun Kim  \qquad
			Fei Li		   \qquad
			Curtis R. Taylor}
	\end{center}
	
	\addtocontents{toc}{\protect\begingroup\protect\setcounter{tocdepth}{-1}}
	
	\section{Pointwise effort choice}
	\label{oa:flexible-effort}
	
	Section~\ref{sec:pointwise-effort} states the two conclusions of this extension: no interior
	effort level becomes an exposed continuation action, and the easy continuation standard must be
	raised from $z_E(h)$ to $\zeta(h)$, lowering the maximum continuation rent from $r_E(h)$ to
	$\widetilde r_E(h)$.  This appendix supplies the argument.
	
	\subsection{Flexible effort and a relaxed feasible set}
	\label{oa:sec:flexible-setup}
	
	Fix a continuation phase of length $h$ and let total effort be $\eta\in[0,h]$, so that a plan with total effort $\eta$ produces $Y\sim N(\eta,h)$ and costs the agent $c\eta$.  For a terminal wage $w$ write $\pi(\eta)=\E_\eta[w]=\int_{\mathbb R}w(y)f_\eta(y)\,dy$ for the agent's expected payment when he supplies total effort $\eta$, and let $\mathcal F_h^{\mathrm{flex}}(\eta)$ denote the contracts satisfying \eqref{eq:pointwise-full-IC}.  Proofs are collected in Section~\ref{oa:sec:flexible-proofs}.
	
	The economic difference from phase-level effort is that the old model asks only whether the
	agent wants to work rather than not work, whereas the flexible model also asks whether he wants
	to work \emph{all the way to the target}.  Those nearby deviations are what tighten the easy
	standard below.
	
	\paragraph{A relaxed feasible set.}
	
	Condition~\eqref{eq:pointwise-full-IC} contains a continuum of deviations, and we will not need
	to characterize all of them.  Three necessary conditions suffice.  Two are already in the main
	text: \eqref{eq:pointwise-global}, the comparison with abandoning effort altogether, which is
	exactly the comparison used in the phase-level model; and \eqref{eq:pointwise-local}, which,
	because $\pi'(\eta)=\Cov_\eta(w(Y),Y)/h$, says that the wage must load enough on output to pay
	for the \emph{marginal} unit of effort.  The second is the new restriction that matters for the
	easy standard.  Third, at an interior target the agent's payoff must bend the right way:
	\begin{equation*}
		\pi''(\eta)\leq0
		\qquad\text{if }\eta\in(0,h).
		\tag{S$_\eta$}
		\label{oa:eq:flexible-soc}
	\end{equation*}
	
	Let $\mathcal F_h(\eta)$ impose only \eqref{eq:pointwise-global}, and let
	$\widehat{\mathcal F}_h(\eta)$ impose the global, local, and second-order conditions.  Then
	\begin{equation}
		\mathcal F_h^{\mathrm{flex}}(\eta)
		\ \subseteq\
		\widehat{\mathcal F}_h(\eta)
		\ \subseteq\
		\mathcal F_h(\eta).
		\label{oa:eq:relaxation-inclusion}
	\end{equation}
	The rightmost set is the old binary-effort benchmark.  The middle set is an analytically useful
	outer bound on the true flexible-effort set.  The strategy below is to show that even this larger
	set has no economically relevant extreme points beyond the four actions we identify.  Since the
	four candidate vertices themselves satisfy the full incentive constraint, this sandwich argument
	is enough; we never have to solve the entire continuum-deviation problem at interior effort.
	
	\paragraph{Rent and surplus.}
	
	As in the baseline model, an action is the pair $(r,s)$ of continuation rent and \emph{total}
	continuation surplus, the principal's own payoff being $u=s-r$.  If the contract implements
	$\eta$,
	\begin{equation}
		r=\pi(\eta)-c\eta,
		\qquad
		s=(1-c)\eta,
		\qquad
		u=s-r=\eta-\pi(\eta).
		\label{oa:eq:flexible-rent-surplus}
	\end{equation}
	For a fixed effort target, feasible contracts therefore lie on a horizontal segment at height
	$(1-c)\eta$, along which the principal's payoff falls one for one with the rent conceded.
	Flexible effort fills in all the segments corresponding to $\eta\in[0,h]$:
	\begin{equation}
		\mathcal A^{\mathrm{flex}}(h)
		=\bigcup_{\eta\in[0,h]}
		\bigl\{\bigl(P-c\eta,\ (1-c)\eta\bigr):\ P\in[P_{\min}(\eta),P_{\max}(\eta)]\bigr\},
		\label{oa:eq:flexible-action-set}
	\end{equation}
	where $P_{\min}(\eta)$ and $P_{\max}(\eta)$ are the least and greatest expected payments that
	can implement $\eta$.  Define $\widehat P_{\min}$ and $\widehat P_{\max}$ analogously using the
	relaxed set $\widehat{\mathcal F}_h(\eta)$.
	
	The picture is therefore richer than in the baseline model: instead of two line segments, the
	principal faces a two-dimensional body.  The key question, however, is not how large that body
	is.  It is whether the body creates new \emph{extreme} rent--surplus pairs.  Since the review
	problem maximizes a linear function of $(r,s)$ state by state, only those extreme points can ever
	matter for the optimal menu.
	
	\subsection{The three regimes}
	\label{oa:sec:regimes}

	There are three economically distinct targets: partial effort, full effort, and no effort.
	Partial effort is the new case.  Full effort turns out to look almost exactly like the baseline
	model, while no effort is unchanged.
	
	\paragraph{Interior effort: why a band replaces a threshold.}
	Suppose the principal wants to implement $\eta\in(0,h)$.  In the baseline model an upper-tail
	bonus is attractive because high output is evidence of effort.  With an interior target that
	logic is incomplete: a very high outcome is also evidence that the agent supplied \emph{more}
	than the target.  A contract that keeps paying on arbitrarily high outcomes therefore encourages
	the wrong upward deviation.
	
	The least-cost way to induce an interior target consequently rewards a \emph{range} of good
	outcomes rather than the entire upper tail.  The lower edge of the range makes effort attractive;
	the upper edge prevents the agent from wanting to overshoot the target.  Formally, matching both
	the expected payment and the marginal incentive produces the following band-reduction result.
	
	\begin{lemma}
		\label{oa:lem:band-reduction}
		Fix $\eta\in(0,h)$ and let $w:\mathbb R\to[0,M]$ satisfy $\pi'_w(\eta)=c$.  Then there is a
		band bonus
		\begin{equation}
			\tilde w(y)=M\ind\{k_1\leq y\leq k_2\},
			\qquad -\infty\leq k_1<k_2\leq+\infty,
			\label{oa:eq:band-contract}
		\end{equation}
		with the same expected payment and the same marginal incentive at the target,
		\begin{equation*}
			\pi_{\tilde w}(\eta)=\pi_w(\eta),
			\qquad
			\pi'_{\tilde w}(\eta)=c,
			\qquad\text{and}\qquad
			\pi_{\tilde w}(0)\leq\pi_w(0).
		\end{equation*}
		Consequently $w\in\widehat{\mathcal F}_h(\eta)$ implies $\tilde w\in\widehat{\mathcal F}_h(\eta)$,
		so the range of $\pi(\eta)$ over $\widehat{\mathcal F}_h(\eta)$ is unchanged if attention is
		restricted to band bonuses.
	\end{lemma}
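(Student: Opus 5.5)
The plan is to read the statement as a moment-constrained Neyman--Pearson problem and solve it with Lagrange multipliers. Write $\varphi=w/M\in[0,1]$ and let $f_\eta$ be the $N(\eta,h)$ density. Then
\[
\pi(\eta)=M\!\int\varphi f_\eta,
\qquad
\pi'(\eta)=M\!\int\varphi\,\frac{y-\eta}{h}f_\eta,
\qquad
\pi''(\eta)=M\!\int\varphi\Bigl[\frac{(y-\eta)^2}{h^2}-\frac1h\Bigr]f_\eta,
\]
and $\pi(0)=M\int\varphi\,\ell_\eta f_\eta$, where $\ell_\eta(y)=f_0(y)/f_\eta(y)=\exp\bigl((\eta^2-2\eta y)/(2h)\bigr)$.

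Since $\eta>0$, $\ell_\eta$ is a strictly convex, strictly decreasing exponential in $y$. That convexity is what will force a band.

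I would consider the auxiliary problem
\[
\min_{\psi\in[0,1]}\ \int\psi\,\ell_\eta f_\eta
\quad\text{s.t.}\quad
\int\psi f_\eta=\int\varphi f_\eta,
\qquad
\int\psi(y-\eta)f_\eta=\int\varphi(y-\eta)f_\eta,
\qquad
\pi''_\psi(\eta)\leq\pi''_\varphi(\eta).
\]
I include the inequality on $\pi''$ so that the second-order condition \eqref{oa:eq:flexible-soc} is inherited by the solution, not just \eqref{eq:pointwise-global} and \eqref{eq:pointwise-local}.

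Existence of a minimizer follows from weak-$*$ compactness of $\{\psi:0\le\psi\le1\}$ in $L^\infty$. The objective and constraint functionals are integrals against $L^1(f_\eta)$ weights, because $\ell_\eta$ and polynomials have all Gaussian moments, so they are weak-$*$ continuous. The feasible set is nonempty because $\varphi$ itself is feasible.

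For multipliers, I would look at the convex set of attainable vectors $\bigl(\int\psi f_\eta,\int\psi(y-\eta)f_\eta,\pi''_\psi,\int\psi\ell_\eta f_\eta\bigr)$. A supporting hyperplane at the optimum gives multipliers $\mu_0,\mu_1\in\mathbb R$ and $\nu\ge0$. The optimal $\psi$ then minimizes pointwise
\[
\int\psi\,\Bigl[\ell_\eta(y)-\mu_0-\mu_1(y-\eta)+\nu\,\tfrac{(y-\eta)^2-h}{h^2}\Bigr]f_\eta\,dy ,
\]
so $\psi=1$ where the bracket is negative and $\psi=0$ where it is positive. The bracket is strictly convex in $y$, since $\ell_\eta$ is strictly convex and $\nu\ge0$ adds a convex quadratic. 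Hence its negativity set is an interval $[k_1,k_2]$, possibly unbounded, and its zero set is null. So $\tilde w=M\ind\{k_1\le y\le k_2\}$ is the solution.

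The main obstacle is the constraint qualification. If the first two target moments lie on the boundary of the attainable moment set, no multiplier with unit weight on the objective need exist. I would handle this case directly. A boundary point of the set of $(\int\psi f_\eta,\int\psi(y-\eta)f_\eta)$ is attained only by a $\psi$ that maximizes some linear combination $a+b(y-\eta)$. Such a $\psi$ is an indicator of a half-line, or of all of $\mathbb R$, and is unique a.e. In that case $\varphi$ already is a band with an infinite endpoint, and $\tilde w=w$.

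In the interior case, Slater's condition holds: either $\varphi$ itself slackens the $\pi''$ inequality, or I perturb toward $\varphi$ and, if necessary, drop that constraint when it is not needed. The degenerate band $k_1=k_2$ is excluded because it would give $\pi'=0\ne c$.

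The consequences then follow directly. Expected payment and marginal incentive are matched by construction, so \eqref{eq:pointwise-local} holds. The inequality $\pi_{\tilde w}(0)\le\pi_w(0)$ gives \eqref{eq:pointwise-global}, because $\pi(\eta)-c\eta$ is unchanged while the shirking payoff weakly falls. The inequality $\pi''_{\tilde w}(\eta)\le\pi''_w(\eta)\le0$ gives \eqref{oa:eq:flexible-soc}. Hence $\tilde w\in\widehat{\mathcal F}_h(\eta)$ with the same $\pi(\eta)$, and the range of $\pi(\eta)$ over $\widehat{\mathcal F}_h(\eta)$ is unchanged when attention is restricted to band bonuses.
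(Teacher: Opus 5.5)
Your argument is the paper's: minimize $\pi(0)$ over $0\le w\le M$ subject to matching $\pi(\eta)$ and $\pi'(\eta)$, get a minimizer from weak-$*$ compactness, and read off a band from pointwise minimization of a strictly convex index (the paper's $\Psi$). Both of your additions improve on the paper's proof.

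First, the paper attaches multipliers with unit weight on the objective but never states a constraint qualification. Your supporting-line argument handles the case of a boundary moment vector correctly: it forces $\varphi$ to be a half-line indicator, necessarily the upper one since $\pi'=c>0$.

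Second, the paper checks only \eqref{eq:pointwise-global} and \eqref{eq:pointwise-local} for $\tilde w$. It never checks \eqref{oa:eq:flexible-soc}, although $\widehat{\mathcal F}_h(\eta)$ imposes it. Your constraint $\pi''_\psi\le\pi''_\varphi$, with $\nu\ge0$ keeping the index strictly convex, supplies that missing inequality.

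One step needs repair: the Slater step for the added constraint. In the interior case, $\varphi$ satisfies $\pi''_\psi\le\pi''_\varphi$ with equality by construction. So $\varphi$ is never a Slater point, and perturbing toward $\varphi$ cannot create slack.

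The case you must treat is the one where no $\psi$ with the same two moments has $\pi''_\psi<\pi''_\varphi$. Then $\varphi$ itself minimizes $\pi''$ subject to the two moment equalities. Because the moment vector is interior, that two-constraint problem has multipliers. Its index $\bigl[(y-\eta)^2-h\bigr]/h^2-\mu_0-\mu_1(y-\eta)$ is a strictly convex quadratic, so its minimizer is a.e.\ unique and is a bounded band. Hence $\varphi$ is already a band and $\tilde w=w$.

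Otherwise a strictly slack $\psi$ exists. Mixing it with a small weight on contracts whose moments fill a neighborhood of the target gives the full Slater condition.

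A shorter route is Fritz John multipliers $(\nu_0,\mu_0,\mu_1,\nu)\neq0$ with $\nu_0,\nu\ge0$. If $\nu_0=0$, the index is a nonzero convex quadratic or affine function, and its negativity set is still an interval with null boundary. The constant cases give $\psi\equiv0$ or $\psi\equiv1$, which $\pi'=c>0$ excludes. So no constraint qualification is needed, and your boundary case is covered as well.

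Incidentally, your extra constraint never binds. Two distinct interval indicators differ by a function with at most one sign change, so they cannot share both moments. The $\pi''$-minimizer is a band by the quadratic-index argument above, so by uniqueness it is the same band as the $\pi(0)$-minimizer. This is why the paper's unconstrained $\tilde w$ does in fact satisfy \eqref{oa:eq:flexible-soc}, even though its proof does not show it.
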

	
	The economic content is simple.  The principal still wants to spend the payment cap where output
	is informative about effort.  But she now has to control incentives on both sides of the target.
	The optimal payment region is therefore bounded above as well as below: outcomes that are
	sufficiently good earn the bonus, while extremely good outcomes do not.  The upper cutoff is not
	a punishment for success; it is what prevents the contract from rewarding effort beyond the
	amount the principal wants to buy.
	
	Lemma~\ref{oa:lem:band-reduction} reduces the problem to two numbers.  Standardize the band by
	\begin{equation*}
		a=\frac{k_1-\eta}{\sqrt h},
		\qquad
		b=\frac{k_2-\eta}{\sqrt h},
		\qquad
		\kappa=\frac{c\sqrt h}{M},
	\end{equation*}
	so that $\pi(\eta)=M[\Phi(b)-\Phi(a)]$ and $\pi'(\eta)=M[\phi(a)-\phi(b)]/\sqrt h$.  The interior
	condition in \eqref{eq:pointwise-local} then reads
	\begin{equation}
		\phi(a)-\phi(b)=\kappa,
		\label{oa:eq:band-FOC}
	\end{equation}
	a single equation in $(a,b)$ that does not involve $\eta$.  Let $\zeta(h)>0$ be the bar of
	Section~\ref{sec:pointwise-effort}, the positive root of $\phi(\zeta)=\kappa$, equivalently of
	$M\phi(\zeta)/\sqrt h=c$.
	Along the curve \eqref{oa:eq:band-FOC} the payment $M[\Phi(b)-\Phi(a)]$ is largest at $a=-\zeta(h)$,
	$b=+\infty$: the band degenerates to an upper-tail bonus with bar $\eta-\zeta(h)\sqrt h$, and the
	payment it delivers,
	\begin{equation}
		p^{*}=M\,\Phi\bigl(\zeta(h)\bigr),
		\label{oa:eq:Pmax-flexible}
	\end{equation}
	uses \eqref{eq:pointwise-local} alone and does not vary with $\eta$.  The bound is attained, and the
	slice is an interval.
	
	\begin{lemma}
		\label{oa:lem:interior-slice}
		Suppose $\Delta\bigl(h,\zeta(h)\bigr)\geq ch/M$.  Then for every $\eta\in(0,h)$ the expected
		payments achievable at $\eta$ form the interval $\bigl[P_{\min}(\eta),\,p^{*}\bigr]$.  The upper
		end is attained by the upper-tail bonus with bar $\eta-\zeta(h)\sqrt h$, which lies in
		$\mathcal F_h^{\mathrm{flex}}(\eta)$ itself and not merely in the relaxation.  The lower end is
		$P_{\min}(\eta)=M[\Phi(b)-\Phi(a)]$ at the solution of \eqref{oa:eq:band-FOC} together with
		\eqref{eq:pointwise-global} at equality: writing $\theta=\eta/\sqrt h$,
		\begin{equation}
			\bigl[\Phi(b)-\Phi(a)\bigr]-\bigl[\Phi(b+\theta)-\Phi(a+\theta)\bigr]=\kappa\theta .
			\label{oa:eq:Pmin-flexible}
		\end{equation}
	\end{lemma}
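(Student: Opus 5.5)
The plan is to work entirely with the standardized band parametrization supplied by Lemma~\ref{oa:lem:band-reduction}. Let $w\in\widehat{\mathcal F}_h(\eta)$ with $\eta\in(0,h)$. That lemma replaces $w$ by a band $M\ind\{k_1\le y\le k_2\}$ that has the same payment, satisfies \eqref{oa:eq:band-FOC}, and does not worsen \eqref{eq:pointwise-global}. So the range of $\pi(\eta)$ over the relaxed set equals the range of $M[\Phi(b)-\Phi(a)]$ over pairs $a<b$ that lie on the curve $\phi(a)-\phi(b)=\kappa$ and satisfy the global condition. In $\theta=\eta/\sqrt h$ coordinates that condition is $[\Phi(b)-\Phi(a)]-[\Phi(b+\theta)-\Phi(a+\theta)]\ge\kappa\theta$. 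The proof then proceeds in three steps: the upper end, its attainment in the true flexible set, and the lower end together with the fill-in.

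\emph{Upper end.} On the curve, $\phi(a)=\kappa+\phi(b)\ge\kappa$, so $|a|\le\zeta(h)$. Hence $\Phi(b)-\Phi(a)\le 1-\Phi(-\zeta)=\Phi(\zeta)$. Equality holds only at $b=+\infty$, $a=-\zeta$, which is on the curve since $\phi(-\zeta)=\kappa$. This gives $p^*=M\Phi(\zeta(h))$ as an upper bound, and it is independent of $\eta$.

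\emph{Attainment.} I will verify directly that the upper-tail bonus with bar $\eta-\zeta\sqrt h$ lies in $\mathcal F_h^{\mathrm{flex}}(\eta)$. Its agent payoff is $U(e)=M\Phi\bigl(\zeta+(e-\eta)/\sqrt h\bigr)-ce$, with
\begin{equation*}
U'(e)=\frac{M}{\sqrt h}\Bigl[\phi\bigl(\zeta+(e-\eta)/\sqrt h\bigr)-\kappa\Bigr].
\end{equation*}
This derivative is positive exactly when the argument lies in $(-\zeta,\zeta)$. So $U$ decreases, then increases up to $e=\eta$, then decreases, and the only competing maximizer on $[0,h]$ is $e=0$. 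It therefore suffices to check $g(\theta)\equiv\Phi(\zeta)-\Phi(\zeta-\theta)-\kappa\theta\ge0$ for $\theta\in[0,\sqrt h]$. We have $g(0)=0$, and $g'(\theta)=\phi(\zeta-\theta)-\kappa$ is positive for $\theta<2\zeta$ and negative afterwards, so $g$ is unimodal. Its minimum on $[0,\sqrt h]$ is thus attained at an endpoint, and $g(\sqrt h)=\Delta(h,\zeta)-ch/M\ge0$ by hypothesis. Hence $g\ge0$ throughout.

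\emph{Lower end and interval.} Minimize $\Phi(b)-\Phi(a)$ along the FOC curve subject to the global inequality. I would parametrize the curve by $b\in(\zeta,\infty]$, with $a=a(b)\in(-\zeta,0)$ increasing. The payment $\Phi(b)-\Phi(a)$ is then decreasing as the band shrinks toward $b\downarrow\zeta$, $a\uparrow0$, where it vanishes. Along the same path the global slack moves from nonnegative (at the $p^*$ end) to negative (near the degenerate band). The constrained minimum is therefore where the global constraint binds, which is \eqref{oa:eq:Pmin-flexible}. The remaining task is to show that every payment between $P_{\min}$ and $p^*$ is achievable in the \emph{true} set, not merely in the relaxation. $\mathcal F^{\mathrm{flex}}_h(\eta)$ is convex, because each incentive constraint is linear in $w$, and the payment is linear in $w$. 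So the claim follows once the binding band at $P_{\min}$ is shown to satisfy \eqref{eq:pointwise-full-IC}; mixtures with the $p^*$ contract then sweep the interval. Together with the sandwich \eqref{oa:eq:relaxation-inclusion}, this gives the result.

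The main obstacle is this last verification: that the minimum-payment band is globally incentive compatible against every $e'\in[0,h]$, and not just $e'=0$ and local deviations. I would first try to show that $e\mapsto M[\Phi(b+(e-\eta)/\sqrt h)-\Phi(a+(e-\eta)/\sqrt h)]-ce$ has at most one interior critical maximum. Its derivative is a difference of two Gaussian densities minus a constant, and such a difference crosses a constant level a bounded number of times. Then $e=\eta$ and $e=0$ are the only candidates, and the binding global constraint makes them tie. I would also check that the monotonicity of the payment and of the slack along the curve holds as claimed.
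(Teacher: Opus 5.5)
Your upper bound (on the curve $\phi(a)=\kappa+\phi(b)\geq\kappa$, so $a\geq-\zeta$ and $\Phi(b)-\Phi(a)\leq\Phi(\zeta)$) is correct, and it fills in a step the paper only asserts. Your attainment argument is exactly the paper's proof.

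\textbf{The lower-end step fails as written.} The geometry you assume for the curve \eqref{oa:eq:band-FOC} is wrong.
\begin{itemize}
\item \emph{Shape of the curve.} It never shrinks to a degenerate band. For each $a\in[-\zeta,\zeta]$ there is a unique $b(a)>\lvert a\rvert$ with $\phi(b)=\phi(a)-\kappa$. The function $b(\cdot)$ is even and tends to $+\infty$ at both ends. It is minimal at $a=0$, where $\phi(b_0)=\phi(0)-\kappa$, not $b_0=\zeta$.
\item \emph{Payment.} Along the whole curve the payment is strictly decreasing, $\frac{d}{da}[\Phi(b)-\Phi(a)]=\phi(a)(a-b)/b<0$. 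It runs from $\Phi(\zeta)$ at $a=-\zeta$ to $\Phi(-\zeta)>0$ at the second upper-tail band $a=\zeta$, $b=+\infty$, so it never vanishes.
\item \emph{Where the minimizer lies.} The branch $a\in(0,\zeta]$ that you discard is where the minimizer sits, at least for small $\eta$. Expanding in $\theta$, the global slack is $G(a,\theta)=\frac{\theta^{2}}{2}\bigl[b\phi(b)-a\phi(a)\bigr]+O(\theta^{3})$ uniformly in $a$. The bracket is bounded away from zero on $[-\zeta,0]$. So along your path the slack never turns negative, and there is no root to find there.
\end{itemize}
The repair is to use the full curve. Your attainment step gives $G(-\zeta,\theta)=\Phi(\zeta)-\Phi(\zeta-\theta)-\kappa\theta\geq0$. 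At the other end, $G(\zeta,\theta)=\Phi(\zeta+\theta)-\Phi(\zeta)-\kappa\theta<0$, because $\phi<\kappa$ beyond $\zeta$. Monotonicity of the payment then puts the minimum at $a^{*}=\max\{a:G(a,\theta)\geq0\}$, where $G=0$; this is \eqref{oa:eq:Pmin-flexible}. The paper uses this same parametrization by $a$ in the proof of Lemma~\ref{oa:lem:rent-per-unit}.

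\textbf{Your final step asks for more than the paper proves.} The paper gets the interval and its lower end from Lemma~\ref{oa:lem:band-reduction} and linearity, i.e.\ within $\widehat{\mathcal F}_h(\eta)$. It checks membership in $\mathcal F_h^{\mathrm{flex}}(\eta)$ only for the upper-end contract. Your crossing-count idea does close this, but ``at most one interior local maximum'' is not enough by itself. $\eta$ is a critical point, but it could be the local minimum. The missing observation is that \eqref{eq:pointwise-global} rules this out:
\begin{itemize}
\item \emph{Setup.} Write $v=(e'-\eta)/\sqrt h$. The agent's payoff is $M u(v)-c\eta$, with $u(v)=\Phi(b-v)-\Phi(a-v)-\kappa v$, so $u'(v)=F(v)-\kappa$ where $F(v)=\phi(v-a)-\phi(v-b)$.
\item \emph{$F$ has a single peak.} $F$ is positive only for $v<(a+b)/2$ and strictly decreasing on $[a,(a+b)/2]$. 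For $v<a$, the sign of $F'$ is that of $\psi(s)-\psi(s+b-a)$, where $\psi(x)=x\phi(x)$ and $s=a-v$. The ratio $\psi(s+b-a)/\psi(s)$ is strictly decreasing from $+\infty$ to $0$. If $b=+\infty$, $F=\phi(\cdot-a)$ is single-peaked directly.
\item \emph{Consequence.} $\{F\geq\kappa\}$ is a closed interval $[v_1,v_2]$, possibly a point. $u'>0$ on its interior and $u'<0$ outside it. Since $F(0)=\kappa$, we have $0\in\{v_1,v_2\}$.
\item \emph{The global condition picks $v_2$.} If $0=v_1$, then $u$ is strictly decreasing on $[-\theta,0]$, contradicting $u(0)\geq u(-\theta)$. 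Hence $0=v_2>v_1$. The only candidate maximizers on $[-\theta,\sqrt h-\theta]$ are $v=0$ and $v=-\theta$. So every band on the curve that satisfies \eqref{eq:pointwise-global} lies in $\mathcal F_h^{\mathrm{flex}}(\eta)$.
\end{itemize}
Moreover $F'(0)=a\phi(a)-b\phi(b)<0$ at such a band, so \eqref{oa:eq:flexible-soc} is strictly slack. This justifies identifying the relaxed range with the set of bands satisfying \eqref{eq:pointwise-global}; you had dropped \eqref{oa:eq:flexible-soc} from that identification. It also proves what Remark~\ref{oa:rem:sign-conditions} only checks numerically.
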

	
	Because the upper end does not vary with $\eta$, the upper boundary of
	$\mathcal A^{\mathrm{flex}}(h)$ is a straight line.
	Because the maximum expected payment $p^{*}$ is the same for every interior target, the
	high-rent side of the feasible set has a particularly simple interpretation.  Along that side,
	the principal is already paying as much as the local incentive condition permits.  Increasing
	the effort target therefore means asking the agent to work more without increasing expected
	compensation.  Substituting $P=p^{*}$ into \eqref{oa:eq:flexible-rent-surplus} gives
	$(r,s)=\bigl(p^{*}-c\eta,\ (1-c)\eta\bigr)$ for $\eta\in[0,h]$, which runs from the no-effort face to
	\begin{equation}
		\widetilde A_E=\bigl(p^{*}-ch,\ (1-c)h\bigr)
		\label{oa:eq:AEtilde}
	\end{equation}
	at full effort.
	
	This already suggests why partial effort will not survive in the final menu.  Every interior
	point on this high-rent locus can be replicated by mixing the full-effort endpoint
	$\widetilde A_E$ with no-effort actions.  The only candidate new extreme action is therefore the
	full-effort endpoint itself.  Economically, flexible effort does not create a new kind of reward;
	it merely limits how generous the easy full-effort reward can be.
	
	\paragraph{Full effort.}
	At full effort the agent cannot overshoot the target.  The relevant contract is therefore again
	an upper-tail bonus, exactly as in the baseline model.  What changes is that full effort must now
	beat \emph{every} lower effort level, not just zero effort.  For an upper-tail bonus this creates
	two transparent tests: a global test comparing full effort with zero effort and a marginal test
	asking whether the last unit of effort is worth its cost.  Remarkably, for the Gaussian
	threshold contract these two tests are jointly sufficient.
	
	\begin{lemma}
		\label{oa:lem:face-exact}
		Let $w=M\ind\{y\geq h-z\sqrt h\}$.  Then $w$ implements $\eta=h$ in the sense of
		\eqref{eq:pointwise-full-IC} if and only if
		\begin{equation*}
			\Delta(h,z)\geq\frac{ch}{M}
			\qquad\text{and}\qquad
			\phi(z)\geq\kappa .
		\end{equation*}
	\end{lemma}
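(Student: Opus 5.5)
Write $\pi(\eta)=M\,\Prb\{\eta+\sqrt h Z\geq h-z\sqrt h\}=M\Phi\bigl(z+(\eta-h)/\sqrt h\bigr)$ for $\eta\in[0,h]$, so that the agent's objective is
\[
U(\eta)=M\Phi\bigl(z-(h-\eta)/\sqrt h\bigr)-c\eta .
\]
Implementation of $\eta=h$ means $U(h)\geq U(\eta)$ for all $\eta\in[0,h]$. I will prove both directions by studying the shape of $U$ on $[0,h]$.

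Necessity is immediate. The comparison $U(h)\geq U(0)$ is exactly $M\bigl[\Phi(z)-\Phi(z-\sqrt h)\bigr]\geq ch$, i.e.\ $\Delta(h,z)\geq ch/M$. Because $h$ is the right endpoint, optimality there requires the left derivative $U'(h)=M\phi(z)/\sqrt h-c\geq0$, i.e.\ $\phi(z)\geq c\sqrt h/M=\kappa$. These are \eqref{eq:pointwise-global} and \eqref{eq:pointwise-local} specialized to the threshold bonus.

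For sufficiency, change variables to $u=z-(h-\eta)/\sqrt h\in[z-\sqrt h,z]$. Then $U'(\eta)=M\phi(u)/\sqrt h-c$, whose sign is that of $\phi(u)-\kappa$. Since $\phi$ is unimodal at $0$, the set $\{u:\phi(u)\geq\kappa\}$ is an interval $[-\zeta,\zeta]$ (empty if $\kappa>\phi(0)$, but $\phi(z)\geq\kappa$ rules that out). Hence, as $u$ increases over $[z-\sqrt h,z]$, the sign of $U'$ follows the pattern $-,+,-$ restricted to a subinterval. The hypothesis $\phi(z)\geq\kappa$ puts the right endpoint $u=z$ inside $[-\zeta,\zeta]$, so the final negative stretch is absent. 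On $[z-\sqrt h,z]$ the sign pattern is therefore either nonnegative throughout or negative then nonnegative. In either case $U$ is quasi-convex on $[0,h]$: it is decreasing and then increasing, or simply increasing. Its maximum over $[0,h]$ is therefore attained at an endpoint, and $\max\{U(0),U(h)\}=U(h)$ by the global condition. So $U(h)\geq U(\eta)$ for every $\eta$.

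The only delicate step is this sign-pattern argument. The point to verify carefully is that the marginal condition at the top endpoint excludes any interior local maximum of $U$. This reduces to the fact that $\{\phi\geq\kappa\}$ is a single interval containing $u=z$. No further Gaussian computation is needed, and ties ($\phi(z)=\kappa$ or equality in $\Delta$) are handled by the weak inequalities throughout.
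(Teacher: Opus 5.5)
Your proposal is correct and follows essentially the same route as the paper. The paper works with the loss $g(u)=U(h)-U(h-u\sqrt h)$ and notes that $g'$ is single-peaked with $g'(0)\geq0$, so $g$ attains its minimum at an endpoint. Your observation that $\{\phi\geq\kappa\}$ is an interval containing $z$ is the same unimodality fact expressed through $U'$, and the converse is the same one-sided derivative check at full effort.
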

	
	The two inequalities measure different ways in which the same standard can fail.  The first asks
	whether the standard generates enough incentive \emph{on average} to make full effort preferable
	to giving up altogether.  The second asks whether it generates enough incentive \emph{at the
		margin} to stop the agent from shaving the last bit of effort.  Phase-level effort checks only
	the first.  Flexible effort checks both.
	
	\begin{lemma}
		\label{oa:lem:flexible-full-effort-face}
		Write $\Lambda(h,z)=\sqrt h\,\phi(z)$, let $z^{\dagger}(h)$ be the unique bar at which
		$\Delta(h,z)=\Lambda(h,z)$ --- it lies in $\bigl(0,\sqrt h/2\bigr)$ --- and set
		\begin{equation}
			\overline{\mathcal M}(h)
			=\max_{z}\ \min\bigl\{\Delta(h,z),\ \Lambda(h,z)\bigr\}
			=\sqrt h\,\phi\bigl(z^{\dagger}(h)\bigr).
			\label{oa:eq:Mbar-definition}
		\end{equation}
		\begin{enumerate}
			\item[(i)] Full effort is implementable under flexible effort if and only if
			$\overline{\mathcal M}(h)\geq ch/M$; and $\overline{\mathcal M}(h)<\overline\Delta(h)$ for
			every $h>0$, so flexible effort always tightens feasibility.
			\item[(ii)] When that inequality is strict, the full-effort slice of
			$\mathcal A^{\mathrm{flex}}(h)$ is the segment of the line $s=(1-c)h$ joining $A_D$ to
			$\widetilde A_E$.  \emph{The difficult end is unchanged}: it is borne by the bar $z_D(h)$
			of Section~\ref{sec:threshold-contracts}, at which the global condition binds and the local one
			is slack, so
			\begin{equation*}
				r_D(h)=M\,\Phi\bigl(z_D(h)-\sqrt h\bigr),
				\qquad
				\pi'(h)=\frac{M\,\phi\bigl(z_D(h)\bigr)}{\sqrt h}>c .
			\end{equation*}
			\emph{The easy end changes}: it is borne by the strictly harder bar $\zeta(h)<z_E(h)$, at
			which the local condition binds and the global one is slack, so
			\begin{equation}
				\widetilde r_E(h)=M\,\Phi\bigl(\zeta(h)\bigr)-ch\;<\;r_E(h),
				\qquad
				\pi'(h)=c .
				\label{oa:eq:new-max-rent}
			\end{equation}
		\end{enumerate}
	\end{lemma}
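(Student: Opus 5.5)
The plan is to reduce the full-effort slice to upper-tail bonuses and then read both parts off the shape of the two functions $\Delta(h,\cdot)$ and $\Lambda(h,\cdot)$.

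\emph{Step 1: reduction to thresholds at $\eta=h$.} Fix any $w:\mathbb R\to[0,M]$ that implements full effort. By \eqref{oa:eq:relaxation-inclusion} it satisfies \eqref{eq:pointwise-global} and the full-effort case of \eqref{eq:pointwise-local}, namely $\pi(h)-ch\geq\pi(0)$ and $\pi'(h)\geq c$. Let $\tilde w=M\ind\{y\geq b\}$ be the threshold with the same expected payment $\pi(h)$.
\begin{itemize}
\item[(a)] Lemma~\ref{lem:threshold-reduction} gives $\pi_{\tilde w}(0)\leq\pi_w(0)$, so the global condition survives.
\item[(b)] For the local condition, $\pi'(h)=\int w\,\partial_\eta f_\eta|_{\eta=h}$, and the score $\partial_\eta f_\eta/f_\eta=(y-h)/h$ is strictly increasing in $y$. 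The same bathtub argument as in Lemma~\ref{lem:threshold-reduction} (subtract the score at $b$ times the vanishing integral of $w-\tilde w$) shows that the threshold maximizes $\pi'(h)$ among wages with a given mean. So the local condition survives as well.
\end{itemize}
Hence $\tilde w$ satisfies both tests, and by Lemma~\ref{oa:lem:face-exact} it implements $\eta=h$ exactly. Consequently the achievable full-effort payments are exactly $M\Phi(z)$ over the set $\{z:\min\{\Delta(h,z),\Lambda(h,z)\}\geq ch/M\}$. Lemma~\ref{oa:lem:face-exact} is used here in the form $\Lambda(h,z)=\sqrt h\,\phi(z)\geq ch/M$, which is the same as $\phi(z)\geq\kappa$.

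\emph{Step 2: geometry of $\Delta$ versus $\Lambda$.} Write $\Delta(h,z)=\int_{z-\sqrt h}^{z}\phi(t)\,dt$ and compare it with $\sqrt h\,\phi(z)$.
\begin{itemize}
\item[(a)] For $z\leq0$, $\phi$ is increasing on the interval of integration, so $\Delta<\Lambda$.
\item[(b)] For $z\geq\sqrt h/2$ we have $|t|\leq|z|$ on the interval, so $\phi(t)\geq\phi(z)$ with strict inequality on a set of positive measure, giving $\Delta>\Lambda$.
\item[(c)] On $(0,\sqrt h/2)$, $\partial_z(\Delta-\Lambda)=\phi(z)-\phi(z-\sqrt h)+\sqrt h\,z\,\phi(z)>0$, because $|z|<|z-\sqrt h|$ there.
\end{itemize}
So there is a unique crossing $z^\dagger\in(0,\sqrt h/2)$, and $\min\{\Delta,\Lambda\}$ equals $\Delta$ to its left and $\Lambda$ to its right. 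To the left of $z^\dagger$, $\Delta$ is strictly increasing, since $z^\dagger<\sqrt h/2$. To the right of $z^\dagger$, $\Lambda$ is strictly decreasing, since $z^\dagger>0$. Therefore the maximum of the minimum is attained at $z^\dagger$ with value $\sqrt h\,\phi(z^\dagger)=\Delta(h,z^\dagger)$. This is strictly below $\overline\Delta(h)=\Delta(h,\sqrt h/2)$, which gives \eqref{oa:eq:Mbar-definition} and part (i) via Step 1.

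\emph{Step 3: the slice.} When $ch/M<\overline{\mathcal M}(h)$, the upper level set of the unimodal function $\min\{\Delta,\Lambda\}$ is a closed interval $[z_-,z_+]$ with $z_-<z^\dagger<z_+$.
\begin{itemize}
\item[(a)] At $z_-$ the binding function is $\Delta$ on its increasing branch, so $z_-=z_D(h)$. There $\Lambda>\Delta=ch/M$, so the local condition is slack and $\pi'(h)=M\phi(z_D)/\sqrt h>c$; the rent is $r_D$ exactly as before.
\item[(b)] At $z_+>0$ the binding function is $\Lambda$, so $z_+$ is the positive root $\zeta(h)$ of $\phi(\zeta)=\kappa$. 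There $\Delta(h,\zeta)>ch/M$, so the global condition is slack, and $\zeta$ lies strictly inside $(z_D,z_E)$; in particular $\zeta<z_E$.
\item[(c)] The rent $M\Phi(z)-ch$ is increasing in $z$. Moreover, every payment between the endpoints is attained, either by intermediate thresholds or by mixtures, because the full incentive constraint is linear in $w$.
\end{itemize}
So the slice is the segment from $A_D$ to $\widetilde A_E$, with $\widetilde r_E=M\Phi(\zeta)-ch<M\Phi(z_E)-ch=r_E$.

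The step I expect to be delicate is Step 1(b). The threshold-optimality argument for the marginal test needs the score at $\eta=h$ to be monotone and the bathtub comparison to be applied to $\pi'$ rather than to $\pi(0)$. It also needs the observation that both tests improve simultaneously under the same cost-matched threshold. The rest is one-variable calculus on Gaussian integrals.
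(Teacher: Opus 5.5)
Your proof is correct and reaches the result by a different route from the paper's in two places.

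\emph{The reduction to bars.} The paper's argument works only with bars. It applies Lemma~\ref{oa:lem:face-exact} to say that a bar is admissible iff $\min\{\Delta,\Lambda\}\ge ch/M$. It then reads both implementability and the two ends of the slice off the interval of admissible bars $[z_D(h),\zeta(h)]$. It takes for granted that at $\eta=h$ a non-threshold wage cannot implement full effort when no bar does, and cannot deliver a payment outside $[M\Phi(z_D),M\Phi(\zeta)]$. The main text justifies this only by noting that the agent cannot overshoot full effort. Your Step~1 supplies that reduction. The likelihood ratio $f_0/f_h$ is decreasing and the score $(y-h)/h$ is increasing, so the same cost-matched upper-tail threshold weakly improves both the global and the marginal test. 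Lemma~\ref{oa:lem:face-exact} then puts it in the true set. Because this uses only the global and local conditions at $\eta=h$, it also justifies the identification $\widehat P_{\min}(h)=M\Phi(z_D)$ and $\widehat P_{\max}(h)=M\Phi(\zeta)$ on the relaxed set, which the proof of Proposition~\ref{oa:prop:flexible-extreme-points} relies on.

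\emph{Locating the crossing.} The paper writes $\Delta/\Lambda=h^{-1/2}\int_0^{\sqrt h}e^{zs-s^2/2}\,ds$. This ratio is strictly increasing in $z$ with limits $0$ and $\infty$, which gives a unique global crossing in one step. The paper then checks the ratio at $z=0$ and at $z=\sqrt h/2$, the latter via $2\Phi(a)-1>2a\phi(a)$. Your three-region sign analysis is equally valid and more elementary: monotonicity of $\phi$ for $z\le 0$, the comparison $|t|\le z$ for $z\ge\sqrt h/2$, and a positive derivative of $\Delta-\Lambda$ in between. It costs a case split, whereas the ratio form is more compact and makes single crossing transparent.

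\emph{The slice.} Your Step~3 is essentially the paper's endpoint argument. You correctly derive strict slack at each end from $z_D<z^\dagger<\zeta$; the paper writes weak inequalities there before asserting $\pi'(h)>c$.
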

	
	Which of the two conditions of Lemma~\ref{oa:lem:face-exact} binds is what makes the correction
	asymmetric, for the reason given in Section~\ref{sec:pointwise-effort}: the difficult standard is
	already sensitive to effort at the margin, whereas the easy one is not.  The magnitudes are worth
	recording.  Feasibility itself changes only modestly:
	$\overline\Delta(h)/\overline{\mathcal M}(h)=1+h/72+O(h^{2})$.  The larger effect is on the rent
	that the easy action can promise.  For example, at $h=1/4$ and $c=0.3$,
	$\widetilde r_E/r_E$ ranges from about $0.84$ near the feasibility floor to $0.98$ when the cap
	is four times that floor; at $h=1$ the ratio can fall to about $0.68$ near the tight end.
	Flexible effort therefore matters most exactly where continuation rents are scarce and valuable
	as incentive capital.
	
	\paragraph{No effort.}
	Nothing changes at the no-effort end.  A constant payment $w\equiv p$ gives the same expected
	payment at every effort level, so exerting positive effort only adds cost.  Hence every
	$p\in[0,M]$ implements zero effort and delivers
	$(r,s)=(p,0)$.
	The entire old no-effort segment remains feasible, including
	$A_B=(0,0)$ and $A_G=(M,0)$.
	
	\subsection{Extreme points}
	\label{oa:sec:flexible-extreme}

	The review problem cares not about all feasible contracts but about the convex hull of feasible
	rent--surplus pairs, which is what Figure~\ref{fig:pointwise-feasible-set} draws.  This section
	supplies the characterization behind that figure: the relaxed pointwise-effort set lies inside
	$\operatorname{co}\{A_B,A_D,\widetilde A_E,A_G\}$, with $\widetilde A_E$ as in
	\eqref{oa:eq:AEtilde} and $A_B$, $A_D$, $A_G$ as in Section~\ref{sec:extreme-points}, and each of
	those four vertices is implementable.  The one non-obvious step is the low-rent boundary: partial
	effort never delivers effort more cheaply in rent-per-unit terms than full effort.
	
	\paragraph{The residual question, in rent terms.}
	Could the principal ever prefer to buy \emph{some} continuation effort rather than the full
	amount, because partial effort requires less rent per unit?  Let $\rho(\eta)$ denote the rent under the cheapest relaxed contract implementing $\eta$:
	\begin{equation}
		\rho(\eta)=\widehat P_{\min}(\eta)-c\eta,
		\qquad\text{with}\qquad
		\rho(h)=M\Phi\bigl(z_D(h)\bigr)-ch=r_D(h).
		\label{oa:eq:flexible-rent-function}
	\end{equation}
	The relevant chord inequality is equivalent to
	\begin{equation}
		\frac{\rho(\eta)}{\eta}\ \geq\
		\frac{\rho(h)}{h}=\frac{r_D(h)}{h}
		\qquad\text{for every }\eta\in(0,h).
		\label{oa:eq:rent-per-unit}
	\end{equation}
	So the issue has a direct economic interpretation: is full effort the cheapest way to buy effort
	in rent terms?  The answer is yes.  Rent per unit falls as the principal buys more effort, so a
	partial-effort continuation is never an exposed action.
	
	The normalization used in the proof makes this a one-dimensional statement.  Writing
	$\theta=\eta/\sqrt h$ and $\kappa=c\sqrt h/M$, the minimum rent takes the form
	\begin{equation}
		\rho(\eta)=M\,\mathcal R_\kappa\bigl(\eta/\sqrt h\bigr),
		\qquad
		\mathcal R_\kappa(\theta)=\Phi(b+\theta)-\Phi(a+\theta),
		\label{oa:eq:rent-normalised}
	\end{equation}
	where $(a,b)$ solve the band conditions.  The next lemma establishes the desired economies of
	scale in incentive rent.
	
	\begin{lemma}
		\label{oa:lem:rent-per-unit}
		Suppose $\overline{\mathcal M}(h)>ch/M$.  Then
		\begin{equation*}
			\widehat P_{\min}(\eta)\ \geq\ \frac{\eta}{h}\,\widehat P_{\min}(h)
			\qquad\text{for every }\eta\in(0,h);
		\end{equation*}
		equivalently, by \eqref{oa:eq:flexible-rent-function}, the agent's rent per unit of effort
		implemented, $\rho(\eta)/\eta$, is smallest at full effort.
	\end{lemma}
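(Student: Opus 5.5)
The plan is to discard the local and second-order conditions and keep only the global condition \eqref{eq:pointwise-global}. This gives a lower bound on $\widehat P_{\min}(\eta)$ that is equal to it at $\eta=h$. Define $m(\eta)$ as the least expected payment $\pi(\eta)$ over contracts $w:\mathbb R\to[0,M]$ satisfying $\pi(\eta)-\pi(0)\geq c\eta$ alone. Then $\widehat{\mathcal F}_h(\eta)\subseteq\mathcal F_h(\eta)$ gives $\widehat P_{\min}(\eta)\geq m(\eta)$. At $\eta=h$, Lemma~\ref{oa:lem:flexible-full-effort-face}(ii) says the difficult bar $z_D(h)$ has slack local condition, so it lies in the relaxed set and $\widehat P_{\min}(h)=m(h)=M\Phi(z_D(h))$. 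It therefore suffices to show that $m(\eta)/\eta$ is nonincreasing on $(0,h]$.

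To compute $m(\eta)$, note that it is a binary-effort problem: the signal is $N(\eta,h)$ under the target and $N(0,h)$ under zero effort. The argument of Lemma~\ref{lem:threshold-reduction} (monotone likelihood ratio) applies verbatim. So $m(\eta)=M\Phi(z)$, where, with $\theta=\eta/\sqrt h$ and $\kappa=c\sqrt h/M$, the bar $z=z(\theta)$ is the smaller root of
\begin{equation*}
\Phi(z)-\Phi(z-\theta)=\kappa\theta .
\end{equation*}
Feasibility for every $\theta\le\sqrt h$ follows from feasibility at $\sqrt h$, because $[2\Phi(\theta/2)-1]/\theta$ is decreasing (proof of Lemma~\ref{lem:Mbar}). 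Dividing the binding condition by $\Phi(z)$, exactly as in \eqref{eq:app-cost}, gives
\begin{equation*}
\frac{m(\eta)}{\eta}=\frac{c}{1-q(\theta)},\qquad q(\theta)=\frac{\Phi(z(\theta)-\theta)}{\Phi(z(\theta))}.
\end{equation*}
The lemma thus reduces to the claim that $q$ is nonincreasing along the difficult branch. Economically, this says the reward event of the cheapest test becomes stronger evidence of work as the mean shift grows.

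The main obstacle is proving $dq/d\theta\le0$. Implicit differentiation on the difficult branch, where $\phi(z)-\phi(z-\theta)>0$, gives
\begin{equation*}
\frac{dz}{d\theta}=\frac{\kappa-\phi(z-\theta)}{\phi(z)-\phi(z-\theta)}.
\end{equation*}
Substituting this into
\begin{equation*}
\frac{d\log q}{d\theta}=\bigl[H(z-\theta)-H(z)\bigr]\frac{dz}{d\theta}-H(z-\theta),
\qquad H=\phi/\Phi,
\end{equation*}
and replacing $\kappa$ by $[\Phi(z)-\Phi(z-\theta)]/\theta$ leaves a sign condition in $(z,\theta)$ alone, valid on the region $z<\theta/2$. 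I would first try to show this condition using the decreasing Mills ratio $H$ (already used in Lemma~\ref{lem:Mbar}) together with the concavity of $\log\Phi$.

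If the direct sign argument becomes messy, the fallback is a scaling argument. Given the cheapest threshold test for a shift $\theta$, one would build a test for the shift $2\theta$ with at most twice the cost and the same incentive ratio. This would give $m(2\eta)\le 2m(\eta)$. Combined with continuity and a convexity-type interpolation of $m$ in $\theta$, it would deliver the chord inequality \eqref{oa:eq:rent-per-unit}.
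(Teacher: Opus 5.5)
Your reduction is sound, and it differs from the paper's route. The paper keeps the local equality and the second-order condition at interior targets, reduces to band contracts, and shows that the minimal rent $\mathcal R_\kappa(\theta)$ is itself nonincreasing, with a separate limit argument at the full-effort face. Your steps $\widehat P_{\min}\geq m$, equality at $\eta=h$, and $m(\eta)/\eta=c/(1-q(\theta))$ are all correct.

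\textbf{The gap.} It lies in the step you call the main obstacle. Carrying out your substitution with $R=\Phi/\phi=1/H$ gives
\[
\frac{d\log q}{d\theta}
=\frac{\kappa\,\phi(z)\phi(z-\theta)}{\Phi(z)\Phi(z-\theta)\bigl[\phi(z)-\phi(z-\theta)\bigr]}
\Bigl[R(z)-R(z-\theta)-\theta\Bigr],
\]
so you need $R(z)-R(z-\theta)\leq\theta$. This is \emph{false} on part of the region $z<\theta/2$. At $z=\theta/2$ it reads $2\Phi(\theta/2)-1\leq\theta\,\phi(\theta/2)$, which is the reverse of the strict inequality $2\Phi(a)-1>2a\phi(a)$ used in the proof of Lemma~\ref{oa:lem:flexible-full-effort-face}. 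It therefore fails on a left neighbourhood of $\theta/2$; for instance, at $\theta=2$ one has $R(0.8)-R(-1.2)\approx2.13>2$.

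Near the binary feasibility frontier the two roots merge, $dz/d\theta\to+\infty$, and $m(\eta)/\eta$ genuinely \emph{increases}. So no argument built only from the monotone Mills ratio and log-concavity of $\Phi$ can succeed. The same objection applies to your scaling fallback, which is also too unspecified to give the chord inequality: it does not say how a shift-$2\theta$ test is built from a shift-$\theta$ one at fixed variance, or how $m(2\eta)\leq 2m(\eta)$ reaches non-dyadic ratios. The hypothesis $\overline{\mathcal M}(h)>ch/M$, which you use only at $\eta=h$, must enter this step.

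\textbf{The repair.} The missing idea is to propagate that hypothesis to every $\theta\in(0,\sqrt h]$.
\begin{enumerate}[(i)]
\item Let $z^\dagger_\theta$ solve $\Phi(z)-\Phi(z-\theta)=\theta\phi(z)$, equivalently $\psi(z,\theta)\equiv\theta^{-1}\int_0^\theta e^{zs-s^2/2}\,ds=1$. Then $z^\dagger_{\sqrt h}$ is the paper's $z^\dagger(h)$.
\item We have $\partial_z\psi>0$, and $\partial_\theta\psi=\bigl[e^{z\theta-\theta^2/2}-\psi\bigr]/\theta<0$ at $\psi=1$, $z<\theta/2$. Hence the positive root $z^\dagger_\theta$ increases in $\theta$, and so $\phi(z^\dagger_\theta)\geq\phi(z^\dagger_{\sqrt h})=\overline{\mathcal M}(h)/\sqrt h>\kappa$.
\item Thus $z^\dagger_\theta$ lies strictly inside the admissible bars on the increasing branch. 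Your difficult root satisfies $z(\theta)<z^\dagger_\theta$, and the local condition $\Phi(z)-\Phi(z-\theta)<\theta\phi(z)$ holds along the whole path.
\item Because $\phi(z-\theta)<\phi(z)$ when $z<\theta/2$,
\[
R(z)-R(z-\theta)<\frac{\Phi(z)-\Phi(z-\theta)}{\phi(z)}<\theta .
\]
\end{enumerate}
So $q$ is strictly decreasing on $(0,\sqrt h]$ and your route closes. It even gives a strict inequality, and it needs neither band reduction, nor the second-order condition, nor the paper's Step~1.

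A minor point: binary feasibility for $\theta\leq\sqrt h$ needs $[2\Phi(\theta/2)-1]/\theta$ to be decreasing. This follows from concavity of $\Phi$ on $[0,\infty)$, or from $\phi(z^\dagger_\theta)>\kappa$. The monotonicity of $\overline\Delta(h)/h$ proved in Lemma~\ref{lem:Mbar} is weaker and does not give it.
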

	
	\begin{proposition}
		\label{oa:prop:flexible-extreme-points}
		Suppose $\overline{\mathcal M}(h)>ch/M$.  Then
		\begin{equation*}
			\operatorname{co}\mathcal A^{\mathrm{flex}}(h)
			=
			\operatorname{co}\bigl\{A_B,\ A_D,\ \widetilde A_E,\ A_G\bigr\},
		\end{equation*}
		and all four vertices belong to $\mathcal A^{\mathrm{flex}}(h)$.  In particular, for every
		$q\in\mathbb R$ a maximizer of $s+qr$ over $\mathcal A^{\mathrm{flex}}(h)$ can be chosen from
		these four actions, and no interior effort level is ever used.
	\end{proposition}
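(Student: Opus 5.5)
The plan is to show that $\mathcal A^{\mathrm{flex}}(h)$ lies inside the quadrilateral $\operatorname{co}\{A_B,A_D,\widetilde A_E,A_G\}$ and that each of the four vertices is itself implementable; together these give equality of convex hulls. The maximizer claim for $s+qr$ then follows exactly as in Section~\ref{sec:deriving-optimal-menu}, since a linear objective over a compact polygon attains its maximum at a vertex.

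\emph{Vertices.} The no-effort points are immediate. A constant wage $w\equiv0$ or $w\equiv M$ implements $\eta=0$, which delivers $A_B$ and $A_G$. For full effort, Lemma~\ref{oa:lem:flexible-full-effort-face}(ii) shows that the bar $z_D(h)$ satisfies both tests of Lemma~\ref{oa:lem:face-exact}, so it yields $A_D$. The bar $\zeta(h)$ satisfies the local test with equality and the global test strictly, so it yields $\widetilde A_E$. All four vertices therefore lie in $\mathcal A^{\mathrm{flex}}(h)$ under $\overline{\mathcal M}(h)>ch/M$.

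\emph{Containment.} By \eqref{oa:eq:relaxation-inclusion} it suffices to place the relaxed set inside the quadrilateral. I would work slice by slice at height $s=(1-c)\eta$ and verify four edge inequalities.
\begin{enumerate}[(a)]
\item The bottom edge $s\ge0$ holds trivially.
\item The top edge $s\le(1-c)h$ holds because $\eta\le h$.
\item For the right edge $A_EA_G$: at any interior $\eta$, Lemma~\ref{oa:lem:interior-slice} bounds the payment by $p^*$, so the slice's right endpoint is $(p^*-c\eta,(1-c)\eta)$. This locus is the segment from $(p^*,0)$ to $\widetilde A_E$, and since $p^*\le M$ it lies inside the chord from $A_G$ to $\widetilde A_E$. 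One caveat: Lemma~\ref{oa:lem:interior-slice} assumes $\Delta(h,\zeta)\ge ch/M$. I would note that this follows from the vertex step, where $\zeta(h)$ satisfies the global test. At the endpoints, $\eta=h$ uses the face from Lemma~\ref{oa:lem:flexible-full-effort-face} and $\eta=0$ uses $[0,M]$.
\item For the left edge $A_BA_D$: points must satisfy $r\ge (r_D/((1-c)h))\,s$, i.e.\ $\rho(\eta)\ge \eta\, r_D(h)/h$. This is exactly \eqref{oa:eq:rent-per-unit}, which Lemma~\ref{oa:lem:rent-per-unit} supplies. The lemma is stated for payments, but subtracting $c\eta=\tfrac{\eta}{h}ch$ converts it to rents.
\end{enumerate}

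\emph{Main obstacle.} The substantive step is the left edge, which the earlier lemma already handles. The remaining care is in the right edge. I must make sure that every relaxed contract at interior $\eta$ has payment at most $p^*$, including contracts not covered by band reduction. Lemma~\ref{oa:lem:band-reduction} preserves $\pi(\eta)$, so the payment range is unchanged, and the maximum of $M[\Phi(b)-\Phi(a)]$ along \eqref{oa:eq:band-FOC} is $p^*$. Finally, convexity of the quadrilateral gives $\operatorname{co}\mathcal A^{\mathrm{flex}}\subseteq$ the quadrilateral, and membership of the vertices gives the reverse inclusion. No interior $\eta$ is ever needed, because each vertex is a full-effort or no-effort action.
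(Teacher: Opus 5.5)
Your proposal is correct and follows essentially the same route as the paper. You relax to $\widehat{\mathcal A}(h)$, bound the high-rent side by the $\eta$-independent payment $p^{*}=M\Phi(\zeta(h))\le M$, bound the low-rent side by Lemma~\ref{oa:lem:rent-per-unit}, and place the four vertices using the constant contracts and Lemma~\ref{oa:lem:flexible-full-effort-face}(ii). The only difference is cosmetic: you check the trapezoid's edges directly in $(r,s)$ coordinates, whereas the paper maps the region to $(\eta,P)$ coordinates via the invertible linear map $(\eta,P)\mapsto(r,s)$ and checks the two chords there.
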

	
	This is the payoff from the preceding machinery.  Although flexible effort creates a continuum
	of possible effort levels and a two-dimensional feasible set, none of the interior effort levels
	is ever chosen by the principal.  The economically relevant set still has four vertices.  Three
	are exactly the baseline actions, and the fourth is the same easy continuation with a slightly
	harder standard.
	
	The proof deliberately uses the relaxed set only as an outer bound.  We therefore do not need a
	complete characterization of incentive compatibility at every interior target.
	
	\subsection{Consequences for the analysis}
	
	Proposition~\ref{oa:prop:flexible-extreme-points} is what licenses the substitution
	$r_E(h)\mapsto\widetilde r_E(h)$ announced in Section~\ref{sec:pointwise-effort}.  Only the
	upper breakpoint of \eqref{eq:three-breakpoints} moves, to
	$\widetilde q_2(h)=(1-c)h/\bigl(M-\widetilde r_E(h)\bigr)<q_2(h)$, so the good-stop region
	opens at a lower shadow value than in the baseline, while the cutoff formula
	\eqref{eq:k-formula} is unchanged and the ordered four-region menu of
	Proposition~\ref{prop:four-region-review-menu} survives verbatim.  Feasibility of full
	continuation effort is now the strictly stronger requirement
	$\overline{\mathcal M}(h)\geq ch/M$ of Lemma~\ref{oa:lem:flexible-full-effort-face}(i), in place
	of $\overline\Delta(h)\geq ch/M$.
	
	\subsection{Flexible effort before the review}
	\label{oa:sec:flexible-pre-review}

	So far we have allowed the agent to coast only after the review.  The same concern can arise
	before the review: an agent who is supposed to work until $\tau$ may instead work for only part
	of that interval.  The effect is again local and economically transparent.  Because the desired
	first-phase effort is the upper boundary $\eta_1=\tau$, the principal does not need to make the
	agent indifferent at the margin; she only needs the marginal return to the last bit of effort to
	be at least its cost.
	
	\paragraph{Only one new constraint.}
	Let $\eta_1\in[0,\tau]$ denote total first-phase effort.  The review state is
	$X_\tau\sim N(\eta_1,\tau)$, and the menu promises continuation utility $r(x)$ after review state
	$x$.  Implementing full first-phase effort requires
	$\E_\tau[r(X_\tau)]-c\tau\geq\E_{\eta_1'}[r(X_\tau)]-c\eta_1'$ for every
	$\eta_1'\in[0,\tau]$.
	The baseline model checks only the most distant deviation, $\eta_1'=0$.  Flexible effort adds
	the possibility of shaving a small amount of first-phase effort.  At the full-effort corner this
	requires
	\begin{equation*}
		\frac{d}{d\eta_1}\E_{\eta_1}\bigl[r(X_\tau)\bigr]\bigg|_{\eta_1=\tau}
		=\frac{1}{\tau}\Cov_\tau\bigl(r(X_\tau),X_\tau\bigr)\ \geq\ c .
	\end{equation*}
	Thus the review menu must provide enough continuation-utility sensitivity not only in total, but
	also at the margin.
	
	\paragraph{The two conditions in the menu's cutoffs.}
	Because the continuation problem still has four extreme actions, the promised-rent schedule is
	an increasing step function,
	\begin{equation*}
		r(x)=\sum_{i=1}^{3}\delta_i\ind\{x\geq k_i\},
		\qquad k_1<k_2<k_3,
	\end{equation*}
	with jumps
	$\delta_1=r_D(h)$,
	$\delta_2=\widetilde r_E(h)-r_D(h)$, and
	$\delta_3=M-\widetilde r_E(h)$.
	Let $z_i=(\tau-k_i)/\sqrt\tau$.  Then the global and marginal first-phase incentive requirements
	are
	\begin{equation*}
		\underbrace{\sum_i\delta_i\,\Delta(\tau,z_i)\ \geq\ c\tau}_{\text{global}},
		\qquad
		\underbrace{\sum_i\delta_i\,\Lambda(\tau,z_i)\ \geq\ c\tau}_{\text{local}} .
	\end{equation*}
	The first asks whether the menu makes full effort preferable to zero effort.  The second asks
	whether the last increment of first-phase effort is worth taking.  This is exactly the same
	average-versus-marginal distinction that hardened the easy continuation standard.
	
	\paragraph{The ordered menu survives.}
	Attach multipliers $\lambda\geq0$ and $\mu\geq0$ to the global and local first-phase
	constraints.  The review-state index becomes
	\begin{equation}
		Q(x)=\lambda\left[1-\exp\left(\frac{\tau}{2}-x\right)\right]
		+\mu\,\frac{x-\tau}{\tau}-1,
		\qquad
		Q'(x)=\lambda\exp\left(\frac{\tau}{2}-x\right)+\frac{\mu}{\tau}>0 .
		\label{oa:eq:Q-two-multiplier}
	\end{equation}
	The important fact is not the extra term itself but its monotonicity.  Better review performance
	still raises the attractiveness of higher-rent actions.  Hence the statewise problem still
	selects among the same four actions in the same order.  Flexible first-phase effort changes the
	cutoffs and introduces a second multiplier, but it does not destroy the ordered menu.
	
	\paragraph{One prediction does change.}
	There is one notable implication.  In the baseline model the review-state index is bounded above,
	so the good-stop region appears only for some parameters.  If the marginal first-phase
	constraint binds, then $\mu>0$ and the linear term in \eqref{oa:eq:Q-two-multiplier} makes $Q(x)$
	unbounded above.  The good-stop region must then be nonempty.
	
	Economically, marginal first-phase incentives are especially well supplied by very high review
	outcomes.  A low termination threshold can create a large difference between working and not
	working while doing little to reward the \emph{last} increment of effort.  A top prize does the
	opposite: its covariance with performance is large in the upper tail.  Freeing first-phase
	effort therefore shifts incentive provision toward the good-stop prize and makes the full
	four-region ladder generic whenever the local constraint actually bites.
	
	\paragraph{When nothing changes.}
	The extra local constraint need not bind.  Since
	$\Lambda(\tau,z)\geq\Delta(\tau,z)$ exactly when
	$z\leq z^{\dagger}(\tau)$, the global condition automatically implies the local one whenever
	all cutoffs satisfy
	$k_i\geq k^{\dagger}(\tau)=\tau-z^{\dagger}(\tau)\sqrt\tau\in(\tau/2,\tau)$.
	Because the cutoffs are ordered, it is enough to check the lowest one.  Thus if the termination
	threshold is already sufficiently demanding---roughly above half of expected first-phase
	output---allowing the agent to coast before the review changes nothing.
	
	\paragraph{A step that does not carry over.}
	There is one limitation.  For a single continuation threshold, the global and local conditions
	are jointly sufficient because the agent's loss from deviating is unimodal.  Before the review
	the rent schedule can have three jumps, so the marginal-return function is a Gaussian mixture:
	\begin{equation*}
		\frac{d}{d\eta_1}\E_{\eta_1}\bigl[r(X_\tau)\bigr]
		=\frac{1}{\sqrt\tau}\sum_i\delta_i\,
		\phi\!\left(\frac{k_i-\eta_1}{\sqrt\tau}\right).
	\end{equation*}
	With several peaks, an intermediate effort level can be attractive even when both the global
	and endpoint-marginal tests pass.
	
	This qualification appears to matter only for long first phases.  Numerically the mixture is
	single-peaked on $[0,\tau]$ for every cutoff configuration we tried at $\tau\leq4$; a second
	peak appears by $\tau=9$ and a third by $\tau=64$.  Thus for short first phases the two
	conditions above are exact in the cases examined, whereas late reviews require direct checks of
	intermediate deviations.
	
	\paragraph{A margin left open.}
	Finally, this section asks whether the principal can still implement full first-phase effort.
	It does not ask whether she would optimally choose less.  Allowing the principal to target an
	interior first-phase effort would introduce a first-order equality before the review as well and
	would put the band-contract logic on both sides of the review.  We leave that additional margin
	for future work.

	\subsection{Proofs}
	\label{oa:sec:flexible-proofs}

	\subsubsection{Proof of Lemma~\ref{oa:lem:band-reduction}}
	
	\begin{proof}
		Consider minimizing $\pi(0)=\int wf_0$ over $0\leq w\leq M$ subject to the two equality
		constraints $\int wf_\eta=\pi_w(\eta)$ and $\int w\,\partial_\eta f_\eta=c$.  The feasible set
		contains $w$ and is the intersection of the weak-$*$ compact ball with two weak-$*$ closed
		hyperplanes, hence weak-$*$ compact; the objective is weak-$*$ continuous, so a minimizer
		$\tilde w$ exists.  Attaching multipliers $\lambda_1,\lambda_2$ to the two constraints gives a
		Lagrangian linear in $w$, which is therefore minimized pointwise over $[0,M]$, with
		$\tilde w=M$ exactly where the index is negative.  Dividing that index by $f_\eta(y)>0$ and
		using
		\begin{equation*}
			\frac{f_0(y)}{f_\eta(y)}=\exp\left(\frac{\eta^{2}}{2h}-\frac{\eta y}{h}\right),
			\qquad
			\frac{\partial_\eta f_\eta(y)}{f_\eta(y)}=\frac{y-\eta}{h},
		\end{equation*}
		it becomes
		\begin{equation*}
			\Psi(y)=\exp\left(\frac{\eta^{2}}{2h}-\frac{\eta y}{h}\right)
			-\lambda_1-\frac{\lambda_2}{h}(y-\eta),
		\end{equation*}
		a strictly convex function of $y$ less an affine one, hence strictly convex.  Its negativity
		set is an interval, which is \eqref{oa:eq:band-contract}, and since a strictly convex function
		vanishes at most twice the boundary is null, so no randomization is needed.  The first two
		displayed properties hold because $\tilde w$ is feasible for the auxiliary problem and the
		third because it is optimal.  Finally, if $w$ satisfies \eqref{eq:pointwise-global} then
		$\pi_{\tilde w}(\eta)-c\eta=\pi_w(\eta)-c\eta\geq\pi_w(0)\geq\pi_{\tilde w}(0)$, so
		$\tilde w$ satisfies it too.
	\end{proof}
	
	\subsubsection{Proof of Lemma~\ref{oa:lem:interior-slice}}
	
	\begin{proof}
		That the achievable payments form an interval with the stated endpoints is
		Lemma~\ref{oa:lem:band-reduction} together with linearity of $\pi(\eta)$ in $w$ on a convex set;
		the global condition binds at the lower end because lowering $\pi(\eta)$ tightens it.  For the
		upper end let $w^{*}=M\ind\{y\geq\eta-\zeta\sqrt h\}$ with $\zeta=\zeta(h)$ and let the agent
		choose $\eta'$, writing $v=(\eta'-\eta)/\sqrt h$.  Then $\pi(\eta')=M\Phi(\zeta+v)$ and, using
		$c\sqrt h=M\kappa$, his payoff is $M\Phi(\zeta+v)-c\eta-M\kappa v$, whose derivative in $v$ is
		$M[\phi(\zeta+v)-\kappa]$.  Since $\phi(\zeta)=\kappa$ and $\phi$ is symmetric and unimodal,
		this is positive exactly on $(-2\zeta,0)$ and negative outside $[-2\zeta,0]$.  The payoff is
		therefore maximized at $v=0$ or at the left endpoint $v=-\theta$, so \eqref{eq:pointwise-full-IC}
		reduces to the single comparison with $\eta'=0$, namely
		$D(\theta):=\Phi(\zeta)-\Phi(\zeta-\theta)-\kappa\theta\geq0$.  Now $D(0)=0$ and
		$D'(\theta)=\phi(\zeta-\theta)-\kappa$ is positive for $\theta\in(0,2\zeta)$ and negative for
		$\theta>2\zeta$, so $\{\theta\geq0:D(\theta)\geq0\}$ is an interval $[0,\bar\theta]$.  The
		hypothesis is exactly $D(\sqrt h)\geq0$, so $\bar\theta\geq\sqrt h$ and $w^{*}$ is incentive
		compatible at every $\eta\leq h$.
	\end{proof}
	
	\subsubsection{Proof of Lemma~\ref{oa:lem:face-exact}}
	
	\begin{proof}
		Write a deviation as $\eta'=h-u\sqrt h$ with $u\in[0,\sqrt h]$.  The agent's loss from it is
		$g(u)=M\bigl[\Phi(z)-\Phi(z-u)\bigr]-cu\sqrt h$, so \eqref{eq:pointwise-full-IC} says $g\geq0$ on
		$[0,\sqrt h]$.  Now $g(0)=0$ and $g'(u)=M\phi(z-u)-c\sqrt h$.  As $u$ rises, $z-u$ falls, so
		$\phi(z-u)$ rises while $z-u>0$ and falls thereafter: $g'$ is single-peaked.  Given
		$g'(0)\geq0$, which is $\phi(z)\geq\kappa$, it follows that $g'$ changes sign at most once and
		only from positive to negative, so $g$ is unimodal and attains its minimum over $[0,\sqrt h]$
		at an endpoint.  Since $g(0)=0$, $g\geq0$ if and only if $g(\sqrt h)\geq0$, which is
		$M\Delta(h,z)\geq ch$.  Conversely, if $\phi(z)<\kappa$ then $g'(0)<0$ and $g<0$ just to the
		right of zero.
	\end{proof}
	
	\subsubsection{Proof of Lemma~\ref{oa:lem:flexible-full-effort-face}}
	
	\begin{proof}
		By Lemma~\ref{oa:lem:face-exact} a bar is admissible if and only if
		$\min\{\Delta(h,z),\Lambda(h,z)\}\geq ch/M$, where $\Lambda(h,z)=\sqrt h\,\phi(z)$, so an
		admissible bar exists exactly when $\overline{\mathcal M}(h)\geq ch/M$.
		
		\emph{Where the maximum sits.}  Substituting $s\mapsto z-s$,
		$\Delta/\Lambda=h^{-1/2}\int_0^{\sqrt h}e^{zs-s^{2}/2}\,ds$, whose derivative in $z$ is
		$h^{-1/2}\int_0^{\sqrt h}s\,e^{zs-s^{2}/2}\,ds>0$; the limits are $0$ and $\infty$, so the two
		cross once, at some $z^{\dagger}(h)$.  At $z=0$ the ratio is $h^{-1/2}\int_0^{\sqrt h}e^{-s^2/2}ds<1$,
		so $z^{\dagger}>0$.  At $z=\sqrt h/2$, writing $a=\sqrt h/2$, the ratio exceeds one exactly when
		$2\Phi(a)-1>2a\phi(a)$; the difference vanishes at $a=0$ and has derivative $2a^{2}\phi(a)>0$,
		so $z^{\dagger}<\sqrt h/2$.  Hence $\min\{\Delta,\Lambda\}$ equals $\Delta$, increasing, to the
		left of $z^{\dagger}$ and $\Lambda$, decreasing on $z>0$, to its right, and the maximum in
		\eqref{oa:eq:Mbar-definition} is attained at $z^{\dagger}$.  Since $\Delta(h,\cdot)$ is strictly
		unimodal with peak $\overline\Delta(h)$ at $\sqrt h/2\neq z^{\dagger}$, the inequality
		$\overline{\mathcal M}(h)<\overline\Delta(h)$ is strict at every $h>0$.
		
		\emph{The two ends.}  The admissible bars are $z\in[z_D(h),\zeta(h)]$.  For the easy end,
		$z_E(h)>\sqrt h/2>z^{\dagger}(h)$ gives
		$\Lambda(h,z_E)<\Delta(h,z_E)=ch/M=\Lambda\bigl(h,\zeta(h)\bigr)$, and $\Lambda$ is decreasing
		on $z>0$, so $\zeta(h)<z_E(h)$.  For the difficult end,
		$\Delta(h,z_D)=ch/M\leq\Delta(h,z^{\dagger})$ with $\Delta(h,\cdot)$ increasing to the left of
		$\sqrt h/2$ gives $z_D(h)\leq z^{\dagger}(h)$, whence $\Lambda(h,z_D)\geq\Delta(h,z_D)=ch/M$:
		the local condition is slack there and $\pi'(h)=M\phi(z_D)/\sqrt h>c$.  The rent at an
		admissible bar is $M\Phi(z)-ch$, strictly increasing in $z$, so the two ends of the segment are
		the two ends of the interval of bars, which is \eqref{oa:eq:new-max-rent}.  At $z_D(h)$ the global
		constraint binds and the rent collapses to $M\Phi(z_D-\sqrt h)$ by
		\eqref{eq:threshold-moments}; at $\zeta(h)$ it does not, which is why the easy rent must be
		computed as $\E_h[w]-ch$ rather than as $\E_0[w]$.
	\end{proof}
	
	\subsubsection{Proof of Lemma~\ref{oa:lem:rent-per-unit}}
	
	\begin{proof}
		\emph{Step 1: the jump at the full-effort face.}  We first show
		$\widehat P_{\min}(h)\leq\liminf_{\eta\uparrow h}\widehat P_{\min}(\eta)$, equivalently
		$r_D(h)\leq M\,\mathcal R_\kappa(\sqrt h)$.  The local condition is an equality below the face
		and an inequality on it, so the constraint set at $\eta=h$ is the larger one.  Formally, take
		$\eta_n\uparrow h$ and let $w_n\in\widehat{\mathcal F}_h(\eta_n)$ attain
		$\widehat P_{\min}(\eta_n)$.  The ball $\{0\leq w\leq M\}$ is weak-$*$ compact, so a subnet
		converges to some $w^{*}$.  The maps $\eta\mapsto f_\eta$ and
		$\eta\mapsto\partial_\eta f_\eta$ are continuous in $L^1$, so the constraints pass to the
		limit: $\pi_{w^{*}}(h)-ch\geq\pi_{w^{*}}(0)$ and $\pi'_{w^{*}}(h)=c$, whence
		$\pi'_{w^{*}}(h)\geq c$ and $w^{*}\in\widehat{\mathcal F}_h(h)$.  Since
		$\pi_{w^{*}}(h)=\lim_n\widehat P_{\min}(\eta_n)$, the claim follows.  Equivalently and more
		directly: any solution of the pair \eqref{oa:eq:band-FOC} and \eqref{oa:eq:Pmin-flexible} at
		$\theta=\sqrt h$ satisfies \eqref{eq:pointwise-global} at $\eta=h$ by construction and
		\eqref{eq:pointwise-local} with equality, hence is admissible at full effort, so its payment
		bounds $\widehat P_{\min}(h)$ from above.
		
		\emph{Step 2: a closed form for $\mathcal R_\kappa'$.}  Parametrize the bands satisfying
		\eqref{oa:eq:band-FOC} by $a$, with $b=b(a)$ solving $\phi(b)=\phi(a)-\kappa$, and write
		$E_x=\phi(x+\theta)/\phi(x)=e^{-\theta x-\theta^{2}/2}$.
		Differentiating $\phi(b)=\phi(a)-\kappa$ and using $\phi'(x)=-x\phi(x)$ gives
		$b'(a)=a\phi(a)/\bigl(b\phi(b)\bigr)$.  Write $P(a)=\Phi(b)-\Phi(a)$,
		$\mathcal R(a,\theta)=\Phi(b+\theta)-\Phi(a+\theta)$ and
		$G(a,\theta)=P(a)-\mathcal R(a,\theta)-\kappa\theta$ for the slack in
		\eqref{eq:pointwise-global}.  Then
		\begin{equation*}
			P'(a)=\phi(b)b'(a)-\phi(a)=\phi(a)\,\frac{a-b}{b}\ <\ 0,
		\end{equation*}
		so minimizing the payment means taking the largest admissible $a$; call it $a^{*}(\theta)$,
		at which \eqref{eq:pointwise-global} binds.  Substituting $\phi(x+\theta)=\phi(x)E_x$,
		\begin{equation*}
			\mathcal R_a=\phi(a)\left(\frac ab E_b-E_a\right),
			\qquad
			\mathcal R_\theta=\phi(b)E_b-\phi(a)E_a,
			\qquad
			G_a=\frac{\phi(a)}{b}\Bigl[a(1-E_b)-b(1-E_a)\Bigr].
		\end{equation*}
		Differentiating $G(a^{*}(\theta),\theta)=0$ and
		$\mathcal R_\kappa(\theta)=\mathcal R(a^{*}(\theta),\theta)$, and using
		$G_\theta=-\mathcal R_\theta-\kappa$ and $G_a=P'-\mathcal R_a$,
		\begin{equation*}
			\mathcal R_\kappa'
			=\mathcal R_\theta-\mathcal R_a\frac{G_\theta}{G_a}
			=\frac{P'\mathcal R_\theta+\kappa\,\mathcal R_a}{P'-\mathcal R_a}.
		\end{equation*}
		Expanding the numerator and substituting $\kappa=\phi(a)-\phi(b)$, the terms in $E_a$ and
		$E_b$ collect as
		$P'\mathcal R_\theta+\kappa\mathcal R_a
		=\tfrac{\phi(a)}{b}\bigl(a\phi(a)-b\phi(b)\bigr)(E_b-E_a)$,
		so dividing by $G_a$,
		\begin{equation}
			\mathcal R_\kappa'(\theta)
			=
			\frac{\bigl(a\phi(a)-b\phi(b)\bigr)\bigl(E_b-E_a\bigr)}
			{a\,(1-E_b)-b\,(1-E_a)} .
			\label{oa:eq:rent-derivative}
		\end{equation}
		
		\emph{Step 3: signing it.}  Since $\phi(b)<\phi(a)$ we have $|b|>|a|$, which together with
		$b>a$ forces $b>0$; as $\theta>0$, $E_b<E_a$ and the second factor of the numerator is
		negative.  Differentiating $\pi(\eta')=M[\Phi(b+\theta')-\Phi(a+\theta')]$ twice at
		$\theta'=0$ gives $\pi''(\eta)=\tfrac{M}{h}\bigl(a\phi(a)-b\phi(b)\bigr)$, so the first factor
		is nonpositive precisely by the second-order condition \eqref{oa:eq:flexible-soc}.  Finally, the
		denominator equals $b\,G_a/\phi(a)$ with $\phi(a)>0$ and $b>0$; since $a^{*}$ is the largest
		$a$ at which $G\geq0$, and $G(a^{*},\theta)=0$, we have $G_a(a^{*},\theta)\leq0$.  The
		numerator of \eqref{oa:eq:rent-derivative} is therefore nonnegative and the denominator
		nonpositive, so $\mathcal R_\kappa'\leq0$.
		
		\emph{Conclusion.}  Since $\mathcal R_\kappa\geq0$ and $\mathcal R_\kappa'\leq0$,
		\begin{equation*}
			\frac{d}{d\theta}\left[\frac{\mathcal R_\kappa(\theta)}{\theta}\right]
			=\frac{\theta\,\mathcal R_\kappa'(\theta)-\mathcal R_\kappa(\theta)}{\theta^{2}}\ \leq\ 0 .
		\end{equation*}
		Hence for $\theta=\eta/\sqrt h<\sqrt h$, using \eqref{oa:eq:rent-normalised} and then Step~1,
		\begin{equation*}
			\frac{\rho(\eta)}{\eta}
			=\frac{M\,\mathcal R_\kappa(\theta)}{\theta\sqrt h}
			\ \geq\ \frac{M\,\mathcal R_\kappa(\sqrt h)}{h}
			\ \geq\ \frac{r_D(h)}{h}=\frac{\rho(h)}{h},
		\end{equation*}
		which is \eqref{oa:eq:rent-per-unit}.
	\end{proof}
	
	\begin{remark}
		\label{oa:rem:sign-conditions}
		The two signs invoked in Step~3 of the proof are economic rather than technical.  That
		$a\phi(a)\leq b\phi(b)$ is the agent's second-order condition: at the difficult end of the band
		family the contract would make the target a local \emph{minimum} of the agent's payoff, and
		such contracts implement nothing.  That $a(1-E_b)\leq b(1-E_a)$ says the payment-minimizing
		band sits on the boundary of the region where the agent still prefers the target to abandoning
		effort; pushing the bar any further would break that constraint.  The
		two never conflict: at the extreme $a=\zeta(h)$ the band degenerates to an upper-tail bonus,
		where $a\phi(a)-b\phi(b)=\zeta(h)\kappa>0$ and \eqref{oa:eq:flexible-soc} fails outright, so the
		binding restriction on $a$ is always \eqref{eq:pointwise-global}.  Numerically, over the entire
		admissible range $\kappa\in(0,\phi(0))$ and $\theta$ up to $12$, the global constraint binds at
		every optimum and \eqref{oa:eq:flexible-soc} is strictly slack.
	\end{remark}
	
	\subsubsection{Proof of Proposition~\ref{oa:prop:flexible-extreme-points}}
	
	\begin{proof}
		Let $\widehat{\mathcal A}(h)$ be defined as in \eqref{oa:eq:flexible-action-set} with
		$[\widehat P_{\min},\widehat P_{\max}]$ in place of $[P_{\min},P_{\max}]$.  By
		\eqref{oa:eq:relaxation-inclusion},
		$\mathcal A^{\mathrm{flex}}(h)\subseteq\widehat{\mathcal A}(h)$.
		
		We first show $\widehat{\mathcal A}(h)$ lies in the quadrilateral with vertices $A_B$, $A_D$,
		$\widetilde A_E$, $A_G$.  Because $(\eta,P)\mapsto(r,s)$ is linear and invertible, it is enough
		to show the feasible $(\eta,P)$ region lies in the quadrilateral with vertices $(0,0)$,
		$(h,\widehat P_{\min}(h))$, $(h,\widehat P_{\max}(h))$ and $(0,M)$.  The region lies between
		the graphs of $\widehat P_{\min}$ and $\widehat P_{\max}$, so it suffices to bound each.
		
		\emph{Upper edge.}  By \eqref{oa:eq:Pmax-flexible},
		$\widehat P_{\max}(\eta)\leq M\Phi(\zeta(h))=\widehat P_{\max}(h)$, while the upper edge of the
		quadrilateral is the chord from $(0,M)$ to $(h,\widehat P_{\max}(h))$, which lies weakly above
		that constant because $M\geq\widehat P_{\max}(h)$.  This step is unconditional.
		
		\emph{Lower edge.}  The lower edge is the chord from $(0,0)$ to $(h,\widehat P_{\min}(h))$, so
		the requirement is $\widehat P_{\min}(\eta)\geq(\eta/h)\widehat P_{\min}(h)$, which is
		Lemma~\ref{oa:lem:rent-per-unit}.  By Lemma~\ref{oa:lem:flexible-full-effort-face},
		$\widehat P_{\min}(h)=M\Phi(z_D(h))$ and $\widehat P_{\max}(h)=M\Phi(\zeta(h))$, so the
		quadrilateral is exactly $\operatorname{co}\{A_B,A_D,\widetilde A_E,A_G\}$.
		
		It remains to place the four vertices in $\mathcal A^{\mathrm{flex}}(h)$.  $A_B$ and $A_G$ are
		the constant contracts $w\equiv0$ and $w\equiv M$, covered by the no-effort regime of
		Section~\ref{oa:sec:regimes}; $A_D$ and $\widetilde A_E$ are covered by
		Lemma~\ref{oa:lem:flexible-full-effort-face}(ii).  Hence
		\begin{equation*}
			\operatorname{co}\bigl\{A_B,A_D,\widetilde A_E,A_G\bigr\}
			\subseteq\operatorname{co}\mathcal A^{\mathrm{flex}}(h)
			\subseteq\operatorname{co}\widehat{\mathcal A}(h)
			=\operatorname{co}\bigl\{A_B,A_D,\widetilde A_E,A_G\bigr\},
		\end{equation*}
		and the chain closes.
	\end{proof}
	
	\begin{remark}
		\label{oa:rem:condition-evidence}
		Because \eqref{oa:eq:rent-normalised} eliminates the horizon, the whole argument can be checked
		exhaustively rather than sampled.  Over the entire admissible range $\kappa\in(0,\phi(0))$ ---
		feasibility forces $\kappa<\phi(0)=1/\sqrt{2\pi}$, since
		$\overline{\mathcal M}(h)\leq\sqrt h\,\phi(0)$ by \eqref{oa:eq:Mbar-definition} --- and $\theta$ up
		to $12$, which covers every $(h,c,M)$ satisfying $\overline{\mathcal M}(h)>ch/M$ with
		$h\leq144$: the closed form \eqref{oa:eq:rent-derivative} agrees with numerical differentiation to
		six decimals, both sign conditions of Step~3 hold at every point, and the jump of Step~1 is
		strictly positive throughout.
	\end{remark}
	
	\section{Beyond Gaussian noise}
	\label{oa:general-noise}
	
	Sections~\ref{sec:single-phase}--\ref{sec:endogenous-review} use Brownian
	output.  This appendix separates the parts of the analysis that depend only on
	the geometry of a binary experiment from those that use Gaussian
	short-horizon asymptotics.  Effort remains binary in both phases; pointwise
	effort is studied in Section~\ref{sec:pointwise-effort}.
	
	Section~\ref{sec:general-noise} states the separation established here: the phase
	problem and the ordered fixed-review menu follow from binary-experiment geometry,
	whereas the review-date results require additional short-horizon properties of the
	likelihood ratio.  Proofs are collected in Section~\ref{oa:app:general-noise}.
	
	\subsection{A general phase experiment}
	\label{oa:sec:general-phase}
	
	Fix a phase of length $h>0$ and replace the Brownian increment by a primitive
	binary experiment.  The phase produces a public signal valued in a standard
	Borel space $\mathcal Y_h$, distributed $P_1^h$ if the agent works and $P_0^h$
	if he shirks.  Let $\nu_h=P_0^h+P_1^h$, write
	$p_j^h=dP_j^h/d\nu_h$, and define the extended likelihood ratios
	$\Lambda_h=p_1^h/p_0^h$ and $\ell_h=p_0^h/p_1^h$,
	with the usual conventions at zero densities.  Under two-sided evidence these
	ratios are finite and reciprocal almost surely.  Working costs the agent $ch$ and
	raises the principal's expected gross output by $m(h)$ relative to shirking;
	as in Section~\ref{sec:environment} the shirking baseline is normalized to zero,
	so $m(h)$ is throughout an \emph{incremental} expected output.\footnote{If
		shirking itself produces output $m_0(h)>0$, that baseline cancels from the
		within-phase work--shirk comparison but not automatically from the comparison
		with stopping.  The formulas below then require the stop payoff to be adjusted
		unless the same baseline accrues under the outside option.  In the Brownian
		model $m_0\equiv0$ and $m(h)=h$.}
	
	A contract is a measurable $w:\mathcal Y_h\to[0,M]$; let $\mathcal F_h$ be the
	set satisfying $\E_1[w]-\E_0[w]\geq ch$.  Writing $\varphi=w/M$ and
	$\alpha(\varphi)=\E_0[\varphi]$ and $\beta(\varphi)=\E_1[\varphi]$,
	the contract is a randomized test of shirking against work, of size $\alpha$
	and power $\beta$.  Expected compensation under work is $M\beta$, the
	incentive spread is $M(\beta-\alpha)$, and the agent's rent when
	\eqref{eq:IC-h} binds is $M\alpha$: \emph{the rent is the cap times
		the size of the test}.  Everything in the phase problem therefore depends on
	the experiment only through its \emph{receiver operating characteristic}, the
	compact convex set
	$\mathcal R_h=\bigl\{(\alpha(\varphi),\beta(\varphi)):0\leq\varphi\leq1\bigr\}\subset[0,1]^2$.
	Let $\beta_h(\cdot)$ be the upper boundary of $\mathcal R_h$ --- the power of
	the most powerful test of each size --- and set
	$g_h(\alpha)=\beta_h(\alpha)-\alpha$ and $\delta(h)=ch/M$.
	Under mutual absolute continuity, three standard facts are all we use.  The map $\beta_h$ is concave and
	nondecreasing with $\beta_h(0)=0$ and $\beta_h(1)=1$; its slope at $\alpha$ is
	the likelihood ratio at the Neyman--Pearson threshold of that size, so that
	$g_h$ is concave with $g_h(0)=g_h(1)=0$ and slope $\Lambda_h-1$; and
	$\max_\alpha g_h(\alpha)=\operatorname{TV}_h:=\sup_A\bigl[P_1^h(A)-P_0^h(A)\bigr]$,
	the total variation distance between the two phase laws.\footnote{Where
		$\Lambda_h$ has atoms, $\beta_h$ is piecewise linear and ``slope'' means the
		appropriate one-sided derivative; the maximizer of $g_h$ is then a vertex and
		need not be a size at which $\Lambda_h$ equals one.  We therefore define roots
		by infima and suprema below.  Smooth comparative statics require the additional
		differentiability conditions stated where they are used.}
	Thus the paper's implementability threshold becomes
	$M_I(h,c)=ch/\operatorname{TV}_h$.
	
	The translation into the notation of Section~\ref{sec:single-phase} is exact.
	With $P_j^h=N(jh,h)$ and $m(h)=h$ we have $\Lambda_h(y)=\exp(y-h/2)$ and,
	parameterizing sizes by $\alpha=\Phi(z-\sqrt h)$,
	\begin{equation*}
		\beta_h(\alpha)=\Phi(z),
		\qquad
		g_h(\alpha)=\Delta(h,z),
		\qquad
		\overline\Delta(h)=\operatorname{TV}_h=2\Phi\!\left(\frac{\sqrt h}{2}\right)-1 .
	\end{equation*}
	The incentive index $\Delta(h,\cdot)$ of Section~\ref{sec:single-phase} is therefore
	the ROC gap function, Figure~\ref{fig:incentive-index} is a picture of a
	concave function vanishing at both ends, and its two roots $z_D(h)$ and
	$z_E(h)$ are the two ends of a superlevel set of that function.  This is why
	the figure has the shape it does, and it is the reason the next lemma requires
	no parametric distributional form.
	
	\begin{lemma}
		\label{oa:lem:general-phase-problem}
		Fix $h>0$ and $M>0$, and suppose $P_1^h$ and $P_0^h$ are mutually
		absolutely continuous.
		\begin{enumerate}[(i)]
			\item $\mathcal F_h\neq\emptyset$ if and only if
			$ch\leq M\operatorname{TV}_h$.
			\item If $ch<M\operatorname{TV}_h$, every contract that minimizes expected
			compensation over $\mathcal F_h$ is a likelihood-ratio test
			$w=M\ind\{\Lambda_h>t\}$, with fractional payment permitted on
			$\{\Lambda_h=t\}$, whose size is
			$\alpha_L(h):=\inf\{\alpha:g_h(\alpha)\geq\delta(h)\}$.
			\item Writing
			$\alpha_H(h):=\sup\{\alpha:g_h(\alpha)\geq\delta(h)\}$, so that
			$[\alpha_L(h),\alpha_H(h)]$ is the compact superlevel interval, and
			\begin{equation*}
				r_L(h)=M\alpha_L(h),
				\qquad
				r_H(h)=M\alpha_H(h),
				\qquad
				\Sigma(h)=m(h)-ch,
			\end{equation*}
			the set of continuation rents $r=\E_1[w]-ch$ attainable while implementing work is
			exactly $[r_L(h),r_H(h)]$, and the full-effort frontier is the horizontal segment
			$s=\Sigma(h)$ over that interval, along which the principal's payoff
			$u=\Sigma(h)-r$ falls one for one with the rent conceded.
		\end{enumerate}
	\end{lemma}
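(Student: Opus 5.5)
The plan is to work entirely in the size--power plane, using the three ROC facts recorded just before the lemma: $\beta_h$ is concave with $\beta_h(0)=0$ and $\beta_h(1)=1$; $g_h=\beta_h-\alpha$ is concave and vanishes at both ends; and $\max g_h=\operatorname{TV}_h$. Every contract $w=M\varphi$ maps to a point $(\alpha,\beta)\in\mathcal R_h$. Its expected cost under work is $M\beta$, and its incentive spread is $M(\beta-\alpha)$. So $w\in\mathcal F_h$ if and only if $\beta-\alpha\ge\delta(h)$.

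For (i), note that $\beta-\alpha\le g_h(\alpha)\le\operatorname{TV}_h$ for every point of $\mathcal R_h$. Feasibility therefore requires $\delta(h)\le\operatorname{TV}_h$. Conversely, the maximizer of $g_h$ is a point of $\mathcal R_h$, since a compact upper boundary is attained by a Neyman--Pearson test, so the bound is achieved.

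For (ii), I would mirror Lemma~\ref{lem:threshold-reduction}. Fix a feasible $w$ with power $\beta$. The Neyman--Pearson lemma gives a test $\varphi^*=\ind\{\Lambda_h>t\}$, randomized on $\{\Lambda_h=t\}$, with the same size $\alpha$ and weakly larger power. Its power is strictly larger unless $\varphi$ itself has that form $P_1^h$-a.s.; this uses mutual absolute continuity.

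A cost-minimizer should minimize $\beta$ subject to $\beta\ge\alpha+\delta$ and $(\alpha,\beta)\in\mathcal R_h$. I would argue that the minimizer lies on the upper boundary at $\alpha_L$ with $\beta_h(\alpha_L)=\alpha_L+\delta$. Suppose instead a point has $\beta-\alpha\ge\delta$ but $\beta>\beta_h(\alpha_L)$. Either it sits strictly inside or above the constraint, and then one can slide toward smaller $\beta$. Or it is a boundary point with $\alpha>\alpha_L$, and then $\beta_h$ nondecreasing gives $\beta_h(\alpha)\ge\beta_h(\alpha_L)$. Equality there would force $g_h(\alpha)<g_h(\alpha_L)$, and concavity combined with the definition of $\alpha_L$ as an infimum rules out a cheaper point.

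The delicate case is a flat piece of $\beta_h$ to the right of $\alpha_L$. On such a piece the slope is $0$, so $\Lambda_h=0$ there, which is excluded by mutual absolute continuity. Hence $\beta_h$ is strictly increasing. This strictness is the step I expect to need the most care, together with handling atoms through one-sided slopes and the infimum definition. Given it, minimal cost is exactly $\beta_h(\alpha_L)$, and the NP uniqueness statement makes every minimizer a likelihood-ratio test of size $\alpha_L$.

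For (iii), the feasible rents are $r=M\beta-ch$. At a binding point this equals $M\alpha$. I would show the attainable set of $\beta$ is $[\beta_h(\alpha_L),\beta_h(\alpha_H)]$.

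The lower end comes from (ii). For the upper end, feasibility needs $\beta\le\beta_h(\alpha)$ and $\alpha\le\beta-\delta$. Because $g_h$ is concave, $\{\alpha:g_h(\alpha)\ge\delta\}$ is an interval, closed by continuity of the concave function on $[0,1]$ under strict feasibility. The largest feasible power is therefore attained at $\alpha_H$. Both ends bind, so the extreme rents are $M\alpha_L$ and $M\alpha_H$.

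Convexity of $\mathcal F_h$ and linearity of $r$ in $w$ fill the interval, exactly as in Lemma~\ref{lem:attainable-rents}. Finally, surplus under work is $m(h)-ch=\Sigma(h)$ regardless of the contract, so $u=\Sigma(h)-r$.
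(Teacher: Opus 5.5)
Your proposal is correct and follows essentially the paper's route: work in the size--power plane, place the cheapest feasible test at the left endpoint $\alpha_L(h)$ of the superlevel interval of the concave gap $g_h$, use Neyman--Pearson uniqueness to pin down the form of every minimizer, and fill $[r_L(h),r_H(h)]$ by mixing the two binding extreme tests (the paper gets bindingness by scaling $\varphi$ toward $0$, or by mixing with $\varphi\equiv1$). Two small corrections: your opening same-size/larger-power replacement raises cost rather than lowering it, so what actually does the work is NP uniqueness at the optimal point; and the strict-monotonicity ``delicate case'' is unnecessary, since every feasible point already has $\beta\geq\alpha+\delta(h)\geq\alpha_L(h)+\delta(h)=\beta_h(\alpha_L(h))$, with equality only at $\alpha=\alpha_L(h)$.
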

	
	Whenever work is implementable, let
	$\beta_L(h)=\beta_h(\alpha_L(h))$ and use the paper's no-review notation for the
	cheapest work-inducing contract:
	$R^{\NR}(h;c,M)=r_L(h)$ and $V^{\NR}(h;c,M)=\Sigma(h)-r_L(h)=m(h)-M\beta_L(h)$.
	The contract is profitable relative to stopping exactly when
	$V^{\NR}(h;c,M)\geq0$.
	
	Lemma~\ref{oa:lem:general-phase-problem} is the implementability half of
	Lemma~\ref{lem:Mbar}, Lemma~\ref{lem:threshold-reduction}, and
	Lemma~\ref{lem:attainable-rents} at once, with $\overline\Delta(h)$
	read as $\operatorname{TV}_h$.  What the Gaussian added was the extra
	conclusion, in Lemma~\ref{lem:threshold-reduction}, that a likelihood-ratio test is
	a \emph{cutoff in the phase signal}.  That conclusion requires monotonicity of
	$\Lambda_h$ but is not needed for the continuation-rent geometry.  The distinct
	monotonicity condition imposed below concerns $\Lambda_\tau$ and orders review
	actions in the observed review score $X_\tau$.
	
	\subsection{Five assumptions behind the incentive ladder}
	\label{oa:sec:general-assumptions}
	
	Fix a review date $\tau\in(0,T)$ and write $h=T-\tau$.  The first phase
	generates the experiment $(P_1^\tau,P_0^\tau)$ and the continuation phase the
	experiment $(P_1^h,P_0^h)$.  We maintain throughout that the output process has
	stationary and independent increments, so that the continuation problem
	depends on the review state only through the remaining horizon $h$; this is what
	licenses the reduction at the head of
	Section~\ref{sec:extreme-points}, and it is a property of the Brownian
	model that has nothing to do with normality.\footnote{Without it the phase
		laws depend on $X_\tau$, so $r_L$, $r_H$ and $\Sigma$ become functions of the
		review state.  The statewise reduction of
		Proposition~\ref{prop:four-region-review-menu} survives, but the ordering does not:
		the ladder ranks four actions by the rent they deliver, and that ranking is
		informative only if the rents themselves do not move with the state.  Models
		with learning about project quality, or with effort-dependent volatility, must
		confront this separately.}
	
	Consider the following conditions.
	
	\begin{assumption}[Two-sided evidence]
		\label{oa:as:two-sided}
		The two laws are mutually absolutely continuous in both phases:
		$P_1^t\sim P_0^t$ for $t\in\{\tau,h\}$.
	\end{assumption}
	
	\begin{assumption}[Slack detection]
		\label{oa:as:slack-detection}
		$ch<M\operatorname{TV}_h$ and $c\tau<M\operatorname{TV}_\tau$.
	\end{assumption}
	
	\begin{assumption}[Profitable continuation]
		\label{oa:as:profitable-continuation}
		$\Sigma(h)=m(h)-ch>0$.
	\end{assumption}
	
	\begin{assumption}[Monotone review evidence]
		\label{oa:as:mlrp-review}
		$\Lambda_\tau$ is nondecreasing in the performance measure $X_\tau$.
	\end{assumption}
	
	\begin{assumption}[Randomization]
		\label{oa:as:randomisation}
		Either the law of $\Lambda_\tau$ under $P_1^\tau$ is atomless, or the
		principal may publicly randomize between adjacent review actions.
	\end{assumption}
	
	Assumption~\ref{oa:as:two-sided} says that no realization of the phase signal is
	conclusive: neither work nor shirking can be ruled out by any observation.
	Assumption~\ref{oa:as:slack-detection} is the strict phase-by-phase analogue of
	the implementability condition \eqref{eq:bar-condition}.  Unlike
	Assumption~\ref{as:maintained}, it concerns implementability only; profitability
	is imposed separately below.
	Assumption~\ref{oa:as:profitable-continuation} holds automatically in the Brownian
	model, where $\Sigma(h)=(1-c)h>0$ because $c<1$; with a general output
	technology it is a restriction.  Assumption~\ref{oa:as:mlrp-review} is the
	monotone likelihood-ratio property, which the Gaussian delivers through
	\eqref{eq:Q-lambda-definition}.  Assumption~\ref{oa:as:randomisation} is what
	makes the review cutoffs well defined and the primal problem attain the dual
	bound; it is automatic for the Brownian review signal.
	
	\begin{proposition}
		\label{oa:prop:general-ladder}
		Suppose Assumptions~\ref{oa:as:two-sided}--\ref{oa:as:randomisation} hold at
		$(\tau,h)$.  Then
		\begin{equation}
			0<r_L(h)<r_H(h)<M ,
			\label{oa:eq:general-rent-ordering}
		\end{equation}
		the convex hull of the review action set
		$\mathcal A(h)=\mathcal A_0(h)\cup\mathcal A_1(h)$ has exactly the four
		extreme points
		\begin{equation*}
			A_B=(0,0),
			\quad
			A_L=\bigl(r_L(h),\Sigma(h)\bigr),
			\quad
			A_H=\bigl(r_H(h),\Sigma(h)\bigr),
			\quad
			A_G=(M,0),
		\end{equation*}
		and the phase reduction, pre-review Lagrangian, and fixed-review action
		ordering hold after the substitutions
		$\overline\Delta\mapsto\operatorname{TV}$ and $(1-c)h\mapsto\Sigma(h)$.
		The review-date derivatives and endpoint results require the additional
		assumptions studied below.  In particular the support function is
		\begin{equation}
			\mathcal V(q,h)
			=
			\max\bigl\{0,\ \Sigma(h)+qr_L(h),\ \Sigma(h)+qr_H(h),\ qM\bigr\},
			\label{oa:eq:general-support}
		\end{equation}
		the index is $Q_\lambda=\lambda(1-\ell_\tau)-1$, the optimal menu selects
		$A_B,A_L,A_H,A_G$ in that order as $Q_\lambda$ crosses the three
		breakpoints
		\begin{equation}
			q_0=-\frac{\Sigma(h)}{r_L(h)}
			\ <\ 0\ <\
			q_2=\frac{\Sigma(h)}{M-r_H(h)},
			\label{oa:eq:general-breakpoints}
		\end{equation}
		and the fixed-review value is
		\begin{equation}
			V^R(\tau)
			=
			\inf_{\lambda\geq0}
			\Bigl\{m(\tau)-\lambda c\tau
			+\E_1\bigl[\mathcal V(Q_\lambda,h)\bigr]\Bigr\}.
			\label{oa:eq:general-review-dual}
		\end{equation}
		For a positive supporting multiplier away from score ties, the menu has at
		most four ordered regions in $X_\tau$; public randomization implements the
		corresponding mixture at an atom or a zero-multiplier tie.
	\end{proposition}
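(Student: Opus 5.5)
The argument transports the Gaussian proof of Proposition~\ref{prop:four-region-review-menu} to the ROC language of Lemma~\ref{oa:lem:general-phase-problem}, in four steps.

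\emph{Step 1: the rent ordering \eqref{oa:eq:general-rent-ordering}.} Under Assumption~\ref{oa:as:two-sided}, $g_h$ is concave on $[0,1]$ with $g_h(0)=g_h(1)=0$ and maximum $\operatorname{TV}_h$. Assumption~\ref{oa:as:slack-detection} gives $0<\delta(h)<\operatorname{TV}_h$. The superlevel set $\{g_h\geq\delta(h)\}$ is therefore a nondegenerate compact interval $[\alpha_L,\alpha_H]$. Neither endpoint can equal $0$ or $1$, because $g_h$ vanishes there while $\delta(h)>0$. Strictness $\alpha_L<\alpha_H$ follows because a concave function whose maximum is strictly above the level cannot meet its superlevel set in a single point. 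Multiplying by $M$ yields $0<r_L<r_H<M$. Part~(iii) of Lemma~\ref{oa:lem:general-phase-problem} then gives $\mathcal A_1(h)=[r_L,r_H]\times\{\Sigma(h)\}$. As before, $\mathcal A_0=[0,M]\times\{0\}$, since constant wages $w\equiv r$ implement shirking with rent $r$.

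\emph{Step 2: the convex hull and its four extreme points.} Assumption~\ref{oa:as:profitable-continuation} gives $\Sigma>0$, so $\operatorname{co}\mathcal A(h)$ is a trapezoid with the two parallel horizontal segments as sides. Because $0<r_L<r_H<M$, the endpoints $A_B,A_L,A_H,A_G$ are exactly its vertices. The support function \eqref{oa:eq:general-support} is the maximum of the linear functional $s+qr$ over these vertices. The breakpoints are obtained by equating adjacent vertex values, exactly as in \eqref{eq:three-breakpoints}. Their ordering $q_0<0<q_2$ follows from the signs of $\Sigma$, $r_L$ and $M-r_H$.

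\emph{Step 3: duality and the value formula \eqref{oa:eq:general-review-dual}.} The pre-review problem is linear in the menu, and obedience is a single integral constraint. Assumption~\ref{oa:as:slack-detection} at $\tau$ supplies a Slater point: use $A_B$ and $A_G$ as a most powerful test, with incentive spread $M\operatorname{TV}_\tau>c\tau$. Convexifying the action set pointwise does not change the value, by linearity. Lagrangian duality for such problems (a convex program over $L^\infty$ with one constraint, as in Lemma~\ref{lem:supporting-multiplier}(ii)) therefore yields a supporting $\lambda^*\geq0$ and the inf-formula. The pointwise maximization is legitimate because the Lagrangian is additively separable across states. Factoring out $p_1^\tau$ produces the index $Q_\lambda=\lambda(1-\ell_\tau)-1$.

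\emph{Step 4: the ordering in $X_\tau$.} For $\lambda>0$, $Q_\lambda$ is a nondecreasing function of $\Lambda_\tau$. By Assumption~\ref{oa:as:mlrp-review}, it is therefore nondecreasing in $X_\tau$. As $Q_\lambda$ rises, the optimal vertex moves through $A_B,A_L,A_H,A_G$, giving at most four ordered regions.

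\emph{Main obstacle: attainment at ties.} Ties can now have positive probability, either at atoms of $\Lambda_\tau$ or under a zero multiplier, as in the knife-edge of Lemma~\ref{lem:supporting-multiplier}(iii). On a tie set, only a specific mixture of the two tied adjacent actions makes obedience bind, so complementary slackness may fail with pure actions. Here I would invoke Assumption~\ref{oa:as:randomisation}. Either the tie sets are null, or public randomization between the adjacent tied vertices lets the primal attain the dual value. The required mixing weight is pinned down by the binding obedience constraint, chosen by continuity of the incentive supply as in the intermediate-value argument of Lemma~\ref{lem:supporting-multiplier}(iv). The randomization is also what keeps the supply function continuous. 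Without it, atoms make that function jump and the continuity argument fails.
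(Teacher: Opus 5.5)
Your proposal is correct and follows the paper's proof essentially step for step. The shared steps are: the superlevel interval of the concave ROC gap $g_h$, with $g_h(0)=g_h(1)=0$, for the rent ordering; the four vertices of the hull of the two parallel segments; the statewise Lagrangian affine in $r$ with slope $Q_\lambda$ and breakpoints ordered by $\Sigma(h)>0$; monotone review evidence for the ordering in $X_\tau$; and Assumption~\ref{oa:as:randomisation} plus a Slater point for attainment. Your intermediate-value choice of tie-breaking weights at $\lambda^*$ is just a constructive version of the paper's remark that this assumption convexifies the aggregate constraint set.

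One small slip is in your ``either the tie sets are null, or randomize'' dichotomy. Under the atomless branch, a zero-multiplier tie at the knife edge $\Sigma(h)=r_L(h)$ is not a null set. It needs no randomization either: $\lambda^*=0$ supports only if $r_L(h)\operatorname{TV}_\tau\geq c\tau$, and then the deterministic menu assigning $A_L$ on $\{\Lambda_\tau>1\}$ and $A_B$ elsewhere is already obedient and attains the dual value.
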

	
	The whole of the ladder is thus one geometric fact: bounded compensation applied to a signal
	that is informative but never conclusive generates a nondegenerate \emph{interval} of
	continuation rents compatible with continuation work, strictly inside the interval $[0,M]$ of
	rents compatible with stopping.  The two stops are the two extreme rents, the two work
	actions are the two ends of the work interval, and a monotone review score ranks all four by
	the rent they deliver.
	
	\subsection{Three identities that replace the Gaussian formulas}
	\label{oa:sec:general-identities}
	
	Proposition~\ref{oa:prop:general-ladder} preserves the structure of the analysis;
	it does not preserve the closed forms, which used the normal distribution
	function throughout.  Three exact identities replace them, and each specializes
	to a display already derived in
	Sections~\ref{sec:single-phase}--\ref{sec:endogenous-review}.
	
	\paragraph{Rent is cost divided by evidence.}  Recall that $\beta_L(h)$ is the
	normalized expected payment under work at the cheapest contract, and define the
	\emph{evidence ratio} of the reward
	event,
	\begin{equation}
		\overline\Lambda_h
		=\frac{\beta_L(h)}{\alpha_L(h)}
		=\frac{\E_1[\varphi_L]}{\E_0[\varphi_L]}\ >\ 1 .
		\label{oa:eq:evidence-ratio}
	\end{equation}
	If the generalized test is implemented with an independent uniform draw, the
	last ratio is the work-to-shirk likelihood ratio averaged over the event on
	which the cap is paid.  This interpretation also covers a fractional payment at
	a likelihood-ratio atom.
	Since $M[\beta_L-\alpha_L]=ch$ and $R^{\NR}(h)=r_L(h)=M\alpha_L(h)$,
	\begin{equation}
		r_L(h)=\frac{ch}{\overline\Lambda_h-1},
		\qquad\text{and dually}\qquad
		M-ch-r_H(h)=\frac{ch}{\overline\Lambda^-_h-1},
		\label{oa:eq:rent-identity}
	\end{equation}
	where $\beta_H(h)=\beta_h(\alpha_H(h))$ and
	$\overline\Lambda^-_h=(1-\alpha_H)/(1-\beta_H)$ is the shirk-to-work odds
	ratio of the event on which payment is withheld at the rent-maximizing
	contract.  Equation~\eqref{oa:eq:rent-identity} isolates the role of limited
	liability: the rent conceded per unit of effort cost is the reciprocal of the
	excess evidence carried by the rewarded event.  The two ratios in
	\eqref{oa:eq:rent-identity} are
	\emph{average} likelihood ratios over the reward and withholding events, not
	the marginal likelihood ratio at the cutoff, which is the distinct object
	$\ell_h$ appearing below.  Letting $M\to\infty$ drives $\alpha_L(h)$ to zero
	and $\overline\Lambda_h$ up to $\beta_h'(0^+)=\Lambda^{\max}_h$, the essential
	supremum of the likelihood ratio, so
	$\lim_{M\to\infty}R^{\NR}(h;c,M)=ch/(\Lambda^{\max}_h-1)$.
	The Gaussian case is $\Lambda^{\max}_h=\infty$, in which the limit is zero; this is the
	step in the proof of Lemma~\ref{lem:Mbar} that drives $r_D(h)\to0$ as $M\to\infty$.  When the
	available likelihood ratio is bounded --- for example under some coarse
	monitoring technologies --- agency rent does
	\emph{not} vanish as the cap loosens.  The closing discussion of
	Section~\ref{sec:single-phase} should accordingly be read as a statement
	about signals with unbounded likelihood ratios: it is the boundedness of the
	\emph{evidence}, and only derivatively the boundedness of the \emph{prize},
	that keeps rents alive.  A positive rent creates scope for a review to recycle
	continuation utility as incentives, but does not by itself prove that a review
	dominates the no-review contract.
	
	\paragraph{The symmetry identity.}  The relation
	$r_L(h)+r_H(h)=M-ch$, which in the Brownian model follows from
	$z_E(h)=\sqrt h-z_D(h)$, is a symmetry property of the experiment.
	
	\begin{proposition}
		\label{oa:prop:general-symmetry}
		Under Assumption~\ref{oa:as:two-sided} the following are equivalent.
		\begin{enumerate}[(i)]
			\item $r_L(h)+r_H(h)=M-ch$ for every $(c,M)$ with
			$0<ch<M\operatorname{TV}_h$.
			\item $\mathcal R_h$ is invariant under the anti-diagonal reflection
			$(\alpha,\beta)\mapsto(1-\beta,1-\alpha)$.
			\item The experiment obtained by interchanging the two hypotheses,
			$(P_0^h,P_1^h)$, is Blackwell equivalent to $(P_1^h,P_0^h)$.
		\end{enumerate}
		A sufficient condition is the existence of a measurable map $\sigma$ with
		$\sigma_\#P_1^h=P_0^h$ and $\sigma_\#P_0^h=P_1^h$; in particular the
		identity holds for every location family with a symmetric, everywhere
		positive noise density, with $\sigma(y)=m-y$ at mean shift $m$.  It fails,
		for instance, when effort shifts a Poisson arrival rate.
	\end{proposition}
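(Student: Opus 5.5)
The plan is to route everything through the receiver operating characteristic $\mathcal R_h$ and two elementary facts about it. First, $\mathcal R_h$ is always invariant under the central reflection $(\alpha,\beta)\mapsto(1-\alpha,1-\beta)$, because replacing $\varphi$ by $1-\varphi$ maps one test into another. Second, the reflection $\rho(\alpha,\beta)=(1-\beta,1-\alpha)$ preserves the gap $\beta-\alpha$. It also reverses both coordinate orders, so it maps the Pareto frontier of $\mathcal R_h$ (least size for given power, greatest power for given size) onto the Pareto frontier of $\rho(\mathcal R_h)$. By Lemma~\ref{oa:lem:general-phase-problem}(iii), $r_L=M\alpha_L$ and $r_H=M\alpha_H$, where $\alpha_L,\alpha_H$ are the two ends of $\{g_h\geq\delta\}$ with $\delta=ch/M$. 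Dividing by $M$, condition (i) therefore reads $\alpha_H(\delta)=1-\alpha_L(\delta)-\delta$ for every $\delta\in(0,\operatorname{TV}_h)$. Since $\beta_L=\alpha_L+\delta$ and $\beta_H=\alpha_H+\delta$, this is equivalent to $(\alpha_H,\beta_H)=(1-\beta_L,1-\alpha_L)=\rho(\alpha_L,\beta_L)$.

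\emph{(ii)$\Rightarrow$(i).} If $\rho(\mathcal R_h)=\mathcal R_h$, then $\rho$ maps the upper boundary $\beta_h$ to itself and preserves the gap, so it permutes the level set $\{g_h=\delta\}$. Because $\rho$ reverses the order of sizes, it swaps the left end of the superlevel interval with the right end, which gives $\alpha_H=1-\beta_L$. I will define the ends by inf and sup, as in Lemma~\ref{oa:lem:general-phase-problem}, so that flat pieces of $\beta_h$ at atoms are handled without change.

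\emph{(i)$\Rightarrow$(ii).} Assuming (i) for every $\delta\in(0,\operatorname{TV}_h)$, the identity above says that $\rho$ sends the left root to the right root at each level. Every frontier point with gap in $(0,\operatorname{TV}_h)$ is such a root, so every such point is mapped to another frontier point. The points $(0,0)$ and $(1,1)$ are swapped by $\rho$. The frontier points at the maximal gap (a vertex or a flat top) are reached by closure, since $\beta_h$ is continuous and concave. Hence the graph of $\beta_h$ is $\rho$-invariant. Because $\mathcal R_h$ is the region between $\beta_h$ and its central reflection, and $\rho$ commutes with the central reflection, $\mathcal R_h$ is $\rho$-invariant.

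\emph{(ii)$\Leftrightarrow$(iii).} I will invoke the standard fact that two binary experiments are Blackwell equivalent if and only if they have the same ROC set. For the swapped experiment, a test $\varphi$ has size $E_{P_1}\varphi=\beta$ and power $E_{P_0}\varphi=\alpha$. Applying $1-\varphi$ then shows that its ROC set is exactly $\rho(\mathcal R_h)$, so (iii) is equivalent to (ii).

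\emph{Sufficient condition.} Given $\sigma$ with $\sigma_\#P_1^h=P_0^h$ and $\sigma_\#P_0^h=P_1^h$, the composite test $\varphi\circ\sigma$ has size $\beta$ and power $\alpha$. Complementing it yields $(1-\beta,1-\alpha)\in\mathcal R_h$, so $\rho(\mathcal R_h)\subseteq\mathcal R_h$, and equality follows because $\rho$ is an involution. For a location family with symmetric density at mean shift $m$, the map $\sigma(y)=m-y$ has this property.

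\emph{Poisson failure.} Here I will use slopes of the frontier. Since $\rho$ reverses both axes, a frontier slope $s$ at a point becomes slope $1/s$ at the image point. Taking work and shirk arrival rates $\lambda_1>\lambda_0$ (the orientation needed for the likelihood ratio to increase in the count), the likelihood ratio in the count $N$ is unbounded above, so the slope of $\beta_h$ at $\alpha=0^+$ is $\infty$. The slope at $\alpha=1^-$ is the infimum of the likelihood ratio, attained at $N=0$, which equals $e^{-(\lambda_1-\lambda_0)h}>0$. Invariance under $\rho$ would force that end slope to be $1/\infty=0$, a contradiction.

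The step I expect to be the main obstacle is the closure argument in (i)$\Rightarrow$(ii) when $\Lambda_h$ has atoms. There the frontier is piecewise linear and a level set $\{g_h=\delta\}$ may be a whole segment, so I will need to check that matching its inf- and sup-ends under $\rho$ still pins down the entire frontier.
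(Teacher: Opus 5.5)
Your proposal is correct and follows the paper's proof essentially step by step. Both use central symmetry from complementing tests, and identify the swapped experiment's ROC with the anti-diagonal reflection, so that (ii)$\Leftrightarrow$(iii) follows from the ROC characterization of Blackwell equivalence. Both prove (ii)$\Rightarrow$(i) by having $\rho$ swap the two roots of each level set, prove (i)$\Rightarrow$(ii) by letting the roots trace the upper boundary with closure at the maximal gap, and handle the sufficient condition through $1-\varphi\circ\sigma$. The obstacle you flag is milder than you fear: atoms make $\beta_h$ piecewise linear, but by concavity $g_h$ can be constant only at its maximum, so every level set with $\delta<\operatorname{TV}_h$ has exactly two points, and any plateau at $\operatorname{TV}_h$ lies on the line $\beta=\alpha+\operatorname{TV}_h$, which $\rho$ maps into itself. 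Your slope argument for the Poisson case also supplies a proof that the paper only asserts.
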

	
	\paragraph{The no-review derivative and the tight-cap limit.}
	Let $\ell_h=1/\beta_h'(\alpha_L(h))$ be the likelihood ratio at the binding
	bar, keep $\lambda^{\NR}_h=1/(1-\ell_h)$ as in
	Section~\ref{subsec:endpoint-derivatives}, and write
	$\dot\beta_h=M\,\partial_h\beta_h(\alpha)|_{\alpha=\alpha_L(h)}$ for the value
	of lengthening the phase at a fixed size.
	
	\begin{proposition}
		\label{oa:prop:general-nr-derivative}
		Suppose $m$ is differentiable at $h$ and $\beta_h(\alpha)$ is differentiable in $(h,\alpha)$ at
		$\alpha_L(h)$, with $\beta_h'(\alpha_L(h))>1$.  Then
		\begin{equation}
			\frac{dR^{\NR}(h)}{dh}=\frac{c-\dot\beta_h}{\beta_h'(\alpha_L(h))-1},
			\qquad
			\frac{dV^{\NR}(h)}{dh}
			=m'(h)-\lambda^{\NR}_h\bigl[c-\ell_h\dot\beta_h\bigr].
			\label{oa:eq:general-nr-derivative}
		\end{equation}
		In the Brownian model $m'\equiv1$ and
		$\ell_h\dot\beta_h=M\phi(z_D(h)-\sqrt h)/(2\sqrt h)$, so
		\eqref{oa:eq:general-nr-derivative} is \eqref{eq:left-derivative-T-explicit}.
	\end{proposition}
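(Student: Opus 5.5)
I would treat $R^{\NR}(h)=r_L(h)=M\alpha_L(h)$ as defined implicitly by the binding constraint
\[
\beta_h(\alpha_L(h))-\alpha_L(h)=\frac{ch}{M}.
\]
Under the stated hypotheses, slack detection puts $\alpha_L(h)$ on the strictly increasing branch of $g_h$. This is exactly the content of $\beta_h'(\alpha_L(h))>1$, i.e.\ $g_h'(\alpha_L)=\beta_h'(\alpha_L)-1>0$. The implicit function theorem, applied to $F(h,\alpha)=\beta_h(\alpha)-\alpha-ch/M$ with $F_\alpha=\beta_h'(\alpha_L)-1\neq0$, then gives differentiability of $\alpha_L$ at $h$. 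Differentiating yields
\[
\partial_h\beta_h(\alpha_L)+\bigl(\beta_h'(\alpha_L)-1\bigr)\alpha_L'(h)=\frac{c}{M}.
\]
Multiplying by $M$ and using the definition $\dot\beta_h=M\partial_h\beta_h(\alpha_L)$ gives $dR^{\NR}/dh=M\alpha_L'=(c-\dot\beta_h)/(\beta_h'(\alpha_L)-1)$, the first formula. The one technical point is that the paper defines $\alpha_L$ as an infimum. Strict positivity of $g_h'$ at $\alpha_L$, together with concavity of $g_h$, ensures the root is the unique crossing on the increasing branch nearby, so the implicit-function solution coincides with the infimum for $h'$ near $h$.

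For the value, I would write $V^{\NR}(h)=\Sigma(h)-r_L(h)=m(h)-ch-R^{\NR}(h)$ and differentiate:
\[
\frac{dV^{\NR}}{dh}=m'(h)-c-\frac{c-\dot\beta_h}{\beta_h'-1}.
\]
Substituting $\beta_h'=1/\ell_h$ gives $1/(\beta_h'-1)=\ell_h/(1-\ell_h)$. Hence
\[
c+\frac{\ell_h(c-\dot\beta_h)}{1-\ell_h}=\frac{c-\ell_h\dot\beta_h}{1-\ell_h}=\lambda^{\NR}_h\bigl[c-\ell_h\dot\beta_h\bigr],
\]
which is the second formula. Here $\beta_h'>1$ makes $\ell_h\in(0,1)$, so $\lambda^{\NR}_h$ is finite and positive.

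For the Brownian specialization, I would parametrize $\alpha=\Phi(z-\sqrt h)$ and $\beta_h(\alpha)=\Phi(z)$, so that
\[
\beta_h(\alpha)=\Phi\bigl(\Phi^{-1}(\alpha)+\sqrt h\bigr).
\]
At fixed $\alpha$ this gives $\partial_h\beta_h=\phi(z)/(2\sqrt h)$. The slope is $\beta_h'=\phi(z)/\phi(z-\sqrt h)$, so $\ell_h=\phi(z_D-\sqrt h)/\phi(z_D)$, matching \eqref{eq:terminal-no-review-multiplier}. Then
\[
\ell_h\dot\beta_h=\frac{M\phi(z_D-\sqrt h)}{2\sqrt h}.
\]
With $m'\equiv1$ this reproduces \eqref{eq:left-derivative-T-explicit}.

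The main obstacle is the implicit-function step: it needs joint differentiability of $(h,\alpha)\mapsto\beta_h(\alpha)$ near $(h,\alpha_L(h))$, and ideally continuity of the partials. The hypothesis only states differentiability at the point. I would first try to get by with Fréchet differentiability at the point plus the monotonicity of $g_h$. The alternative is a direct difference-quotient argument: bound $\alpha_L(h')-\alpha_L(h)$ using the concavity of $g_{h'}$ and the first-order expansion of $\beta$, then pass to the limit.
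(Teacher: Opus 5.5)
Your proposal is correct and follows the paper's proof. Implicit differentiation of the binding constraint $M[\beta_h(\alpha_L)-\alpha_L]=ch$ gives the rent derivative, and the chain rule then gives the value derivative; you write $V^{\NR}=m-ch-R^{\NR}$ where the paper differentiates $m-M\beta_h(\alpha_L)$, which is equivalent. The Brownian case follows, as in the paper, by holding $z-\sqrt h$ fixed. The paper simply differentiates and never addresses the regularity issue you flag, and your fallback difference-quotient argument does close it: the first-order expansion at $(h,\alpha_L(h))$ together with concavity of each $g_{h'}$ forces $\alpha_L(h')\to\alpha_L(h)$, after which pointwise joint differentiability is enough.
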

	
	Now let the cap fall to the feasibility boundary.  The two roots merge at the
	maximizer $\alpha^\star_h$ of $g_h$, and the marginal likelihood ratio there
	approaches one precisely when $g_h$ has no kink at its peak.  Writing
	$\varepsilon(h)=d\log\operatorname{TV}_h/d\log h$
	for the elasticity of detection with respect to the phase length, the envelope
	theorem applied to $\operatorname{TV}_h=\max_\alpha g_h(\alpha)$ gives
	$\partial_h\beta_h(\alpha^\star_h)=\operatorname{TV}_h'$ and hence the following.
	
	\begin{corollary}
		\label{oa:cor:general-tight-cap}
		Suppose the differentiability conditions of
		Proposition~\ref{oa:prop:general-nr-derivative} hold in a neighborhood of the
		feasibility boundary, $\operatorname{TV}_h$ is differentiable at $h$, and
		$\operatorname{ess\,inf}\{\Lambda_h:\Lambda_h>1\}=1$, so that the
		likelihood ratio takes values arbitrarily close to one from above.  Then
		$\ell_h\uparrow1$ and $\lambda_h^{\NR}\to\infty$ as
		$M\downarrow M_I(h,c)$, and
		\begin{equation*}
			\varepsilon(h)<1
			\quad\Longrightarrow\quad
			\lim_{M\downarrow M_I(h,c)}\frac{dV^{\NR}(h)}{dh}=-\infty ,
			\qquad
			\varepsilon(h)>1
			\quad\Longrightarrow\quad
			\text{the limit is }+\infty .
		\end{equation*}
		Moreover $\varepsilon(h)<1$ if and only if $\operatorname{TV}_h/h$ is
		strictly decreasing at $h$.
	\end{corollary}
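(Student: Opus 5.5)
The plan is to start from the no-review derivative formula of Proposition~\ref{oa:prop:general-nr-derivative},
\begin{equation*}
\frac{dV^{\NR}(h)}{dh}=m'(h)-\lambda^{\NR}_h\bigl[c-\ell_h\dot\beta_h\bigr],
\qquad \lambda^{\NR}_h=\frac{1}{1-\ell_h}.
\end{equation*}
I then send $M\downarrow M_I(h,c)=ch/\operatorname{TV}_h$ and track the two factors separately. First I show that the multiplier explodes. Then I show that the bracket converges to the finite constant $c\bigl(1-\varepsilon(h)\bigr)$. Since $m'(h)$ does not depend on $M$, the sign of $1-\varepsilon(h)$ then decides whether the derivative tends to $-\infty$ or $+\infty$.

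\emph{Step 1: $\ell_h\uparrow1$.} As $M\downarrow M_I$, the level $\delta(h)=ch/M$ rises to $\operatorname{TV}_h=\max g_h$. By concavity of $g_h$, the lower root $\alpha_L(h)$ increases to the left end of the argmax set $\alpha^\star_h$. The slope $\beta_h'(\alpha_L)$ is the likelihood ratio at the Neyman--Pearson threshold of size $\alpha_L$, and on the rising branch of $g_h$ it exceeds one. It is nonincreasing in $\alpha$, so its limit is the left derivative of $\beta_h$ at $\alpha^\star_h$. On that branch the Neyman--Pearson test pays exactly on $\{\Lambda_h>t\}$ with $t>1$, and as $\alpha\uparrow\alpha^\star_h$ the threshold decreases to $\operatorname{ess\,inf}\{\Lambda_h:\Lambda_h>1\}$. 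This essential infimum equals $1$ by hypothesis, so $\beta_h'(\alpha_L)\downarrow1$. Hence $\ell_h=1/\beta_h'(\alpha_L)\uparrow1$ and $\lambda^{\NR}_h\to\infty$. I would need to be careful with one-sided derivatives at atoms, as in the paper's footnote. The hypothesis rules out a kink at the peak, which is exactly what makes the limit equal to one.

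\emph{Step 2: the bracket.} By definition $\dot\beta_h=M\,\partial_h\beta_h(\alpha)\big|_{\alpha=\alpha_L(h)}$. Differentiability of $\beta$ in $(h,\alpha)$ on a neighborhood of the boundary gives continuity of $\partial_h\beta_h$ in $\alpha$, and $\alpha_L\to\alpha^\star_h$. So $\dot\beta_h\to M_I\,\partial_h\beta_h(\alpha^\star_h)$. The envelope theorem applied to $\operatorname{TV}_h=\max_\alpha g_h(\alpha)$, with $g_h=\beta_h-\alpha$, gives $\partial_h\beta_h(\alpha^\star_h)=\operatorname{TV}_h'$. Therefore
\begin{equation*}
\dot\beta_h\;\longrightarrow\;\frac{ch}{\operatorname{TV}_h}\operatorname{TV}_h'=c\,\varepsilon(h).
\end{equation*}
Combined with $\ell_h\to1$, the bracket $c-\ell_h\dot\beta_h$ tends to $c\bigl(1-\varepsilon(h)\bigr)$. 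If $\varepsilon(h)<1$, this is a positive constant multiplied by $\lambda^{\NR}_h\to\infty$ and subtracted, so $dV^{\NR}/dh\to-\infty$. If $\varepsilon(h)>1$, the bracket is negative and the limit is $+\infty$.

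\emph{Step 3: the last claim.} Compute
\begin{equation*}
\frac{d}{dh}\frac{\operatorname{TV}_h}{h}=\frac{\operatorname{TV}_h}{h^2}\bigl(\varepsilon(h)-1\bigr).
\end{equation*}
Since $\operatorname{TV}_h>0$, the ratio $\operatorname{TV}_h/h$ is strictly decreasing at $h$ exactly when $\varepsilon(h)<1$.

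I expect the main obstacle to be Step~1 together with the continuity used in Step~2. Step~1 requires justifying that the slope of the ROC boundary at the merging root converges to one rather than to a kink value. Step~2 requires that the envelope identity is applied at the limit point while the hypothesized differentiability holds uniformly enough along $\alpha_L\to\alpha^\star_h$. I would handle both by working on the compact superlevel intervals, using monotonicity of the one-sided derivatives of the concave function $g_h$.
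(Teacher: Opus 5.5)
Your proposal follows the paper's proof step for step. The lower root merges into the maximizer of $g_h$, the condition $\operatorname{ess\,inf}\{\Lambda_h:\Lambda_h>1\}=1$ forces $\beta_h'(\alpha_L(h))\downarrow1$ so that $\lambda_h^{\NR}\to\infty$, the envelope theorem gives $\dot\beta_h\to M_I(h,c)\operatorname{TV}_h'=c\,\varepsilon(h)$, and the final equivalence comes from $d\log(\operatorname{TV}_h/h)/d\log h=\varepsilon(h)-1$.

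One caveat, which the paper shares: pointwise differentiability of $\beta_h(\alpha)$ does not by itself make $\partial_h\beta_h$ continuous in $\alpha$. Passing $\dot\beta_h$ to the limit therefore tacitly uses continuous differentiability, or a Danskin-type condition, near the boundary, and you are right to flag this as the delicate step.
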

	
	The condition $\varepsilon(h)<1$ is the property established for the Gaussian in the proof
	of Lemma~\ref{lem:Mbar}, where $\overline\Delta(h)/h$ is shown to be strictly decreasing.  The inequality
	$2\Phi(a)-1>a\phi(a)$ in \eqref{eq:normal-cap-inequality}, proved in
	Appendix~\ref{app:review-date}, is its Gaussian instance: with
	$a=\sqrt h/2$,
	$\varepsilon(h)=a\phi(a)/(2\Phi(a)-1)$.
	Thus the same elasticity condition underlies both calculations.  For a general
	experiment family it must be imposed explicitly.  The regularity condition is
	also substantive.  A
	two-point signal with likelihood ratios $\{1.75,0.5\}$ has a gap around one
	and its ROC has a vertex at the total-variation maximizer;
	the marginal likelihood ratio on the lower branch is stuck at $1.75$, and
	$\lambda^{\NR}_h$ remains bounded however tight the cap.  For such signals the
	tight-cap mechanism discussed in
	Section~\ref{subsec:endpoint-derivatives} --- the two
	incentive-compatible thresholds nearly coincide and the shadow price of
	incentives explodes --- simply does not operate.  What drives that mechanism is
	the \emph{continuity} of the likelihood ratio around one, not its Gaussian
	form.
	
	\subsection{Timing: diffusive versus hard evidence}
	\label{oa:sec:general-timing}
	
	The results on the review \emph{date} are far less robust than the results on
	the review \emph{menu}.  Throughout this subsection the output process has
	stationary independent increments, $m(h)=m_1h$, and the experiment family
	is continuous in total variation at zero:
	$\operatorname{TV}_h\to0$ as $h\downarrow0$.  Stochastic continuity alone is
	not enough for the last property, so it is imposed explicitly.  As in the
	baseline analysis, we also maintain that the full-horizon no-review contract is
	strictly implementable and profitable, $V^{\NR}(T;c,M)>0$.
	
	\begin{lemma}
		\label{oa:lem:general-tv-subadditive}
		$\operatorname{TV}_{h_1+h_2}\leq\operatorname{TV}_{h_1}+\operatorname{TV}_{h_2}$
		for all $h_1,h_2>0$, and
		\begin{equation*}
			\Theta_0
			:=\lim_{h\downarrow0}\frac{\operatorname{TV}_h}{h}
			=\sup_{h>0}\frac{\operatorname{TV}_h}{h}
			\in[0,+\infty],
		\end{equation*}
		the limit existing.
	\end{lemma}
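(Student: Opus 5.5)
The plan is to prove subadditivity first and then obtain the limit from a Fekete-type argument, using the continuity assumption $\operatorname{TV}_h\to0$ to handle the noninteger remainders.

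For subadditivity I would use stationary independent increments. Under effort $j$, the signal over a phase of length $h_1+h_2$ is generated by the pair of independent increment signals $(Y_1,Y_2)$ with laws $P_j^{h_1}\otimes P_j^{h_2}$. Whether the phase experiment is the pair itself or a measurable function of it (for example the sum), the data-processing inequality gives $\operatorname{TV}_{h_1+h_2}\le\|P_1^{h_1}\otimes P_1^{h_2}-P_0^{h_1}\otimes P_0^{h_2}\|_{\operatorname{TV}}$. The product term is handled by the standard hybrid or triangle inequality:
\begin{equation*}
\|P_1^{h_1}\otimes P_1^{h_2}-P_0^{h_1}\otimes P_0^{h_2}\|\le\|P_1^{h_1}\otimes P_1^{h_2}-P_0^{h_1}\otimes P_1^{h_2}\|+\|P_0^{h_1}\otimes P_1^{h_2}-P_0^{h_1}\otimes P_0^{h_2}\|=\operatorname{TV}_{h_1}+\operatorname{TV}_{h_2}.
\end{equation*}
The equality holds because tensoring with a common probability measure preserves total variation; one sees this by conditioning on the shared coordinate and taking the sup over sets. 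This is the only place where the model structure enters.

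For the limit, set $f(h)=\operatorname{TV}_h$, which is subadditive, nonnegative, and satisfies $f(h)\to0$ as $h\downarrow0$. Let $S=\sup_{h>0}f(h)/h\in[0,\infty]$. Trivially $\limsup_{h\downarrow0}f(h)/h\le S$, so it remains to show $\liminf_{h\downarrow0}f(h)/h\ge f(H)/H$ for every fixed $H>0$.

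Fix $H$ and a small $h$. Write $H=nh+\rho$ with $n=\lfloor H/h\rfloor$ and $\rho\in[0,h)$. Iterating subadditivity gives $f(H)\le nf(h)+f(\rho)$, with the convention $f(0)=0$. Dividing by $H$,
\begin{equation*}
\frac{f(H)}{H}\le\frac{nh}{H}\cdot\frac{f(h)}{h}+\frac{f(\rho)}{H}.
\end{equation*}
As $h\downarrow0$, $nh/H\to1$, and $f(\rho)\le\sup_{r<h}f(r)\to0$ by continuity at zero. Taking $\liminf$ yields the claim. This works even if $f(h)/h$ is unbounded, in which case both sides equal $+\infty$.

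I expect the main obstacle to be the remainder term. Classical Fekete arguments for functions on $(0,\infty)$ need some local boundedness near zero, and that is exactly what the imposed continuity $\operatorname{TV}_h\to0$ supplies; the paper notes that stochastic continuity alone would not suffice. A secondary point is justifying the data-processing step when the phase signal is a coarsening of the increment pair, but a measurable coarsening can only reduce total variation, so that step is routine.
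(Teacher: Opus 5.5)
Your proposal is correct and follows essentially the same route as the paper. Subadditivity comes from data processing applied to the pair of independent increments together with the tensorization (hybrid triangle) bound. The limit comes from a Fekete-type decomposition $H=nh+\rho$, with the remainder $\operatorname{TV}_\rho$ controlled by the maintained total-variation continuity at zero. The only cosmetic difference is that the paper uses $nh\le H$ to drop the factor $nh/H$ outright, rather than passing to the limit $nh/H\to1$.
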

	
	The second short-horizon index is the limit of the evidence ratio
	\eqref{oa:eq:evidence-ratio}.  Fix a cap and cost for which all sufficiently
	short phases are implementable, and suppose
	$\Lambda_0:=\lim_{h\downarrow0}\overline\Lambda_h$ exists in $[1,\infty]$,
	so that by \eqref{oa:eq:rent-identity} the minimum rent per unit of time converges,
	$r_L(h)/h\to c/(\Lambda_0-1)$, with the convention $c/0=+\infty$.  Say that the
	family is \emph{diffusive at zero} if $\Lambda_0=1$ and that it \emph{carries
		hard evidence at zero} if $\Lambda_0>1$: in the first case the cheapest reward
	event capable of supplying the required incentive spread has an average
	likelihood ratio converging to one, while in the second its evidential content
	remains bounded away from one.  Combining, the no-review value per unit
	of time has the limit
	\begin{equation*}
		\nu:=\lim_{h\downarrow0}\frac{V^{\NR}(h)}{h}
		=m_1-c-\frac{c}{\Lambda_0-1}
		\ \in[-\infty,\infty) .
	\end{equation*}
	The number $\nu$ governs whether a vanishing continuation phase is valuable at
	the terminal endpoint.  The initial endpoint instead depends on the two tails
	of the short-horizon likelihood ratio and on the utility spread available at
	the review.
	
	\begin{proposition}
		\label{oa:prop:general-feasibility-dichotomy}
		If $M\Theta_0<c$, then work is implementable over no phase of positive
		length, so no contract with any number of reviews implements effort
		anywhere.  If $M\Theta_0>c$, work is implementable over every sufficiently
		short phase, and if in addition $\operatorname{TV}_h/h$ is strictly
		decreasing and continuous there is a unique maximal implementable phase
		length $\overline h(M)$, the root of $M\operatorname{TV}_h=ch$.
	\end{proposition}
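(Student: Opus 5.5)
The plan is to reduce everything to the one-phase feasibility condition of Lemma~\ref{oa:lem:general-phase-problem}(i): work is implementable over a phase of length $h$ if and only if $ch\leq M\operatorname{TV}_h$, that is, $\operatorname{TV}_h/h\geq c/M$. Lemma~\ref{oa:lem:general-tv-subadditive} then does most of the work, because it identifies $\Theta_0$ both as the limit of $\operatorname{TV}_h/h$ as $h\downarrow0$ and as its supremum over all $h>0$.

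\emph{Case $M\Theta_0<c$.} Since $\Theta_0$ is the supremum, every $h>0$ satisfies $\operatorname{TV}_h/h\leq\Theta_0<c/M$. Hence $ch>M\operatorname{TV}_h$ and $\mathcal F_h=\emptyset$ for every phase length.

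To extend this to contracts with any number of reviews, I would argue as follows. Fix any review structure and consider the last phase in which work is prescribed (after some history), of length $h>0$. Conditional on the history up to its start, continuation utilities after that phase are bounded in $[0,M]$. They are measurable functions of that phase's signal, because the experiment has stationary independent increments. They therefore form a bounded contract on a length-$h$ experiment, and incentive compatibility of work there requires $ch\leq M\operatorname{TV}_h$. This is the same reduction used in the proof of Proposition~\ref{prop:loose-cap-no-review}, where a work-inducing continuation-utility schedule is itself a bounded incentive contract. The contradiction rules out work in any phase.

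A subtle point is that effort in intermediate phases could be rewarded by continuation values that include future rents. These still lie in $[0,M]$, since total payment is capped and rents are nonnegative. So a utility spread of at most $M$ across signals suffices for the argument.

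\emph{Case $M\Theta_0>c$.} Because $\operatorname{TV}_h/h\to\Theta_0>c/M$ (including $\Theta_0=+\infty$), there is $\bar h$ with $\operatorname{TV}_h/h>c/M$ for all $h<\bar h$. Implementability then follows from Lemma~\ref{oa:lem:general-phase-problem}(i).

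\emph{Uniqueness of $\overline h(M)$.} Under the added hypotheses, $\psi(h)=\operatorname{TV}_h/h$ is continuous and strictly decreasing, with $\psi(0+)=\Theta_0>c/M$. Since $\operatorname{TV}_h\leq1$, we have $\psi(h)\to0$ as $h\to\infty$. The intermediate value theorem and strict monotonicity therefore give a unique root of $\psi(h)=c/M$, equivalently of $M\operatorname{TV}_h=ch$. The implementable set is exactly $(0,\overline h(M)]$.

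The main obstacle is the multi-review claim in the first case. It requires checking that continuation values remain in $[0,M]$ and that the last working phase's incentive problem is genuinely a one-phase bounded-contract problem. I would handle it by backward induction on phases, using limited liability and the cap.
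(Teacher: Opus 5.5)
Your proof is correct and follows the paper's argument. Both reduce implementability to $\operatorname{TV}_h/h\geq c/M$ via Lemma~\ref{oa:lem:general-phase-problem}(i), use Lemma~\ref{oa:lem:general-tv-subadditive} to identify $\Theta_0$ as both the limit and the supremum of $\operatorname{TV}_h/h$, and locate $\overline h(M)$ from strict monotonicity, continuity and $\operatorname{TV}_h/h\leq1/h\to0$; your explicit check that continuation utilities lie in $[0,M]$, so that every working phase of a multi-review contract faces the same one-phase bound, spells out a step the paper treats as immediate.
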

	
	Proposition~\ref{oa:prop:general-feasibility-dichotomy} is where the Brownian
	model is special.  With Brownian noise $\operatorname{TV}_h\sim\sqrt{h/2\pi}$
	and $\Theta_0=\infty$, so short phases are always implementable, whatever the
	cap: effort cost accumulates linearly in the phase length while
	distinguishability grows like $\sqrt h$, and shortening the detection window
	therefore buys detection power.  For a pure finite-intensity arrival
	experiment, by contrast, $\Theta_0<\infty$ and feasibility becomes the
	\emph{scale-free} restriction $M\Theta_0>c$, which no subdivision of the
	horizon can relax.  The feasibility index $\Theta_0$ and the evidence ratio
	$\Lambda_0$ capture distinct properties; a technology combining diffusive and
	jump signals can have $\Theta_0=\infty$ while also carrying hard rare-event
	evidence.
	
	The endpoint results require separate conditions, stated next.
	
	\begin{proposition}
		\label{oa:prop:general-endpoints}
		Suppose work is strictly implementable in a neighborhood of $T$,
		$V^{\NR}(T;c,M)>0$, and $r_L(t)$ and $V^{\NR}(t;c,M)$ are locally
		Lipschitz at $t=T$.
		\begin{enumerate}[(i)]
			\item \textup{(Early review.)}  Suppose there are $\kappa>0$ and
			$C<\infty$ and events $E_\tau$ in the review $\sigma$-field with
			\begin{equation*}
				P_1^\tau(E_\tau)-P_0^\tau(E_\tau)\geq\kappa\tau,
				\qquad
				P_1^\tau(E_\tau)\leq C\tau
			\end{equation*}
			for all small $\tau$, and suppose $\kappa\bigl[M-r_L(T)\bigr]>c$.  Then
			\begin{equation*}
				\liminf_{\tau\downarrow0}
				\frac{V^R(\tau)-V^{\NR}(T;c,M)}{\tau}>-\infty ,
			\end{equation*}
			so \eqref{eq:right-derivative-zero} fails.
			\item \textup{(Late review.)}  Suppose $\nu>0$.  Let $B_\tau$ be the
			reward event in a cheapest work-inducing contract for a phase of length
			$\tau$, on the probability space augmented by any public randomization used
			at a likelihood-ratio atom.  Suppose there are events
			$E_\tau\subseteq B_\tau^c\cap\{\Lambda_\tau>1\}$ such that
			$P_1^\tau(E_\tau)>P_0^\tau(E_\tau)$ and
			$P_1^\tau(E_\tau)\to p>0$ as $\tau\uparrow T$.  Then
			\begin{equation*}
				V^R(T-h)\ \geq\ V^{\NR}(T-h;c,M)+\nu p\,h+o(h),
			\end{equation*}
			so the first-order equivalence
			$V^R(T-h)=V^{\NR}(T-h;c,M)+o(h)$ of Lemma~\ref{lem:terminal-derivative} fails
			and, using Lemma~\ref{lem:review-boundary-values},
			\begin{equation*}
				\limsup_{h\downarrow0}
				\frac{V^{\NR}(T;c,M)-V^R(T-h)}{h}
				\leq
				\limsup_{h\downarrow0}
				\frac{V^{\NR}(T;c,M)-V^{\NR}(T-h;c,M)}{h}-\nu p .
			\end{equation*}
			In particular, if the two left derivatives exist, then
			$V^{R\prime}_-(T)\leq \partial_TV^{\NR}(T;c,M)-\nu p
			<\partial_TV^{\NR}(T;c,M)$.
		\end{enumerate}
		These are separate sufficient conditions.  Hard upper-tail evidence helps
		satisfy the early-review event condition and can make $\nu$ finite, but neither
		the displayed incentive-capital inequality nor $\nu>0$ follows from hard
		evidence alone.
	\end{proposition}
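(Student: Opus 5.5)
For part (i) I would write down an explicit menu and bound $V^R(\tau)$ from below by its value; for part (ii) I would perturb the cheapest terminal-style contract based on $X_\tau$. Both constructions use stationary independent increments, so that the continuation actions $A_L(h)$, $A_G$, $A_B$ do not depend on the review state. Both also use the fact that $V^R(\tau)$ is the supremum over obedient menus: any obedient menu gives a lower bound.

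\emph{Part (i).} Fix small $\tau$ and set $h=T-\tau$. Since work is strictly implementable near $T$, the low-rent continuation $A_L(h)=(r_L(h),\Sigma(h))$ exists. I would assign $A_L(h)$ after every review outcome outside $E_\tau$ and the good stop $A_G=(M,0)$ on $E_\tau$.
\begin{itemize}
\item[] \emph{Obedience.} The menu's incentive supply is $[M-r_L(h)]\,[P_1^\tau(E_\tau)-P_0^\tau(E_\tau)]\geq[M-r_L(T-\tau)]\kappa\tau$. Local Lipschitz continuity of $r_L$ at $T$ gives $r_L(T-\tau)=r_L(T)+O(\tau)$. The strict inequality $\kappa[M-r_L(T)]>c$ then makes this supply exceed $c\tau$ for all small $\tau$.
\item[] \emph{Payoff.} The principal gets $m_1\tau+(1-P_1^\tau(E_\tau))\,V^{\NR}(T-\tau)-P_1^\tau(E_\tau)M$.
\item[] \emph{Bound.} Use $P_1^\tau(E_\tau)\leq C\tau$, and Lipschitz continuity of $V^{\NR}$ at $T$ to write $V^{\NR}(T-\tau)=V^{\NR}(T)+O(\tau)$. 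Boundedness of $V^{\NR}$ near $T$ then yields $V^R(\tau)-V^{\NR}(T;c,M)\geq-K\tau$ for some finite $K$.
\end{itemize}
This gives the finite $\liminf$, which contradicts \eqref{eq:right-derivative-zero}.

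\emph{Part (ii).} Fix small $h$ and set $\tau=T-h$. Start from the cheapest work-inducing contract for a phase of length $\tau$, viewed as a review menu: $A_G$ on $B_\tau$ and $A_B$ on $B_\tau^c$, with public randomization at atoms. This menu is obedient and yields exactly $V^{\NR}(T-h;c,M)$.

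Now replace $A_B$ by $A_L(h)$ on $E_\tau\subseteq B_\tau^c$.
\begin{itemize}
\item[] \emph{The perturbation exists.} $\nu>0$ (so $\nu$ is finite) forces $V^{\NR}(h)/h\to\nu$. In particular short continuation phases are implementable, so $A_L(h)$ exists, and $V^{\NR}(h)>0$ for small $h$.
\item[] \emph{Obedience is preserved.} The change in incentive supply is $r_L(h)\,[P_1^\tau(E_\tau)-P_0^\tau(E_\tau)]\geq0$, by the hypothesis $P_1^\tau(E_\tau)>P_0^\tau(E_\tau)$. The condition $E_\tau\subseteq\{\Lambda_\tau>1\}$ is what guarantees this sign.
\item[] \emph{Payoff gain.} The principal gains $P_1^\tau(E_\tau)\,[\Sigma(h)-r_L(h)]=P_1^\tau(E_\tau)\,V^{\NR}(h)$. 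Since $P_1^\tau(E_\tau)\to p$ and $V^{\NR}(h)=\nu h+o(h)$, this equals $\nu p\,h+o(h)$.
\end{itemize}
Hence $V^R(T-h)\geq V^{\NR}(T-h;c,M)+\nu p\,h+o(h)$. Subtracting from $V^{\NR}(T;c,M)$, dividing by $h$ and taking $\limsup$ gives the displayed inequality. If both left derivatives exist, it gives $V^{R\prime}_-(T)\leq\partial_TV^{\NR}-\nu p$, which contradicts the first-order equivalence of Lemma~\ref{lem:terminal-derivative}. Lemma~\ref{lem:review-boundary-values} is used only to identify the limit.

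\emph{Main obstacle.} The step I expect to need the most care is the bookkeeping in part (ii) at likelihood-ratio atoms. There $B_\tau$ lives on the randomization-augmented space, so I must check that $E_\tau$ can be chosen in the complement of the randomized reward event. I must also check that $A_L(h)$ is well defined uniformly for small $h$. I would handle this by working on the augmented probability space throughout; Assumption~\ref{oa:as:randomisation} permits public randomization between adjacent actions. The final remark is immediate: the constructions use only the event conditions and $\nu$, not hard evidence itself.
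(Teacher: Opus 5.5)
Your proposal is correct and follows the paper's proof essentially step for step. In (i) you use the same two-region menu ($A_L(h)$ off $E_\tau$, $A_G$ on $E_\tau$), with obedience coming from $\kappa[M-r_L(T)]>c$ and an $O(\tau)$ loss controlled by $P_1^\tau(E_\tau)\le C\tau$ and Lipschitz continuity of $V^{\NR}$; in (ii) you use the same perturbation of the two-stop menu on $B_\tau$, replacing $A_B$ by $A_L(h)$ on $E_\tau$ to gain $P_1^\tau(E_\tau)V^{\NR}(h)=\nu p\,h+o(h)$, and your remark that the final limsup inequality is pure algebra (so Lemma~\ref{lem:review-boundary-values} only identifies the boundary value) is accurate.
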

	
	The mechanism in part~(ii) runs in the opposite direction from the Gaussian
	case.  The Gaussian proof of
	\eqref{eq:left-derivative-T-explicit} shows that a vanishing
	continuation phase contributes nothing at first order, and it does so by
	observing that $r_L(h)$ vastly exceeds the continuation surplus $\Sigma(h)$
	when $h$ is small: continuation work is ruinously rent-expensive over a short
	phase, so the two work actions are dominated on all but a vanishing range of
	the review index.  Under hard evidence the ranking reverses, $r_L(h)$ and
	$\Sigma(h)$ are both of order $h$, and if $\nu>0$ the low-rent continuation is
	strictly better than the bad stop over a fixed range of review states.  A short
	continuation phase then has \emph{first-order} value, which lowers the upper
	left slope at $T$ --- and the left derivative when it exists --- and therefore
	makes an interior review easier, not harder, to justify at that endpoint.
	
	The mechanism in part~(i) is different.  The Gaussian argument writes obedience as
	$\E_1\bigl[(r-r_L)(1-\ell_\tau)\bigr]\geq c\tau$ and uses that
	$|1-\ell_\tau|$ is of order $\sqrt{\tau\log(1/\tau)}$ outside an event of
	probability $o(\tau)$, so that buying $c\tau$ of incentive requires utility
	dispersion of order $\sqrt\tau/\sqrt{\log(1/\tau)}$, which is large relative to
	$\tau$.  The general form of that step is: for every $\theta\in(0,1)$,
	\begin{equation*}
		\E_1\bigl|r-r_L\bigr|
		\ \geq\
		\frac{c\tau-M\,G_\tau(\theta)}{\theta},
		\qquad
		G_\tau(\theta):=\E_1\bigl[(|1-\ell_\tau|-\theta)^+\bigr].
	\end{equation*}
	Under the maintained strict profitability of the low-rent continuation at
	$T$, the principal's loss controls this utility dispersion exactly as in the
	proof of Proposition~\ref{prop:extreme-review-dates}(i).  Hence
	\eqref{eq:right-derivative-zero} holds whenever one can choose
	$\theta_\tau\downarrow0$ with $MG_\tau(\theta_\tau)\leq c\tau/2$.  In the
	Brownian model $\theta_\tau=2\sqrt{\tau\log(1/\tau)}$ does this and returns the
	rate in the appendix.  The absolute value in $G_\tau$ is essential:
	incentives may be supplied by punishing highly informative evidence of shirking as well
	as by rewarding evidence of work, and both channels must be shut down for the
	bound to apply.  In a finite-intensity arrival model in which short phases are
	feasible,
	$G_\tau(\theta)$ is of order $\tau$ for every $\theta$ below the
	likelihood-ratio jump carried by a single arrival.  No vanishing
	$\theta_\tau$ then satisfies $MG_\tau(\theta_\tau)\leq c\tau/2$, so this
	argument cannot establish the Gaussian endpoint result.

	Proposition~\ref{oa:prop:general-endpoints}(i) is what qualifies the organizational reading
	of Proposition~\ref{prop:extreme-review-dates}(i) in
	Section~\ref{sec:general-noise}: it bounds the cost of a short probation by its length when
	a rare event --- a missed milestone, a lost account, a trial readout --- retains substantial
	evidential content as the window shrinks.  Note what it does not deliver.  The
	incentive-capital and profitability conditions remain necessary for probation to be
	\emph{optimal}; ruling out an infinitely steep initial loss is not the same as making an
	early review worthwhile.
	
	\subsection{Which rungs the ladder has: the two tails of the likelihood ratio}
	\label{oa:sec:general-rungs}
	
	Proposition~\ref{oa:prop:general-ladder} says the optimal menu climbs
	$A_B\to A_L\to A_H\to A_G$.  It does not say that all four rungs are reached.
	At a positive supporting multiplier the Brownian bad stop is always reached, because arbitrarily bad news is
	available; the good stop is reached only when incentives are expensive enough.
	That asymmetry is not general.  The following likelihood-ratio bounds
	characterize which outer actions are reached.
	
	The relevant primitives are the two tails of the review likelihood ratio,
	\begin{equation}
		\ell^{\max}_\tau=\operatorname*{ess\,sup}\frac{dP_0^\tau}{dP_1^\tau},
		\qquad
		\Lambda^{\max}_\tau=\operatorname*{ess\,sup}\frac{dP_1^\tau}{dP_0^\tau},
		\qquad\text{both in }(1,\infty] .
		\label{oa:eq:tail-evidence}
	\end{equation}
	The first measures the strongest available evidence of shirking, the second the
	strongest available evidence of work.  Since $Q_\lambda=\lambda(1-\ell_\tau)-1$
	is an increasing transformation of the score, the closed essential range of the index is
	$\bigl[\lambda(1-\ell^{\max}_\tau)-1,\ \lambda(1-1/\Lambda^{\max}_\tau)-1\bigr]$,
	and whether it reaches the outer breakpoints of
	\eqref{oa:eq:general-breakpoints} is a question about \eqref{oa:eq:tail-evidence}.
	
	\begin{proposition}
		\label{oa:prop:general-stops}
		Under the hypotheses of Proposition~\ref{oa:prop:general-ladder}, with a
		positive supporting multiplier $\lambda>0$:
		\begin{enumerate}[(i)]
			\item the bad stop is strictly optimal on a set of positive probability if
			and only if
			\begin{equation}
				\lambda\bigl(\ell^{\max}_\tau-1\bigr)\ >\ \frac{\Sigma(h)}{r_L(h)}-1 ,
				\label{oa:eq:bad-stop-condition}
			\end{equation}
			which holds for \emph{every} $\lambda>0$ when $\ell^{\max}_\tau=\infty$;
			\item the good stop is strictly optimal on a set of positive probability if
			and only if
			\begin{equation}
				\lambda\left(1-\frac{1}{\Lambda^{\max}_\tau}\right)
				\ >\ 1+\frac{\Sigma(h)}{M-r_H(h)} ,
				\label{oa:eq:good-stop-condition}
			\end{equation}
			that is, $\lambda$ must exceed
			$\bigl(1+\Sigma(h)/(M-r_H(h))\bigr)\Lambda^{\max}_\tau/
			(\Lambda^{\max}_\tau-1)$.  This reduces to
			$\lambda>1+\Sigma(h)/(M-r_H(h))$ when $\Lambda^{\max}_\tau=\infty$
			and diverges as $\Lambda^{\max}_\tau\downarrow1$.
		\end{enumerate}
		At equality in either condition, the corresponding stop can be used only on
		a likelihood-ratio atom at which it ties the adjacent continuation action; whether it is
		used then depends on tie-breaking and the aggregate obedience requirement.
	\end{proposition}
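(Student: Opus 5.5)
The plan is to reduce both parts to the statewise linear assignment of Proposition~\ref{oa:prop:general-ladder}. With $\lambda>0$ fixed, the maximizer at each review state depends only on the index value $q=Q_\lambda=\lambda(1-\ell_\tau)-1$. The envelope $\mathcal V(q,h)$ in \eqref{oa:eq:general-support} picks $A_B$ exactly when $q<q_0=-\Sigma(h)/r_L(h)$, and picks $A_G$ exactly when $q>q_2=\Sigma(h)/(M-r_H(h))$. At $q=q_0$ or $q=q_2$ there is a tie with the adjacent continuation action. The ordering $q_0<0<q_2$ and the strict inequalities $0<r_L<r_H<M$ from \eqref{oa:eq:general-rent-ordering} keep these breakpoints finite and distinct. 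So the question "is a stop strictly optimal on a set of positive probability" becomes "does $Q_\lambda$ lie strictly beyond the relevant breakpoint with positive $P_1^\tau$-probability".

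Because $\lambda>0$ and $\ell_\tau\mapsto Q_\lambda$ is strictly decreasing and affine, $\{Q_\lambda<q_0\}$ equals $\{\ell_\tau>1+(1+q_0)/\lambda\}$. This event has positive probability exactly when $1+(1+q_0)/\lambda<\ell^{\max}_\tau$, by the definition of essential supremum. Two-sided evidence (Assumption~\ref{oa:as:two-sided}) lets us move freely between $P_0^\tau$- and $P_1^\tau$-null sets, and also guarantees $\ell_\tau$ is finite a.s. Rearranging gives $\lambda(\ell^{\max}_\tau-1)>1+q_0=1-\Sigma/r_L$. I should double-check this against \eqref{oa:eq:bad-stop-condition}, which has $\Sigma/r_L-1$ on the right; I would reconcile the sign convention carefully at this step. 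If $\ell^{\max}_\tau=\infty$, the left side is infinite and the condition holds for every $\lambda>0$.

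Part (ii) is symmetric. $\{Q_\lambda>q_2\}$ is the event $\{1-\ell_\tau>(1+q_2)/\lambda\}$. Its essential supremum of $1-\ell_\tau$ is $1-1/\Lambda^{\max}_\tau$, because $\ell_\tau=1/\Lambda_\tau$ a.s. under mutual absolute continuity. That gives $\lambda(1-1/\Lambda^{\max}_\tau)>1+q_2$, and multiplying by $\Lambda^{\max}_\tau/(\Lambda^{\max}_\tau-1)$ yields the threshold form. The two limiting cases follow directly: the condition becomes $\lambda>1+q_2$ as $\Lambda^{\max}_\tau\to\infty$, and the threshold diverges as $\Lambda^{\max}_\tau\downarrow1$.

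For the equality statement, when the condition holds with equality the stop can beat or tie its neighbour only on $\{\ell_\tau=\ell^{\max}_\tau\}$ (respectively $\{\Lambda_\tau=\Lambda^{\max}_\tau\}$). On that set $Q_\lambda$ equals the breakpoint, so the stop exactly ties the adjacent continuation. The set has positive mass only if it is an atom, and there the choice is decided by the tie-breaking and public randomization of Assumption~\ref{oa:as:randomisation}, subject to aggregate obedience.

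The main subtlety I expect is the \emph{strict} optimality claim on a positive-probability set. The essential-supremum step must use the open superlevel set, so that strict inequality gives positive mass and its failure gives a null set. The breakpoint identification must be strict as well, which relies on $r_L>0$ and $M-r_H>0$ so that no edge degenerates.
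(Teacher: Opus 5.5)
You follow the same route as the paper: reduce to the statewise breakpoints $q_0<0<q_2$, invert the affine index $Q_\lambda=\lambda(1-\ell_\tau)-1$, and read positive probability off the essential supremum or infimum of $\ell_\tau$, using mutual absolute continuity to identify null sets. Part~(ii) is correct as written. So are the open-superlevel-set remark, the limiting cases, and the equality discussion for the good stop.

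Part~(i), however, is not established. The inversion contains a sign error, which you noticed and then deferred to an unresolved ``sign convention.'' From $\lambda(1-\ell_\tau)-1<q_0$ one gets $\lambda\ell_\tau>\lambda-1-q_0$. Hence $\{Q_\lambda<q_0\}=\{\ell_\tau>1-(1+q_0)/\lambda\}$, not $\{\ell_\tau>1+(1+q_0)/\lambda\}$.

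With the correct threshold, your essential-supremum step gives $\ell^{\max}_\tau>1-(1+q_0)/\lambda$, i.e.\ $\lambda(\ell^{\max}_\tau-1)>-(1+q_0)=\Sigma(h)/r_L(h)-1$. That is exactly the stated condition, so there is no convention to reconcile.

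The condition you actually derived, $\lambda(\ell^{\max}_\tau-1)>1-\Sigma(h)/r_L(h)$, is wrong. When $\Sigma(h)>r_L(h)$ its right side is negative, so it would declare the bad stop always active. But $Q_\lambda\geq\lambda(1-\ell^{\max}_\tau)-1$ almost surely, which shows the bad stop is never strictly optimal when $\ell^{\max}_\tau<\infty$ and $\lambda(\ell^{\max}_\tau-1)\leq\Sigma(h)/r_L(h)-1$. For example, take $\ell^{\max}_\tau=2$, $\lambda=2$, $\Sigma(h)/r_L(h)=10$: then $Q_\lambda\geq-3>q_0=-10$ a.s., yet your inequality holds.

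The same slip affects your equality discussion for~(i). At your threshold the index equals $-2-q_0$, not $q_0$. So the tie between $A_B$ and $A_L$ on $\{\ell_\tau=\ell^{\max}_\tau\}$ holds only after the correction.
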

	
	The conditions reflect the opportunity cost of stopping.  The good stop pays
	the maximal prize and abandons production; the principal chooses it only after
	evidence of work strong
	enough to justify surrendering the continuation, and
	\eqref{oa:eq:good-stop-condition} says how strong.  The bad stop pays nothing and
	abandons production; that requires evidence of shirking strong enough to
	justify the same sacrifice, which is \eqref{oa:eq:bad-stop-condition}.  A
	monitoring technology that cannot generate one kind of evidence cannot make
	the corresponding rung strictly optimal; at a boundary atom the rung may still
	appear through tie-breaking.
	
	\subsection{Proofs for the beyond-Gaussian extension}\label{oa:app:general-noise}
	
	\begin{proof}[Proof of Lemma~\ref{oa:lem:general-phase-problem}]
		Part~(i) restates $\sup_\varphi(\beta-\alpha)=\operatorname{TV}_h$.  For
		(ii), minimizing expected compensation is minimizing $\beta$ subject to
		$\beta-\alpha\geq\delta(h)$; the constraint binds, since scaling $\varphi$
		down by $\delta(h)/(\beta-\alpha)$ preserves $0\leq\varphi\leq1$, restores
		equality and strictly lowers $\beta$, and among binding tests the
		Neyman--Pearson lemma selects a likelihood-ratio test.  Concavity and
		continuity of $g_h$ make its superlevel set a compact interval; its left
		endpoint $\alpha_L$ is the smallest feasible size and hence gives the cheapest
		binding test.  For (iii), maximizing $\beta$ subject to the same constraint
		also yields a binding test: if the spread is strictly larger than $\delta(h)$,
		mixing the test with the constant test $\varphi\equiv1$ until the spread binds
		strictly raises $\beta$.  After the substitution $\psi=1-\varphi$, this is the
		mirror-image problem $\E_0\psi-\E_1\psi\geq\delta(h)$; its solution withholds
		payment where $\Lambda_h$ is smallest and has size $\alpha_H(h)$.  Convex
		combinations of the two extreme binding tests remain binding and sweep every
		intermediate size.  Therefore the feasible values of $\beta$, and hence of
		$r=M\beta-ch$, form exactly $[r_L,r_H]$.  Finally
		$u+r=m(h)-M\beta+M\beta-ch=\Sigma(h)$ for every work contract, so total
		surplus is constant at $\Sigma(h)$ across the interval and the principal's
		payoff falls one for one with rent.
	\end{proof}
	
	\begin{proof}[Proof of Proposition~\ref{oa:prop:general-ladder}]
		Assumption~\ref{oa:as:slack-detection} makes $\{g_h\geq\delta(h)\}$ a
		nondegenerate interval, so $\alpha_L(h)<\alpha_H(h)$, and it makes
		$\alpha_H(h)<1$ because $g_h(1)=0<\delta(h)$; hence $r_L<r_H<M$.
		Assumption~\ref{oa:as:two-sided} gives $\alpha_L(h)>0$: a test of size zero has
		$\beta=0$ under mutual absolute continuity, hence $g_h(0)=0<\delta(h)$.  This
		is \eqref{oa:eq:general-rent-ordering}, and with
		Lemma~\ref{oa:lem:general-phase-problem}(iii) it identifies the four extreme
		points of the convex hull of the two segments.  The statewise Lagrangian is
		$\Sigma(h)i(x)+Q_\lambda(x)r(x)$, exactly
		\eqref{eq:frontier-substituted-lagrangian}, which is affine in $r$ with slope
		$Q_\lambda$; the four candidate rents $0<r_L<r_H<M$ are therefore selected in
		increasing order of $Q_\lambda$, with the breakpoints obtained by equating
		the corresponding affine functions in \eqref{oa:eq:general-support}, and
		Assumption~\ref{oa:as:profitable-continuation} orders them as in
		\eqref{oa:eq:general-breakpoints}.  Assumption~\ref{oa:as:mlrp-review} makes
		$Q_\lambda$ nondecreasing in $X_\tau$, so the regions are intervals of
		performance.  Assumption~\ref{oa:as:randomisation}, with
		Assumption~\ref{oa:as:slack-detection} supplying a strictly feasible menu,
		convexifies the aggregate constraint set and delivers attainment of
		\eqref{oa:eq:general-review-dual}.  No other property of the two experiments is
		used.
	\end{proof}
	
	\begin{proof}[Proof of Proposition~\ref{oa:prop:general-symmetry}]
		Complementing a test, $\varphi\mapsto1-\varphi$, shows that $\mathcal R_h$ is
		always symmetric about $(\tfrac12,\tfrac12)$; interchanging the hypotheses
		maps $(\alpha,\beta)$ to $(\beta,\alpha)$ and so reflects $\mathcal R_h$ in
		the diagonal.  Composing the two reflections gives the anti-diagonal
		reflection, whence (ii)$\Leftrightarrow$(iii), two binary experiments being
		Blackwell equivalent exactly when they share an ROC region.
		(ii)$\Rightarrow$(i): the reflection preserves $\beta-\alpha$ and maps the
		upper boundary to itself, so it exchanges the two endpoints of the binding
		chord, giving $\alpha_H=1-\beta_L=1-\alpha_L-\delta(h)$.
		(i)$\Rightarrow$(ii): mutual absolute continuity makes $\beta_h$ and $g_h$
		continuous.  For every $\delta\in(0,\operatorname{TV}_h)$, statement (i)
		gives
		$\alpha_H(\delta)=1-\alpha_L(\delta)-\delta
		=1-\beta_h(\alpha_L(\delta))$.  Thus the anti-diagonal reflection of the
		left endpoint of every superlevel interval is its right endpoint.  As
		$\delta$ varies, these endpoints trace the two sides of the upper ROC boundary;
		continuity covers any plateau at the maximum and the endpoints at
		$\delta=0$.  Hence the upper boundary, and therefore the ROC region beneath
		it, is invariant under the reflection.  For the sufficient condition,
		$\E_0[\varphi\circ\sigma]=\E_1[\varphi]$ and
		$\E_1[\varphi\circ\sigma]=\E_0[\varphi]$, so $1-\varphi\circ\sigma$ has size
		$1-\beta(\varphi)$ and power $1-\alpha(\varphi)$.
	\end{proof}
	
	\begin{proof}[Proof of Proposition~\ref{oa:prop:general-nr-derivative}]
		$V^{\NR}(h)=m(h)-M\beta_h(\alpha_L(h))$ and $R^{\NR}(h)=M\alpha_L(h)$, with
		$\alpha_L$ determined by $M[\beta_h(\alpha_L)-\alpha_L]=ch$.  Differentiating
		the constraint gives $\dot\beta_h+(\beta_h'-1)M\alpha_L'=c$, which is the
		first display; then
		$\frac{d}{dh}[M\beta_h(\alpha_L)]
		=\dot\beta_h+\beta_h'M\alpha_L'
		=(\beta_h'c-\dot\beta_h)/(\beta_h'-1)
		=(c-\ell_h\dot\beta_h)/(1-\ell_h)$.  For the Brownian specialization,
		holding $\alpha=\Phi(z-\sqrt h)$ fixed means holding $z-\sqrt h$ fixed, so
		$\partial_h\beta_h=\phi(z_D)/(2\sqrt h)$, while
		$\ell_h=\phi(z_D-\sqrt h)/\phi(z_D)$.
	\end{proof}
	
	\begin{proof}[Proof of Corollary~\ref{oa:cor:general-tight-cap}]
		As $M\downarrow M_I(h,c)$, the lower root converges to a maximizer
		$\alpha_h^\star$ of $g_h$.  The likelihood-ratio condition and the maintained
		differentiability imply
		$\beta_h'(\alpha_L(h))\downarrow1$, and therefore
		$\ell_h\uparrow1$ and $\lambda_h^{\NR}\to\infty$.  The envelope theorem gives
		$\partial_h\beta_h(\alpha_h^\star)=\operatorname{TV}_h'$, so
		$c-\ell_h\dot\beta_h\to c-M_I(h,c)\operatorname{TV}_h'=c\bigl[1-\varepsilon(h)\bigr]$.
		The two divergence claims now follow from
		\eqref{oa:eq:general-nr-derivative}; the finite term $m'(h)$ does not affect
		the limit.  Finally,
		$d\log(\operatorname{TV}_h/h)/d\log h=\varepsilon(h)-1$, which proves the
		last equivalence.
	\end{proof}
	
	\begin{proof}[Proof of Lemma~\ref{oa:lem:general-tv-subadditive}]
		The time-$(h_1+h_2)$ increment is a measurable function of the pair of
		independent phase increments, so the data-processing inequality and the
		tensorization bound for total variation give the first claim.  For the
		second, fix $h>0$ and $0<s<h$, write $h=ns+r$ with $n=\lfloor h/s\rfloor$ and
		$0\leq r<s$, and apply subadditivity:
		$\operatorname{TV}_h\leq n\operatorname{TV}_s+\operatorname{TV}_r$, so
		$\operatorname{TV}_h/h\leq\operatorname{TV}_s/s+\operatorname{TV}_r/h$.
		The maintained total-variation continuity makes
		$\operatorname{TV}_r\to0$ as $s\downarrow0$, so
		letting $s\downarrow0$ along an arbitrary sequence gives
		$\operatorname{TV}_h/h\leq\liminf_{s\downarrow0}\operatorname{TV}_s/s$.
		Taking the supremum over $h$ yields
		$\sup_h\operatorname{TV}_h/h\leq\liminf_{s\downarrow0}\operatorname{TV}_s/s
		\leq\limsup_{s\downarrow0}\operatorname{TV}_s/s\leq\sup_h\operatorname{TV}_h/h$.
	\end{proof}
	
	\begin{proof}[Proof of Proposition~\ref{oa:prop:general-feasibility-dichotomy}]
		By Lemma~\ref{oa:lem:general-phase-problem}(i), a phase of length $h$ is
		implementable exactly when $\operatorname{TV}_h/h\geq c/M$.  Lemma~\ref{oa:lem:general-tv-subadditive}
		shows that the left-hand side attains its supremum in the limit $h\downarrow0$,
		which proves the first two claims.  Under strict monotonicity and continuity it
		crosses $c/M$ at most once; it crosses exactly once because its limit at zero is
		larger than $c/M$ and $\operatorname{TV}_h/h\leq1/h\to0$ as $h\to\infty$.
	\end{proof}
	
	\begin{proof}[Proof of Proposition~\ref{oa:prop:general-endpoints}]
		(i)  Consider the two-region menu assigning $A_L(h)$, $h=T-\tau$, off
		$E_\tau$ and the good stop $A_G$ on $E_\tau$.  Its incentive supply is
		$\bigl[M-r_L(h)\bigr]\bigl[P_1^\tau(E_\tau)-P_0^\tau(E_\tau)\bigr]
		\geq\kappa\bigl[M-r_L(h)\bigr]\tau>c\tau$ for all small $\tau$, since
		$r_L(h)\to r_L(T)$; pre-review obedience is an inequality, so the menu is
		feasible as it stands.  Relative to assigning $A_L(h)$ everywhere, the loss
		is at most $\bigl[\Sigma(h)-r_L(h)+M\bigr]P_1^\tau(E_\tau)\leq
		\text{const}\cdot C\tau$.  Hence
		$V^R(\tau)\geq m(\tau)+V^{\NR}(T-\tau;c,M)-O(\tau)
		\geq V^{\NR}(T;c,M)-O(\tau)$.
		
		(ii)  Start from the menu that uses only the two stops, assigning $A_G$ on
		the cheapest-contract reward event $B_\tau$ and $A_B$ off it.  It delivers
		the agent the rent profile of the no-review contract at horizon $\tau$, so
		obedience binds exactly and its value is $V^{\NR}(\tau;c,M)$.  Reassign
		$A_L(h)$ in place of $A_B$ on the event $E_\tau$ in the statement.
		Obedience is relaxed, since the rent rises by $r_L(h)$ on a set with
		$P_1^\tau(E_\tau)>P_0^\tau(E_\tau)$, and the
		principal's surplus changes by
		$\bigl[\Sigma(h)-r_L(h)\bigr]P_1^\tau(E_\tau)
		=V^{\NR}(h)\,P_1^\tau(E_\tau)$.  With
		$V^{\NR}(h)=\nu h+o(h)$ and $\nu>0$ this is $\nu ph+o(h)>0$, giving the
		stated bound.  Subtracting it from Lemma~\ref{lem:review-boundary-values}, dividing by
		$h$, and taking upper limits gives the slope inequality in the statement.
	\end{proof}

	\begin{proof}[Proof of Proposition~\ref{oa:prop:general-stops}]
		By \eqref{oa:eq:general-breakpoints} the bad stop is strictly optimal at $x$ exactly
		when $Q_\lambda(x)<q_0$, where $q_0=-\Sigma(h)/r_L(h)$, that is when
		$\ell_\tau(x)>(\lambda-1-q_0)/\lambda$.  Such $x$ have positive
		probability if and only if
		$\ell^{\max}_\tau>(\lambda-1-q_0)/\lambda$, that is
		$\lambda\bigl(\ell^{\max}_\tau-1\bigr)>-1-q_0=\Sigma(h)/r_L(h)-1$.
		If $\ell^{\max}_\tau=\infty$ this holds for every $\lambda>0$.  Similarly
		the good stop is strictly optimal when
		$Q_\lambda(x)>q_2=\Sigma(h)/(M-r_H(h))$, that is when
		$\ell_\tau(x)<1-(1+q_2)/\lambda$.  Since the essential infimum of
		$\ell_\tau$ is $1/\Lambda^{\max}_\tau$, such $x$ have positive
		probability if and only if
		$1/\Lambda^{\max}_\tau<1-(1+q_2)/\lambda$,
		which is \eqref{oa:eq:good-stop-condition}.  Equality in either comparison is
		a pointwise tie and can matter only when the boundary likelihood ratio has an
		atom of positive probability.
	\end{proof}
	
	\newtheorem*{randomlemma}{Lemma}
	
	\section{Random review dates}
	\label{oa:random-review}
	
	\subsection{The attainable set and duality}
	\label{oa:sr:duality}
	
	For $\tau\in(0,T]$ let $\mathfrak A_\tau$ be the measurable menus $a_\tau(x)\in\mathcal A(T-\tau)$
	with $\mathcal A(0)=\{(r,0):r\in[0,M]\}$, so that the date-$T$ menu is a bounded payment
	schedule on the terminal signal, and let
	\begin{equation}
		\mathcal B_\tau=\bigl\{\bigl(J(\tau,a),P(\tau,a)\bigr):a\in\mathfrak A_\tau\bigr\}\subset\mathbb R^2 .
		\label{eq:random-Btau}
	\end{equation}
	At the zero-date endpoint set
	$\mathcal B_0=\{b:\ \exists\,\tau_n\downarrow0,\ b_n\in\mathcal B_{\tau_n},\ b_n\to b\}$,
	and put $\mathcal S=\bigcup_{\tau\in[0,T]}\mathcal B_\tau$ and $\mathcal C=\operatorname{co}(\mathcal S)$.
	The zero-date convention is innocuous: every point of $\mathcal B_0$ has zero incentive
	surplus and payoff at most $V^{\NR}(T;c,M)$, which the ordinary no-review contract already
	attains at $\tau=T$.
	
	\begin{lemma}
		\label{oa:lem:random-attainable}
		For every $\tau\in(0,T]$, $\mathcal B_\tau$ is compact and convex, and $\mathcal S$ is
		compact.  If a lottery over finitely many menus at a common positive date $\tau$ generates
		$(J,P)$, some single deterministic menu at that date generates the same pair.
	\end{lemma}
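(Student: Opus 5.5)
Convexity of $\mathcal B_\tau$ comes first, because the lottery claim follows from it directly. Fix $\tau\in(0,T]$. The set $\mathfrak A_\tau$ consists of measurable selections of the correspondence $x\mapsto\mathcal A(T-\tau)$, and both $J(\tau,\cdot)$ and $P(\tau,\cdot)$ from \eqref{eq:random-P-J} are linear in the pair $(r_\tau,s_\tau)$. Since $\mathcal A(h)=\mathcal A_0\cup\mathcal A_1(h)$ is not convex, I cannot simply average menus pointwise. Instead I will use the fact that $g_1^\tau$ and $g_0^\tau$ are atomless, so the vector measure $E\mapsto\bigl(\int_E g_1^\tau,\int_E g_0^\tau\bigr)$ has convex range by Lyapunov's theorem. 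Take two menus $a,a'$ and a weight $p\in(0,1)$. Lyapunov gives a measurable set $E$ with $\int_E f\,g_j^\tau=p\int f\,g_j^\tau$ for $j=0,1$ and for the finitely many integrands that appear, namely $r,s,r',s'$. This is the vector-valued form of the theorem applied to the measures $f\,g_j^\tau\,dx$. The spliced menu equal to $a$ on $E$ and to $a'$ on $E^c$ is then feasible and delivers $p(J,P)+(1-p)(J',P')$. A finite lottery over menus at the common date $\tau$ generates a convex combination of points of $\mathcal B_\tau$, so by the same splicing, iterated, a single deterministic menu reproduces it. At $\tau=T$ the feasible set $\mathcal A(0)=[0,M]\times\{0\}$ is already convex, so pointwise averaging suffices there.

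Compactness of $\mathcal B_\tau$ comes next. Boundedness is immediate, since $r\in[0,M]$, $s\in[0,(1-c)h]$ and the densities integrate to one. For closedness, the plan is to pass to the convexification. Every menu with values in $\operatorname{co}\mathcal A(h)$ generates a point of $\mathcal B_\tau$: its values can be written as mixtures of the four vertices $A_B,A_D,A_E,A_G$ with measurable weights, and Lyapunov purification (again because the densities are atomless) replaces the mixture by a selection from the vertices with the same two integrals. So $\mathcal B_\tau$ is the image of the set of measurable selections of the compact convex set $\operatorname{co}\mathcal A(h)$. That set is weak-$*$ compact in $L^\infty(\mathbb R;\mathbb R^2)$, and the map to $(J,P)$ is weak-$*$ continuous because $g_j^\tau\in L^1$. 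Hence $\mathcal B_\tau$ is compact.

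Compactness of $\mathcal S$ is the main obstacle, because the union runs over a continuum of dates. I would show that the correspondence $\tau\mapsto\mathcal B_\tau$ is upper hemicontinuous with uniformly bounded values on $(0,T]$, and that $\mathcal B_0$ is by construction the set of limit points at $\tau\downarrow0$. Take $b_n\in\mathcal B_{\tau_n}$ with $\tau_n\to\tau^\ast$. If $\tau^\ast>0$, represent each $b_n$ by a $\operatorname{co}\mathcal A(T-\tau_n)$-valued menu and extract a weak-$*$ limit. The densities $g_j^{\tau_n}$ converge to $g_j^{\tau^\ast}$ in $L^1$, and $\mathcal A(T-\tau_n)$ converges in the Hausdorff metric by Lemma~\ref{lem:continuation-rent-limits}; including the case $\tau^\ast=T$, where the set tends to $\mathcal A(0)$. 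Together these show that the limit menu lies in $\operatorname{co}\mathcal A(T-\tau^\ast)$ and generates $\lim b_n$, which lies in $\mathcal B_{\tau^\ast}$ by the purification step. If $\tau^\ast=0$, then $\lim b_n\in\mathcal B_0$ by definition. So $\mathcal S$ is closed, and it is bounded, hence compact. The delicate points are the joint convergence of weakly convergent menus against strongly convergent densities, and checking that the date-$T$ endpoint matches the no-review problem; the Hausdorff continuity of Lemma~\ref{lem:continuation-rent-limits} is what handles the latter.
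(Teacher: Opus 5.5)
Your proof is correct, but it rests on different tools than the paper's. The paper writes $\mathcal B_\tau$ as a translate of the Aumann integral of $F_\tau:x\mapsto v_\tau(x,\mathcal A(h))$. It takes compactness and convexity from Aumann--Lyapunov theory and passes between selections and menus with Filippov's implicit function theorem. It proves closedness across dates with Young measures $m_n(x)\,dx\,\delta_{a_n(x)}(da)$, whose limit it disintegrates and replaces by its barycenter in $\operatorname{co}F_\tau$.

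You instead get convexity directly from Lyapunov splicing, the device the paper reserves for the lottery claim. You get closedness, at a fixed date and across dates, from three ingredients: Banach--Alaoglu for $\operatorname{co}\mathcal A(h)$-valued menus in $L^\infty=(L^1)^*$, $L^1$ convergence of $g_j^{\tau_n}$ together with Hausdorff convergence of $\mathcal A(T-\tau_n)$, and a finite-dimensional purification onto the feasible vertices $A_B,A_D,A_E,A_G$. Since $(J,P)$ is affine in the action, your weak-$*$ limit is exactly the barycenter the paper extracts from its Young measure, so nothing is lost. And because $\operatorname{co}\mathcal A(h)$ is a polytope with feasible vertices, you avoid Filippov and the Aumann-integral theory entirely.

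The paper's machinery buys generality: it does not need the convex hull to be spanned by finitely many feasible actions. Its proof also shows that every point of $\mathcal B_0$ has the form $(0,p)$ with $p\leq V^{\NR}(T;c,M)$, and that $(0,V^{\NR}(T;c,M))\in\mathcal B_0$. This is not part of the lemma, but the paper uses it immediately afterwards to replace zero-date points by the date-$T$ contract.

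One small point to tidy in your version. Because $\mathcal S$ contains $\mathcal B_0$ itself, a convergent sequence in $\mathcal S$ may have infinitely many terms in $\mathcal B_0$. Your ``by definition'' step covers this case only once you note that $\mathcal B_0$ is closed, which holds because it is a Kuratowski upper limit (a diagonal argument).
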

	
	\begin{proof}
		Fix $\tau$ and write $h=T-\tau$.  The action set $\mathcal A(h)$ is compact and uniformly
		bounded, $0\leq r\leq M$ and $0\leq s\leq T$.  For a measurable selection $a=(r,s)$ the
		two coordinates of $(J,P)$ are integrals of the bounded measurable vector
		\begin{equation}
			v_\tau(x,a)=\Bigl(r\bigl[g_1^\tau(x)-g_0^\tau(x)\bigr],\ (s-r)g_1^\tau(x)\Bigr)
			\label{eq:random-vector-integrand}
		\end{equation}
		plus the constant $(-c\tau,\tau)$.  The correspondence $F_\tau:x\mapsto v_\tau(x,\mathcal A(h))$
		is measurable, compact-valued and integrably bounded, and $\mathcal B_\tau$ is the translate
		by $(-c\tau,\tau)$ of its Aumann integral: every menu induces a selection, and conversely
		Filippov's implicit function theorem turns any measurable selection of $F_\tau$ into a
		measurable menu.  That integral is compact, and is convex by Lyapunov's theorem because the
		measure $[g_0^\tau+g_1^\tau]dx$ is atomless --- even though $\mathcal A(h)$ is a union of two
		segments.
		
		The same argument purifies.  For two menus $a^1,a^2$ and $p\in[0,1]$, the atomless vector
		measure $\mu(B)=\int_B[v_\tau(x,a^1)-v_\tau(x,a^2)]dx$ has compact convex range containing
		$0$ and $\mu(\mathbb R)$, hence $p\mu(\mathbb R)$; using $a^1$ on a set $B_p$ with
		$\mu(B_p)=p\mu(\mathbb R)$ and $a^2$ elsewhere reproduces both integrals.  Finite lotteries
		follow by induction.
		
		Compactness across dates: for $\tau_n\to\tau>0$, $g_j^{\tau_n}\to g_j^\tau$ in $L^1$, and
		$h\mapsto\mathcal A(h)$ has compact values and closed graph on $[0,T]$ --- under
		Assumption~\ref{as:maintained}, $r_D(h)$ and $r_E(h)$ are continuous on $(0,T]$ and
		$r_D(h)\to0$, $r_E(h)\to M$ as $h\downarrow0$
		(Lemma~\ref{lem:continuation-rent-limits}), so
		$\mathcal A(h)\to\mathcal A(0)$ in the Hausdorff metric.  Representing $a_n$ by the measure
		$m_n(x)dx\,\delta_{a_n(x)}(da)$ with $m_n=(g_0^{\tau_n}+g_1^{\tau_n})/2$, uniform boundedness
		and tightness give a weakly convergent subsequence whose limit disintegrates as
		$m(x)dx\,\kappa_x(da)$ with $\kappa_x$ supported on $\mathcal A(T-\tau)$.  The weights
		$g_j^{\tau_n}/m_n$ are bounded by two and converge locally uniformly, so both moments in
		\eqref{eq:random-vector-integrand} converge.  The barycenter is a selection of
		$\operatorname{co}F_\tau$, whose Aumann integral coincides with that of $F_\tau$ for atomless
		measures, and the Filippov step converts it into a deterministic menu.  Hence
		$b\in\mathcal B_\tau$.
		
		If instead $\tau_n\downarrow0$, the total variation between the two review-score laws is
		$2\Phi(\sqrt{\tau_n}/2)-1\to0$, so rents bounded by $M$ give $J\to0$; and every action
		satisfies $s-r\leq(1-c)h_n-r_D(h_n)$ for large $n$, so $P\leq\tau_n+(1-c)h_n-r_D(h_n)\to
		V^{\NR}(T;c,M)$.  Every point of $\mathcal B_0$ is thus $(0,p)$ with $p\leq V^{\NR}(T;c,M)$,
		and $(0,V^{\NR}(T;c,M))\in\mathcal B_0$.  Compactness of $\mathcal S$ follows.
	\end{proof}
	
	Since a lottery takes probability-weighted averages of deterministic pairs, and any
	finite convex combination of elements of $\mathcal S$ is implemented by a lottery over dates
	(merging duplicate dates by the lemma, and replacing a $\mathcal B_0$ point by
	$(0,V^{\NR}(T;c,M))\in\mathcal B_T$), Problem~\eqref{eq:random-review-primal} is exactly
	\begin{equation}
		V^{S}=\max\{p:(j,p)\in\mathcal C,\ j\geq0\},
		\qquad\text{and likewise}\qquad
		V^R(\tau)=\max\{p:(j,p)\in\mathcal B_\tau,\ j\geq0\}.
		\label{eq:random-convex-primal}
	\end{equation}
	Strict feasibility holds under Assumption~\ref{as:maintained}: since $M>\underline M(T,c)\geq
	M_I(T,c)$ and $M_I(\cdot,c)$ is strictly increasing (Lemma~\ref{lem:Mbar}), a bounded threshold
	reward supplies strictly more than $c\tau$ at every $\tau\in(0,T]$.  With
	$\mathcal L^*(\tau,\lambda)\equiv\max_{(j,p)\in\mathcal B_\tau}\{p+\lambda j\}$, strong duality
	for these one-constraint concave programs therefore gives
	\begin{equation}
		V^{D}=\sup_{\tau\in[0,T]}\inf_{\lambda\geq0}\mathcal L^*(\tau,\lambda),
		\qquad
		V^{S}=\inf_{\lambda\geq0}\sup_{\tau\in[0,T]}\mathcal L^*(\tau,\lambda),
		\label{eq:random-minimax}
	\end{equation}
	the second because a linear functional has the same maximum on $\mathcal S$ and on
	$\operatorname{co}(\mathcal S)$.  A minimizing multiplier exists: strict feasibility supplies
	$(j_+,p_+)\in\mathcal C$ with $j_+>0$, so the dual objective diverges as $\lambda\to\infty$.
	Substituting \eqref{eq:random-P-J} shows $\mathcal L^*(\tau,\cdot)$ is the pointwise problem
	of Section~\ref{sec:deriving-optimal-menu}, so at every date and multiplier a maximizing menu
	can be selected from the four ordered actions of
	Proposition~\ref{prop:four-region-review-menu}; the only difference is that a lottery uses one
	common multiplier across all dates in its support.
	
	\begin{proof}[Proof of Proposition~\ref{prop:random-two-dates}]
		The inequality $V^D\leq V^S$ is the minimax inequality applied to
		\eqref{eq:random-minimax}, and also follows because a degenerate $\Gamma$ reproduces every
		deterministic date.
		
		Let $\lambda^*$ minimize the dual and let
		$\mathcal S^*=\{(j,p)\in\mathcal S:p+\lambda^*j=V^S\}$, which is nonempty and compact.  The
		exposed face $\mathcal F^*=\{(j,p)\in\mathcal C:p+\lambda^*j=V^S\}$ equals
		$\operatorname{co}(\mathcal S^*)$: by Carath\'eodory every point of $\mathcal C$ is a convex
		combination of at most three points of $\mathcal S$, and such an average attains the maximum
		of a linear functional only if all weight sits on maximizers.
		
		If $\lambda^*=0$, strong duality gives $V^S=\max\{p:(j,p)\in\mathcal C\}$, so an optimal
		primal pair lies in $\mathcal F^*$ and is feasible; since every point of $\mathcal S^*$ has
		payoff $V^S$, some element of $\mathcal S^*$ has $j\geq0$, and one date suffices.  If
		$\lambda^*>0$, any optimal primal pair has $p=V^S$ and $p+\lambda^*j\leq V^S$, hence $j=0$:
		aggregate obedience binds and $(0,V^S)\in\mathcal F^*$.  That face lies on the line
		$p+\lambda^*j=V^S$, so $(0,V^S)$ is a convex combination of at most two points of
		$\mathcal S^*$ --- one if zero is itself an attainable incentive coordinate there, otherwise
		two whose incentive coordinates bracket zero.  Merging a repeated date by
		Lemma~\ref{oa:lem:random-attainable} gives at most two distinct dates.
		
		Suppose $V^S>V^D$.  No optimum can sit at a single date: support at the zero-date endpoint is
		impossible because every pair there has payoff at most $V^{\NR}(T;c,M)\leq V^D<V^S$, and at a
		single $\tau\in(0,T]$ the aggregate pair lies in the convex compact $\mathcal B_\tau$ and is
		therefore attained by one feasible deterministic contract, giving $V^D\geq V^S$.  Neither
		supported surplus can vanish, since a supported pair with $J_i=0$ lies in $\mathcal S^*$,
		hence has $P_i=V^S$, and is again a feasible deterministic contract.  As their convex
		combination is zero the two surpluses have opposite signs; label the two pairs $b_+$ and
		$b_-$ so that $J_+>0>J_-$, and solve $pJ_++(1-p)J_-=0$ for
		\eqref{eq:two-date-mixing-probability}.
		
		For the test, necessity follows by substituting \eqref{eq:two-date-mixing-probability} into
		$V^S=p(b_+,b_-)P_++\bigl[1-p(b_+,b_-)\bigr]P_-$, which is $\Pi(b_+,b_-)$ by
		\eqref{eq:binding-mixture-payoff} and equals $V^S>V^D$, so
		\eqref{eq:strict-gain-primal-test} holds.  Conversely, two pairs satisfying
		\eqref{eq:strict-gain-primal-test} can be mixed with probability $p(b_+,b_-)$ to give zero
		aggregate surplus and payoff above $V^D$, so $V^S>V^D$; they cannot share a date, since the
		lemma would then replace the mixture by a feasible deterministic menu at that date.
	\end{proof}
	
	Equivalently, each attainable pair generates an affine dual payoff $P+\lambda J$; the surplus
	branch slopes up and the deficit branch slopes down, and \eqref{eq:strict-gain-primal-test}
	says two such lines cross above $V^D$.  At an interior minimizing multiplier zero must lie in
	the convex hull of the slopes of the exposed lines: an exposed line of slope zero is a pair
	that is obedient on its own, and deterministic timing attains $V^S$; under a strict gain there
	is none, and the zero-slope combination of two opposite-sloped lines is the optimal lottery.
	
	\subsection{Two dates under pointwise effort}
	\label{oa:sr:pointwise}
	
	Throughout this subsection continuation actions after the review are the pointwise-feasible
	ones of Section~\ref{sec:pointwise-effort}, so $\widetilde A_E$ replaces $A_E$.
	
	\begin{proof}[Proof of Proposition~\ref{prop:random-pointwise}]
		\emph{Dynamic reduction.}  Before $\tau_1$ no review can occur, so the agent receives no
		information about the realized branch, and because the review-score law and the linear
		effort cost depend on the pre-$\tau_1$ path only through total effort, any deviation is
		summarized by $e\in[0,\tau_1]$.  If the early review occurs his utility is $U_1(e)$.  If it
		does not, he learns the date is $\tau_2$; starting from $e$, feasibility confines cumulative
		effort at $\tau_2$ to $[e,e+d]$, so his continuation value is $\widehat U_2(e)$ in
		\eqref{eq:random-U}.  Choosing $e$ is therefore worth
		$pU_1(e)+(1-p)\widehat U_2(e)$.  On path $e=\tau_1$, and obedience after non-arrival requires
		cumulative effort $\tau_2$, which is $U_2(\tau_2)=\widehat U_2(\tau_1)$.  Given that,
		full effort before $\tau_1$ is optimal if and only if
		$pU_1(\tau_1)+(1-p)U_2(\tau_2)\geq pU_1(e)+(1-p)\widehat U_2(e)$ for every $e$, which
		rearranges to \eqref{eq:random-exact-IC}.
		
		\emph{The bound.}  If $U_2(\tau_2)\geq U_2(m)$ on $[0,\tau_2]$ then
		$\widehat U_2(e)\leq U_2(\tau_2)$, so $D_2(e)\geq0$; the same condition applied at
		$m=\tau_1$ gives $U_2(\tau_2)=\widehat U_2(\tau_1)$.  Where $D_1(e)\geq0$,
		\eqref{eq:random-exact-IC} holds for every $p$.  Where $D_1(e)<0$ it is equivalent to
		$p[D_2(e)-D_1(e)]\leq D_2(e)$, that is $p\leq D_2(e)/[D_2(e)-D_1(e)]$; deterring all
		deviations is therefore $p\leq\overline p$ in \eqref{eq:random-pbar}.
		
		\emph{Dominance.}  Under full effort the lottery is worth $V(p)=P_2+p(P_1-P_2)$, affine and
		increasing in $p$, so $V(p)>V^{D,\pw}$ if and only if $p>\underline p$.  A feasible strict
		improvement exists exactly when $(\underline p,1]$ meets $[0,\overline p]$, which is
		\eqref{eq:random-dominance}.
	\end{proof}
	
	Under a shape condition the bound reduces to a comparison of marginal incentive returns.
	Suppose the early branch locally favors less effort, $U_1'<0$ on $[0,\tau_1]$, and that after
	non-arrival the late-branch optimum is always to work at the maximal rate,
	\begin{equation}
		\widehat U_2(e)=U_2(e+d),
		\qquad
		U_2'(e+d)>0
		\qquad\text{on }[0,\tau_1].
		\label{eq:random-upper-end-late}
	\end{equation}
	Write $\alpha=-U_1'>0$ and $\beta(\cdot)=U_2'(\cdot+d)>0$, so that
	$-D_1(e)=\int_e^{\tau_1}\alpha$ and $D_2(e)=\int_e^{\tau_1}\beta$, and hence
	\begin{equation}
		\frac{D_2(e)}{D_2(e)-D_1(e)}
		=\frac{\int_e^{\tau_1}\beta}{\int_e^{\tau_1}[\alpha+\beta]}
		=\frac{\int_e^{\tau_1}\rho\,w}{\int_e^{\tau_1}w},
		\qquad
		\rho\equiv\frac{\beta}{\alpha+\beta},
		\quad w\equiv\alpha+\beta,
		\label{eq:random-weighted-rho}
	\end{equation}
	a weighted average of $\rho$ on $[e,\tau_1]$.
	
	\begin{corollary}
		\label{oa:cor:random-local-global}
		Under \eqref{eq:random-upper-end-late}, if $\rho$ is weakly increasing the binding
		deviation is $e=0$ and $\overline p=D_2(0)/[D_2(0)-D_1(0)]$; if $\rho$ is weakly decreasing
		the binding deviation is infinitesimal coasting just before $\tau_1$ and
		$\overline p=\rho(\tau_1)$.  In the latter case \eqref{eq:random-dominance} reads
		\begin{equation}
			\beta(\tau_1)\bigl(P_1-V^{D,\pw}\bigr)>\alpha(\tau_1)\bigl(V^{D,\pw}-P_2\bigr).
			\label{eq:random-marginal-dominance}
		\end{equation}
	\end{corollary}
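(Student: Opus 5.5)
The plan is to work directly with the representation \eqref{eq:random-weighted-rho} and the formula \eqref{eq:random-pbar} from Proposition~\ref{prop:random-pointwise}(ii). I take the corollary to be read under the hypotheses of that part, so that the late branch induces full effort on its own. Under \eqref{eq:random-upper-end-late} this also follows because $U_2$ is increasing along the relevant range.

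First I identify the deviation set. Since $U_1'=-\alpha<0$ on $[0,\tau_1]$, we have $D_1(e)=-\int_e^{\tau_1}\alpha<0$ for every $e<\tau_1$, and $D_1(\tau_1)=0$. So the set in \eqref{eq:random-pbar} is exactly $[0,\tau_1)$. Hence
\begin{equation*}
\overline p=\inf_{e\in[0,\tau_1)}A(e),\qquad A(e)\equiv\frac{\int_e^{\tau_1}\rho w}{\int_e^{\tau_1}w},
\end{equation*}
where $w=\alpha+\beta>0$. Here $\alpha$ and $\beta$ are continuous, because $U_i$ are Gaussian smoothings of bounded step functions and are therefore $C^1$. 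Consequently $\rho$ is continuous with values in $(0,1)$.

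Second I sign the slope of the running average. Differentiating in $e$ gives
\begin{equation*}
A'(e)=\frac{w(e)\bigl[A(e)-\rho(e)\bigr]}{\int_e^{\tau_1}w}.
\end{equation*}
If $\rho$ is weakly increasing, then $A(e)\geq\rho(e)$, so $A$ is nondecreasing in $e$ and the infimum is attained at $e=0$. This gives $\overline p=A(0)=D_2(0)/[D_2(0)-D_1(0)]$, which means the binding deviation is total shirking before $\tau_1$.

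If $\rho$ is weakly decreasing, then $A(e)\leq\rho(e)$, so $A$ is nonincreasing in $e$. Moreover, every value averaged satisfies $\rho\geq\rho(\tau_1)$, so $A(e)\geq\rho(\tau_1)$. Continuity of $\rho$ gives $A(e)\to\rho(\tau_1)$ as $e\uparrow\tau_1$. Hence $\overline p=\rho(\tau_1)$, approached but not attained on $[0,\tau_1)$: the binding deviation is an infinitesimal amount of coasting just before $\tau_1$.

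Finally, in the decreasing case I substitute $\overline p=\beta(\tau_1)/[\alpha(\tau_1)+\beta(\tau_1)]$ into \eqref{eq:random-dominance}. Using $P_1>V^{D,\pw}>P_2$, I clear the positive denominators $P_1-P_2$ and $\alpha+\beta$. The condition $(V^{D,\pw}-P_2)(\alpha+\beta)<\beta(P_1-P_2)$ then rearranges to \eqref{eq:random-marginal-dominance}.

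The main obstacle I expect is not the algebra but the boundary and regularity bookkeeping. I need to confirm that $\alpha,\beta$ are continuous up to $\tau_1$, so that the limit of $A$ equals $\rho(\tau_1)$. I also need to handle the excluded endpoint $e=\tau_1$, where $D_1=0$ and the point lies outside the set in \eqref{eq:random-pbar}, so that the infimum is correctly identified as the limiting value rather than a minimum.
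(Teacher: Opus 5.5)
Your proof is correct and follows the paper's argument. The paper observes that deleting the low end of $[e,\tau_1]$ raises (resp.\ lowers) a weighted average of a weakly increasing (resp.\ decreasing) $\rho$; your formula $A'(e)=w(e)\bigl[A(e)-\rho(e)\bigr]/\int_e^{\tau_1}w$ makes this explicit, and it then clears the positive denominators in $\underline p<\overline p$ exactly as you do. One small precision: deriving self-incentive-compatibility of the late branch from \eqref{eq:random-upper-end-late} also uses $\widehat U_2(0)=U_2(d)$ to bound $U_2$ on $[0,d]$, not only the monotonicity of $U_2$ on $[d,\tau_2]$.
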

	
	\begin{proof}
		Deleting the lowest part of $[e,\tau_1]$ raises a weighted average of a weakly increasing
		$\rho$, so the infimum in \eqref{eq:random-weighted-rho} is at $e=0$; if $\rho$ is weakly
		decreasing the average falls as the left end is deleted, so the infimum is approached as
		$e\uparrow\tau_1$, and continuity gives $\rho(\tau_1)$.  Substituting into
		$\underline p<\overline p$ and clearing positive denominators gives
		\eqref{eq:random-marginal-dominance}.
	\end{proof}
	
	Randomization is thus profitable when the payoff gained per unit of incentive deficit on the
	attractive branch exceeds the payoff sacrificed per unit of incentive capacity on the
	supporting branch.
	
	\subsection{Numerical construction}
	\label{oa:sr:numerics}
	
	All figures are direct evaluations of the Gaussian formulas; no simulation is used.  Set
	$T=1.5$, $c=0.72$, $M=2.60$ throughout.
	
	\paragraph{Phase-level example.}
	Optimizing first in $\lambda$ and then in $\tau$ gives
	$\tau^D=0.861662$, $\lambda^D=1.824076$ and $V^D=0.07850435$.  Reversing the order gives the
	minimizing common multiplier $\lambda^*=1.969842$, at which the maximized Lagrangian has two
	global maximizers in the review date:
	
	\begin{center}
		\begin{tabular}{lccc}
			& review date & principal payoff $P$ & incentive surplus $J$ \\
			\hline
			early branch ($b_-$) & $0.094868$ & $\phantom{-}0.12577784$ & $-0.02182519$ \\
			later branch ($b_+$) & $0.816614$ & $-0.02796598$ & $\phantom{-}0.05622360$
		\end{tabular}
	\end{center}
	
	The later branch carries the incentive surplus, so it is $b_+$.
	Equation~\eqref{eq:two-date-mixing-probability} places probability
	$p(b_+,b_-)=0.279635$ on it and the remaining $0.720365$ on the early branch; aggregate
	obedience binds, and $V^S=0.08278565$, so $V^S-V^D=0.00428130>0$.  At a
	candidate $(\tau,\lambda)$ these values require only the breakpoints $q_0,0,q_2$ of
	\eqref{eq:three-breakpoints}, their score cutoffs from \eqref{eq:k-formula} with
	$k_j=+\infty$ for unreached breakpoints, and the region probabilities
	\begin{equation}
		\pi_i^j(\tau,\lambda)
		=\Phi\!\left(\frac{\overline k_i-j\tau}{\sqrt\tau}\right)-\Phi\!\left(\frac{\underline k_i-j\tau}{\sqrt\tau}\right),
		\qquad j\in\{0,1\},
		\label{eq:random-region-probabilities}
	\end{equation}
	from which $P=\tau+\sum_i(s_i-r_i)\pi_i^1$ and $J=\sum_ir_i[\pi_i^1-\pi_i^0]-c\tau$.
	
	\paragraph{Pointwise example.}
	The same two dates are used, so $h_1=1.405132$ and $h_2=0.683386$.  With $z_D(h)$ the
	difficult root of $\Delta(h,z)=ch/M$ and $\zeta(h)$ the positive root of
	$\phi(\zeta)=c\sqrt h/M$ from \eqref{eq:harder-easy-bar},
	\begin{equation*}
		r_D(h)=M\Phi\bigl(z_D(h)-\sqrt h\bigr),
		\qquad
		\widetilde r_E(h)=M\Phi\bigl(\zeta(h)\bigr)-ch,
	\end{equation*}
	giving $z_D=0.03576003$, $\zeta=0.62451798$, $r_D=0.32538914$, $\widetilde r_E=0.89633126$
	on the early branch and $z_D=-0.64321880$, $\zeta=1.05396840$, $r_D=0.18406886$,
	$\widetilde r_E=1.72849548$ on the late branch.  The menus use cutoffs
	
	\begin{center}
		\begin{tabular}{lcccc}
			& $\tau_i$ & $k_{i1}$ & $k_{i2}$ & $k_{i3}$ \\
			\hline
			early & $0.094868$ & $-0.044984$ & $0.878889$ & $0.970519$ \\
			late  & $0.816614$ & $\phantom{-}0.404840$ & $1.146958$ & $1.503501$
		\end{tabular}
	\end{center}
	
	so that the rent schedule is the step function with jumps
	$\delta_{i1}=r_D(h_i)$, $\delta_{i2}=\widetilde r_E(h_i)-r_D(h_i)$ and
	$\delta_{i3}=M-\widetilde r_E(h_i)$, and
	\begin{equation}
		U_i(e)=\sum_{j=1}^{3}\delta_{ij}\Phi\!\left(\frac{e-k_{ij}}{\sqrt{\tau_i}}\right)-ce,
		\qquad
		U_i'(e)=\frac{1}{\sqrt{\tau_i}}\sum_{j=1}^{3}\delta_{ij}\phi\!\left(\frac{e-k_{ij}}{\sqrt{\tau_i}}\right)-c .
		\label{eq:random-U-closed}
	\end{equation}
	Under full effort the region probabilities are $(0.32489,0.66965,0.00322,0.00223)$ on the
	early branch and $(0.32431,0.31834,0.13375,0.22359)$ on the late branch, whence
	$P_1=0.1330054377$ and $P_2=0.0319887113$.
	
	For the sequential check, $U_1(\tau_1)=0.15828980$ and $U_1(0)=0.18420688$, so
	$D_1(0)=-0.02591708$, and $U_1'\in[-0.27499,-0.27207]$ on $[0,\tau_1]$: the early branch by
	itself strictly favors less pre-review effort throughout.  On the late branch
	$U_2(\tau_2)=0.28317018$ while $U_2'$ has a unique zero on $[0,\tau_2]$ at
	$m_0=0.35917$, a strict minimum with $U_2(m_0)=0.21524173$; hence $U_2(\tau_2)$ is the global
	maximum and the late branch is self-incentive-compatible.  Because $U_2$ has only this
	interior minimum, $\widehat U_2(e)=\max\{U_2(e),U_2(e+d)\}$, with the two terms crossing once at
	$e_c=0.00215$.  At $e=0$ the agent prefers not to add effort after learning the date, so
	$\widehat U_2(0)=U_2(0)=0.25991515$ and $D_2(0)=0.02325503$.  The deviation-specific bound
	$q(e)=D_2(e)/[D_2(e)-D_1(e)]$ is increasing on $[0,e_c]$ and on $[e_c,\tau_1)$, so the
	binding deviation is $e=0$ and
	\begin{equation*}
		\overline p=q(0)=\frac{0.02325503}{0.02325503+0.02591708}=0.47293135 .
	\end{equation*}
	Since the phase-level problem checks fewer deviations both before and after the review,
	$V^{D,\pw}\leq V^D=0.07850435$ and $\underline p\leq0.46047462<\overline p$.  At $p=0.465$
	the worst deviation still loses $0.465D_1(0)+0.535D_2(0)=0.00039>0$, and the contract is
	worth $0.465P_1+0.535P_2=0.07896149$, exceeding $V^D$ by $0.00045714$.	
	\addtocontents{toc}{\protect\endgroup}
	
\end{document}